\definecolor{codegreen}{rgb}{0,0.6,0}
\definecolor{codegray}{rgb}{0.5,0.5,0.5}
\definecolor{codepurple}{rgb}{0.58,0,0.82}
\definecolor{backcolour}{rgb}{0.95,0.95,0.92}
\tiny\color{codegray},
\newtheorem{theorem}{Theorem}
\newtheorem{lemma}[theorem]{Lemma}
\newtheorem{proposition}[theorem]{Proposition}
\newtheorem{postulate}{Postulate}
\theoremstyle{definition}
\newtheorem{definition}[theorem]{Definition}
\newtheorem{example}[theorem]{Example}
\theoremstyle{remark}
\DeclareMathOperator{\sym}{Sym}
\DeclareMathOperator{\cnot}{\texttt{CNOT}}
\DeclareMathOperator{\swapop}{\texttt{SWAP}}
\DeclareMathOperator{\real}{Re}
\DeclareMathOperator{\imag}{Im}
\title{Introduction to Quantum Error Correction with Stabilizer Codes}
\author{Zachary P. Bradshaw\footnote{\textbf{Corresponding Author:} zak@qodexquantum.ai}, Jeffrey J. Dale, and Ethan N. Evans}
\date{QodeX Quantum, Inc.}
\begin{document}

\maketitle

\begin{abstract}
    We give an introduction to the theory of quantum error correction using stabilizer codes that is geared towards the working computer scientists and mathematicians with an interest in exploring this area. To this end, we begin with an introduction to basic quantum computation for the uninitiated. We then construct several examples of simple error correcting codes without reference to the underlying mathematical formalism in order to develop the readers intuition for the structure of a generic code. With this in hand, we then discuss the more general theory of stabilizer codes and provide the necessary level of mathematical detail for the non-mathematician. Finally, we give a brief look at the elegant homological algebra formulation for topological codes. As a bonus, we give implementations of the codes we mention using OpenQASM, and we address the more recent approaches to decoding using neural networks. We do not attempt to give a complete overview of the entire field, but provide the reader with the level of detail needed to continue in this direction.
\end{abstract}
\section{Introduction}

The invention of the quantum computer \cite{nielsen2010} is expected to be a driver of innovation in the 21st century, solving a subset of classically intractable problems efficiently. These devices are not expected to provide a dramatic improvement over existing classical methods for \textit{every} problem, but typically operate well on tasks that ask for a global feature of some complex mathematical object. Perhaps unsurprisingly, they simulate quantum systems well \cite{berry2006,buluta2009,cirac2012,daley2022,feynman1982,georgescu2014,johnson2014,leimkuhler2004,lloyd1996,low2019,trabesinger2012}, a task that is difficult for classical devices due to the sheer size of the Hilbert spaces which model simple physical systems. With each addition of a qubit, the Hilbert space associated to a quantum computer grows in complex dimension by a multiplicative factor of 2, alleviating this issue. Thus, quantum computers have the potential to efficiently solve the related problems of drug and material simulation \cite{babbush2018,cao2018,clinton2024,kumar2024}. In addition to quantum simulation, quantum computers are expected to perform well on optimization tasks, efficiently solving problems in areas such as logistics, finance, and industrial design \cite{azzaoui2021,bayerstadler2021,egger2020,herman2023,orus2019,perdomo2019,weinberg2023}. A major advance in this direction was the Quantum Approximate Optimization Algorithm (QAOA) \cite{blekos2024,farhi2014,zhou2020} for solving combinatorial optimization problems. Also worth mentioning is the recent progress in quantum machine learning \cite{biamonte2017,cerezo2022,schuld2015}, a field which attempts to combine the recent success of classical machine learning models with quantum theory in order to break through limitations in current models due to resource deficiency. Much of this work makes use of a variational quantum circuit \cite{cerezo2021,chen2020,schuld2020} built of parameterized unitary operators which are trained using some classical optimization procedure, so that the training process is done in a hybrid fashion. There has been recent progress in constructing quantum versions of classical models which are known to perform well, such as the self-attention-based transformer architecture responsible for the success of large language models \cite{di2022,evans2024,kerenidis2024,li2024}.

Although there are many expected applications for quantum devices, they generally operate with a lot of overhead, creating a trade-off between the cost of running the device and the performance boost they produce. This overhead comes largely in the form of errors which are introduced when the controlled quantum system interacts with its environment. The experimentalist attempts to suppress these errors at the hardware level using any of a number of techniques dependent on the architecture of the device. The subject of this review is instead quantum error correction \cite{devitt2013,gottesman2002,knill1997,lidar2013,roffe2019}, which attempts to detect and correct errors to the individual qubits in a system as they occur. We distinguish our approach from other reviews on the same topic by directing our review towards the interested mathematicians and computer scientists who do not have a background in quantum theory. To this end, we include an overview of the basics of quantum theory, including the circuit model of quantum computation, and we develop the necessary background in group theory to understand the stabilizer code formalism for the computer scientist who has not had the luxury of taking a first course in the subject. We furthermore implement the simple error correcting codes discussed in this work on IBM's real quantum devices, providing example code for each case. In an effort to stay as hardware-neutral as possible, we do not confine ourselves to using a quantum computing
library that is tied to a specific company.
Although these libraries are highly convenient for those that already have a target platform in mind, our
focus is on maximizing flexibility to accommodate as many readers as possible, as well as remaining
future-proof to the best of our ability.
Thus, our code examples are written directly in OpenQASM 3.0~\cite{cross2022openqasm}, an open standard for
describing quantum circuits.
As of writing, OpenQASM code can be readily imported into
    IBM Qiskit~\cite{cross2018ibm},
    Google Cirq~\cite{cirq},
    Amazon Braket~\cite{braket},
    Microsoft Azure Quantum~\cite{azurequantum},
    and Xanadu PennyLane~\cite{bergholm2018pennylane};
though, many of these libraries support only a subset of the OpenQASM 3.0 standard.
All of our code is imported into Qiskit, which transpiles the circuit for execution on IBM quantum hardware.

For some time, it was speculated that maintaining coherence in a quantum device long enough to perform a useful calculation was unlikely \cite{unruh1995}. Moreover, the well-developed theory behind classical error correcting codes appeared to be little use in the quantum regime. After all, the typical trick of securing a message by creating redundancy through the cloning of information is provably impossible in this setting. To make matters worse, a qubit can be subject not just to a bit flip error, but also to an infinitude of additional errors, making the hope for a useful error correcting code seem bleak. However, just before the turn of the century, a flurry of codes were proposed which in principal did just this \cite{bennett1996,calderbank1996,laflamme1996,steane1996}, perhaps the most famous of which was the 9 qubit code proposed by Shor \cite{shor1995}. These works showed that there was indeed hope for a fault-tolerant quantum device and drove the great progress that has been seen in the past two decades.

A major conceptual advance that helped unify and generalize many of the earliest quantum codes was the introduction of the stabilizer formalism by Gottesman~\cite{gottesman1998}. This framework provides a powerful language for describing a large class of quantum error correcting codes using tools from finite group theory. A stabilizer code is defined by an abelian subgroup of the $n$-qubit Pauli group, where each element of the subgroup fixes the code space, and the error detection and correction properties of such a code can be understood by analyzing how various Pauli errors commute or anticommute with the stabilizer generators. Because the Pauli group forms a basis for all single- and multi-qubit errors under the operator-sum decomposition, stabilizer codes are in principle sufficient to address arbitrary errors affecting a finite number of qubits.

Building on this algebraic foundation, it was soon discovered that the geometry of qubits arranged in a lattice could be used to further improve the robustness and scalability of error correcting codes. This gave rise to the field of topological quantum error correction, in which quantum information is stored non-locally across many physical qubits in a way that is resilient to local noise~\cite{kitaev2003,dennis2002}. In the surface code, physical qubits are laid out on the edges or faces of a 2D lattice, and stabilizer checks correspond to measurements of operators defined by the local features of the lattice~\cite{fowler2012}. Errors appear as defect chains on the lattice, and decoding involves finding the most likely configuration of such chains that could have produced the observed syndrome data~\cite{dennis2002}. One of the major advantages of topological codes is their compatibility with hardware platforms that support only local qubit interactions in two dimensions. The surface code, in particular, boasts one of the highest known fault-tolerance thresholds and has become a central focus for experimental implementations of scalable quantum computing~\cite{fowler2012,raussendorf2007}. Its geometric locality enables efficient and repeated syndrome extraction, and the code distance can be increased simply by growing the size of the lattice~\cite{fowler2012}.

Despite the theoretical appeal of topological codes, the decoding problem, the task of inferring the most likely error given a syndrome, is computationally intensive. For the surface code, decoders often require significant classical computation and are difficult to adapt to hardware constraints or realistic noise models. This challenge has led to growing interest in machine learning approaches to decoding. These methods aim to learn the structure of likely errors directly from data, thereby avoiding the need to explicitly model the underlying noise distribution. Classical neural networks, such as convolutional and recurrent architectures, have been trained to predict corrections or logical operators given a syndrome pattern. In many cases, they have achieved performance comparable to or better than traditional decoders while offering greater adaptability and lower latency when deployed on specialized hardware. One of the most notable developments in this area is Google's recent introduction of AlphaQubit~\cite{bausch2024,willow2024}, a machine-learned decoder based on the very popular transformer architecture.

The remainder of this work aims to equip the reader with the necessary background to understand and evaluate the above developments. We begin with the necessary background in quantum theory to understand the challenges of correcting quantum errors in Section~\ref{sec:basicquantum}. This includes a discussion of the postulates of quantum theory as well as the circuit model of quantum computation. In Section~\ref{sec:basiccodes}, we then proceed to discuss several simple examples of quantum error correcting codes, including the 2-qubit detection code, the three qubit bit flip and phase flip codes, and the Shor code which is built by concatenating the 3 qubit codes into a 9 qubit code. Although each of these codes can be viewed as a stabilizer code, we make no reference to the underlying theory at this point, instead focusing on building the reader's intuition for how an error correcting code should work. In each of these examples, we give explicit OpenQASM code, providing the reader an opportunity to gain practical experience with the implementation of quantum circuits. We then discuss the mathematical formalism behind stabilizer codes, making sure to include a review of the necessary group theory along the way in Section~\ref{sec:stabilizer}. This theoretical discussion is complemented by an additional example of a stabilizer code beyond those presented in Section~\ref{sec:basiccodes}, including OpenQASM code capable of implementing the code on existing hardware. In Section~\ref{sec:topological-codes}, we then give a brief look at the beautiful theory of topological codes, where the code is embedded into the homological degrees of freedom of a surface defined by a lattice. As a bonus, we discuss the use of machine learning in decoding stabilizer codes in Section~\ref{sec:decoderML}.

\section{Review of Basic Quantum Theory}\label{sec:basicquantum}
\subsection{The Postulates}
We begin with the postulates of quantum theory \cite{nielsen2010,shankar2012} from which all other phenomenon in the field arise. 
The postulates themselves are not proven but rather form an axiomatic foundation for 
the theory and must therefore be accepted as fact in the same sense that a mathematician would accept the axioms of set theory. 
The difference is that quantum theory comes equipped with a physical interpretation so that predictions about observable phenomenon can be made, and elaborate, expensive experiments can be designed to check whether nature agrees with us. 
We gain confidence that our axioms are ``correct'' when we employ such tests and find that reality has not yet staged a formal protest.

As notation is half the battle, let us sort that out first. 
We will make use of the Dirac notation in which elements of a vector space 
are distinguished from operators and scalars by wrapping them with the so-called ``ket'' symbol: $\ket{\psi}$. 
Inner products between vectors will be denoted $\langle \psi\vert\phi\rangle$. 
Equivalently, the Riesz representation theorem allows us to think of this quantity as the action of a 
linear functional in the dual space on an element of the vector space. 
For notational convenience, we therefore denote elements of the dual space by the ``bra'' symbol $\bra{\psi}$ 
so that we have $\bra{\psi}(\ket{\phi})=\langle \psi\vert\phi\rangle$. We will frequently deal with tensor products
of vector spaces, which are spanned by elements of the form $\ket{\psi}_1\otimes\cdots\otimes\ket{\psi}_n$.
It will therefore benefit us greatly to introduce the compact notation $\ket{\psi}_1\cdots\ket{\psi}_n:=\ket{\psi}_1\otimes\cdots\otimes\ket{\psi}_n$, which drops the tensor product symbol altogether. We note also that the symbol $\ket{\psi}\!\!\bra{\phi}$ is interpreted as the outer product of the vectors $\ket{\psi}$ and $\ket{\phi}$; that is, this symbol defines an operator on our vector space which acts as follows: $\ket{\psi}\!\!\bra{\phi}(\ket{\chi})=\ket{\psi}\braket{\phi\vert\chi}=\braket{\phi\vert\chi}\ket{\psi}$, where in the last equality, we have used the fact that the inner product $\braket{\phi\vert\chi}$ produces a scalar value.

The first postulate of quantum theory tells us that there is a quantity $\ket{\psi}$ known as the 
state of the quantum system which takes the form of an element of a complex Hilbert space $\mathcal{H}$. 
A Hilbert space is a vector space equipped with a complete inner product 
$\langle\cdot,\cdot\rangle:\mathcal{H}\times\mathcal{H}\to\mathbb{C}$, and completeness means that all Cauchy sequences in the space
converge with respect to the distance metric 
\begin{equation}\label{eq:distance-metric}
    d(\ket{\psi},\ket{\phi})=\|\ket{\psi}-\ket{\phi}\|=\sqrt{(\bra{\psi}-\bra{\phi})(\ket{\psi}-\ket{\phi})}.
\end{equation}
Usually the requirement of completeness acts in the background, 
and so we will forgo an intensive study of this property.

\begin{postulate} The state of a closed quantum system is an element of a complex Hilbert space.
\end{postulate}

The Hilbert space structure allows us to take linear combinations (or superpositions) of states. 
That is, if $\alpha,\beta\in\mathbb{C}$ and $\ket{\psi},\ket{\phi}\in\mathcal{H}$, 
then $\alpha\ket{\psi}+\beta\ket{\phi}\in\mathcal{H}$. Combined with the next postulates, 
we will see that this property has several interesting consequences. Indeed, the ability to blend
states isn't just mathematical window dressing; it's a fundamental feature. This phenomenon appears classically in electrodynamics (or any other wave theory for that matter), where electromagnetic waves are allowed to interfere with each other. Mathematically, this interference appears as a sum of the independent electromagnetic waves. Another example is the wave produced by dropping a pebble into water. The amplitudes of two such waves add (or subtract as the case may be) in regions where they intersect, producing larger (or smaller) overall peaks in the water. The principle is the same here, only this time the wave being propagated is related to the probability that some property is observed when a measurement is made. Thus, quantum theory allows us to interfere these probability waves in the same way we would mathematically interfere any other pair of waves.

Measurements introduce an oddity into the theory. In contrast to classical 
mechanics, a measurement of a quantum system will in general disturb it, resulting 
in loss of information contained in the state that existed prior to the measurement. 
The mechanism responsible for this disturbance is the subject of heated arm-waving in 
philosophical circles, but we will have little need to address such matters here. 
For the most part, we take the pragmatic approach and accept the following postulate without regard for the disturbance mechanism.

\begin{postulate} \label{post:measure} For every observable quantity $\mathcal{O}$, 
    there is an associated self-adjoint operator $H^\mathcal{O}$. A measurement of 
    $\mathcal{O}$ results in the eigenvalue $\lambda$ of $H^\mathcal{O}$ with probability 
\begin{equation}\label{eq:post2}
P(\lambda)=\frac{\lvert\langle\lambda\vert\psi\rangle\rvert^2}{\lvert\langle\psi\vert\psi\rangle\rvert^2\lvert\langle\lambda\vert\lambda\rangle\rvert^2},
\end{equation}
where $\ket{\psi}$ is the state of the system and $\ket{\lambda}$ is an eigenvector 
associated to $\lambda$. Immediately after the measurement, the state of the system 
becomes $\ket{\lambda}/\lvert\langle\lambda\vert\lambda\rangle\rvert^2$.
\end{postulate}

The assumption that $H^\mathcal{O}$ is self-adjoint ensures that its eigenvalues are real, which is a desirable property for physically observable quantities. When our Hilbert space is finite dimensional, the operator $H^\mathcal{O}$ will have finitely many eigenvalues, so that there are only a finite number of possible measurement outcomes for the observable $\mathcal{O}$. This will almost always be the case in quantum computing, where our Hilbert space takes the form $\mathbb{C}^{2^n}$ for some $n\in\mathbb{N}$; although there are continuous variable theories for which this is not the case. Such theories may have operators with a discrete infinitude of eigenvalues or even a continuous spectrum of eigenvalues. As we are concerned with the creation of fault-tolerant quantum devices and most proposals for such devices to date are finite dimensional, we will restrict our attention to this simpler regime. The interested reader is referred to \cite{aoki2009,braunstein1998,lloyd1998} for the quantum error correction of continuous variable systems.

A common convention is to assume that all states are normalized to unity (except in some infinite dimensional settings where this is impossible). This assumption has the effect of removing the denominator in \eqref{eq:post2} without affecting the probability distribution, and there are therefore no measurable consequences for making it. We will adopt this convention throughout the remainder of this work.

Recall that the eigenvectors $\ket{\lambda}$ for $H^\mathcal{O}$ form an orthogonal basis for our Hilbert space. Thus, we may expand an arbitrary state $\ket{\psi}$ as
\begin{equation}\ket{\psi}=\sum_{\lambda}c_\lambda\ket{\lambda},
\end{equation}
for some constants $c_\lambda\in\mathbb{C}$ and where the sum is over all eigenvalues of $H^\mathcal{O}$. It follows that the probability of obtaining outcome $\lambda$ when measuring $\mathcal{O}$ is
\begin{equation}P(\lambda)=\lvert\langle\lambda\vert\psi\rangle\rvert^2=\lvert c_\lambda\rvert^2.
\end{equation}
That is, the probability distribution associated to this measurement is encoded in the coefficients of $\ket{\psi}$ when written in the corresponding eigenbasis.

Notice that Postulate~\ref{post:measure} says nothing about which self-adjoint operator should be associated to the observable $\mathcal{O}$, only that such an operator exists. The problem of determining this operator for a given classically observable quantity is called quantization, and there are many approaches that can be taken when constructing a quantization scheme \cite{hall2013}. Unfortunately, and perhaps remarkably, it was shown by Groenewold that a self-consistent quantization routine satisfying several natural properties proposed by Dirac does not exist \cite{groenewold1946}, and so to some extent we must rely on experimental evidence to guide us. The reader is directed to \cite{carosso2022} for a nice historical account of the problem of quantization. In quantum computing, this is an unnecessary detail that has no material effect on the theory. To be clear, this problem must be tackled in practice; to build a quantum device, we must rely on a dynamical interpretation of the eigenvalues of a self-adjoint operator. However, in constructing theoretical quantum computational models, this detail can be omitted by simply assuming that for any self-adjoint operator we may perform something, which we call a ``measurement'', that has the effect of collapsing our quantum state into an eigenvector of the operator. For this reason, we will often drop any mention of the observable $\mathcal{O}$, instead preferring to say that measurements are taken with respect to a self-adjoint operator $H$ or that a measurement has been taken in the $H$ basis (the basis of eigenvectors).

\begin{example} \label{example:zmeasure} Let us take as a first example the self-adjoint Pauli-$Z$ operator
\begin{equation}Z=\begin{pmatrix}1 & 0\\ 0&-1
\end{pmatrix}
\end{equation}
acting on the Hilbert space $\mathbb{C}^2$ (equipped with the standard euclidean inner product). It has eigenvalues $\lambda=\pm1$ with (orthonormal) eigenvectors $\ket{0}:=(1,0)^T$ and $\ket{1}:=(0,1)^T$. Given a state $\ket{\psi}\in\mathbb{C}^2$, we therefore have a decomposition $\ket{\psi}=\alpha\ket{0}+\beta\ket{1}$ for some constants $\alpha,\beta\in\mathbb{C}$. After a measurement in the $Z$ basis, we obtain outcome $\lambda=1$ with probability
\begin{equation} P(\lambda=1)=\lvert\langle0\vert\psi\rangle\rvert^2=\lvert\langle0\vert(\alpha\ket{0}+\beta\ket{1})\rvert^2=\lvert\alpha\rvert^2
\end{equation}
where we have used the orthonormality of the eigenbasis in the last equality. Similarly, the probability of obtaining outcome $\lambda=-1$ is $\lvert\beta\rvert^2$. If the positive eigenvalue is obtained, we know that the state of the system is now $\ket{0}$. Likewise, if the negative eigenvalue is obtained, the state of the system is now $\ket{1}$. Observe that if $\ket{\psi}=\ket{0}$ at the start of this experiment (so $\alpha=1$ and $\beta=0$), then we obtain the outcome $\lambda=+1$ with absolute certainty. Thus, whatever the observable is associated to $Z$, the state $\ket{0}$ can be said to have the definite value $\lambda=+1$ for that observable. In general, an eigenvector of a self-adjoint operator has a definite value for the associated observable, and that quantity is given by the corresponding eigenvalue.\qed
\end{example}

Quantum theory is inherently probabilistic, as outlined in Postulate~\ref{post:measure}. This is in sharp contrast to the classical theory, in which an object has a definite position and momentum at all times. Indeed, in Example~\ref{example:zmeasure} we prepared a state $\ket{\psi}$ and took a measurement, getting say the $+1$ outcome and collapsing the state to $\ket{0}$. If we were to prepare the exact same state and perform exactly the same experiment in the same way, we can still obtain the $-1$ outcome with some probability. Thus, before the measurement is made, there is no definite classical interpretation of the observable associated to the Pauli-$Z$ operator unless the state of the system happens to be an eigenvector. This challenges our classical intuition, and the reader who doubts this description is in good company \cite{einstein1935}. However, Bell's theorem \cite{bell1964} tells us that quantum theory cannot satisfy at the same time \textbf{locality} (no faster than light interactions) and \textbf{realism} (quantum systems possess a definite value for an observable before measurement). The statement of Bell's theorem requires that any hidden variables be uncorrelated to the measurement setting, which is usually justified by claiming the experimenter performing the measurement has free will. Thus, a loophole can be made by dropping this assumption. Since no experiment has yet been devised to distinguish between these possibilities, it is up to the reader to decide whether they believe the universe is local or real (or neither) or if they are willing to accept a superdeterministic theory.

In Example~\ref{example:zmeasure}, we made a measurement with respect to the $Z$ basis and saw that the state of the system collapsed to an eigenvector of $Z$, so that if another measurement is made with respect to the $Z$ basis, then we obtain precisely the same result and the state of the system remains the same eigenvector. Thus, the system has a definite value for the observable associated to $Z$. Let us see what happens when we follow the $Z$-measurement with a measurement with respect to another operator.

\begin{example} \label{example:zxzmeasure} Let $\ket{\psi}=(\ket{0}+\ket{1})/\sqrt{2}$ as in Example~\ref{example:zmeasure}. We will take three measurements, once with respect to the $Z$-basis, then with respect to the $X$-basis, and then one more with respect to the $Z$-basis. After a measurement with respect to the $Z$-basis, the state of the system is either $\ket{0}$ or $\ket{1}$, each with probability $1/2$. Suppose we now measure with respect to the self-adjoint Pauli-$X$ operator
\begin{equation}X=\begin{pmatrix}0 & 1\\ 1 & 0
\end{pmatrix}.
\end{equation}
Its eigenvalues are again $\lambda=\pm1$ but this time the eigenvectors are $\ket{+}=\frac{1}{\sqrt{2}}(1,1)^T=\frac{1}{\sqrt{2}}(\ket{0}+\ket{1})$ and $\ket{-}=\frac{1}{\sqrt{2}}(1,-1)^T=\frac{1}{\sqrt{2}}(\ket{0}-\ket{1})$. Just as with the $Z$-measurement, we can now say that the state of our system has a definite value for the observable associated to $X$. Moreover, before the $X$ measurement, our system had a definite value for the observable associated to $Z$, but if we now make another $Z$-measurement, we obtain outcomes $\lambda=\pm1$ with equal probability. Thus, the information about the observable associated to $Z$ has been destroyed by the measurement with respect to $X$. \qed
\end{example}

The surprising nature of this example is perhaps best exhibited by the Stern-Gerlach experiment \cite{gerlach1989}, in which (in a variant form) an electron is passed through a uniform magnetic field in order to measure its spin, an intrinsic form of angular momentum that all electrons possess \cite{sakurai2020}. The electron is deflected by the magnetic field either up or down due to the magnetic moment of the electron induced by its spin angular momentum. It then hits a backwall, where it interacts with a fluorescent material, allowing us to see whether it was deflected upward or downward. When the experiment is implemented, we observe two points equidistant from the original trajectory of the electron on the backwall, rather than the continuous line that would be expected if the spin were allowed to vary in magnitude. Thus, the experiment confirms that the intrinsic spin angular momentum of the electron can take only two possible values. If we now restart the experiment but rotate the magnetic field apparatus so that we are measuring the component of the spin angular momentum in an orthogonal direction, we obtain the same result but now our electrons are deflected either left or right (up or down in the orthogonal direction). Thus, we have an apparatus for measuring spin in two orthogonal directions. If we now concatenate three copies of this apparatus, the first of which measures spin in one direction, the second of which measures spin in the orthogonal direction, and the third of which measures spin in the original direction, then we now have an apparatus to perform the experiment outlined in Example~\ref{example:zxzmeasure}. After the first measurement, the spin in the original direction is known with certainty, and after the second measurement the spin in the orthogonal direction is known with certainty. Thus, it may seem that we now have knowledge of the spin in both directions, but due to the collapse of the state during the second measurement, when we make the third measurement, we find that the spin in the original direction once again follows a nondeterministic probability distribution.

We now turn to the third postulate, which tells us how a quantum system evolves in time. The classical analog of this postulate is Newton's second law, which for a massive particle takes the form of the second order ordinary differential equation
\begin{equation} F=m\frac{d^2x}{dt^2},
\end{equation}
where $x(t)$ denotes position, $m$ is the mass of the system, and $F$ denotes net force. 
Given the force and mass, there is a unique solution to this differential equation up to 
the specification of two constants, the initial position and momentum. 
Thus, given these initial conditions, Newton's law allows us to predict the entirety 
of the future trajectory of the particle. In the quantum theory, the state of a closed 
system similarly evolves according to a differential equation (called the Schr\"odinger equation), 
and a solution to this partial differential equation allows us to predict how the state 
will evolve in time with the caveat that a measurement will suddenly force a nondeterministic 
collapse of the state into an eigenvector of some self-adjoint operator.
We make sense of this by noting that when a measurement is made, the system is no longer closed.
``But wait,'' says the astute reader, ``couldn't you just include the measurement 
apparatus (and whatever else is necessary) in the system so that this larger system is closed?'' This question of whether 
Postulate~\ref{post:measure} can be derived from Postulate~\ref{post:evolve} 
is still being debated today \cite{masanes2019,kent2023,masanes2025}. For our purposes, it will be clear when one should
apply the second postulate versus the third, and so we roll up this important concern
and place it in our suitcase for another day.

\begin{postulate} \label{post:evolve}The state $\ket{\psi(t)}$ of a closed quantum system evolves according to the Schr\"odinger equation
\begin{equation}\label{eq:schrodinger}
i\hbar\frac{\partial}{\partial t}\ket{\psi(t)}=H\ket{\psi(t)},
\end{equation}
where $H$ is a self-adjoint operator known as the Hamiltonian of the system, $i$ is the imaginary unit, and $\hbar$ is known as Planck's constant.
\end{postulate}

In classical mechanics, the Hamiltonian formalism is an alternative framework to the Newtonian formalism introduced in a first course. Newton's second law is replaced by the Hamilton equations, a system of coupled linear differential equations which provide the same information \cite{taylor2005}. The Hamiltonian of the classical system is then taken to be any function in the position and momentum which produces the same dynamics as Newton's formalism, but typically it takes the form of the total energy of the system $H=\frac{p^2}{2m}+V$, where the first term is the kinetic energy and the second term is the potential energy of the system. In quantum theory the Hamiltonian is the self-adjoint operator associated to the energy observable, which can sometimes be derived through quantization of a classical system. For now, we content ourselves with the knowledge that if $H$ is time independent and self-adjoint, then a solution to \eqref{eq:schrodinger} is
\begin{equation}\ket{\psi(t)}=e^{-iHt/\hbar}\ket{\psi(0)},
\end{equation}
where $\ket{\psi(0)}$ is the initial state of the system, and we have ignored the subtleties of showing that the operator exponential is indeed well-defined (this follows from the spectral theorem in finite dimensions). Notice that this exponential is unitary. Indeed, we have $(e^{-iHt/\hbar})^\dagger=e^{iH^\dagger t/\hbar}=e^{iH t/\hbar}$, where we have used the self-adjoint property for the Hamiltonian in the last equality. Thus, $e^{-iHt/\hbar}(e^{-iHt/\hbar})^\dagger=e^{-iHt/\hbar}e^{iHt/\hbar}=\mathds{1}$, so that the operator is unitary. By Stone's theorem, every strongly continuous one-parameter unitary group can be written as such a complex operator exponential of a self-adjoint operator. Thus, in principle, given a quantum state $\ket{\psi}$, we may apply any unitary operation we like to $\ket{\psi}$ by adjusting the Hamiltonian of the system. This is an important point which we will return to in Section~\ref{sec:circuit}.

The final postulate we need to discuss is the composite system postulate, which tells us that if we have a quantum system in state $\ket{\psi}_A\in\mathcal{H}_A$ as well as a quantum system in state $\ket{\phi}_B\in\mathcal{H}_B$, then the composite system is given by the tensor product of the two systems. 

\begin{postulate} Given two quantum systems $\mathcal{H}_A$ and $\mathcal{H}_B$, the composite system is given by $\mathcal{H}_A\otimes\mathcal{H}_B$.
\end{postulate}

This opens up the door for an incredible phenomenon for which there is no classical counterpart. Indeed, not only do we have states of the form $\ket{\psi}_A\ket{\phi}_B:=\ket{\psi}_A\otimes\ket{\phi}_B$, but due to the Hilbert space structure of the composite system, linear combinations of such states are also allowed. In some cases, such a linear combination cannot be broken down into a tensor product consisting of a state in $\mathcal{H}_A$ and a state in $\mathcal{H}_B$. States with this property are called \textbf{entangled}.

\begin{example} Consider the state $\frac{1}{\sqrt{2}}(\ket{00}+\ket{11})$, where we have used the compact notation $\ket{xx}:=\ket{x}\ket{x}=\ket{x}\otimes\ket{x}$. This state is an element of $\mathbb{C}^2\otimes\mathbb{C}^2$ which cannot be separated into a product of a state in $\mathbb{C}^2$ with another state in $\mathbb{C}^2$. Indeed, suppose such a decomposition exists. Then we have
\begin{align}\frac{1}{\sqrt{2}}(\ket{00}+\ket{11})&=(\alpha\ket{0}+\beta\ket{1})(\gamma\ket{0}+\delta\ket{1})\\
& =\alpha\gamma\ket{00}+\alpha\delta\ket{01}+\beta\gamma\ket{10}+\beta\delta\ket{11}
\end{align}
for some constants $\alpha,\beta,\gamma,\delta\in\mathbb{C}$. Since $\mathbb{C}$ is a division algebra, it follows that one of $\alpha$ and $\delta$ is zero and that one of $\beta$ and $\gamma$ is zero. But none of these constants can vanish, otherwise the coefficients of $\ket{00}$ and/or $\ket{11}$ vanish, producing a contradiction. Thus, the state is entangled.\qed
\end{example}

Entanglement is a remarkable consequence of our postulates which upon measurement seems to defy our classical intuition. Indeed, suppose two observers, Alice and Bob, share a pair of electrons in the entangled spin state $\frac{1}{\sqrt{2}}(\ket{00}+\ket{11})$, where the $\ket{0}$ state denotes spin up and the $\ket{1}$ state denotes spin down in some direction that we fix. The physical manifestations of the particles are distinct, Bob is free to take one of the two electrons with him on a trip to the Andromeda galaxy while Alice keeps the other on Earth, but the state of their spins is intertwined. Before departing, they agree that Alice will first make a measurement of the system, followed by Bob (we assume a non-relativistic setting for the sake of this example). If we subscribe to the non-local Copenhagen interpretation of quantum mechanics, then when Alice measures her electron's spin, the entire state collapses into $\ket{00}$ or $\ket{11}$, each with probability $1/2$. Whatever she gets, she knows with certainty that Bob will get the same thing when he makes a subsequent measurement. Thus, Alice's measurement affects the outcome of Bob's measurement. Let us now see things from Bob's perspective. Since Bob knows beforehand that Alice will make a measurement, he can conclude that before he makes his own measurement, the state of the system is either $\ket{00}$ or $\ket{11}$ with equal probability. He knows with certainty that the state of the system is given by one of these two eigenstates, but Alice has not communicated to him which case she observed. Thus, Bob is lacking some \textit{classical} information which would determine the system completely. There is no contradiction here, Bob is merely being a good Bayesian: Alice has a prior in the form of a measurement outcome which allows her to determine the state of the system completely, while Bob does not.

Now let us instead suppose that before departing, Alice and Bob agree that they will do nothing but measure their electron's spin in the same fixed direction, but this time they do not agree to an order of events. The situation is now symmetric, so let us choose Alice's perspective with the knowledge that Bob will experience the same thing. Before her measurement, Alice does not know whether Bob has already made a measurement. Thus, the true state of the system could be $\frac{1}{\sqrt{2}}(\ket{00}+\ket{11})$, $\ket{00}$, or $\ket{11}$. Let $p$ be the probability that Bob has made a measurement. Then with probability $1-p$, the state of the system is $\frac{1}{\sqrt{2}}(\ket{00}+\ket{11})$. Meanwhile, if Bob had made a measurement, the state would be $\ket{00}$ or $\ket{11}$, each with conditional probability 1/2. It follows that the probability of the system being in $\ket{00}$ is $p/2$ and likewise for $\ket{11}$. Thus, when Alice makes a measurement, she obtains either $\ket{00}$ or $\ket{11}$, each with probability $\frac12(1-p)+\frac{p}{2}=\frac12$. Therefore, at a statistical level, it does not matter whether Alice or Bob have knowledge of the other making a measurement. If they repeat this experiment many times, they will find that they obtain each outcome with equal probability. However, at the level of each experiment, Alice may have made the first measurement, affecting Bob's, or Bob may have made the first measurement, affecting Alice's, and there is no contradiction in the overall statistics. 

In this work, we choose to view things from the lens of the Copenhagen interpretation for its simplicity. This is a matter of taste since there are no known physically realizable differences between consistent interpretations. However, we should note that there are interpretations such as the approach of Everett, colloquially known as the many-worlds interpretation \cite{hugh1957}, which do away with the state collapse entirely. Instead, Alice's measurement forces a branching of the state so that Alice becomes entangled with the original Bell state:
\begin{equation}
    \frac{1}{\sqrt{2}}\left(\ket{00}\ket{\text{Alice saw 0}}+\ket{11}\ket{\text{Alice saw 1}}\right).
\end{equation}
Thus, no matter what Bob's subsequent measurement is, it agrees with that of Alice. However we interpret the mathematics, the thing of physical importance is that Alice and Bob have consistent results. Readers interested in the various interpretations of quantum mechanics and their extensions to a relativistic setting are referred to \cite{albert1994,brown2005,maudlin2011,nikoli2005,saunders2010}.

Before closing this section, let us emphasize that the above argument does NOT tell us that the state $\frac{1}{\sqrt{2}}(\ket{00}+\ket{11})$ is equivalent to being in the state $\ket{00}$ or $\ket{11}$ each with probability $\frac{1}{2}$. Indeed, although the statistics for these systems are the same when a measurement is taken in the computational basis, they are quite different when taken in another basis! Let us rewrite these states in terms of the Bell basis
\begin{align}
    \ket{\Phi^+}&=\frac{\ket{00}+\ket{11}}{\sqrt{2}}\\
    \ket{\Phi^-}&=\frac{\ket{00}-\ket{11}}{\sqrt{2}}\\
    \ket{\Psi^+}&=\frac{\ket{01}+\ket{10}}{\sqrt{2}}\\
    \ket{\Psi^-}&=\frac{\ket{01}-\ket{10}}{\sqrt{2}}.
\end{align}
We have $\frac{1}{\sqrt{2}}(\ket{00}+\ket{11})=\ket{\Phi^+}$ and so a measurement of this state in the Bell basis will
result in the state $\ket{\Phi^+}$ with probability $P(\ket{\Phi^+})=1$. On the other hand,
\begin{equation}
    \ket{00}=\frac{\ket{\Phi^+}+\ket{\Phi^-}}{\sqrt{2}}
\end{equation}
and
\begin{equation}
    \ket{11}=\frac{\ket{\Phi^+}-\ket{\Phi^-}}{\sqrt{2}}.
\end{equation}
Thus, if the system is in one of the states $\ket{00}$ or $\ket{11}$, each with probability $\frac{1}{2}$, then the probability that a measurement in the Bell basis results in the state $\ket{\Phi^+}$ is $P(\ket{\Phi^+})=\frac{1}{2}\cdot\frac{1}{2}+\frac{1}{2}\cdot\frac{1}{2}=\frac{1}{2}$. It follows that the statistics with respect to a measurement in the Bell basis differ, and if Alice does not obtain the eigenvalue corresponding to $\ket{\Phi^+}$ upon measurement, she knows that Bob's measurement in the computational basis was already made. However, observe that a measurement with respect to the Bell basis requires access to both qubits, and it is therefore not feasible for Alice to perform this experiment to distinguish between the two scenarios when she is spacelike separated from Bob.

\subsection{The Circuit Model}\label{sec:circuit}

\begin{figure}
    \centering
    \begin{quantikz}
        \lstick{$\ket{\psi}$} & \gate[1]{U} &
    \end{quantikz}
    \caption{
        Circuit diagram for a unitary operator $U$ applied to an initial state $\ket{\psi}$.
    }\label{fig:first-circuit}
\end{figure}
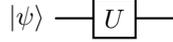

In quantum computing \cite{kaye2006,nielsen2010}, our goal is to forcibly evolve an initial quantum state in such a way that a measurement provides information about some problem of interest. 
Postulate~\ref{post:evolve} and the discussion thereafter allow us to evolve an initial quantum state by applying a sequence of unitary transformations. 
We depict this process diagrammatically in Figure~\ref{fig:first-circuit}, where the horizontal line is used to indicate the state that the unitary transformation acts on.
While this diagram resembles that of an electrical circuit, we note that there are likely no wires present in the physical apparatus implementing this diagram.
Still, we choose to call such a diagram a \textbf{quantum circuit}.

Let us introduce some of the main characters in quantum computation. 
We take as our Hilbert space $\mathbb{C}^{d}$ for some positive integer $d$. 
Usually we will take $d=2$, in which case the quantum system is called a \textbf{qubit}.
In general, for arbitrary $d$, we call the quantum system a \textbf{qudit}. The states $\ket{0}=(1,0)^T$ and $\ket{1}=(0,1)^T$ together form a basis for $\mathbb{C}^2$ known as the \textbf{computational basis}.
We saw in Example~\ref{example:zmeasure} that this basis is also the basis of eigenvectors of the Pauli-$Z$ operator.
The Pauli operators appear so often that it is worth repeating their definitions and noting some of their useful properties.
The Pauli $X$, $Y$, and $Z$ operators are given (in the computational basis) by
\begin{align}
    X&=\begin{pmatrix}
        0&1\\1&0
    \end{pmatrix}\\
    Y&=\begin{pmatrix}
        0&-i\\i&0
    \end{pmatrix}\\
    Z&=\begin{pmatrix}
        1&0\\0&-1
    \end{pmatrix}.
\end{align}
Together with the identity operator
\begin{equation}
    \mathds{1}=\begin{pmatrix}
        1&0\\0&1
    \end{pmatrix},
\end{equation}
these operators form a basis for the space of unitary operators $\mathcal{U}(\mathbb{C}^2)$, an important point that we will return to when discussing quantum errors.
It can also be checked that each of these operators is self-adjoint so that they can be regarded as observables.
Moreover, each of the Pauli operators squares to itself (in fact, this is true for any self-adjoint unitary operator) and has zero trace.

The Pauli-$X$ operator acts as the analog of a classical \texttt{NOT} gate on the computational basis. 
That is, we have $X\ket{0}=\ket{1}$ and $X\ket{1}=\ket{0}$, and for this reason, it is often called a \texttt{NOT} gate. 
The Pauli-$Z$ operator flips the sign of the relative phase factor of a qubit: $Z(\alpha\ket{0}+\beta\ket{1})=\alpha\ket{0}-\beta\ket{1}$, and for this reason it is often called a phase flip operator.
Observe that
\begin{equation}
    XZ=\begin{pmatrix}
        0&1\\1&0
    \end{pmatrix}
    \begin{pmatrix}
        1&0\\0&-1
    \end{pmatrix}=\begin{pmatrix}
        0&-1\\1&0
    \end{pmatrix}=-iY,
\end{equation}
so that the Pauli-$Y$ operator can be thought of as being built from $X$ and $Z$ operators.

In addition to the Pauli operators, a commonly used quantum operation is the Hadamard gate given (in the computational basis) by
\begin{equation}
    H=\frac{1}{\sqrt{2}}\begin{pmatrix}
        1&1\\1&-1
    \end{pmatrix}.
\end{equation}
This operator is both self-adjoint and unitary and acts by transforming the computational basis vectors into a uniform superposition of themselves.
Indeed, we have
\begin{align}
    H\ket{0}&=\frac{1}{\sqrt{2}}(\ket{0}+\ket{1})\\
    H\ket{1}&=\frac{1}{\sqrt{2}}(\ket{0}-\ket{1}).
\end{align}
This property will prove useful to us as we design quantum circuits in the future.

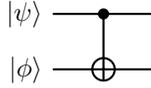
\begin{figure}
    \centering
    \begin{quantikz}
        \lstick{$\ket{\psi}$} & \ctrl{1} &\\
        \lstick{$\ket{\phi}$} & \targ{} &
    \end{quantikz}
    \caption{
        Circuit diagram for the $\cnot$ gate acting on the state $\ket{\psi}\ket{\phi}$. Here the control qubit is $\ket{\psi}$ and the target qubit is $\ket{\phi}$.
    }\label{fig:cnot}
\end{figure}

Each of the previously mentioned operators acts only on a single qubit.
In general, we will have reason to work with systems of multiple qubits, and so unitary operators acting on multiple qubits may be necessary.
To this end, we introduce first the so-called controlled-NOT or $\cnot$ operator, which operates on a two qubit system.
One of these qubits will be called the \textbf{control qubit} and the other will be called the \textbf{target qubit}.
Letting the left qubit be the control, the $\cnot$ operator acts on the four dimensional computational basis as follows:
\begin{align}
    \cnot\ket{00}&=\ket{00}\\
    \cnot\ket{01}&=\ket{01}\\
    \cnot\ket{10}&=\ket{11}\\
    \cnot\ket{11}&=\ket{10}.
\end{align}
When the control qubit is $\ket{0}$, the $\cnot$ operator acts trivially (does nothing), but when the control qubit is $\ket{1}$, the \texttt{CNOT} operator applys the $X$ operator to the target qubit.
This explains the notation; the $\cnot$ gate is a controlled version of the $X$ gate which is itself the analog of a \texttt{NOT} gate. 
The circuit notation for the $\cnot$ gate is shown in Figure~\ref{fig:cnot}. 
There is nothing special about the Pauli-$X$ gate showing up in the definition of
this operator. This construction can be extended to a controlled version of
any single qubit unitary operator $U$. When the control qubit is $\ket{0}$ the operator again does nothing,
but when the control qubit is $\ket{1}$, the operator now applies $U$
to the target qubit. The circuit notation for this more general gate is shown in Figure~\ref{fig:cU}. We use the special notation in Figure~\ref{fig:cZ} for a controlled-$Z$ gate, which the reader will notice is agnostic to which qubit is labeled the control and which is the target. This is because the controlled-$Z$ gate introduces a minus sign on the basis elements $\ket{01}$ and $\ket{10}$ and does nothing to the basis elements $\ket{00}$ and $\ket{11}$. Occasionally, we may wish to implement a controlled operation which triggers when the control qubit is in the state $\ket{0}$ rather than $\ket{1}$. This can be accomplished by putting a Pauli-$X$ gate before and after the control. Equivalently, we denote such an operation with an open circle instead of a black dot on the control as in Figure~\ref{fig:open-cU}.

\begin{figure}
    \centering
    \begin{quantikz}
        \lstick{$\ket{\psi}$} & \ctrl{1} &\\
        \lstick{$\ket{\phi}$} & \gate[1]{U} &
    \end{quantikz}
    \caption{
        Circuit diagram for the controlled $U$ gate acting on the state $\ket{\psi}\ket{\phi}$. Here the control qubit is $\ket{\psi}$ and the target qubit is $\ket{\phi}$.
    }\label{fig:cU}
\end{figure}
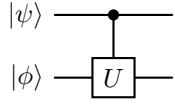

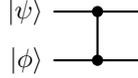
\begin{figure}
    \centering
    \begin{quantikz}
        \lstick{$\ket{\psi}$} & \ctrl{1} &\\
        \lstick{$\ket{\phi}$} & \ctrl{0} &
    \end{quantikz}
    \caption{
        Circuit diagram for the controlled-$Z$ gate acting on the state $\ket{\psi}\ket{\phi}$.
    }\label{fig:cZ}
\end{figure}

Another frequently used two qubit operator is the $\swapop$ gate, which operates by swapping the states of two qubits:
\begin{equation}
    \swapop(\ket{\psi}\ket{\phi})=\ket{\phi}\ket{\psi}.
\end{equation}
It can be shown that any permutation of $k$ things can be written as a composition
of transpositions (the swap permutation). Thus, on the Hilbert space level, we
can perform a permutation of arbitrarily many states by combining swap operators.
The circuit notation for the $\swapop$ gate is shown in Figure~\ref{fig:swap}.

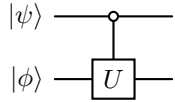
\begin{figure}
    \centering
    \begin{quantikz}
        \lstick{$\ket{\psi}$} & \ctrl[open]{1} &\\
        \lstick{$\ket{\phi}$} & \gate[1]{U} &
    \end{quantikz}
    \caption{
        Circuit diagram for the controlled $U$ gate acting on the state $\ket{\psi}\ket{\phi}$ which triggers on $\ket{0}$ instead of $\ket{1}$.
    }\label{fig:open-cU}
\end{figure}

\begin{figure}
    \centering
    \begin{quantikz}
        \lstick{$\ket{\psi}$} & \swap{1} &\\
        \lstick{$\ket{\phi}$} & \targX{} &
    \end{quantikz}
    \caption{
        Circuit diagram for the $\swapop$ gate acting on the state $\ket{\psi}\ket{\phi}$.
    }\label{fig:swap}
\end{figure}

Usually, we will want to make a measurement at the end of a circuit to gain some information about the problem we are attempting to solve.
The notation for a measurement is a meter as shown in Figure~\ref{fig:measure}. 
Unless otherwise specified, all measurements are assumed to be taken with respect to the Pauli-$Z$ observable (i.e. with respect to the computational basis).

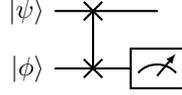
\begin{figure}
    \centering
    \begin{quantikz}
        \lstick{$\ket{\psi}$} & \swap{1} & \\
        \lstick{$\ket{\phi}$} & \targX{} & \meter{}
    \end{quantikz}
    \caption{
        Circuit diagram for the $\swapop$ gate acting on the state $\ket{\psi}\ket{\phi}$ with a measurement at the end. 
        Measurements are assumed to be in the $Z$ basis unless otherwise specified. Here we measure the second qubit.
    }\label{fig:measure}
\end{figure}

To get aquainted with the notation, let us examine a process known as \textbf{quantum teleportation}. 
Suppose Alice and Bob prepare an entangled pair of qubits in the Bell state $\ket{\Phi^+}=\frac{1}{\sqrt{2}}(\ket{00}+\ket{11})$.
They then depart, Alice holding onto one of the qubits, and Bob taking the other with him on a trip to the Andromeda galaxy.
At a later date, Alice obtains an additional qubit which is in the potentially unknown state $\ket{\psi}$.
Alice wants to send the qubit to Bob, but intergalactic shipping of qubits is expensive.
Fortunately, Alice is very clever. 
She realizes that if she makes a measurement of her half of the entangled pair she shares with Bob, it will affect the state of Bob's qubit,
and so if she entangles the qubit she wants to send to Bob with her half of the Bell state, she might be able to force Bob's qubit into the state $\ket{\psi}$,
effectively teleporting the state of her qubit to Bob. She proceeds according to the circuit in Figure~\ref{fig:teleportation}.

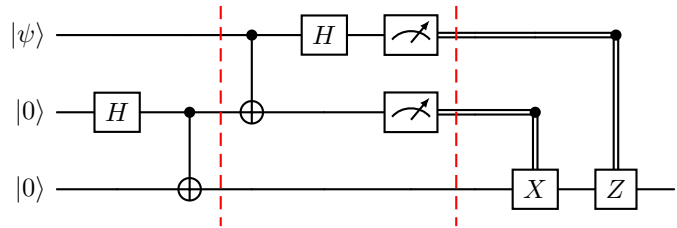
\begin{figure}
    \centering
    \begin{quantikz}
        \lstick{$\ket{\psi}$} &             & \slice{} & \ctrl{1} & \gate[1]{H} & \meter{}\slice{} & \setwiretype{c} &             & \ctrl[vertical wire=c]{2}\\
        \lstick{$\ket{0}$}    & \gate[1]{H} & \ctrl{1} & \targ{}  &             & \meter{} & \setwiretype{c} & \ctrl[vertical wire=c]{1} \\
        \lstick{$\ket{0}$}    &             & \targ{}  &          &             &          &                 & \gate[1]{X} & \gate[1]{Z} & 
    \end{quantikz}
    \caption{
        Quantum teleportation circuit. 
        The first portion of the procedure prepares the Bell state by applying a Hadamard and a $\cnot$ gate to the $\ket{00}$ state. 
        In the second portion of the circuit, Alice performs a $\cnot$ followed by a Hadamard gate between the qubit she wants to send and her half of the entangled pair.
        She then measures each of her qubits in the $Z$ basis.
        In the final portion of the circuit, Alice sends Bob two bits of information which he can then use to reconstruct the state $\ket{\psi}$.
    }\label{fig:teleportation}
\end{figure}

The first section of Figure~\ref{fig:teleportation} (the piece before the first dashed vertical line) shows the creation of the entangled pair shared by Alice and Bob.
Indeed, these two qubits start in the state $\ket{00}$ and are transformed to
\begin{align}
    \cnot (H\otimes \mathds{1})\ket{00}&=\cnot\frac{1}{\sqrt{2}}(\ket{0}+\ket{1})\ket{0}\\
    &=\frac{1}{\sqrt{2}}\cnot(\ket{00}+\ket{10})\\
    &=\frac{1}{\sqrt{2}}(\ket{00}+\ket{11}).
\end{align}
The middle section of the circuit shows the procedure performed by Alice to entangle the qubit that she wants to send to Bob with her half of the entangled pair she shares with Bob.
The state at the beginning of this section is $\frac{1}{\sqrt{2}}\ket{\psi}(\ket{00}+\ket{11})$.
We may write $\ket{\psi}=\alpha\ket{0}+\beta\ket{1}$ for some constants $\alpha,\beta\in\mathbb{C}$, so that the state becomes
$\frac{1}{\sqrt{2}}(\alpha\ket{000}+\alpha\ket{011}+\beta\ket{100}+\beta\ket{111})$.
Alice first applies a $\cnot$ gate controlled off of the qubit she wants to send to Bob and targeting the entangled qubit, producing
\begin{equation}
    \frac{1}{\sqrt{2}}(\alpha\ket{000}+\alpha\ket{011}+\beta\ket{110}+\beta\ket{101}).
\end{equation}
She then applies a Hadamard gate to the qubit she wants to send, which gives us the overall state
\begin{equation}
        \frac{1}{2}(\alpha(\ket{0}+\ket{1})\ket{00}+\alpha(\ket{0}+\ket{1})\ket{11}+\beta(\ket{0}-\ket{1})\ket{10}+\beta(\ket{0}-\ket{1})\ket{01}).
\end{equation}
Rearranging, we see that the state can be written
\begin{equation}
    \frac{1}{2}[\ket{00}(\alpha\ket{0}+\beta\ket{1})+\ket{01}(\beta\ket{0}+\alpha\ket{1})+\ket{10}(\alpha\ket{0}-\beta\ket{1})+\ket{11}(\alpha\ket{1}-\beta\ket{0})],
\end{equation}
so that when Alice subsequently makes a measurement of her two qubits in the computational basis, the state of the system collapses to one of the following with equal probability:
\begin{align}
    \ket{\phi_{00}}&=\ket{00}(\alpha\ket{0}+\beta\ket{1})\\
    \ket{\phi_{01}}&=\ket{01}(\beta\ket{0}+\alpha\ket{1})\\
    \ket{\phi_{10}}&=\ket{10}(\alpha\ket{0}-\beta\ket{1})\\
    \ket{\phi_{11}}&=\ket{11}(\alpha\ket{1}-\beta\ket{0}).
\end{align}
Notice that if the result of Alice's measurement is the state $\ket{\phi_{00}}$, then she will accomplish her goal of sending her qubit to Bob because Bob's qubit is now in the state $\alpha\ket{0}+\beta\ket{1}$.
Moreover, if Alice's measurement produces any other result, Bob's qubit can be corrected by applying $X$ and $Z$ gates until his qubit is in the state that Alice wanted to send.
Indeed, we have $\ket{\psi}=X(\beta\ket{0}+\alpha\ket{1})=Z(\alpha\ket{0}-\beta\ket{1})=ZX(\alpha\ket{1}-\beta\ket{0})$. 
Thus, to accomplish her goal, Alice sends the result of her measurement to Bob (two \textit{classical} bits of information) with instructions to apply an $X$ gate if the first bit (read from right to left) is a 1 and then to apply a $Z$ gate if the second bit is a 1,
and this is what the final piece of the circuit depicts. The double lines indicate a classical bit of information. To be abundantly clear, when Alice makes her measurements, she obtains one of the eigenvalues $\pm1$.
She then records this information in the form of two bits by identifying the $+1$ measurement with the bit state 0 and the $-1$ measurement with the bit state $1$ (hence the labeling of the $\ket{\phi_{ij}}$). In other words, Alice labels the bits with the same label as the eigenvector the state collapses into, rather than the eigenvalues. This is a convention we adopt throughout the remainder of this work.

This procedure illustrates the remarkable fact that a qubit of information can in some sense be teleported across a vast distance at the cost of two classical bits of information and some pre-established entanglement between shared qubits.
This first cost is the reason that this procedure cannot be used to communicate faster than the speed of light. 
It is not until Bob receives the result of Alice's measurement that he can correct his state to the one Alice wished to send,
and this transmission of classical information is indeed limited by the speed of light according to Einstein's well-tested theory of relativity.

In Figure~\ref{fig:teleportation}, let us see what happens if we move the controlled 
operations occuring after Alice's measurements to just before Alice's measurements as shown in Figure~\ref{fig:teleportation2}.
Physically, this would involve Alice applying an operation involving her qubits as well as Bob's qubit.
We have already seen that the state after the second part of the circuit is
\begin{equation}
    \frac{1}{2}[\ket{00}(\alpha\ket{0}+\beta\ket{1})+\ket{01}(\beta\ket{0}+\alpha\ket{1})+\ket{10}(\alpha\ket{0}-\beta\ket{1})+\ket{11}(\alpha\ket{1}-\beta\ket{0})].
\end{equation}
Now applying the controlled $X$ operation in the third part of the circuit, this state becomes
\begin{equation}
    \frac{1}{2}[\ket{00}(\alpha\ket{0}+\beta\ket{1})+\ket{01}(\beta\ket{1}+\alpha\ket{0})+\ket{10}(\alpha\ket{0}-\beta\ket{1})+\ket{11}(\alpha\ket{0}-\beta\ket{1})],
\end{equation}
and the subsequent application of the controlled $Z$ operation produces
\begin{equation}
    \frac{1}{2}[\ket{00}(\alpha\ket{0}+\beta\ket{1})+\ket{01}(\alpha\ket{0}+\beta\ket{1})+\ket{10}(\alpha\ket{0}+\beta\ket{1})+\ket{11}(\alpha\ket{0}+\beta\ket{1})].
\end{equation}
Thus, when Alice makes her measurements, no matter which result she gets, she can be assured that Bob's qubit is in the state $\ket{\psi}$ as she intended.
Of course, applying the controlled operations before her measurement would probably
require her to be in the vicinity of Bob, and so she still elects to perform the original procedure
to avoid traveling to the Andromeda galaxy. In principle, however, these two circuits are equivalent.
In fact, this phenomenon is not specific to the teleportation problem, it is true in general and called the \textbf{Principle of Deferred Measurement}:
When the result of a measurement is used as a classical conditional operation in a later portion of the circuit,
the classical conditional operation can be replaced by a controlled quantum operation
before the measurement takes place. This principle allows us to move all of our measurement
operations to the end of the circuit.

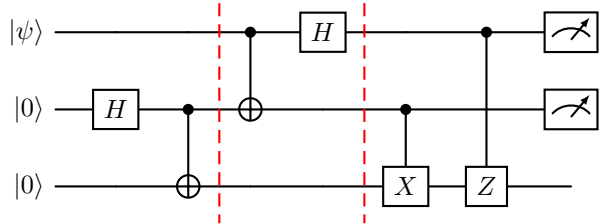
\begin{figure}
    \centering
    \begin{quantikz}
        \lstick{$\ket{\psi}$} &             & \slice{} & \ctrl{1} & \gate[1]{H}\slice{} &                           &\ctrl{2} & \meter{}\\
        \lstick{$\ket{0}$}    & \gate[1]{H} & \ctrl{1} & \targ{}  &             & \ctrl{1}                  & & \meter{} \\
        \lstick{$\ket{0}$}    &             & \targ{}  &          &             &  \gate[1]{X}   &  \gate[1]{Z}   &   
    \end{quantikz}
    \caption{
        Quantum teleportation circuit with classically conditioned operations replaced by controlled operations inside circuit.
    }\label{fig:teleportation2}
\end{figure}

\subsection{Distinguishing Quantum States}

It is often the case that we want to understand how closely related two quantum states are. A natural way to measure this is to take the distance metric \eqref{eq:distance-metric} which is induced by the inner product equipped to our Hilbert space and observe that states which are a distance zero from each other satisfy
\begin{equation}
    \sqrt{(\bra{\psi}-\bra{\phi})(\ket{\psi}-\ket{\phi})}=0,
\end{equation}
or equivalently,
\begin{equation}\label{eq:real-part-1}
    \real(\braket{\phi\vert\psi})=1.
\end{equation}
Recalling that $1\ge\lvert\braket{\phi\vert\psi}\rvert^2=(\real(\braket{\phi\vert\psi}))^2+(\imag(\braket{\phi\vert\psi}))^2$, we see that $\imag(\braket{\phi\vert\psi})$ must vanish, so that \eqref{eq:real-part-1} is equivalent to
\begin{equation}\label{eq:nonreal-part-1}
    \braket{\phi\vert\psi}=1.
\end{equation}
But recall that we only care about quantum states up to a global phase since these global phases have no measurable effects on the system. Thus, $\ket{\phi}$ and $\ket{\psi}$ could equivalently be represented by $e^{i\varphi}\ket{\phi}$ and $e^{i\theta}\ket{\psi}$, respectively, and this transforms $\braket{\phi\vert\psi}\mapsto e^{i(\theta-\varphi)}\braket{\phi\vert\psi}$. It follows that this function is not invariant under global phase transformations, and so we should search elsewhere for a measure of similarity between two quantum states. A natural candidate that solves the global phase invariance issue while simultaneously having a physical interpretation is the \textbf{fidelity}
\begin{equation}
    \mathcal{F}(\ket{\phi},\ket{\psi})=\lvert\braket{\phi\vert\psi}\rvert^2.
\end{equation}
The presence of the modulus leaves the fidelity invariant under global phase transformations, and we note that if $\ket{\phi}$ is an eigenvector for some observable, then the fidelity is precisely the probability of obtaining the eigenvalue associated to $\ket{\phi}$ when a measurement of the state $\ket{\psi}$ is made with respect to this observable.

This choice of fidelity satisfies several natural properties. Firstly, we have
\begin{equation}
    0\le\mathcal{F}(\ket{\phi},\ket{\psi})\le1
\end{equation}
with equality to 1 if and only if $\ket{\phi}=\ket{\psi}$ up to a global phase. This property ensures that the fidelity is indeed a measure of similarity between the two quantum states. Although it is not a distance metric, it does give rise to one. The second property is the invariance under a swap of the arguments. That is,
\begin{equation}
    \mathcal{F}(\ket{\phi},\ket{\psi})=\mathcal{F}(\ket{\psi},\ket{\phi}),
\end{equation}
a natural assumption for a quantity that should measure the similarity between two states. Finally, this quantity is invariant under not just global phase transformations but also any simultaneous unitary transformation of the arguments
\begin{equation}
    \mathcal{F}(U\ket{\phi},U\ket{\psi})=\lvert\braket{\phi\vert U^\dagger U\vert\psi}\rvert^2=\lvert\braket{\phi\vert\psi}\rvert^2=\mathcal{F}(\ket{\phi},\ket{\psi}).
\end{equation}
When we consider that such a transformation can be seen as a change of basis, this becomes a very natural assumption.

In the event that the state of the system is not entirely known, we can describe it as an ensemble of pure states $\mathcal{E}=\{(p_1,\ket{\psi_1}),\ldots,(p_n,\ket{\psi_n})\}$ which lists all of the possible states $\ket{\psi_i}$ that the system could be in along with the probability $p_i$ that the system is in this state. Given an ensemble of pure states, we can always extend our Hilbert space $\mathcal{H}$ to a larger one $\mathcal{H}_E$ such that the state of the system can be described by a single $\ket{\psi}\in\mathcal{H}_E$. The state $\ket{\psi}$ is called a \textbf{purification} of the ensemble. We then recover the original system by measuring $\ket{\psi}$ in a basis for the orthogonal complement $\mathcal{H}^\bot$, which collapses the system into the state $\ket{\psi_i}$ with probability $p_i$. Note that there are many purifications for a given ensemble as the following example shows.

\begin{example}Consider the ensemble $\mathcal{E}=\{(1/2,\ket{0}),(1/2,\ket{1})\}$ describing a system with Hilbert space $\mathcal{H}=\mathbb{C}^2$. By extending $\mathcal{H}$ to $\mathcal{H}_E=\mathbb{C}^4$ and writing $\ket{\psi}=\frac{1}{\sqrt{2}}(\ket{00}+\ket{11})$, we obtain a purification of the ensemble $\mathcal{E}$. Indeed, a measurement of the second qubit will force the original system into the state $\ket{0}$ or $\ket{1}$, each with probability $1/2$. Similarly, we could instead write $\ket{\psi}=\frac{1}{\sqrt{2}}(\ket{01}+\ket{10})$, and a measurement of the second qubit again reproduces the ensemble $\mathcal{E}$.
\end{example}

Now, given two ensembles $\mathcal{E}_\psi$ and $\mathcal{E}_\phi$, we may wish to know how closely related they are. At the level of the Hilbert space $\mathcal{H}$, we do not have single states $\ket{\psi}$ and $\ket{\phi}$ with which to calculate a fidelity. A solution to this problem is given by Jozsa \cite{jozsa1994} and Uhlmann \cite{uhlmann1976}, whose combined work produces the fidelity
\begin{equation}
    \mathcal{F}(\mathcal{E}_\psi,\mathcal{E}_\phi)=\max\ \lvert\braket{\psi\vert\phi}\rvert^2,
\end{equation}
where the maximum is taken over all purifications of the ensembles. That is, we compute the fidelity between two ensembles by taking the maximum fidelity between all purifications of the ensembles $\mathcal{F}(\mathcal{E}_\psi,\mathcal{E}_\phi)=\max\ \mathcal{F}(\ket{\psi},\ket{\phi})$.

As an aside, there is a lesser known generalization of these fidelities to more than two states. Consider a collection of $k$ states $\ket{\psi_1},\ldots,\ket{\psi_k}$ (not an ensemble, just a list of states that we wish to compare). Occasionally, one may wish to know how closely related these $k$ states are \textit{collectively}. Naturally, we should ask such a collective fidelity to satisfy the properties
\begin{equation}
    0\le\mathcal{F}(\ket{\psi_1},\ldots,\ket{\psi_k})\le1
\end{equation}
with equality to 1 if and only if $\ket{\psi_1}=\cdots=\ket{\psi_k}$ (up to global phases),
\begin{equation}
    \mathcal{F}(\ket{\psi_1},\ldots,\ket{\psi_k})=\mathcal{F}(\ket{\psi_{\sigma(1)}},\ldots,\ket{\psi_{\sigma(k)}})
\end{equation}
for any permutation $\sigma$ of the arguments, and the unitary invariance property
\begin{equation}
    \mathcal{F}(U\ket{\psi_1},\ldots,U\ket{\psi_k})=\mathcal{F}(\ket{\psi_1},\ldots,\ket{\psi_k}).
\end{equation}
Interestingly, the invariance under all permutations of the arguments can be relaxed to invariance under the action of a permutation group $G$ on $k$ symbols (we give an introduction to groups in Section~\ref{sec:groups} for those unfamiliar), but this possibly weaker assumption only guarantees that arrangements of the arguments produce the same fidelity when they can be connected by a permutation belonging to $G$. An example of a collective fidelity of this type is given by the quantity
\begin{equation}
    \mathcal{F}(\ket{\psi_1},\ldots,\ket{\psi_k})=\frac{1}{\lvert G\rvert}\sum_{\sigma\in G}\braket{\psi_1\vert\psi_{\sigma(1)}}\cdots\braket{\psi_k\vert\psi_{\sigma(k)}},
\end{equation}
which appears as the acceptance probability for a $G$-Bose symmetry test as outlined in \cite{laborde2023,bradshaw2023,barenco1997}, allowing for the calculation of this fidelity on a quantum device.

For $k=2$, we recover an expression containing the fidelity $\mathcal{F}(\ket{\phi},\ket{\psi})=\lvert\braket{\phi\vert\psi}\rvert^2$, which can be computed as the expected value of the well-known swap test as we now show. Consider the circuit in Figure~\ref{fig:swap-test}. We first apply a Hadamard gate to an ancillary qubit producing the state
\begin{equation}
    \frac{1}{\sqrt{2}}(\ket{0}+\ket{1})\ket{\phi}\ket{\psi},
\end{equation}
which is transformed by the controlled swap gate to
\begin{equation}
    \frac{1}{\sqrt{2}}(\ket{0}\ket{\phi}\ket{\psi}+\ket{1}\ket{\psi}\ket{\phi}).
\end{equation}
Now applying the final Hadamard gate produces the state
\begin{equation}
    \ket{0}\frac{1}{2}(\ket{\phi}\ket{\psi}+\ket{\psi}\ket{\phi})+\ket{1}\frac{1}{2}(\ket{\phi}\ket{\psi}-\ket{\psi}\ket{\phi}).
\end{equation}
Thus, the expected value for a measurement of the ancillary qubit in the Pauli-$Z$ basis is
\begin{equation}
    \langle Z\rangle = \frac{1}{4}(\bra{\phi}\bra{\psi}+\bra{\psi}\bra{\phi})(\ket{\phi}\ket{\psi}+\ket{\psi}\ket{\phi})-\frac{1}{4}(\bra{\phi}\bra{\psi}-\bra{\psi}\bra{\phi})(\ket{\phi}\ket{\psi}-\ket{\psi}\ket{\phi})=\lvert\braket{\phi\vert\psi}\rvert^2.
\end{equation}
We therefore have a circuit which computes the fidelity between two quantum states. This circuit is an example of a Hadamard test, a class which we will see arise again in the syndrome extraction procedure of the stabilizer code formalism.

\begin{figure}
    \centering
    \begin{quantikz}
        \lstick{$\ket{0}$}       & \gate[1]{H} & \ctrl{1} & \gate[1]{H} & \meter{}\\
        \lstick{$\ket{\phi}$}    &             & \swap{1} &             &  \\
        \lstick{$\ket{\psi}$}    &             & \swap{0} &             &  
    \end{quantikz}
    \caption{
        Circuit for the swap test to compute the fidelity between two states.
    }\label{fig:swap-test}
\end{figure}
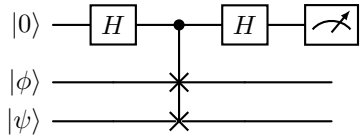

\section{Basic Quantum Error Correcting Codes} \label{sec:basiccodes}

In Section~\ref{sec:circuit}, we discussed the circuit model of quantum computation
in which the quantum computer is described as a closed system free of any noise.
The unitary transformations made in each of our quantum circuits were assumed
to be without error when acting on our qubits and our measurements were assumed
to be made perfectly. It will come at no surprise to the reader that 
the reality of the situation is in sharp contrast to this point of view; 
quantum gates are noisy and quantum systems are hard to fully conceal from the 
outside world, so that interactions with the environment are a source of noise. 
It is probably a fair assumption to make that there are in fact no true closed 
systems in the universe apart from the universe itself.

For all of these reasons, we need a method to handle errors which arise
in our quantum system. Of course, error correction has been studied already in
the context of classical computers, and so we might guess that the procedures
outlined in the classical regime can be translated to the quantum world. While
there is sometimes a correspondence between a classical error correcting code and
a quantum one \cite{calderbank1996,steane1996multiple}, we quickly run into problems even for the simple repetition codes which safeguard the information content against errors by creating redundancy through its duplication. In the quantum regime, it turns out that such a duplication process cannot exist for an arbitrary quantum state, a fact which has been termed the \textbf{no-cloning theorem}. However, we will see that an analog of the repetition code
can be constructed without cloning. Instead, we create redundancy in the quantum
system by appending additional qubits, which we call \textbf{redundancy qubits},
and entangling the state we want to preserve with these qubits. This creates a
new logical state in a higher dimensional Hilbert space which contains redundant
information about the original quantum state but does not clone it. With the encoded state 
containing redundant information, we are able to construct simple codes which detect and 
even correct errors on our qubits.

Before diving into our first quantum code, let us discuss the types of errors which can occur
on a qubit. We will first address the most common form of error in a highly closed off system,
a unitary error $E\in\mathcal{U}(\mathbb{C}^2)$. Under such an error, a state $\ket{\psi}\in\mathbb{C}^2$ 
is transformed by sending $\ket{\psi}\mapsto E\ket{\psi}$. Of course, there are
infinitely many such errors since $\mathcal{U}(\mathbb{C}^2)$ is infinite in cardinality. This is
in contrast to the classical error which can only take the form of a bit flip. It may therefore
seem that the problem of correcting errors on quantum states is hopeless. However, not all is lost.
Notice that the Pauli operators together with the identity span the space $\mathcal{U}(\mathbb{C}^2)$. 
Thus, every unitary error can be written
\begin{equation}
    E=a\mathds{1}+bX+cY+dZ,
\end{equation}
for some $a,b,c,d\in\mathbb{C}$, and we would therefore be well advised to start by correcting errors that
take the form of Pauli matrices. Notice, moreover, that $XZ=-iY$, so that a $Y$-error is nothing more
than a $Z$-error followed by an $X$-error (up to a phase). We will therefore begin by correcting the latter two errors.
Since $X\ket{0}=\ket{1}$ and $X\ket{1}=\ket{0}$, we call such an error a \textbf{bit flip error} in
analogy to the type of error which occurs in classical computing. The error caused by a $Z$ operation
transforms an arbitrary qubit as $\alpha\ket{0}+\beta\ket{1}\mapsto\alpha\ket{0}-\beta\ket{1}$, 
thereby changing the relative phase by a sign. For this reason, we call such an error a \textbf{phase
flip error}.

In this section, we examine some of the basic known quantum error correcting codes
without much regard for the underlying mathematical formalism that operates in the background.
In Section~\ref{sec:classical-rep}, we review the classical repetition codes mentioned
above and discuss their success rates. In Section~\ref{sec:hamming-bound}, we discuss the Hamming
bound and its relationship to more advanced classical codes. We then prove the no-cloning theorem in 
Section~\ref{sec:redundancy} and provide an alternative means of creating redundancy.
In Section~\ref{sec:2qubit}, we use this alternative method of
creating redundancy to create our first quantum code which detects bit flip errors, and then we
proceed to construct 3-qubit error correcting codes for both bit and phase flip errors in
Sections~\ref{sec:3qubitcode} and \ref{sec:3qubitphasecode}. In Section~\ref{sec:shorcode}, 
we show that these latter two codes can be concatenated, forming a 9-qubit code known as the Shor 
code, named for its discoverer, Peter Shor. This code corrects for both bit and phase flip errors, and 
was one of the first complete codes to be discovered. We wrap up this section by briefly discussing the quantum Hamming bound in Section~\ref{sec:quantum-hamming-bound}.

\subsection{The Classical Repetition Codes}\label{sec:classical-rep}

Let us consider the most basic classical error correcting code: the repetition code of length $n$. Repetition codes represent one of the earliest and most intuitive approaches to error correction, tracing back to the era of telegraphy, where redundancy was a natural way to mitigate noise in primitive communication systems \cite{shannon1962mathematical}.
To encode a bit $b \in \{0, 1\}$, we simply repeat it $n$ times before sending, which we denote
$b \cdots b$ ($n$ times). These encoded states are known as \textbf{codewords}.
To decode, the receiver chooses the most commonly occurring bit as the presumed original message, a strategy which is called \textbf{majority voting}.
Codes are often identified by a triple $[n,k,d]$ consisting of the total number $n$ of \textbf{physical bits} sent over the
communication channel, the number $k$ of \textbf{logical bits} that are encoded into the $n$ total physical bits, and
the \textbf{distance} $d$ of the code, the minimum number of bits that must be changed to transform one codeword to another i.e. the minimum \textbf{hamming distance} between codewords.
The repetition code of length $n$ is therefore an $[n, 1, n]$ code, as it encodes $k=1$ logical qubits into $n$ physical bits
with codewords $0\cdots0$ and $1\cdots1$, which are a distance $d=n$ apart.
We now provide an example to elucidate this process.

\begin{example}
Suppose Alice wants to send the message $m=010$ to Bob over a noisy channel using the repetition
code of length $n=3$.
She first encodes each symbol in the message by repeating it $3$ times to form the codeword
$c = 000 \ 111 \ 000$.
Now suppose Bob receives the message $c' = 000 \ 101 \ 011$.
For each block, Bob assumes the original message to be the most commonly occurring bit, and thus he
decodes the message to be $m' = 011$.
The first block of $c'$ was sent without error, and thus Bob has no issue decoding it.
The second block experienced one bit flip error, but Bob could still decode it correctly as there
were more ones than zeros.
The third block, however, experienced two bit flip errors which caused Bob to incorrectly decode
the block to a one.
\end{example}

From the example above, we can see that for an $n=3$ repetition code, only one error can be
successfully corrected; two or more errors will lead to an incorrect decoded message.
In general, a (binary) repetition code of length $n$ can correct $t=\lfloor\frac{d-1}{2}\rfloor$
errors, where $d=n$ is the distance of the code.
This can be intuitively interpreted as having fewer than half of the bits in the message corrupted.
We generally only consider repetition codes for odd $n$, as even $n$ have the potential for ties in
the number of received $0$ and $1$ bits, in which case we could detect that an error has occurred
but not correct it.

The astute reader may be wondering whether there exists a more efficient code than the simple repetition code, to which the answer is
a resounding yes. Among these are the Reed-Solomon codes~\cite{reed1960polynomial} which find use in CDs
and DVDs as well as QR codes.
The only thing repetition codes really have going for them is the ease with which they can be learned and implemented.
To quantify the amount of redundancy built into a code, we use a concept called \textbf{rate} that is defined by
\begin{equation}
    R = \frac{k}{n},
\end{equation}
where $k$ is the number of message bits being encoded and $n$ is the total number of bits sent over
the communication channel.
For repetition codes, this is always $1/n$, which is very bad considering you need $n \ge 3$ to do
any error correction.
Reed-Solomon codes, for comparison, typically have a rate of $1/2$.

While it is perfectly permissible to think about the repetition code as a means of protecting the information contained in a classical bit which can then be decoded by the receiver to recover the original information content, we have chosen to call the encoded sequence of bits a logical bit because the repetition code can be viewed as a means of constructing ``bits'' that are intrinsically more robust to noise. By encoding a single logical bit across multiple physical bits and interpreting the collective state as our fundamental unit of information, the repetition code creates a more robust information carrier that can tolerate a certain number of errors. This abstraction is crucial in the design of fault-tolerant devices, where operations are performed not on raw physical bits, which are highly error-prone, but on these encoded logical bits. In such settings, logical gates are engineered to act on encoded data, with error correction interwoven into the computational process to ensure reliable output despite imperfect hardware. Thus, the repetition code is more than an error-correcting scheme; it exemplifies how redundancy can be leveraged to construct higher-level, error-resilient building blocks for reliable computation.

This point is worth formalizing. Suppose our physical bits are independently prone to experience a bit flip error with probability $p$ as it is transmitted through a noisy channel. By implementing the repetition code of length $n$, we are able to correct errors with up to $t=\lfloor\frac{n-1}{2}\rfloor$ bit flips. Thus, the probability that a logical bit is transmitted successfully is
\begin{equation}
    p_{success} = \sum_{j=0}^t\binom{n}{j}p^j(1-p)^{n-j}.
\end{equation}
So, the \textbf{logical error rate}, the probability that an error occurs on the logical bit, is $1-p_{success}$. Thus, the logical error rate is an improvement from the physical error rate when $1-p_{success}<p$, which is equivalent to $1-p<p_{success}$. When $n=2m+1$ is odd, this condition is
\begin{equation}
    1-p<\sum_{j=0}^m\binom{2m+1}{j}p^j(1-p)^{2m+1-j}.
\end{equation}
When $p=\frac12$, the binomial distribution on the right becomes
\begin{equation}
    \frac{1}{2^{2m+1}}\sum_{j=0}^m\binom{2m+1}{j}=\frac{2^{2m}}{2^{2m+1}}=\frac12.
\end{equation}
Now consider what happens as $p$ decreases below $1/2$. The binomial distribution becomes \emph{asymmetric}, with its mass shifting toward smaller values of $m$. In other words, the distribution is skewed to the right, meaning the bulk of its probability mass lies on the left side of the distribution.
This observation establishes the inequality
\begin{equation}
1 < \sum_{j=0}^m \binom{2m+1}{j} p^j (1-p)^{2m+1-j},
\end{equation}
for $p<\frac12$. Thus, as long as $p<\frac12$, the logical error rate is smaller than the original error rate of the individual physical bits. For this reason, we call this value of $p$ the \textbf{error threshold}.

The primary challenge in designing efficient codes is in the optimization of rate versus distance.
Ideally, our code would have both a large distance and rate, in which case it corrects a large number
of errors and does so without much overhead. Unfortunately, various results have shown that this
cannot happen. There is always a trade-off between distance and rate so that high distance codes necessarily
have low rate and vice-versa. Thus, the code designer's objective is to maximize distance without too much
overhead with the understanding that an increase in distance will likely force an increase in overhead,
or a decrease in the rate.

\subsection{The Hamming Bound}\label{sec:hamming-bound}

One of the most fundamental limitations in classical coding theory is the \textbf{Hamming bound}, which places an upper limit on the number of codewords a code can have, given its length and error-correcting capability. Understanding this bound also sets the stage for the quantum analogue we will encounter later. 

Let $q$ denote the size of the alphabet used in some code $\mathcal{C}$. For example, binary codes have $q = 2$. Let $\mathcal{A}_q$ be the $q$-ary alphabet (a set of order $q$, it doesn't really matter what symbols you use), and let $n$ be the length of each codeword. Suppose $\mathcal{C} \subseteq \mathcal{A}_q^n$ is a code with minimum distance $d$, meaning that any two distinct codewords differ in at least $d$ positions. Here,
\begin{equation}
    \mathcal{A}_q^n = \{(a_1,\ldots,a_n):a_i\in \mathcal{A}_q\}
\end{equation}
denotes the direct product of $\mathcal{A}_q$ with itself $n$ times. Under maximum likelihood decoding, a received word is mapped to the codeword that is closest in Hamming distance. The decoder can therefore correct up to
\begin{equation} \label{eq:hamming_t}
    t = \left\lfloor \frac{d - 1}{2} \right\rfloor
\end{equation}
errors, since this is the largest number of symbol errors that can be corrected unambiguously. If more than $t$ errors occur, the received word may lie closer to a different codeword, resulting in a decoding error. Thus, each codeword defines a \emph{Hamming ball} of radius $t$ in $\mathcal{A}_q^n$, within which all received messages are decoded to that codeword (see Figure~\ref{fig:sphere_packing}).

\begin{figure}
    \centering
    \includegraphics[width=3in]{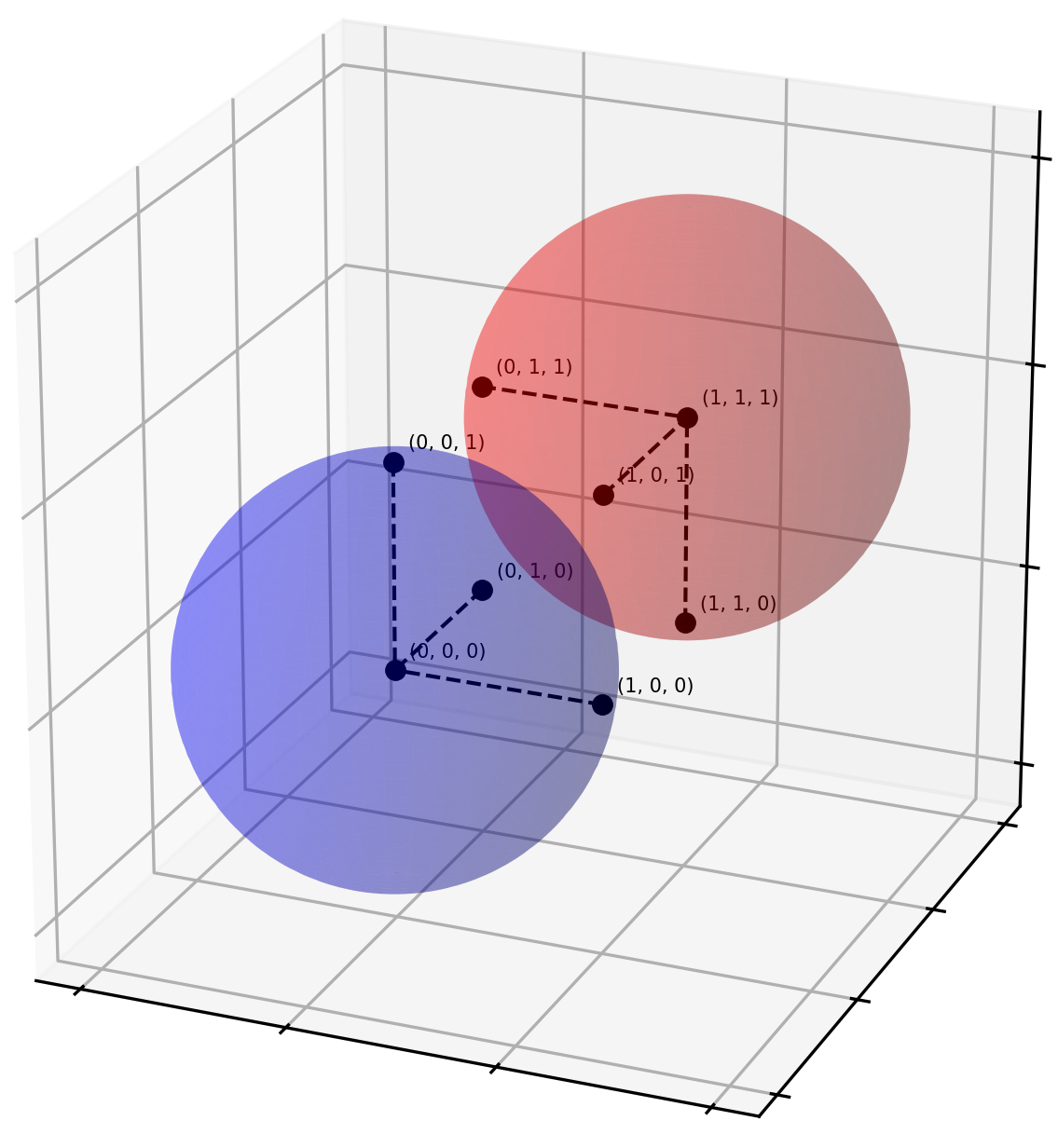}
    \caption{
        Illustration of how error correcting code design can be framed as a
        problem of optimal sphere packing.
        All received messages within the blue sphere will be corrected to a zero
        and those within the red sphere will be corrected to a one.
        Since we are operating not in $\mathbb{R}^3$ but rather $\mathbb{Z}_2^3$,
        these ``spheres'' are actually the set of all codewords within Hamming
        distance one of the center (each set of connected black dots).
    }
    \label{fig:sphere_packing}
\end{figure}

To count how many received words are decoded into a single codeword, observe that for each $k \le t$, there are $\binom{n}{k}$ ways to choose $k$ positions for errors, and for each such position, there are $(q - 1)$ incorrect symbols to substitute. Therefore, the number of $n$-letter words within Hamming distance $t$ of a given codeword is
\begin{equation} \label{eq:hamming_denom}
    \sum_{k=0}^{t} \binom{n}{k} (q - 1)^k.
\end{equation}

If a code has $M$ codewords, and the Hamming balls of radius $t$ around each codeword are disjoint (i.e., decoding regions do not overlap), then these balls must collectively fit inside $\mathcal{A}_q^n$, which contains $q^n$ total strings. Hence, we obtain the \textbf{Hamming bound}:
\begin{equation} \label{eq:hamming_bound}
    A_q(n, d) \le \frac{q^n}{\sum_{k=0}^{t} \binom{n}{k}(q - 1)^k},
\end{equation}
where $A_q(n,d)$ denotes the maximum number of codewords in a $q$-ary code of length $n$ and minimum distance $d$.

A code that saturates this inequality is called a \textbf{perfect code}. In such codes, the Hamming balls of radius $t$ centered at each codeword exactly partition the ambient space $\mathcal{A}_q^n$. Every possible received word lies within distance $t$ of exactly one codeword. While such codes are rare, notable examples include binary repetition codes with odd length, the family of binary Hamming codes \cite{hamming1950,macwilliams1977}, and the Golay codes over binary and ternary alphabets \cite{golay1949,golay1961}. Perfect codes represent the optimal trade-off between redundancy and error correction, but they exist only for very specific parameter ($n,k,d$) choices. In general, most practical codes fall short of saturating the Hamming bound but are designed to approach it as closely as possible.

Later, we will see a quantum analogue of this result known as the \textbf{quantum Hamming bound}, which similarly limits the parameters of quantum error-correcting codes. Although the logic is more subtle due to quantum phenomena such as superposition and entanglement, the classical Hamming bound provides essential conceptual groundwork.

\subsection{Redundancy in Quantum Systems}\label{sec:redundancy}

We have seen that error correction of classical bits can be accomplished by simply
copying the bit we wish to communicate and sending all copies to the receiver
with the understanding that they will make a majority vote guess at the true 
value of this bit. In this way, redundancy is added to the system that has the 
effect of decreasing the probability of receiving an erroneous bit. The more 
redundancy is added, the better the odds of success are so long as the probability
of an error is below a certain threshold.

We may therefore guess that a similar approach can be taken in quantum information
theory; perhaps we can copy the state of one qubit onto another qubit and then send
them both to a receiver who does something like a majority vote. This is problematic
for one big reason; suppose that the state we want to copy isn't known.
A measurement of this state would likely destroy it, giving us only limited tomographic
information in the process. If the state can be prepared repeatedly, the problem
is solved, but we would like a procedure that works for \emph{any} quantum state.
Thus, we are in need of a unitary transformation $U$ acting on two qubits which copies
the arbitrary state of the first qubit onto the second qubit. That is, we would like
$U$ to satisfy
\begin{equation}\label{eq:clone}
    U(\ket{\psi}\ket{\phi})=\ket{\psi}\ket{\psi}
\end{equation}
for every $\ket{\psi}\in\mathcal{H}$ and for some $\ket{\phi}\in\mathcal{H}$. The
following theorem tells us that such an operation does not exist.

\begin{theorem}[No-cloning]
    There does not exist a unitary operator $U$ satisfying \eqref{eq:clone} for all
    $\ket{\psi}\in\mathcal{H}$ and some $\ket{\phi}\in\mathcal{H}$.
\end{theorem}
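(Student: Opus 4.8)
The plan is to argue by contradiction, exploiting the fact that a unitary operator must be linear and must preserve inner products. Both of these properties conflict with the requirement that $U$ copies an \emph{arbitrary} state, and either one alone is enough to derive a contradiction. I would present the inner-product argument as the main line since it is the cleanest and most robust, but linearity gives an equally valid and perhaps more elementary route.

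\emph{The inner-product approach.} Suppose for contradiction that such a $U$ exists, with fixed blank state $\ket{\phi}$. Pick two arbitrary states $\ket{\psi_1}$ and $\ket{\psi_2}$. Then by \eqref{eq:clone} we have $U(\ket{\psi_1}\ket{\phi})=\ket{\psi_1}\ket{\psi_1}$ and $U(\ket{\psi_2}\ket{\phi})=\ket{\psi_2}\ket{\psi_2}$. Since $U$ is unitary, it preserves inner products, so the inner product of the two input states must equal the inner product of the two output states. Computing the left-hand side, the inner product of $\ket{\psi_1}\ket{\phi}$ with $\ket{\psi_2}\ket{\phi}$ factors as $\braket{\psi_1\vert\psi_2}\braket{\phi\vert\phi}=\braket{\psi_1\vert\psi_2}$, using that $\ket{\phi}$ is normalized. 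The right-hand side gives $\braket{\psi_1\vert\psi_2}\braket{\psi_1\vert\psi_2}=\braket{\psi_1\vert\psi_2}^2$. Setting these equal yields the scalar equation
\begin{equation}
    \braket{\psi_1\vert\psi_2}=\braket{\psi_1\vert\psi_2}^2,
\end{equation}
which forces $\braket{\psi_1\vert\psi_2}$ to be either $0$ or $1$. But $\ket{\psi_1}$ and $\ket{\psi_2}$ were arbitrary, and one can certainly choose two normalized states whose inner product is neither $0$ nor $1$ (for instance $\ket{0}$ and $\ket{+}=\frac{1}{\sqrt{2}}(\ket{0}+\ket{1})$, which have overlap $1/\sqrt{2}$). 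This contradiction shows no such $U$ can exist.

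\emph{The main obstacle} is not any single calculation, which is routine, but rather ensuring the logic is watertight: one must be careful that the blank state $\ket{\phi}$ is genuinely \emph{fixed} and independent of $\ket{\psi}$ (otherwise the factorization of the input inner product fails), and one should note that the conclusion holds even if we allow $U$ to produce $\ket{\psi}\ket{\psi}$ only up to a state-dependent global phase, since the modulus $\lvert\braket{\psi_1\vert\psi_2}\rvert=\lvert\braket{\psi_1\vert\psi_2}\rvert^2$ leads to the same dichotomy. As an alternative I would mention the \textbf{linearity} argument: writing $\ket{\psi}=\alpha\ket{0}+\beta\ket{1}$, linearity of $U$ forces $U(\ket{\psi}\ket{\phi})=\alpha\ket{00}+\beta\ket{11}$, whereas the cloning condition demands $\ket{\psi}\ket{\psi}=\alpha^2\ket{00}+\alpha\beta\ket{01}+\beta\alpha\ket{10}+\beta^2\ket{11}$; these agree for all $\alpha,\beta$ only in degenerate cases, again a contradiction. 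Either argument suffices, and I would lead with the inner-product version for its elegance and its clear use of the unitarity hypothesis that the statement explicitly invokes.
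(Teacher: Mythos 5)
Your proposal is correct and follows essentially the same route as the paper's proof: both use unitarity to equate $\braket{\psi_1\vert\psi_2}$ with $\braket{\psi_1\vert\psi_2}^2$ and conclude the overlap must be $0$ or $1$, contradicting the choice of distinct non-orthogonal states. Your instantiation with $\ket{0}$ and $\ket{+}$ and the aside on linearity are fine additions but do not change the substance of the argument.
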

\begin{proof}
    Suppose such an operator $U$ and such a state $\ket{\phi}$ exist and recall that unitary operators satisfy $U^\dagger U=\mathds{1}$. 
    We are free to choose $\ket{\psi}$, so let $\ket{\psi_1}$ and $\ket{\psi_2}$ be distinct non-orthogonal states. Then we have
    \begin{align}
        \braket{\psi_1\vert\psi_2}&=\braket{\psi_1\vert\psi_2}\braket{\phi\vert\phi}\\
        &=\bra{\psi_1}\bra{\phi}\cdot\ket{\psi_2}\ket{\phi}\\
        &=\bra{\psi_1}\bra{\phi}U^\dagger U\ket{\psi_2}\ket{\phi}\\
        &=(U\ket{\psi_1}\ket{\phi})^\dagger(U\ket{\psi_2}\ket{\phi})\\
        &=(\ket{\psi_1}\ket{\psi_1})^\dagger\ket{\psi_2}\ket{\psi_2}\\
        &=\bra{\psi_1}\bra{\psi_1}\cdot\ket{\psi_2}\ket{\psi_2}\\
        &=\braket{\psi_1\vert\psi_2}^2.
    \end{align}
    Thus, $\braket{\psi_1\vert\psi_2}$ is either 0 or 1. If the inner product is zero, then the states are orthogonal, a contradiction. If the inner product is one, then the states differ only by a global phase and are therefore not distinct, a contradiction. Thus, the operator $U$ does not exist.
\end{proof}

The no-cloning theorem tells us that if we want to create redundancy in our quantum state,
we will have to consider some other means of doing so. Let us now show how to accomplish this
by entangling the input state with additional qubits to create a logical state in a 
higher dimensional Hilbert space which contains the information of the original state 
but spread out between the additional qubits.

Suppose Alice has a qubit in the state $\ket{\psi}=\alpha\ket{0}+\beta\ket{1}$ that she wants to send to Bob over a noisy channel.
To do so safely, she encodes her qubit into a higher dimensional Hilbert space called the \textbf{code space}.
With a single additional qubit, Alice might encode her state as in Figure~\ref{fig:cnot-encode}. 
By appending a $\ket{0}$ to her state and applying the $\cnot$ gate, she is making the following identification
\begin{equation}
    \alpha\ket{0}+\beta\ket{1}\mapsto\alpha\ket{00}+\beta\ket{11}.
\end{equation}
We see that the coefficients defining $\ket{\psi}$ are still present in the state,
but now that the state is part of a larger Hilbert space, it might be possible to
make a measurement which tells us whether an error occurred without destroying this 
information. As we show in the next section, this is indeed the case. In the context of fault-tolerance, the redundancy is added to create a logical qubit which is more resistant to error than the individual physical qubits it is made from, thereby allowing for quantum computations to be performed in a fault-tolerant manner.

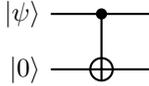
\begin{figure}
    \centering
    \begin{quantikz}
        \lstick{$\ket{\psi}$} & \ctrl{1} & \\
        \lstick{$\ket{0}$} & \targ{} &
    \end{quantikz}
    \caption{
        Circuit diagram for sample encoding procedure with one additional qubit.
    }\label{fig:cnot-encode}
\end{figure}
\subsection{The 2-Qubit Code}
\label{sec:2qubit}

We begin our first practical dive into QEC with the 2-qubit detection code,
which introduces redundancy into a quantum state via entanglement and then
takes a measurement of an entangled ancillary system in order to deduce whether
a single Pauli $X$ error has occurred.
Of key importance is that this measurement does not collapse the
state of the logical qubit, and thus the qubits being sent are still usable by the receiver.
This code is unable to correct errors, but rest assured we will introduce a
number of codes that can later.

As we have already learned, we cannot simply introduce redundancy into a
quantum state by repetition (as in the classical repetition code) without
violating the no-cloning theorem. We must therefore find another 
way to add redundancy to $\ket{\psi}$ before transmission. Consider again the simple 
quantum circuit in Figure~\ref{fig:cnot-encode}, which applies the \texttt{CNOT} operator controlled by the
state $\ket{\psi}$ and targeting another qubit in the $\ket{0}$ state.
Let $\ket{\psi} = \alpha\ket{0} + \beta\ket{1}$. We will denote the encoded state by $\ket{\psi}_L$ and call it the \textbf{logical state} to distinguish it from the state $\ket{\psi}$ being represented. We observed previously that
the resulting logical state of the system is
\begin{equation} \label{eq:2qubit:redundancy}
    \ket{\psi}_L = \alpha\ket{00} + \beta\ket{11}.
\end{equation}
The above circuit that transforms the input state to its corresponding logical state
is called the \textbf{encoder}.

We draw a few key observations from \eqref{eq:2qubit:redundancy}.
First, we have not cloned $\ket{\psi}$; there is no way to recover two copies
of $\ket{\psi}$ from \eqref{eq:2qubit:redundancy}.
Second, there is now a source of redundancy that can be exploited for error
detection.
If the received state has components in either the $\ket{01}$ or $\ket{10}$
basis states, we know that the state was affected by noise during transmission.
Unfortunately, this code provides no way to tell which qubit was affected.
The only course of action for the receiver in this case is to request that the
sender prepare another quantum state and send it again, in the hopes that both qubits are in agreement this time.

The question remains of how to determine whether the physical qubits are in
agreement.
Simply measuring each qubit is problematic for two reasons:
    (1) directly measuring either qubit collapses the state and no
        longer allows the receiver to take advantage of the quantum properties
        of the system, and
    (2) measurement on either would yield a classical bit of $0$ with
        probability $\lvert\alpha\rvert^2$ and $1$ with probability $\lvert\beta\rvert^2$, 
        thus we still wouldn't know with certainty whether an error in transmission 
        had occurred.
We want to obtain this information while maintaining the quantum
properties of the system.

What we want is to take a single measurement which produces a bit
that distinguishes between when both qubits in $\ket{\psi}_L$ are the same and when they are different.
This bit is called an \textbf{error syndrome} and the process of obtaining it is called \textbf{syndrome extraction}. The process is analogous to having a doctor diagnose an ailment, hence the terminology.
Computer scientists likely recognize that the \texttt{XOR} operation would
accomplish this task, albeit with quantum arguments.
In motivating a quantum circuit that does an \texttt{XOR}-like operation, we
invoke Maslow's Theorem, which states that when the only tool you have is a
hammer, everything starts to look like a nail. That's right, we're going to use
\texttt{CNOT} gates again.

Let's think about what \texttt{CNOT} gates do, and how we might be able to use
them to create an \texttt{XOR} operation.
If $\ket{m}$ and $\ket{n}$ are single qubit computational basis states, then the \texttt{CNOT} 
operation controlled on qubit $\ket{m}$ and applied to $\ket{n}$ gives us the following:
\begin{equation}
    \texttt{CNOT} \ket{m} \ket{n} = \ket{m} \ket{n \oplus m}
\end{equation}
where $\oplus$ is addition modulo two (equivalent to \texttt{XOR}).
So, effectively, what a \texttt{CNOT} gate is doing in this case is transforming
$\ket{n} \mapsto \ket{n\oplus m}$.
This is close to what we want, but it changes the state of one of its arguments.
Leveraging an ancillary qubit to store the result of the \texttt{CNOT}
operation, we obtain the circuit in Figure~\ref{fig:3.4:xor:circuit}.

\begin{figure}
    \centering

    \begin{subfigure}{0.4\textwidth}
        \centering
        \begin{quantikz}
            \lstick{$\ket{m}$} & \ctrl{2} &          &          & \\
            \lstick{$\ket{n}$} &          & \ctrl{1} &          & \\
            \lstick{$\ket{0}_A$}    & \targ{}  & \targ{}  & \meter{} & \setwiretype{c}
        \end{quantikz}
        \vspace{0.25em}
        \caption{Circuit diagram.}
        \label{fig:3.4:xor:circuit}
        \vspace{1.15em}
    \end{subfigure}
    \begin{subfigure}{0.55\textwidth}
        \centering
        \begin{tabular}{ccccc} \toprule
            $\ket{m}$      & $\ket{n}$      & First \texttt{CNOT} & Second \texttt{CNOT} & Measurement \\ \midrule
            $\ket{0}$      & $\ket{0}$      & $\ket{0}$           & $\ket{0}$            & 0           \\
            $\ket{0}$      & $\ket{1}$      & $\ket{0}$           & $\ket{1}$            & 1           \\
            $\ket{1}$      & $\ket{0}$      & $\ket{1}$           & $\ket{1}$            & 1           \\
            $\ket{1}$      & $\ket{1}$      & $\ket{1}$           & $\ket{0}$            & 0           \\ \bottomrule
        \end{tabular}
        \caption{``Truth table'' showing measurement obtained for each computational basis state.}
        \label{fig:3.4:xor:table}
    \end{subfigure}
    
    \caption{Using \texttt{CNOT} gates to produce behaviour similar to the classical \texttt{XOR} operator.}
    \label{fig:3.4:xor}
    
\end{figure}

Consider what happens to the ancillary qubit $\ket{0}_A$ when $\ket{m}$ and 
    $\ket{n}$ are in each of the basis states $\ket{0}$ and $\ket{1}$ using the table in
    Figure~\ref{fig:3.4:xor:table}.
We list the choice of the computational basis vector in the first two columns, the
    state of the ancillary qubit after each \texttt{CNOT} gate in the middle two columns,
    and the result of a measurement of the ancillary qubit in the final column.
Behold, the measurement column is identical to the truth table for the classical
\texttt{XOR} operation!
Moreover, we can show that if the computational basis states $\ket{m}\ket{n}$ are
replaced by the logical qubit $\ket{\psi}_L$, this state is undisturbed by the
measurement of the ancillary qubit.
Indeed, with $\ket{\psi}_L$ as in \eqref{eq:2qubit:redundancy}, the application
of the \texttt{CNOT} operations produces the state
\begin{equation}
    \alpha\ket{000}+\beta\ket{110}=(\alpha\ket{00}+\beta\ket{11})\ket{0},
\end{equation}
which is the same as doing nothing at all.
Thus, a measurement of the ancillary qubit will produce the 0 outcome and the
logical qubit remains undisturbed.
Let us see what happens when an error occurs before the \texttt{XOR} analog is applied.

Putting together the encoder and the syndrome extraction procedure, the final
circuit for the 2-qubit detection code is shown in Figure~\ref{fig:2qubitcode}.
First, we form the logical state $\ket{\psi}_L$ by adding redundancy to
$\ket{\psi}$ via a controlled-not operation.
Then, the error operator $E \in \{\mathds{1}, X_1, X_2, X_1 X_2\}$ potentially
degrades $\ket{\psi}_L$.
Finally, we perform syndrome extraction by measuring the ancillary qubit
$\ket{0}_A$, understanding that if we obtain the outcome 1, then an error has occurred.
To see this, observe that after the encoding, the error transforms the logical qubit
in one of the following ways:
\begin{align}
    \alpha\ket{00}+\beta\ket{11}&\stackrel{X_1}{\longmapsto} \alpha\ket{10}+\beta\ket{01}\\
    \alpha\ket{00}+\beta\ket{11}&\stackrel{X_2}{\longmapsto} \alpha\ket{01}+\beta\ket{10}\\
    \alpha\ket{00}+\beta\ket{11}&\stackrel{X_1X_2}{\longmapsto} \alpha\ket{11}+\beta\ket{00}.
\end{align}
In the first case, the \texttt{CNOT} gates in the syndrome extraction produce
the state
\begin{equation}
    \alpha\ket{101}+\beta\ket{011}=(\alpha\ket{10}+\beta\ket{01})\ket{1},
\end{equation}
which produces the outcome 1 upon measurement of the ancillary qubit. Similarly,
the second case produces the outcome 1 upon measurement. However, notice that in
the last case, which corresponds to a bit flip error on both physical qubits, the
state after the \texttt{CNOT} gates in the syndrome extraction is
\begin{equation}
    (\alpha\ket{11}+\beta\ket{00})\ket{0},
\end{equation} 
so that a measurement of the ancillary qubit produces the same value that we
get when no error occurs. Thus, a simultaneous bit flip error on both physical
qubits will go undetected by our simple code.

\begin{figure}

    \begin{subfigure}{0.55\textwidth}
        \centering
        \tikzset{
            noisy/.style={cloud,fill=white,draw=gray,line width=2pt,inner xsep=-4pt,inner ysep=-5pt,cloud puffs=10,aspect=0.5} 
        }
        \begin{quantikz}
            \lstick{$\ket{\psi}$} & \ctrl{1} &[0.3cm] \gate[2,style={noisy},label style=black]{\text{Noise}} & & \ctrl{2} & & & & \\
            \lstick{$\ket{0}$} & \targ{} & & & & \ctrl{1} & & & \\
            \setwiretype{n} & & & \lstick{$\ket{0}_A$} & \targ{} \setwiretype{q} & \targ{}  & \meter{} & \setwiretype{c} & 
        \end{quantikz}
        \caption{Circuit Diagram}
        \vspace{5em}
    \end{subfigure}
    \hspace{1em}
    \begin{subfigure}{0.4\textwidth}
    \centering
        \begin{lstlisting}[language=C++]
OPENQASM 3.0;
include "stdgates.inc";

input angle theta;

// Initialize qubits
qubit[2] rand;
qubit[2] psi;
qubit S;

// Entangle psi0 and psi1
cx psi[0], psi[1];

// Simulate noise channel
h rand;
ry(theta) rand;
cx rand, psi;

// Extract error syndrome
cx psi[0], S;
cx psi[1], S;\end{lstlisting}
        \caption{OpenQASM implementation}
    \end{subfigure}

    \vspace{1em}
    
    \caption{
        The two-qubit error detection circuit used to detect, but not correct, a single Pauli $X$ error when transmitting a quantum state $\ket{\psi}$ via an ancilla qubit $\ket{0}_A$.
    }
    \label{fig:2qubitcode}

\end{figure}

\begin{table}
    \centering
    \begin{tabular}{ccccccl}
        \multicolumn{2}{c}{$\ket{\psi}_L$}\vspace{0.1em} & $S$ & Count & $P_{obs}$ & $P_{true}$ & Analysis                 \\ \toprule
            0 &     0 &   0 &  3074 &     0.750 &      0.810 & No transmission errors; syndrome of 0.                      \\
            1 &     0 &   1 &   351 &     0.086 &      0.090 & Bit flip error on first physical qubit, error detected.     \\
            0 &     1 &   1 &   327 &     0.080 &      0.090 & Bit flip error on second physical qubit, error detected.    \\
            1 &     1 &   0 &    38 &     0.009 &      0.010 & Bit flip error on both physical qubits, error not detected. \\
            \multicolumn{3}{r}{Other:}
                            &   306 &     0.075 &      0.000 & Errors occurred outside of the communication channel.
    \end{tabular}
    
    \caption{
        Empirical probabilities of each combination of $X_1$ and $X_2$ error.
        Values were obtained by running the circuit in Figure~\ref{fig:2qubitcode} on IBM Torino for 4096 shots.
    }
    \label{tab:2qubit}
\end{table}

Perhaps the most important metric of this code is the probability that the
syndrome will be one given that an error has occurred, i.e., the
probability that an error is detectable.
We have
\begin{align*}
    P(S=1 | E \ne \mathds{1})   &= \frac{P(S=1 \cap E \ne \mathds{1})}{P(E \ne \mathds{1})} \\
                                &= \frac{1}{P(E \ne \mathds{1})}
                                   \sum_{\hat{E} \ne \mathds{1}}
                                        P(S=1 | E = \hat{E}) P(E = \hat{E})\\
                                &= \frac{1\cdot P(E = X_1)+1\cdot P(E = X_2)+0\cdot P(E = X_1X_2)}{P(E \ne \mathds{1})} ,                                       
\end{align*}
where we have used the facts that the probability that $S=1$ given that $\hat{E}=X_1$ or $X_2$ is 1,
and the probability that $S=1$ given that $\hat{E}=X_1X_2=1$ is 0 (see Table~\ref{tab:2qubit}). 
Together with the probabilities $P(E=\hat{E})$ that can be read off of Table~\ref{tab:2qubit},
and the fact that $P(E\ne \mathds{1})=1-P(E=\mathds{1})=1-(1-p_1)(1-p_2)=p_1+p_2-p_1p_2$,
we have
\begin{align}
    P(S=1 | E \ne \mathds{1})&=\frac{(1-p_1)p_2+p_1(1-p_2)}{p_1+p_2-p_1p_2}\\
    &=\frac{p_1+p_2-2p_1p_2}{p_1+p_2-p_1p_2}\\
    &= 1 - \frac{p_1p_2}{p_1 + p_2 - p_1p_2}. \label{eq:2qubit:pdetectable}
\end{align}
If $p_1=p_2=1$, then both errors occur and $P(S=1\vert E\ne\mathds{1})=0$ as expected. If $p_1=1$, then
the error on the first qubit occurs with certainty and $P(S=1\vert E\ne\mathds{1})=1-p_2$, which says
that the probability of detecting an error given that one occurred is equivalent to the probability that
an error does not occur on the second qubit. This is indeed what we would expect since the error will go
undetected if the second qubit also experiences a bit flip. Similarly, if the bit flip on the second qubit occurs
with certainty, then $P(S=1\vert E\ne\mathds{1})=1-p_1$.

Just as for classical codes, we categorize quantum codes according to the number $n$ of physical qubits, the number $k$ of logical qubits, and the distance $d$, which we define as the minimum weight of a Pauli word that transforms one code word to another. Here, the weight $w(A)$ of an operator $A$ is the number of non-identity Pauli operators in $A$. To distinguish a quantum code from a classical code, we write $[[n,k,d]]$ instead of $[n,k,d]$. We also define the rate of the code to be $R=k/n$ just as in the classical case. Thus, the two qubit code has rate $R=1/2$ but distance $d=2$ since it takes two Pauli-$X$ gates to get from $\ket{00}$ to $\ket{11}$. A quantum code with distance $d$ can detect errors with weight up to $d-1$ and correct errors with weight up to $\lfloor\frac{d-1}{2}\rfloor$, and this agrees with our analysis of the 2-qubit code which detects only single qubit bit flips and corrects nothing. We say that the 2-qubit code is a $[[2,1,2]]$ code.

\subsection{The 3-Qubit Bit Flip Code}\label{sec:3qubitcode}

For the purpose of this code, we will again assume that only bit flip errors are capable
of occuring, choosing to take into account phase flip errors in the next section. 
Thus, we assume that for an arbitrary qubit with quantum state 
$\ket{\psi}=\alpha\ket{0}+\beta\ket{1}$ transmitted over our communication 
channel, an error $E=X$ occurs with probability $p$, leaving us with the state 
$X\ket{\psi}=\beta\ket{0}+\alpha\ket{1}$. We will further assume that when 
multiple qubits are transmitted, the probability of an error occurring on each 
qubit is independent of the other qubits. This is a safe assumption if we send
each qubit independently through the channel.

We must first describe the encoder taking us from $\ket{\psi}$ to a logical state
$\ket{\psi}_L$ made of three physical qubits. We append two qubits in the $\ket{0}$
state to the redundancy register and apply a $\cnot$ gate targeting each of the
redundancy qubits, both controlled off of the data qubit as shown in 
Figure~\ref{fig:3qubitbit} just before the noise channel. Thus, the data state $\ket{\psi}$ is transformed to
the logical state
\begin{equation}
    \ket{\psi}_L:=\alpha\ket{000}+\beta\ket{111},
\end{equation}
which we note is not the cloned state $\ket{\psi}\ket{\psi}\ket{\psi}$. There are three physical qubits forming this logical state, and we see that three Pauli-$X$ operations are needed to transform $\ket{000}$ to $\ket{111}$, so this is a $[[3,1,3]]$ code with rate $R=1/3$.

When a bit flip error occurs, the basis vectors spanning the logical qubit state
are no longer the all 0 state or the all 1 state. Just as in the two qubit code, 
we want to capitalize on this observation by designing our syndrome extraction operation
so that it checks whether the basis vectors appearing in the state of the logical
qubit satisfy this constraint using $\cnot$ gates targeting an ancillary register.
To this end, we append two ancillary qubits and apply four $\cnot$ gates as shown
in the syndrome extraction portion of Figure~\ref{fig:3qubitbit}. If the reader cannot see the motivation behind this choice of syndrome extraction, they can rest assured that motivation will be given when we discuss stabilizer codes in Section~\ref{sec:stabilizer}. The current section is more about developing intuition for the general structure of a quantum code. 

\begin{figure}[t]

    \begin{subfigure}{\textwidth}
        \centering
        \tikzset{
            noisy/.style={cloud,fill=white,draw=gray,line width=2pt,inner xsep=-4pt,inner ysep=-5pt,cloud puffs=10,aspect=0.5} 
        }
        \begin{quantikz}[column sep=0.4cm]
            \lstick{$\ket{\psi}$} & \ctrl{1} & \ctrl{2} &[0.3cm] \gate[3,style={noisy},label style=black]{\text{Noise}}
                                                          &[0.5cm] \gategroup[5,steps=5,style={dotted}]{Syndrome Extraction}
                                                                                       & \ctrl{3}                &          & \ctrl{4} &          & & \gategroup[5,steps=7,style={dotted}]{Error Correction}
                                                                                                                                                      & \targ{}   &          &           &          &           &          &[0.1cm]   & \\
            \lstick{$\ket{0}$}    & \targ{}  &          & &                            &                         & \ctrl{2} &          &          & & &           &          & \targ{}   &          &           &          &          & \\
            \lstick{$\ket{0}$}    &          & \targ{}  & &                            &                         &          &          & \ctrl{2} & & &           &          &           &          & \targ{}   &          &          & \\
            \setwiretype{n}       &          &          & & \lstick{$\ket{0}_A \quad$} & \targ{} \setwiretype{q} & \targ{}  &          &          & & & \ctrl{-3} &          & \ctrl{-2} & \gate{X} & \ctrl{-1} & \gate{X} & \meter{} & \\
            \setwiretype{n}       &          &          & & \lstick{$\ket{0}_A \quad$} & \setwiretype{q}         &          & \targ{}  & \targ{}  & & & \ctrl{-4} & \gate{X} & \ctrl{-3} & \gate{X} & \ctrl{-2} &          & \meter{} &
        \end{quantikz}
        \caption{Circuit diagram}
        \vspace{5em}
    \end{subfigure}
    
    \vspace{-4em}

    \begin{subfigure}{\textwidth}
    \centering
        \begin{minipage}{0.4\textwidth}
            \begin{lstlisting}[language=C++]
OPENQASM 3.0;
include "stdgates.inc"

// Initialization
qubit[3] psi;
qubit[2] S;

// Encoder
cx psi[0], psi[1];
cx psi[0], psi[2];

// [Psi subject to noise]
            \end{lstlisting}
            \vspace{1em}
        \end{minipage}
        \hspace{2em}
        \begin{minipage}{0.4\textwidth}
            \begin{lstlisting}[language=C++,firstnumber=14]
// Syndrome extraction
cx psi[0], S[0];
cx psi[1], S[0];
cx psi[0], S[1];
cx psi[2], S[1];

// Error correction
ctrl(2) @ x S[0], S[1], psi[0];
x S[1];
ctrl(2) @ x S[0], S[1], psi[1];
x S[0];
x S[1];
ctrl(2) @ x S[0], S[1], psi[2];
x S[0];\end{lstlisting}
        \end{minipage}
        \caption{OpenQASM implementation}
    \end{subfigure}
    
    \caption{
        The three-qubit error correction code used to detect Pauli $X$ errors when transmitting a quantum state $\ket{\psi}$.
    }
    \label{fig:3qubitbit}

\end{figure}

\begin{table}
    \centering
    \begin{tabular}{ccccccl}
        \toprule
        $\ket{\psi}_L$ & $\ket{\hat{\psi}}_L$ & $S$ & Count & $P_{obs}$ & $P_{true}$ & Analysis \\[0.1cm] 
                                                                                       \midrule \\[-0.4cm]
        $\ket{000}$ & $\ket{000}$ & 00 & 2271 & 0.554 \hspace{0.4em} & 0.729 & No transmission errors. \\[0.15cm]
        $\ket{001}$ & $\ket{000}$ & 01 &  233 &
                                            \multirow{3}{*}{$\left.\begin{array}{l}
                                                0.057 \\
                                                0.060 \\
                                                0.057 \\
                                            \end{array}\right\rbrace$} & \multirow{3}*{0.081} & \multirow{3}*{One error; detectable and correctable.} \\
        $\ket{010}$ & $\ket{000}$ & 10 &  244 & & & \\
        $\ket{100}$ & $\ket{000}$ & 11 &  233 & & & \\[0.15cm]
        $\ket{011}$ & $\ket{111}$ & 11 &   15 & 
                                            \multirow{3}{*}{$\left.\begin{array}{l}
                                                0.004 \\
                                                0.005 \\
                                                0.003 \\
                                            \end{array}\right\rbrace$} & \multirow{3}*{0.009} & \multirow{3}*{Two errors; detectable but not correctable.} \\
        $\ket{101}$ & $\ket{111}$ & 10 &   21 & & & \\
        $\ket{110}$ & $\ket{111}$ & 01 &   11 & & & \\[0.15cm]
        $\ket{111}$ & $\ket{111}$ & 00 &    3 & 0.001\hspace{0.7em} & 0.001 & Three errors, neither detectable nor correctable. \\[0.15cm]
        \multicolumn{3}{r}{Other:}     & 1065 & 0.260\hspace{0.7em} & 0.000 & Errors occured outside of noise channel. \\ \bottomrule
    \end{tabular}
    
    \caption{
        Empirical probabilities of each combination of $X_1$, $X_2$, and $X_3$ error in the 3-qubit bit flip code.
        Values were obtained by running the circuit in Figure~\ref{fig:3qubitbit} on IBM Torino for 4096 shots. Here we have used a simple error model described in Appendix~\ref{sec:appendix:impl:noise}.
    }
    \label{tab:3qubit_bit}
\end{table}

If no error occurs, then the result of applying
the $\cnot$ gates is the state
\begin{equation}
    \alpha\ket{00000}+\beta\ket{11100}=\ket{\psi}_L\ket{00},
\end{equation}
and so a measurement of the final two qubits will return the 00 outcome with certainty,
and the state of the logical qubit will be undisturbed.
Let us see what happens if an error on a single qubit occurs, starting with the
error $E=X_1$. The state before the syndrome extractor is then
\begin{equation}
    X_1\ket{\psi}_L\ket{00}=\alpha\ket{10000}+\beta\ket{01100},
\end{equation}
and after the $\cnot$ gates are applied, this state becomes
\begin{equation}
    \alpha\ket{10011}+\beta\ket{01111}=X_1\ket{\psi}_L\ket{11}.
\end{equation}
Thus, a measurement of the ancillary qubits produces the syndrome 11 with certainty,
and the state of the logical qubit is undisturbed by the measurement. Similarly, 
if instead the error is $E=X_2$, then the state of the system just before the
syndrome extraction is
\begin{equation}
    X_2\ket{\psi}_L\ket{00}=\alpha\ket{01000}+\beta\ket{10100},
\end{equation}
and after the $\cnot$ gates are applied, this state becomes
\begin{equation}
    \alpha\ket{01010}+\beta\ket{10110}=X_2\ket{\psi}_L\ket{10}.
\end{equation}
Thus, a measurement of the ancillary qubits produces the syndrome 10 with certainty,
and the state of the logical qubit is yet again undisturbed by this measurement.
Moreover, the fact that this syndrome is different than the syndrome obtained with
the $X_1$ error allows us to distinguish between these two errors. Finally, if the
error is given by $E=X_3$, then the state just before syndrome extraction is
\begin{equation}
    X_3\ket{\psi}_L\ket{00}=\alpha\ket{00100}+\beta\ket{11000},
\end{equation}
and after the $\cnot$ gates are applied, this state becomes
\begin{equation}
    \alpha\ket{00101}+\beta\ket{11001}=X_3\ket{\psi}_L\ket{01}.
\end{equation}
Thus, a measurement of the ancillary qubits produces the syndrome 01 with certainty,
and the state of the logical qubit is again undisturbed by this measurement. Moreover,
all three single qubit errors are detected with unique syndromes, allowing us to
distinguish between each case. Now that we can detect single qubit errors, we can
correct for them by applying the inverse of the error operator (in this case itself).
If we measure the 00 syndrome, we do nothing. If we measure the 11, 10, or 01 syndromes,
we apply the $X_1$, $X_2$, and $X_3$ gates, respectively, transforming the logical
qubit back to the state before the error occurred. The process of undoing the errors
using the measured syndrome is called \textbf{decoding}. This process is depicted in Figure~\ref{fig:3qubitbit}, where we have applied the principle of deferred measurement to move all measurements to the end. We will continue to do so throughout the remainder of this work.

Notice that there are only four possible syndromes that can be extracted from two
qubits, and we have used all of them. Thus, if there are two or more simultaneous
bit flip errors on the logical state, they cannot be distinguished from the single
qubit errors. Take, for example, the error $E=X_1X_2$, for which the state just
before syndrome extraction is
\begin{equation}
    X_1X_2\ket{\psi}_L\ket{00}=\alpha\ket{11000}+\beta\ket{00100}.
\end{equation}
After the $\cnot$ gates in the syndrome extractor are applied, the state of the
system becomes
\begin{equation}
    \alpha\ket{11001}+\beta\ket{00101}=X_1X_2\ket{\psi}_L\ket{01},
\end{equation}
and so a measurement of the ancillary qubits produces the syndrome 01, which matches
the syndrome produced by the $X_3$ error. Similarly, the $X_1X_3$ error produces
the 10 syndrome like the $X_2$ error and the $X_2X_3$ error produces the 11 syndrome
like the $X_1$ error. Finally, if all three bit flips occur, then we find that the
00 syndrome is produced, and so we won't detect the error at all. 

Of course, the probability that two or more errors occur is smaller than the
probability that a single error occurs, and so we expect that if we assume that
all detected errors are single qubit errors and then correct them as discussed earlier, 
we will find that the majority of the time, the error has indeed been corrected. 
Let us formalize this discussion.

For a single qubit, let $p$ denote the probability that a bit flip occurs. Then
the probability that exactly one bit flip occurs in the logical state is 
\begin{equation}
    \sum_{i=1}^3P(E=X_i)=3p(1-p)^2,
\end{equation} 
and the probability that no error occurs on any qubit is 
\begin{equation}
    P(E=\mathds{1})=(1-p)^3.
\end{equation} 
Thus, if we assume
all errors are single bit flip errors when correcting, the total probability of successfully correcting 
the true error is 
\begin{equation}
    p_{success}=(1-p)^3+3p(1-p)^2=1-3p^2+2p^3.
\end{equation} Notice
that as our channel becomes less noisy, the probability of success approaches 1
as we would expect. If we would like our code to correct an error more
than $50\%$ of the time, we will need $p<1/2$, and if we want our code to correct
an error more than $99\%$ of the time, we will need $p$ less than $\approx0.0589$.
In any case, the original physical error rate without correction was $p$ and the error rate with correction is 
\begin{equation}
    1-p_{success}=3p^2-2p^3.
\end{equation} The difference
is therefore $p-3p^2+2p^3$, which is positive when $p<\frac12$. Thus, the 3 qubit code offers a reduction
in the error rate whenever the original physical error rate satisfies $p<\frac12$. That is, the threshold for this code is $p_{th}=\frac12$.

So far, we have assumed that the only possible error is a product of unitary
bit flip errors. If the environment interacts with our logical qubit, the nature
of this interaction is likely not described by a unitary transformation on the
three qubit Hilbert space. Let us take as an example the operation
\begin{equation}
    (1-\epsilon)\mathds{1}+\epsilon X,
\end{equation}
which might arise in this way, and suppose that it acts on the first qubit. The state
after the error is therefore
\begin{equation}
    ((1-\epsilon)\mathds{1}+\epsilon X_1)\ket{\psi}_L\ket{00}
    =(1-\epsilon)\ket{\psi}_L\ket{00}+\epsilon(\alpha\ket{100}+\beta\ket{011})\ket{00},
\end{equation}
and after applying the $\cnot$ operations in the syndrome extractor, this state becomes
\begin{equation}
    (1-\epsilon)\ket{\psi}_L\ket{00}+\epsilon(\alpha\ket{100}+\beta\ket{011})\ket{11}.
\end{equation}
Thus, upon measurement, we will obtain either the syndrome 00 with probability
$P(00)=(1-\epsilon)^2/(1-2\epsilon+2\epsilon^2)$ or the syndrome 11 with probability
$P(11)=\epsilon^2/(1-2\epsilon+2\epsilon^2)$. In the first case, the state of the
logical qubit after the measurement is $\ket{\psi}_L$ and so no error has occurred.
In the second case, the state of the system becomes $X_1\ket{\psi}_L$, and the $X_1$ 
error can be corrected. Thus, the syndrome extraction forces the state of the
system to choose whether or not an error has occurred so that the same decoder that
we described in the case of the unitary error can be applied here if necessary.

In fact, this phenomenon holds more generally. If an interaction with the environment occurs,
causing an error of the form
\begin{equation}\label{eq:continuous-error}
    c_0I+c_1X_1+c_2X_2+c_3X_3,
\end{equation}
then the state after the error is
\begin{equation}
    c_0\ket{\psi}_L\ket{00}+c_1X_1\ket{\psi}_L\ket{00}+c_2X_2\ket{\psi}_L\ket{00}+c_3X_3\ket{\psi}_L\ket{00},
\end{equation}
and the result of applying the $\cnot$ gates in the syndrome extraction procedure is the state
\begin{equation}
    c_0\ket{\psi}_L\ket{00}+c_1X_1\ket{\psi}_L\ket{11}+c_2X_2\ket{\psi}_L\ket{10}+c_3X_3\ket{\psi}_L\ket{01}.
\end{equation}
Thus, a measurement of the ancillary qubits produces the syndrome 00 with probability $\lvert c_0\rvert^2/\|\Vec{c}\|^2$,
where $\|\Vec{c}\|^2=\sum_{i=0}^3\lvert c_i\rvert^2$. Similarly, the syndromes 11, 10, and 01 are produced with
probabilities $\lvert c_1\rvert^2/\|\Vec{c}\|^2$, $\lvert c_2\rvert^2/\|\Vec{c}\|^2$, and $\lvert c_3\rvert^2/\|\Vec{c}\|^2$,
respectively. In each case, the state of the system collapses to the single qubit error state corresponding to each syndrome.
That is, the act of measuring forces the system to choose which error ``really'' occurred, and therefore transforms
the continuous spectrum of errors defined by \eqref{eq:continuous-error} into a discrete one. For this reason, we call this phenomenon
the \textbf{discretization of errors}. A better model of the environment would also take into account two and three qubit bit flips in the sum, but the same thing happens; the syndrome extraction forces the system to choose one error from those listed in the sum.

\subsection{The 3-Qubit Phase Flip Code}\label{sec:3qubitphasecode}

Just as we constructed a 3-qubit code which corrects for single qubit bit flip errors
in Section~\ref{sec:3qubitcode}, we can similarly construct a 3-qubit code which
corrects for single qubit phase flip errors. Thus, we assume that for an arbitrary
qubit with quantum state $\ket{\psi}=\alpha\ket{0}+\beta\ket{1}$ transmitted over
our communication channel, an error $E=Z$ occurs with probability $p$, leaving us
with the state $Z\ket{\psi}=\alpha\ket{0}-\beta\ket{1}$. Just as with the bit flip
case, we make the assumption that multiple qubits experience errors independently
of one another.

The encoder sending $\ket{\psi}$ to the logical state $\ket{\psi'}_L$\footnote{
    We will need to distinguish the state after the encoding in the 3-qubit bit
    flip code from the state after the encoding in the 3-qubit phase flip code,
    hence our choice of $\ket{\psi'}_L$ rather than simply $\ket{\psi}_L$.
} made of three
physical qubits is constructed as follows. We first append two qubits in the $\ket{0}$
state to the redundancy register. Next, we apply two $\cnot$ gates targeting
the redundancy qubits, both controlled off of the data qubits just as we did for the
bit flip error code in Figure~\ref{fig:3qubitbit}. This transforms the state 
$\ket{\psi}=\alpha\ket{0}+\beta\ket{1}$ into the state $\ket{\psi}_L=\alpha\ket{000}+\beta\ket{111}$.
Next, we apply a Hadamard gate to all three qubits, producing the logical state
\begin{equation}
    \ket{\psi'}_L:=\alpha\ket{+++}+\beta\ket{---},
\end{equation}
where we recall that $\ket{+}=\frac{1}{\sqrt{2}}(\ket{0}+\ket{1})$ and 
$\ket{-}=\frac{1}{\sqrt{2}}(\ket{0}-\ket{1})$. The circuit for this encoder is
shown just before the noise channel in Figure~\ref{fig:3qubitphase}. Since we need three Pauli-$Z$ operations to map $\ket{+++}$ to $\ket{---}$, the distance of this code is three. Thus, like the 3-qubit bit flip code, this is a $[[3,1,3]]$ code with rate $R=1/3$.

Notice how a phase flip acts on the $\ket{+}$ and $\ket{-}$ states: we have
$Z\ket{+}=\ket{-}$ and $Z\ket{-}=\ket{+}$. Thus, in the $\{\ket{+},\ket{-}\}$
basis, the Pauli-$Z$ operator acts like the Pauli-$X$ operator does in the
$\{\ket{0},\ket{1}\}$ basis. In other words, a phase flip error looks like a bit
flip error after Hadamard gates have been applied. Indeed, we have $HXH=Z$, and
this allows us to switch between the two pictures. In our encoder, we started with 
the original bit flip encoder and then changed pictures by applying the Hadamard 
gates to each qubit. Similarly, we can reuse the bit flip syndrome extraction procedure
in Figure~\ref{fig:3qubitbit}. With the knowledge that after the encoding,
we are in the phase flip picture, we first transform back to the bit flip picture by
applying Hadamard gates, we then apply the bit flip syndrome extraction procedure,
and finally transform back to the phase flip picture by applying Hadamard gates.
This syndrome extraction procedure is shown in Figure~\ref{fig:3qubitphase}.

\begin{figure}[t]

    \begin{subfigure}{\textwidth}
        \centering
        \tikzset{
            noisy/.style={cloud,fill=white,draw=gray,line width=2pt,inner xsep=-4pt,inner ysep=-5pt,cloud puffs=10,aspect=0.5} 
        }
        \begin{quantikz}[column sep=0.3cm]
            \lstick{$\ket{\psi}$} & \ctrl{1} & \ctrl{2} & \gate{H} &[0.4cm] \gate[3,style={noisy},label style=black]{\text{Noise}}
                                                                   &[0.6cm] \gate{H} \gategroup[5,steps=6,style={dotted}]{Syndrome Extraction}
                                                                   & \ctrl{3} & & \ctrl{4} & &  \gate{H} & & \ctrl{3} \gategroup[5,steps=7,style={dotted}]{Error Correction} &
                                                                                                 & & & & &  &[0.1cm] & \\
            \lstick{$\ket{0}$}    & \targ{} & & \gate{H} & & \gate{H} & & \ctrl{2} & & &  \gate{H} & & & & \ctrl{2} & & & &  & &  \\
            \lstick{$\ket{0}$}    & & \targ{} & \gate{H} & & \gate{H} & & & & \ctrl{2} &  \gate{H} & & & & & & \ctrl{1} & &  & & \\
            \setwiretype{n}       & & & & & \lstick{$\ket{0}_A \ \ \quad$} & \targ{} \setwiretype{q} & \targ{} & & & & & \ctrl{-3} & & \ctrl{-2} & \gate{X} & \ctrl{-1} & \gate{X} & & \meter{} & \\
            \setwiretype{n}       & & & & & \lstick{$\ket{0}_A \ \ \quad$} & \setwiretype{q} & & \targ{} & \targ{} & & & \ctrl{-4} & \gate{X} & \ctrl{-3} & \gate{X} & \ctrl{-2} & & & \meter{} &
        \end{quantikz}
        \caption{Circuit diagram}
        \vspace{5em}
    \end{subfigure}
    
    \vspace{-4em}

    \begin{subfigure}{\textwidth}
    \centering
        \begin{minipage}{0.4\textwidth}
            \begin{lstlisting}[language=C++]
OPENQASM 3.0;
include "stdgates.inc"

// Initialization
qubit[3] psi;
qubit[2] S;

// Encoder
cx psi[0], psi[1];
cx psi[0], psi[2];
h psi;

// [Psi subject to noise]
            \end{lstlisting}
            \vspace{1.9em}
        \end{minipage}
        \hspace{2em}
        \begin{minipage}{0.4\textwidth}
            \begin{lstlisting}[language=C++,firstnumber=15]
// Syndrome extraction
h psi;
cx psi[0], S[0];
cx psi[1], S[0];
cx psi[0], S[1];
cx psi[2], S[1];
h psi;

// Error correction
ctrl(2) @ z S[0], S[1], psi[0];
x S[1];
ctrl(2) @ z S[0], S[1], psi[1];
x S[0]; x S[1];
ctrl(2) @ z S[0], S[1], psi[2];
x S[0];
h psi;\end{lstlisting}
        \end{minipage}
        \caption{OpenQASM implementation}
    \end{subfigure}
    
    \caption{
        The three-qubit error correction code used to detect phase flip errors when transmitting a quantum state $\ket{\psi}$.
    }
    \label{fig:3qubitphase}

\end{figure}

\begin{table}
    \centering
    \begin{tabular}{ccccccl}
        \toprule
        $\ket{\psi}_L$ & $\ket{\hat{\psi}}_L$ & $S$ & Count & $P_{obs}$ & $P_{true}$ & Analysis \\[0.1cm] 
                                                                                       \midrule \\[-0.4cm]
        $\ket{000}$ & $\ket{000}$ & 00 & 2244 & 0.548 \hspace{0.4em} & 0.729 & No transmission errors. \\[0.15cm]
        $\ket{001}$ & $\ket{000}$ & 01 &  229 &
                                            \multirow{3}{*}{$\left.\begin{array}{l}
                                                0.056 \\
                                                0.042 \\
                                                0.055 \\
                                            \end{array}\right\rbrace$} & \multirow{3}*{0.081} & \multirow{3}*{One error; detectable and correctable.} \\
        $\ket{010}$ & $\ket{000}$ & 10 &  174 & & & \\
        $\ket{100}$ & $\ket{000}$ & 11 &  224 & & & \\[0.15cm]
        $\ket{011}$ & $\ket{111}$ & 11 &   18 & 
                                            \multirow{3}{*}{$\left.\begin{array}{l}
                                                0.004 \\
                                                0.006 \\
                                                0.006 \\
                                            \end{array}\right\rbrace$} & \multirow{3}*{0.009} & \multirow{3}*{Two errors; detectable but not correctable.} \\
        $\ket{101}$ & $\ket{111}$ & 10 &   24 & & & \\
        $\ket{110}$ & $\ket{111}$ & 01 &   24 & & & \\[0.15cm]
        $\ket{111}$ & $\ket{111}$ & 00 &    2 & $<$ 0.001 \hspace{0.7em} & 0.001 & Three errors, neither detectable nor correctable. \\[0.15cm]
        \multicolumn{3}{r}{Other:}     & 1157 & 0.282\hspace{0.7em} & 0.000 & Errors occurred outside of noise channel. \\ \bottomrule
    \end{tabular}
    
    \caption{
        Empirical probabilities of phase errors on each combination qubits in the 3-qubit phase flip code.
        Values were obtained by running the circuit in Figure~\ref{fig:3qubitphase} on IBM Torino for 4096 shots. Here we have used a simple error model described in Appendix~\ref{sec:appendix:impl:noise}.
    }
    \label{tab:3qubit_phase}
\end{table}

When no error occurs, the state of the logical qubit as it enters into the syndrome
extractor is
\begin{equation}
    \ket{\psi'}_L=\alpha\ket{+++}+\beta\ket{---}.
\end{equation}
The Hadamard gates then transform this into the state
\begin{equation}
    \ket{\psi}_L=\alpha\ket{000}+\beta\ket{111},
\end{equation}
which we have already shown is left invariant by the $\cnot$ gates that follow.
We therefore obtain the syndrome 00 with certainty, and the state is transformed
back to the original logical qubit state $\ket{\psi'}_L$ by the final Hadamard gates.

If a phase flip occurs on the $i$-th qubit, the state before the syndrome extraction
becomes
\begin{equation}
    Z_i\ket{\psi'}_L=Z_iH_1H_2H_3\ket{\psi}_L,
\end{equation}
so that the action of the first set of Hadamard gates in the syndrome extractor
produces the state
\begin{equation}
    H_1H_2H_3Z_iH_1H_2H_3\ket{\psi}_L=X_i\ket{\psi}_L,
\end{equation}
where we have used the fact that $H_iZ_iH_i=X_i$ and $H_j^2=\mathds{1}$. The phase flip error on $\ket{\psi'}_L$
has been transformed into a \textit{bit} flip error on $\ket{\psi}_L$. Thus, the
remainder of the syndrome extractor performs exactly the same as the bit flip version
with the exception that at the end, the logical qubit is transformed back to the
phase flip picture. It is worth being explicit for clarity.
If there is a phase flip on the first physical qubit, we get
the syndrome 11 and the state of the logical qubit remains $Z_1\ket{\psi'}_L$ after 
syndrome extraction. If there is a phase flip on the second physical qubit,
we obtain the syndrome 10 and the state of the logical qubit remains $Z_2\ket{\psi'}_L$
after syndrome extraction. If there is a phase flip on the third physical qubit,
we obtain the syndrome 01 and the state of the logical qubit remains $Z_3\ket{\psi'}_L$
after syndrome extraction. Thus, our code is capable of detecting and correcting
single qubit phase flip errors. Just as in the bit flip case, there is no hope 
for correction of phase flip errors on more than one qubit in addition to the single qubit errors
using this code for the simple reason that we have used up all possible syndromes 
obtainable with two bits of information.

Let us again discuss the efficiency of the code. The argument proceeds in just the
same way as the bit flip case. If $p$ is the probability that a phase flip error
occurs, then the probability that exactly one phase flip occurs in the logical state
is
\begin{equation}
    \sum_{i=1}^3P(E=Z_i)=3p(1-p)^2,
\end{equation}
and the probability that no error occurs on any qubit is
\begin{equation}
    P(E=\mathds{1})=(1-p)^3.
\end{equation}
Thus, if when we obtain a syndrome not equal to 00, we attempt to correct the error
by assuming it is a single qubit error (the most likely case), the probability that
we have successfully corrected the true error is 
\begin{equation}
    p_{success}=(1-p)^3+3p(1-p)^2=1-3p^2+2p^3.
\end{equation}
This is exactly the same result that we obtained in the bit flip case. The logical error
rate of the code is therefore
\begin{equation}
    1-p_{success}=3p^2-2p^3,
\end{equation}
and the error rate without correction is still $p$.
Thus, we still find that the difference is positive for $p<\frac12$, showing that the 3-qubit
phase flip code again lowers the error rate whenever the original physical error rate is $p<\frac12$. In other words, the threshold for the 3-qubit phase flip code is $p_{th}=\frac12$.

For the bit flip code, we saw that if the environment interacted with our system
in a specific way, the resulting error could still be handled by our code, which
essentially forces the system to decide whether or not an error has occurred. 
Similarly, if the environment interacts with our logical qubit according to the 
non-unitary operation
\begin{equation}
    (1-\epsilon)\mathds{1}+\epsilon Z
\end{equation}
our phase flip code will act in the same way. Indeed, let us assume, for example,
that the $Z$ acts on the first qubit. Then the state after the error 
is
\begin{equation}
    ((1-\epsilon)\mathds{1}+\epsilon Z_1)\ket{\psi'}_L\ket{00}
    =(1-\epsilon)\ket{\psi'}_L\ket{00}+\epsilon(\alpha\ket{-++}+\beta\ket{+--})\ket{00},
\end{equation}
and after applying the first round of Hadamard gates in the syndrome extractor,
this state is transformed to the bit flip picture, where the subsequent application
of the $\cnot$ gates in the syndrome extractor produces the state
\begin{equation}
    (1-\epsilon)\ket{\psi}_L\ket{00}+\epsilon(\alpha\ket{100}+\beta\ket{011})\ket{11}.
\end{equation}
Thus, upon measurement, we will obtain either the syndrome 00 with probability
$P(00)=(1-\epsilon)^2/(1-2\epsilon+2\epsilon^2)$ or the syndrome 11 with probability
$P(11)=\epsilon^2/(1-2\epsilon+2\epsilon^2)$. In the first case, the state of the
logical qubit after the measurement is $\ket{\psi}_L$, which gets transformed back
to $\ket{\psi'}_L$ by the remaining Hadamard gates, and so no error has occurred.
In the second case, the state of the system becomes $X_1\ket{\psi}_L$, which is transformed
to the state $Z_1\ket{\psi'}_L$ by the remaining Hadamard gates. 
Thus, the syndrome extraction forces the state of the system to choose whether 
or not a phase flip error has occurred on the first qubit, and the same decoder 
that we described in the case of the unitary error can be applied here if necessary.
We note that just as in the bit flip case, this phenomenon generalizes to a continuous spectrum
of phase flip errors of the form
\begin{equation}
    c_0\mathds{1}+c_1Z_1+c_2Z_2+c_3Z_3,
\end{equation}
and even to sums containing products of phase flip errors.

\subsection{The Shor Code}\label{sec:shorcode}

The first full single qubit error correcting code to be proposed was the nine qubit code
introduced by Peter Shor \cite{shor1995}. This code protects against any single qubit error and is
constructed by concatenating codes; that is, by using the output of a first code
as the input of a second code. The two codes we will concatenate are the 3-qubit
bit and phase flip codes of the previous sections.

We begin by encoding our given state $\ket{\psi}=\alpha\ket{0}+\beta\ket{1}$ 
using the phase flip encoding in Figure~\ref{fig:3qubitphase}. This transforms 
$\ket{\psi}$ into the 3-qubit state $\ket{\psi'}_L=\alpha\ket{+++}+\beta\ket{---}$.
Next we apply the bit flip encoding in Figure~\ref{fig:3qubitbit} to each of the
three physical qubits in $\ket{\psi'}_L$, giving us a total of nine physical qubits
in our logical qubit. The bit flip encoding maps each $\ket{+}$ state to the 3-qubit
GHZ state
\begin{equation}
    \ket{GHZ}=\frac{1}{\sqrt{2}}(\ket{000}+\ket{111})
\end{equation}
and each $\ket{-}$ state to the phase flipped GHZ state
\begin{equation}
    \ket{\widehat{GHZ}}=\frac{1}{\sqrt{2}}(\ket{000}-\ket{111}).
\end{equation}
Thus, the full encoding is given by
\begin{equation}
    \ket{\psi}=\alpha\ket{0}+\beta\ket{1}\longmapsto\frac{\alpha}{2\sqrt{2}}
    (\ket{000}+\ket{111})^{\otimes3}+\frac{\beta}{2\sqrt{2}}(\ket{000}-\ket{111})^{\otimes3},
\end{equation}
and the circuit diagram is shown just before the noise channel in Figure~\ref{fig:shorcodea}. Let us denote the
state of the logical qubit after encoding by $\ket{\Psi}_L$. Note that the distance of the code is $d=3$ since a Pauli-$Z$ operation on exactly one qubit of the three qubits in each of the three tensor product components of $\frac{1}{2\sqrt{2}}(\ket{000}+\ket{111})^{\otimes 3}$ produces $\frac{1}{2\sqrt{2}}(\ket{000}-\ket{111})^{\otimes 3}$. Thus, the Shor code is a $[[9,1,3]]$ code with rate $R=1/9$.

\begin{figure}
    \centering
    \begin{subfigure}{\textwidth}
        \centering
        \includegraphics[height=4.1in]{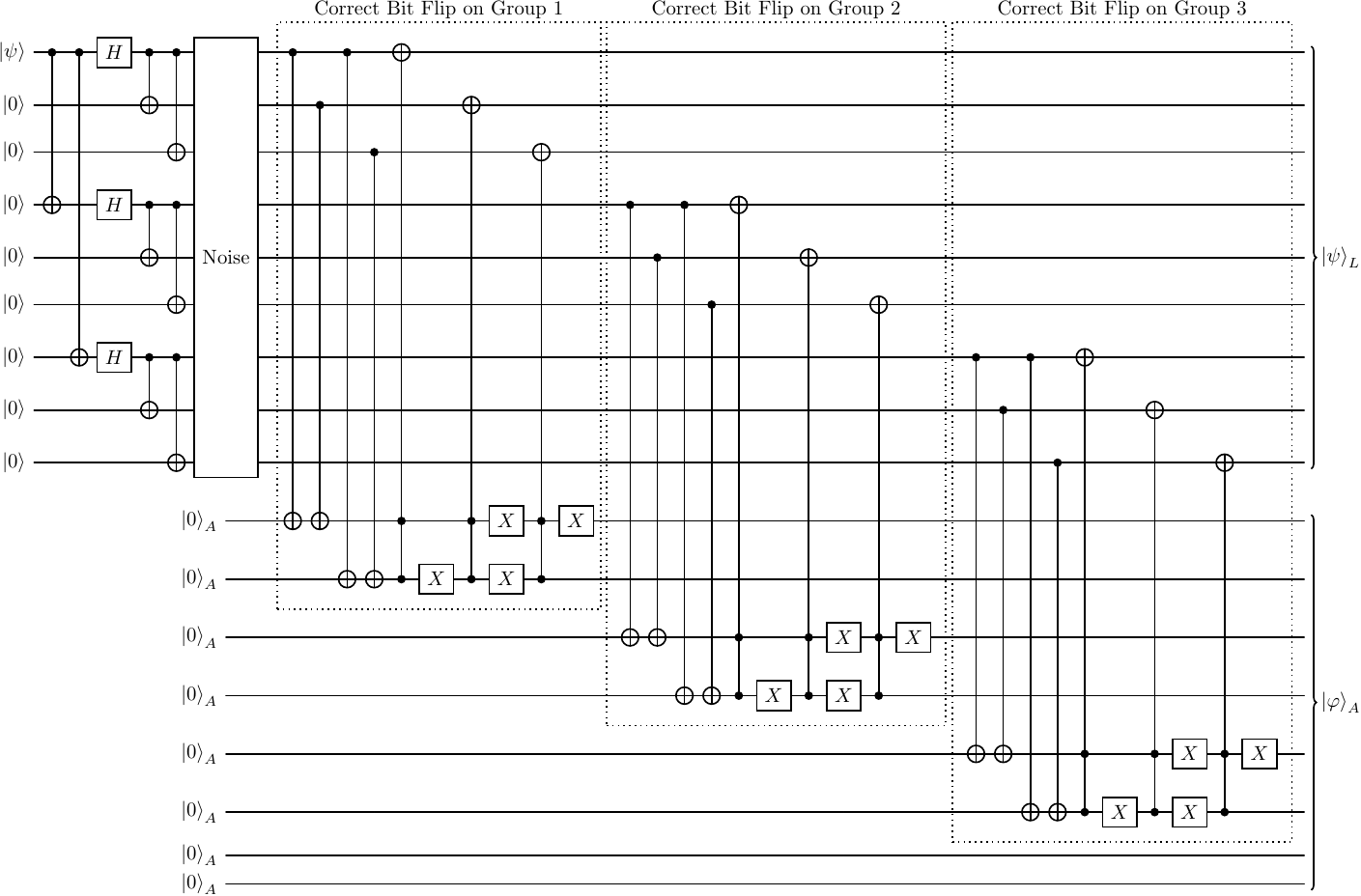}
        \caption{Circuit diagram (first half).}
        \label{fig:shorcodea}
    \end{subfigure}
    
    \vspace{1em}
    
    \begin{subfigure}{\textwidth}
        \centering
        \includegraphics[height=4.1in]{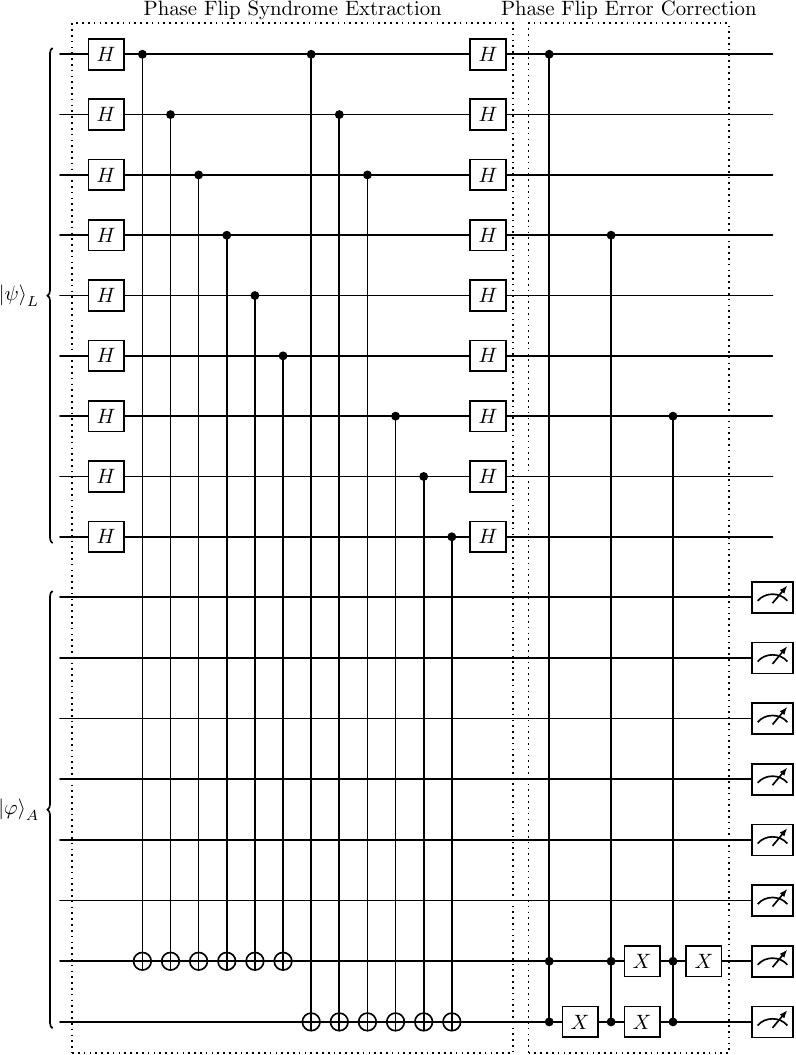}
        \caption{Circuit diagram (second half).}
        \label{fig:shorcodeb}
    \end{subfigure}

    \caption{The 9-qubit Shor code.}
    \label{fig:shorcode}
    
\end{figure}

\begin{table}
    \centering
    \begin{tabular}{ccccccl}
        \toprule
        Noise & $\ket{\varphi}_A^{(1,2)}$ & $\ket{\varphi}_A^{(7,8)}$ & Count & $P_{obs}$ & $P_{true}$ \\[0.1cm] 
                                                                                       \midrule \\[-0.4cm]
        $\mathbb{I}$ & 00 & 00 & 1598 & $1.95 \cdot 10^{-1}$\hspace{0.7em} & $6.56 \cdot 10^{-1}$ \\[0.15cm]
        $X_1$ & 11 & 00 & 38 &
                                            \multirow{4}{*}{$\left.\begin{array}{l}
                                                4.64 \cdot 10^{-3} \\
                                                5.13 \cdot 10^{-3} \\
                                                1.95 \cdot 10^{-3} \\
                                                2.20 \cdot 10^{-3} \\
                                            \end{array}\right\rbrace$} & \multirow{4}*{$7.29 \cdot 10^{-2}$} \\
        $X_3$ & 01 & 00 & 42 & & & \\
        $Z_1$ & 00 & 11 & 16 & & & \\
        $Z_3$ & 00 & 11 & 18 & & & \\[0.15cm]
        $X_1 X_3$ & 10 & 00 & 1 & 
                                            \multirow{6}{*}{$\left.\begin{array}{l}
                                                1.22 \cdot 10^{-4} \\
                                                0.00 \\
                                                0.00 \\
                                                1.22 \cdot 10^{-4} \\
                                                0.00 \\
                                                0.00 \\
                                            \end{array}\right\rbrace$} & \multirow{6}*{$8.10 \cdot 10^{-3}$} \\
        $X_1 Z_1$ & 11 & 11 & 0 & & & \\
        $X_1 Z_3$ & 11 & 11 & 0 & & & \\
        $X_3 Z_1$ & 01 & 11 & 1 & & & \\
        $X_3 Z_3$ & 01 & 11 & 0 & & & \\
        $Z_1 Z_3$ & 00 & 00 & 0 & & & \\[0.15cm]
        $X_1 X_3 Z_1$ & 10 & 11 & 0 & 
                                            \multirow{4}{*}{$\left.\begin{array}{l}
                                                0.00 \hspace{2.75em} \\
                                                0.00 \\
                                                0.00 \\
                                                0.00 \\
                                            \end{array}\right\rbrace$} & \multirow{4}*{$9.00 \cdot 10^{-4}$} \\
        $X_1 X_3 Z_3$ & 10 & 11 & 0 & & & \\
        $X_1 Z_1 Z_3$ & 11 & 00 & 0 & & & \\
        $X_3 Z_1 Z_3$ & 01 & 00 & 0 & & & \\[0.15cm]
        $X_1 X_3 Z_1 Z_3$ & 10 & 00 & 0 & 0.00 \hspace{3.15em} & $1.00 \cdot 10^{-4}$ \\[0.15cm]
        \multicolumn{3}{r}{Other:}     & 6478 & $7.91 \cdot 10^{-1}$\hspace{0.7em} & 0.00\hspace{2.8em} \\ \bottomrule
    \end{tabular}
    
    \caption{
        Empirical probabilities of bit and phase flip errors on qubits in the 9-qubit Shor code where $X_1$, $X_3$, $Z_1$, and $Z_3$ errors are introduced independently with probability $0.1$.
        Values were obtained by running the circuit in Figure~\ref{fig:shorcode} on IBM Torino for 8192 shots.
        Here, noise multiplies $\ket{\psi}_L$ and $\ket{\varphi}_A^{(i,j)}$ denotes the $i$-th and $j$-th ancilla qubit. The discrepancy between the observed and true values can be explained by noisy encoding, syndrome extraction, and measurement operations which have not been accounted for. In our construction, we assume perfect such operations with the knowledge that if the error rate is small enough, the threshold theorem guarantees that the logical rate will be lowered. Here we have used a simple error model described in Appendix~\ref{sec:appendix:impl:noise}.
    }
    \label{tab:shor}
\end{table}

Our claim is that the Shor code will correct any unitary single qubit error, so
let us check how bit and phase flip errors affect the state $\ket{\Psi}_L$ of the 
logical qubit. Suppose a bit flip error occurs on the first qubit. Notice that each
component of $\ket{\Psi}_L$ can be separated into a three-fold tensor product of GHZ states,
and the latter two GHZ states are unaffected by the action of a bit flip on the first qubit.
By applying the bit flip syndrome extraction to the first GHZ state, we can therefore
detect and correct this error. This argument carries over to a bit flip error on
any physical qubit. Thus, in the Shor code, we perform a bit flip syndrome extraction on every block of three physical qubits as in Figure~\ref{fig:shorcodea}.

Suppose that a phase flip instead occurs on the first qubit. This has the effect
of transforming the first $\ket{GHZ}$ to a $\ket{\widehat{GHZ}}$, and so we would like to
construct a way to distinguish between these two states. We will begin by applying
a Hadamard gate to every physical qubit. Observe that
\begin{align}
    H^{\otimes3}\ket{GHZ}&=\frac{1}{4}\left[(\ket{0}+\ket{1})^{\otimes3}+(\ket{0}-\ket{1})^{\otimes3}\right]\\
    &=\frac{1}{2}(\ket{000}+\ket{011}+\ket{110}+\ket{101})
\end{align}
and that
\begin{align}
    H^{\otimes3}\ket{\widehat{GHZ}}&=\frac{1}{4}\left[(\ket{0}+\ket{1})^{\otimes3}-(\ket{0}-\ket{1})^{\otimes3}\right]\\
    &=\frac{1}{2}(\ket{111}+\ket{001}+\ket{010}+\ket{100}).
\end{align}
Each component of the transformed GHZ state has an even number of 1's, and each
component of the transformed $\widehat{GHZ}$ state has an odd number of 1's. Thus, we can
distinguish between these two states by checking the number of 1's in each component.
We do so by applying a sequence of three $\cnot$ gates targeting an ancillary qubit in
the state $\ket{0}$ and controlled off of the three qubits making up the transformed GHZ or $\widehat{GHZ}$ state.
In the case of the GHZ state, an even number of $\cnot$ gates will trigger, so that
a measurement of the ancillary qubit will produce the outcome 0. In the case of
the $\widehat{GHZ}$ state, an odd number of $\cnot$ gates will trigger, so that a measurement
of the ancillary qubit will produce the outcome 1. Thus, we have the desired procedure
for distinguishing between $\ket{GHZ}$ and $\ket{\widehat{GHZ}}$. The circuit diagram is 
shown in Figure~\ref{fig:ghzdistinguish}.

\begin{figure}
    \centering
    \begin{quantikz}
        \lstick[3]{$\ket{GHZ}$\\ or \\$\ket{\widehat{GHZ}}$} & \gate[1]{H} & \ctrl{3} &  &  &  \\
         & \gate[1]{H} &  & \ctrl{2} & &  \\
         & \gate[1]{H} & & & \ctrl{1} &  \\
        \lstick{$\ket{0}$} &             & \targ{} & \targ{} & \targ{} & \meter{}
    \end{quantikz}
    \caption{
        Procedure for distinguishing between $\ket{GHZ}$ and $\ket{\widehat{GHZ}}$.
    }\label{fig:ghzdistinguish}
\end{figure}
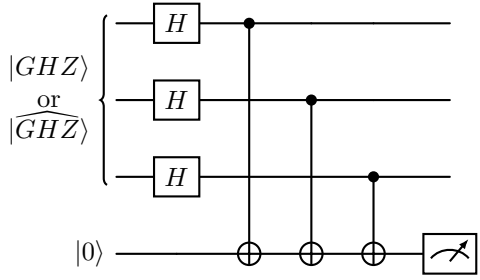

We now have a method for determining if a phase flip has occurred in each of the
three blocks of three qubits. Of course, we would like to detect an error in any
of the nine total qubits. To this end, let us introduce two ancillary qubits and
examine what happens if we combine four of the phase flip detection circuits as
shown in the phase flip syndrome extraction portion of Figure~\ref{fig:shorcodeb}.

If no error occurs, none of the $\cnot$ gates trigger, and so we obtain the syndrome
00 with certainty and the Hadamard gates cancel each other so that the state of
the logical qubit is left invariant. If there is a single phase flip error in one of
the first three physical qubits, the state of the logical qubit becomes 
\begin{equation}
    \frac{1}{\sqrt{2}}(\ket{\widehat{GHZ}}\ket{GHZ}\ket{GHZ}+\ket{GHZ}\ket{\widehat{GHZ}}\ket{\widehat{GHZ}}).
\end{equation}
In this case, one of the first three $\cnot$ gates will be triggered, 
causing the first ancillary qubit to flip. One of the seventh, eigth, or ninth 
$\cnot$ gates will also trigger causing the second ancillary qubit to flip. Thus,
a measurement of the ancillary qubits produces the syndrome 11. Moreover, the state
of the logical qubit is again unaffected by this procedure. Similarly, a phase flip
in the second three qubits will produce the logical qubit state
\begin{equation}
    \frac{1}{\sqrt{2}}(\ket{GHZ}\ket{\widehat{GHZ}}\ket{GHZ}+\ket{\widehat{GHZ}}\ket{GHZ}\ket{\widehat{GHZ}}),
\end{equation}
and one of the second three $\cnot$ gates is triggered, so that a measurement of the ancillary
qubits produces the syndrome 10. Finally, a phase flip in the last three qubits
produces the logical qubit state
\begin{equation}
    \frac{1}{\sqrt{2}}(\ket{GHZ}\ket{GHZ}\ket{\widehat{GHZ}}+\ket{\widehat{GHZ}}\ket{\widehat{GHZ}}\ket{GHZ}),
\end{equation}
and one of the final three $\cnot$ gates is triggered, so that a measurement of
the ancillary qubits produces the syndrome 01. Thus, the phase syndrome extractor in 
Figure~\ref{fig:shorcodeb} is capable of detecting all single qubit phase flips.

To be precise, the syndrome extraction procedure we have just outlined only tells 
us which block of three qubits a phase flip error has occurred in. Fortunately, due
to the way we have chosen to encode our data qubit, it does not matter which qubit
in each block experiences an error. Indeed, $Z_i\ket{GHZ}=\ket{\widehat{GHZ}}$ regardless
of the choice of $i$, and so we have $Z_iZ_j\ket{GHZ}=\ket{GHZ}$ even if $i\ne j$.
Thus, we can correct an error on any of the three qubits in a given block by applying
a Pauli-$Z$ gate to the erroneous qubit or any other qubit in the block. In Figure~\ref{fig:shorcodeb}, we have elected to apply the Pauli-$Z$ gate to the first qubit in each block when needed.

We have seen that the Shor code can detect and correct both single qubit bit and 
phase flip errors. By performing the syndrome extraction procedure for each of these
errors, we can furthermore correct the composition of a bit flip error with a phase
flip error. This is because the syndrome extraction procedure leaves the state of
the logical qubit invariant, so that a sequence of syndrome extraction operations
can be applied with no consequences. All we need is a fresh set of ancillary qubits
in the $\ket{0}$ state at each step. Now, the Shor code is only useful to us if it can lower the rate at which errors occur. If a single qubit error occurs independently on each physical qubit with probability $p$, then the probability that an error either doesn't occur or is corrected is
\begin{equation}
    p_{success}=(1-p)^9+9p(1-p)^8.
\end{equation}
Thus, the probability that the code fails (the error rate for the logical qubit) is
\begin{equation}
    p_{fail}=1-(1-p)^9-9p(1-p)^8=1-(1-p)^8(1+8p).
\end{equation}
The error threshold is the value of $p$ such that $p_{fail}<p$, which a computer readily approximates as $p\approx0.0323$. Thus, the logical error rate is less than the physical error rate when the physical error rate is below $3.23\%$. Actually, we have implicitly made several assumptions here and in each of our previous threshold calculations. We have assumed that the noise affecting the system takes the form of a sum of Pauli operations (a depolarizing channel) which acts independently on each physical qubit and that the encoder/decoder and syndrome extraction perform perfectly. For this reason, the $3.23\%$ we have computed is sometimes referred to as a pseudo-threshold. In practice, if we take into account 2-qubit gate noise, measurement noise, and so on, it turns out that the threshold of Shor's 9-qubit code drops dramatically to around $0.01\%-0.1\%$ \cite{knill1996,aliferis2005}. Later, we will construct the surface codes which have been shown to have a threshold on the order of $1\%$.

\subsection{The Quantum Hamming Bound} \label{sec:quantum-hamming-bound}

The Hamming bound from classical coding theory imposes a fundamental limit on the number of codewords a code can possess while correcting a fixed number of errors. There is an analog to this result known as the \textbf{quantum Hamming bound}, which similarly constrains the parameters of quantum error-correcting codes.

To set the stage, recall that a code encodes $k$ logical qubits into $n$ physical qubits in such a way that arbitrary errors on up to $t$ qubits can be detected and corrected. Unlike classical errors, which typically consist of symbol substitutions, quantum errors are described by operators from the Pauli group: $I$, $X$, $Y$, and $Z$. For a single qubit, there are thus 3 nontrivial error types; for $t$ qubits, errors can affect any subset of up to $t$ qubits, and for each such subset, $3^t$ distinct Pauli error combinations are possible (excluding the identity). Thus, the number of correctable error patterns should not exceed the total number of mutually orthogonal subspaces available in the full Hilbert space of $n$ qubits. Each logical codeword resides in a $2^k$-dimensional subspace, and there are $2^n$ total dimensions available. Each correctable error effectively consumes a full copy of the logical subspace. Thus, a necessary condition for the existence of an $[[n, k, d]]$ quantum code that corrects $t = \lfloor (d-1)/2 \rfloor$ errors is
\begin{equation} \label{eq:quantum-hamming-bound}
    \sum_{j=0}^t \binom{n}{j} 3^j \cdot 2^k \le 2^n.
\end{equation}
This is the quantum Hamming bound. It states that the total Hilbert space volume required to distinguish all possible correctable errors (each acting on a logical subspace of dimension $2^k$) must not exceed the dimension of the ambient space, which is $2^n$.

As in the classical case, the quantum Hamming bound provides a useful necessary (but not sufficient) condition for constructing quantum codes. It plays a particularly important role in the study of \emph{nondegenerate} quantum codes, where distinct errors map the code space to orthogonal subspaces. For degenerate codes, which can correct certain errors without needing to distinguish them, the bound may be violated. Nonetheless, the quantum Hamming bound serves as a fundamental benchmark in quantum error correction theory. It defines the quantum analogue of the classical sphere-packing limit and provides essential intuition for the trade-offs between code length, error-correcting power, and logical capacity.

\section{Basics of Stabilizer Codes} \label{sec:stabilizer}

In this section, we will introduce a generalization of the codes studied in the previous section. As with each of the previous examples, we create redundancy by embedding the state we wish to protect into a higher dimensional Hilbert space called the code space. This encoded state can then be passed through a noisy channel, and a syndrome extraction procedure can be applied to identify any errors that might have occurred. To perform syndrome extraction, we construct an abelian subgroup of the Pauli group with generators that anticommute with the errors we'd like to detect and leave the code space invariant. We then perform a Hadamard test using each of the generators of the stabilizer group, which has the effect of projecting the encoded state onto an eigenspace of the generator. As a consequence of our assumption that the logical state is left invariant by each stabilizer, this projection will not affect the state of the logical qubit even if it has been altered by an error. Moreover, the ancillary qubit in the Hadamard test keeps track of whether or not the error that occurred anticommutes with the stabilizer generator, in which case the error is detected. By applying one such test for each of the $k$ generators of the stabilizer, we are able to detect $2^k-1$ different errors (one for each possible syndrome, minus the all zero syndrome corresponding to no error). Once the errors are identified, they can be corrected by applying the inverse operation. The encoded state can then be decoded by applying the inverse of the encoder and throwing away the redundancy register.

In addition to being useful for protecting the transfer of quantum information across a noisy channel, the stabilizer formalism can be used as a foundation for a fault-tolerant quantum computer. Indeed, we think about the encoded state (made of several physical qubits) as a logical qubit which can be acted upon by logical gates built out of several smaller gates. We then detect errors on the logical qubit by applying the same syndrome extraction procedure as above and correct them by applying the inverse operation. We do not apply the inverse of the encoder because there is no reason to decode and recover the original Hilbert space; the logical qubit is now the object of interest. The syndrome extraction can be repeated indefinitely, so that errors are in principle continuously corrected as they appear, leaving us with an error resistant ``qubit''. By combining logical qubits and deriving logical operations on them, we are able to program our new ``logical device'' in the same way we would program a quantum device with truly error resistant qubits. 

The general stabilizer code requires some knowledge of group theory to understand, and so we provide a sufficient background in Sections~\ref{sec:groups} and \ref{sec:groupactions}. The theory of groups is much deeper than the short review presented here, and the interested reader is referred to any of a number of great texts on abstract algebra, such as the classic \cite{dummit2004}. We introduce the stabilizer formalism in Section~\ref{sec:stab-formalism} and construct the 5-qubit code within this context in Section~\ref{sec:5qubitcode}. We then translate the stabilizer code formalism to the language of linear algebra in Section~\ref{sec:paritycheck} before discussing difficulties that should be considered when using stabilizer codes in Section~\ref{sec:difficulties}.

\subsection{The Theory of Groups}\label{sec:groups}

Symmetry is pervasive in physics. Often it appears as a tool to simplify a problem, such as in the case of the spherically symmetric
hydrogen atom. We model the electron as sitting in a Coulomb potential which is inversely proportional to the radial distance from the nucleus but independent of angular coordinates. Thus, the Coulomb potential is symmetric with respect to rotations, and this property can be leveraged to simplify the problem of finding the state of the electron according to Schr\"odinger's equation. A more intuitive notion of symmetry is that of a regular polygon. The square, for example, is symmetric with respect to a flip across the vertical line passing through its center point as well as rotations in 90 degree increments. It also possesses symmetry with respect to flips along the horizontal line passing through its center, as well as the diagonal lines connecting the center point to each corner, but these symmetries can be decomposed into combinations of the vertical flip with the rotations and are therefore not independent.

In any case, we notice that whenever a symmetry occurs, the composition of two symmetries is again a symmetry. Indeed, two rotations of our electron in the spherical coordinate system are still a rotation and therefore constitute a symmetry of the system. Likewise, a rotation of the square followed by a flip across the vertical line is still a symmetry of the square. We also notice that there is a symmetry which does absolutely nothing. Indeed, it should come at no surprise that doing nothing might leave a system invariant. Additionally, we notice that for every symmetry, there is a corresponding symmetry which transforms the system back to the state it started in; that is, a symmetry which undoes the original symmetry transformation. In the square example, a rotation followed by a rotation in the opposite direction takes the square back to where it started, and so this combination of symmetries is equivalent to the symmetry which does nothing. In both of the cases we have examined, we also notice that the composition of symmetries is associative, so that the composition of three symmetries is well-defined without having to specify in what order we compose them. 

These properties define an abstract mathematical object known as a group. Thus, groups capture what it means to be a symmetry, and we would be well-advised to study their structure. As it turns out, the theory of groups is a central component of the stabilizer code formalism, and so we take a detour in this section to learn what group theory is needed.

\begin{definition}\label{def:group} A \textbf{group} $(G,b)$ is a set $G$ together with a binary 
    operation $b:G\times G\to G$ such that the following properties are satisfied:
\begin{enumerate}
    \item (Associativity) $b(b(g,h),k)=b(g,b(h,k))$ for all $g,h,k\in G$.
    \item (Existence of Identity) There exists an element $e\in G$ called the identity such that $b(e,g)=g=b(g,e)$ for all $g\in G$.
    \item (Invertibility) For every $g\in G$, there exists an $h\in G$ such that $b(g,h)=e=b(h,g)$. 
    Such an element $h$ is called the inverse of $g$.
\end{enumerate}
\end{definition}

Usually, the clunky $b(g,h)$ notation for the binary operation is dropped, and the notation 
$gh:=b(g,h)$ is instead used. That is, the binary operation is written as a juxtaposition of the 
arguments. When $b(g,h)=b(h,g)$ for all $g,h\in G$, the group is called \textbf{abelian}, and the 
binary operation is instead often written as $g+h:=b(g,h)$. When $G$ is abelian, and the $g+h$ notation is 
used, the inverse of $g$ is written as $-g$. Otherwise, the inverse of $g$ is written $g^{-1}$. 
With these conventions, we may drop the notation $(G,b)$ for a group, and label it instead by the 
set symbol $G$ alone with the understanding that this set comes equipped with a binary operation.

From Definition~\ref{def:group} it follows that the identity element $e$ and the inverse $g^{-1}$ 
of an element $g\in G$ are unique. Indeed, if there exists $e'$ such that $g e'=g=e' g$ for all 
$g\in G$, it follows that $ge'=g=ge$. Thus, by applying $g^{-1}$, we find that $e'=e$. Similarly, 
if there exists an $h\in G$ such that $gh=e=hg$, then $gh=e=gg^{-1}$, from which it follows that 
$h=g^{-1}$. The identity element of a non-abelian group is usually denoted `$1$', whereas the 
identity in an abelian group is written `$0$'.

\begin{definition}
    A \textbf{subgroup} of a group $G$ is a subset $H$ of $G$ which is itself a group under the 
    operation of $G$. That is, $H\subset G$ is a subgroup if $h_1h_2\in H$ for all $h_1,h_2\in H$, 
    where the operation of $G$ has been used.
\end{definition}

The notation $H< G$ is used to indicate that $H$ is a subgroup of $G$. The integers $\mathbb{Z}$ 
equipped with the operation of addition form an example of an abelian group, and the even integers 
are a subgroup of this group since the sum of even integers is an even integer. Given a subset 
$S\subset G$, there is a smallest subgroup $\langle S\rangle$ of $G$ containing $S$ called the 
\textbf{subgroup generated by} $S$. This follows from the fact that the intersection of subgroups 
is again a subgroup. When $S=\{g\}$ is a singleton, we write $\langle S\rangle:=\langle g\rangle$.

\begin{definition}
    The \textbf{order} $\lvert G\rvert$ of a group $G$ is the number of elements in the underlying 
    set. The order $o(g)$ of an element $g\in G$ is the smallest positive integer $n$ such that 
    $g^n=1$ (if it exists).
\end{definition}

The order of the subgroup generated by the singleton $g$ is $\lvert\langle g\rangle\rvert=o(g)$. 
Moreover, if $o(g)=n$, then $\langle g\rangle:=\{1,g,\ldots,g^{n-1}\}$. If $G=\langle g\rangle$ for 
some $g\in G$, then $G$ is called \textbf{cyclic} and we say that $G$ is generated by the element 
$g$. One such example is the set $\{0,1\}$ equipped with addition modulo 2, called $\mathbb{Z}_2$, 
which is generated by the element $1$.

To classify groups, we must introduce a mapping which preserves the properties inherent in this 
structure. Let $\varphi:G\to H$ be a map between groups. In order to preserve the structure of the 
binary operation as $G$ is mapped to $H$, we require that $\varphi(g_1g_2)=\varphi(g_1)\varphi(g_2)$ 
for all $g_1,g_2\in G$. Notice that the multiplication on the left hand side is given by the binary 
operation of $G$, while on the right hand side, the binary operation of $H$ is used. Also, a simple 
consequence of this condition is that $\varphi$ maps the identity in $G$ to the identity in $H$. 
Indeed, $\varphi(1)=\varphi(1^2)=(\varphi(1))^2$. Now multiplying by $(\varphi(1))^{-1}$ yields 
$\varphi(1)=1$. The remaining structure of a group is that of the underlying set. Thus, to fully 
preserve the structure of a group, we make the additional requirement that $\varphi$ is a bijection. 

\begin{definition}
    A map $\varphi:G\to H$ is called a \textbf{homomorphism} if 
    $\varphi(g_1g_2)=\varphi(g_1)\varphi(g_2)$ for all $g_1,g_2\in G$. 
    If, in addition, $\varphi$ is a bijection, then it is called an \textbf{isomorphism}. 
    In this case, $G$ and $H$ are said to be isomorphic and we write $G\cong H$.
\end{definition}

A homomorphism $\varphi:G\to G$ from a group to itself is called an \textbf{endomorphism}. 
Moreover, if $\varphi$ is a bijection, then it is called an \textbf{automorphism}. 
Note that the set Aut$(G)$ of automorphisms of a group $G$ forms a group under the operation of 
composition of functions. Indeed, the identity in this group is the identity automorphism 
$id:G\to G$ given by $id(g)=g$, and the inverse of an automorphism exists and is itself an 
automorphism since automorphisms are bijective and $\varphi(gh)=\varphi(g)\varphi(h)$ implies 
that $\varphi^{-1}(gh)=\varphi^{-1}(g)\varphi^{-1}(h)$. Associativity follows from the definition 
of the composition of functions.

It will be useful to be able to take quotients of groups by their subgroups, and a special type of 
subgroup will be necessary to accomplish this. Suppose $K$ is a subgroup of $G$. If $gkg^{-1}\in K$ 
for all $g\in G$, then $K$ is called a \textbf{normal} subgroup. The next proposition shows that 
the kernel $\ker(\varphi):=\{g\in G\ |\ \varphi(g)=1\}$ and the image 
$\varphi(G):=\{h\in H\ |\ h=\varphi(g)\ \text{for some }g\in G\}$ of a homomorphism 
$\varphi:G\to H$ are subgroups and that the kernel is in fact normal.

\begin{proposition} Let $\varphi:G\to H$ be a homomorphism. Then $\ker(\varphi)$ is a subgroup of 
    $G$ and $\varphi(G)$ is a subgroup of $H$. Moreover, $\ker(\varphi)$ is normal.
\end{proposition}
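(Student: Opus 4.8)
The plan is to verify the subgroup axioms directly for both $\ker(\varphi)$ and $\varphi(G)$, relying only on the defining relation $\varphi(g_1 g_2) = \varphi(g_1)\varphi(g_2)$ together with the already-established fact that $\varphi(1) = 1$. Before beginning, I would isolate one auxiliary identity that gets reused throughout: for any $g \in G$ we have $\varphi(g)\varphi(g^{-1}) = \varphi(g g^{-1}) = \varphi(1) = 1$, and symmetrically $\varphi(g^{-1})\varphi(g) = 1$, so that $\varphi(g^{-1}) = \varphi(g)^{-1}$. This ``$\varphi$ respects inverses'' statement is the key ingredient for every closure-under-inverses check below, and isolating it keeps the rest of the argument to one-line computations.

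For $\ker(\varphi)$, I would check the three axioms in turn. It is nonempty because $\varphi(1) = 1$ places $1 \in \ker(\varphi)$. For closure under the operation, if $g, h \in \ker(\varphi)$ then $\varphi(gh) = \varphi(g)\varphi(h) = 1 \cdot 1 = 1$, so $gh \in \ker(\varphi)$. For closure under inverses, if $g \in \ker(\varphi)$ then the auxiliary identity gives $\varphi(g^{-1}) = \varphi(g)^{-1} = 1^{-1} = 1$, so $g^{-1} \in \ker(\varphi)$. Together these show $\ker(\varphi)$ is a subgroup of $G$. The argument for $\varphi(G)$ is the mirror image, carried out in $H$: it contains $1 = \varphi(1)$; if $h_1 = \varphi(g_1)$ and $h_2 = \varphi(g_2)$ then $h_1 h_2 = \varphi(g_1 g_2) \in \varphi(G)$; and $h_1^{-1} = \varphi(g_1)^{-1} = \varphi(g_1^{-1}) \in \varphi(G)$. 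Hence $\varphi(G)$ is a subgroup of $H$.

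Finally, for normality of $\ker(\varphi)$, I would take an arbitrary $g \in G$ and $k \in \ker(\varphi)$ and evaluate the conjugate directly: $\varphi(g k g^{-1}) = \varphi(g)\varphi(k)\varphi(g^{-1}) = \varphi(g)\cdot 1 \cdot \varphi(g)^{-1} = 1$, so $g k g^{-1} \in \ker(\varphi)$, which is precisely the defining condition for a normal subgroup. I do not expect any genuine obstacle: each step is a single application of the homomorphism property, and the only thing worth being careful about is being systematic in checking identity, closure, and inverses rather than leaning on the abbreviated ``closed under the operation'' phrasing of the subgroup definition given earlier.
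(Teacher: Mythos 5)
Your proof is correct and follows essentially the same route as the paper's: direct verification from the homomorphism property, ending with the identical conjugation computation $\varphi(gkg^{-1})=\varphi(g)\varphi(k)\varphi(g^{-1})=1$ for normality. The only difference is one of thoroughness---you check all three subgroup axioms (identity, closure under the operation, closure under inverses, via the auxiliary identity $\varphi(g^{-1})=\varphi(g)^{-1}$), whereas the paper, consistent with its abbreviated definition of a subgroup as a subset closed under the group operation, verifies closure only; your version is a strictly more careful rendering of the same argument.
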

\begin{proof} It is clear that $\ker(\varphi)$ is a subgroup of $G$ since $g_1,g_2\in \ker(\varphi)$ 
    implies that $\varphi(g_1g_2)=\varphi(g_1)\varphi(g_2)=1$. Similarly, $h_1,h_2\in\varphi(G)$ 
    implies that there exists $g_1,g_2\in G$ such that $\varphi(g_1)=h_1$ and $\varphi(g_2)=h_2$. 
    Then $\varphi(g_1g_2)=\varphi(g_1)\varphi(g_2)=h_1h_2$, which establishes that $\varphi(G)$ is 
    a subgroup of $H$. To see that the kernel is in fact a normal subgroup, let $g\in G$ and 
    $k\in\ker(\varphi)$. Then 
\begin{align*}
    \varphi(gkg^{-1})=\varphi(g)\varphi(k)\varphi(g^{-1})=\varphi(g)\varphi(g^{-1})=\varphi(gg^{-1})=\varphi(1)=1,
\end{align*}
from which it follows that $gkg^{-1}$ is in $\ker(\varphi)$ for all $g\in G$, so that $\ker(\varphi)$ 
is a normal subgroup of $G$.
\end{proof}
The left \textbf{cosets} of $H$ in $G$ are the sets $gH:=\{gh\ \lvert\ h\in H\}$ for $g\in G$, and 
the right cosets are defined similarly. The \textbf{index} $[G:H]$ of $H$ in $G$ is the number of 
left cosets of $H$ in $G$. If $K$ is a normal subgroup of $G$, then the left and right cosets are 
equivalent. To see this, let $h\in gK$. Then there exists a $k\in K$ such that $h=gk$. Let 
$k'=gkg^{-1}$, which is an element of $K$ since $K$ is normal. Then $ k'g=gkg^{-1}g=gk=h$, which 
implies that $h\in Kg$. The converse is similar, showing that $gK=Kg$. Now define a relation on $G$ 
by $g\sim h$ if $g^{-1}h\in K$. Then $\sim$ is an equivalence relation. Indeed, $g^{-1}g=1$, so that 
$g\sim g$. Furthermore, if $g^{-1}h\in K$, then $h^{-1}g=(g^{-1}h)^{-1}\in K$ since $K$ is a group, 
which shows that $g\sim h$ implies $h\sim g$. Finally, suppose $g_1\sim g_2$ and $g_2\sim g_3$. 
Then $g_1^{-1}g_2\in K$ and $g_2^{-1}g_3\in K$. Thus, $g_1^{-1}g_3=g_1^{-1}g_2g_2^{-1}g_3\in K$, 
which shows that $g_1\sim g_3$; therefore, $\sim$ is an equivalence relation. Moreover, the 
equivalence classes of $\sim$ are the cosets of $K$ in $G$.

\begin{proposition}
    The set $G/K$ of cosets of $K$ in $G$ forms a group under multiplication of representatives. 
    That is, under the operation $gK\cdot hK=ghK$.
\end{proposition}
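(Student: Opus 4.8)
The plan is to first establish that the proposed operation $gK \cdot hK = ghK$ is \emph{well-defined}, and then to verify the three group axioms, each of which will follow quickly from the corresponding property already known to hold in $G$. The crux of the entire argument, and the only place where the normality of $K$ is actually used, is the well-definedness check; everything afterward is essentially a transcription of the axioms of $G$ into the language of cosets.

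For well-definedness, I would pick arbitrary representatives of each coset and show the product is insensitive to the choice. Suppose $gK = g'K$ and $hK = h'K$. Then $g' = gk_1$ and $h' = hk_2$ for some $k_1, k_2 \in K$, and the goal is to show $g'h'K = ghK$, which is equivalent to $(gh)^{-1}g'h' \in K$. Computing,
\[
(gh)^{-1}g'h' = h^{-1}g^{-1}(gk_1)(hk_2) = h^{-1}k_1 h k_2 .
\]
This is precisely where normality enters: $h^{-1}k_1 h = h^{-1}k_1 (h^{-1})^{-1}$ lies in $K$ because $K$ is normal, and then $(h^{-1}k_1 h)k_2 \in K$ since $K$ is closed under the operation of $G$. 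Hence $g'h'K = ghK$, so the coset product does not depend on the representatives chosen.

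With well-definedness secured, the remaining axioms are immediate. Associativity is inherited directly from $G$: for any $g,h,\ell \in G$ we have $(gK \cdot hK)\cdot \ell K = (gh)\ell\, K = g(h\ell)\, K = gK \cdot (hK \cdot \ell K)$. The coset $K = 1K$ serves as the identity, since $1K \cdot gK = gK = gK \cdot 1K$. Finally, the inverse of $gK$ is $g^{-1}K$, because $gK \cdot g^{-1}K = gg^{-1}K = 1K = K$, and likewise on the other side. This exhibits $G/K$ as a group under the stated operation.

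The main obstacle, as anticipated, is the well-definedness step, and in particular recognizing why normality is indispensable there: without it, the element $h^{-1}k_1 h$ need not lie in $K$, and the coset product would genuinely depend on the chosen representatives, so the operation would fail to make sense. Once that single point is handled, nothing further is at stake, as each of associativity, identity, and invertibility descends verbatim from $G$.
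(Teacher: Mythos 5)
Your proof is correct and follows essentially the same route as the paper's: first establish that the coset product is well-defined, then inherit associativity, identity, and inverses directly from $G$. In fact, your well-definedness step is more careful than the paper's, which passes from $gkK\cdot hK$ to $gK\cdot hK$ without spelling out that this requires moving $k$ past $h$ by conjugation --- precisely the normality computation $h^{-1}k_1h\in K$ that you make explicit.
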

\begin{proof}
    To see that this operation is well-defined, suppose $gK=g'K$ for some $g'\ne g$. Then $g'\in gK$, 
    so that there exists $k\in K$ such that $g'=gk$. Thus, $g'K\cdot hK=gkK\cdot hK=gK\cdot hK$. 
    Similarly, the product is independent of the representative of $hK$ and is therefore well-defined. 
    Associativity follows from the associativity of the operation on $G$, the identity in $G/K$ is 
    $1K$, and the inverse of $gK$ is $g^{-1}K$. Hence, $G/K$ is a group.
\end{proof}

The importance of the quotient group construction cannot be understated. In the quotient $G/H$, the elements of $H$ are identified with the identity element of $G$. Later, we will let a group act on our quantum system in a well-defined way. In this context, the reduction of $H$ to the identity element forces the elements of $H$ to act trivially on the system. The next theorem is useful for calculating the order of the quotient group.

\begin{theorem}[Lagrange's Theorem]
    Let $H$ be a subgroup of $G$. Then $\lvert G\rvert=[G:H]\lvert H\rvert$. In particular, if $H$ 
    is a normal subgroup, then $\lvert G/H\rvert=\lvert G\rvert/\lvert H\rvert$.
\end{theorem}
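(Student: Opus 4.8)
The plan is to exploit the coset decomposition already developed just above the statement. The key structural fact is that the left cosets of $H$ partition $G$ into pairwise disjoint blocks all of the same size; counting $G$ block-by-block then yields the formula immediately.

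First I would establish that the left cosets partition $G$. Define the relation $g\sim g'$ by $g^{-1}g'\in H$. The equivalence-relation argument given in the paragraph preceding the quotient-group proposition shows that $\sim$ is reflexive, symmetric, and transitive; crucially, that argument uses only that $H$ is closed under products and inverses and never invokes normality, so it applies verbatim to an arbitrary subgroup $H$. Its equivalence classes are exactly the left cosets $gH$, so the cosets are pairwise disjoint and their union is all of $G$.

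Second I would show that every coset has exactly $\lvert H\rvert$ elements. For a fixed $g\in G$, consider the map $\varphi_g\colon H\to gH$ defined by $\varphi_g(h)=gh$. It is surjective by the very definition of $gH$, and it is injective by left cancellation: if $gh_1=gh_2$, then multiplying on the left by $g^{-1}$ gives $h_1=h_2$. Hence $\varphi_g$ is a bijection and $\lvert gH\rvert=\lvert H\rvert$. Combining the two steps, $G$ is a disjoint union of $[G:H]$ cosets each of cardinality $\lvert H\rvert$, so summing the sizes of the blocks gives $\lvert G\rvert=[G:H]\,\lvert H\rvert$. For the normal case, when $H$ is normal the preceding proposition tells us that the set $G/H$ of cosets is itself a group, and its order is by definition the number of distinct cosets, namely $[G:H]$; rearranging the counting identity yields $\lvert G/H\rvert=[G:H]=\lvert G\rvert/\lvert H\rvert$.

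There is no deep obstacle here: the only genuinely substantive step is the bijection $\varphi_g$, which is what forces every coset to have the same cardinality, while everything else is bookkeeping on the partition. The one point I would flag is finiteness. The statement as written treats $\lvert G\rvert$ as an integer, so I would note that $G$ is assumed finite (whence $H$ and the index are finite as well); if one wished to phrase it for infinite groups, the same bijection still delivers the identity of cardinals.
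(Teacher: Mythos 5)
Your proof is correct and follows essentially the same route as the paper's: both establish that the left cosets partition $G$ and then use the bijection $h\mapsto gh$ to show every coset has cardinality $\lvert H\rvert$. The only cosmetic difference is that you obtain the partition by reusing the equivalence relation $g\sim g'\iff g^{-1}g'\in H$ (correctly observing it never needs normality), whereas the paper argues disjointness of distinct cosets directly by contradiction; your added remark on finiteness is a reasonable clarification, not a divergence.
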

\begin{proof}
    Suppose $gH\ne g'H$ for some $g,g'\in G$ and let $k\in gH\cap g'H$. Then there exists $h,h'\in H$ 
    such that $gh=k=g'h'$. Thus, $g=g'h'h^{-1}$ and so $g\in g'H$ and $gH\subset g'H$. Similarly, 
    $g'=gh(h')^{-1}$ and so $g'\in gH$ and $g'H\subset gH$. Then $gH=g'H$, a contradiction. Thus, 
    $gH\cap g'H=\emptyset$, so that any two cosets are disjoint. This establishes that the cosets of 
    $H$ in $G$ partition the set $G$. Now, the mapping $\theta_g:H\to gH$ given by $\theta_g(h)=gh$ 
    is surjective by construction and injective since $gh=gh'$ implies that $h=h'$. Thus, 
    $\lvert gH\rvert=\lvert H\rvert$ for all $g\in G$. It follows that the number of cosets of $H$ 
    in $G$ is $\lvert G\rvert/\lvert H\rvert$.
\end{proof}

The mapping $q:G\to G/K$ given by $q(g):=gK$ is a homomorphism called the quotient mapping. That 
this map is a homomorphism follows from $q(gh)=ghK=gK\cdot hK=q(g)q(h)$. Moreover, the kernel of 
this map is $K$ since $gK=q(g)=K$ is true if and only if $g\in K$. Thus, $q(G)=G/\ker(q)$ since $q$ 
is surjective. In fact, if $\varphi:G\to H$ is any homomorphism, then $G/\ker(\varphi)\cong \varphi(G)$,
a fact which is known as the first isomorphism theorem. 
In particular, if $\varphi$ is surjective, then $G/\ker(\varphi)\cong H$.

\begin{theorem}[First Isomorphism Theorem for Groups]\label{thm:first-iso-groups} If 
    $\varphi:G\to H$ is a homomorphism of groups, then $G/\ker(\varphi)\cong\varphi(G)$.
\end{theorem}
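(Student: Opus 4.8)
The plan is to construct an explicit isomorphism between $G/\ker(\varphi)$ and $\varphi(G)$ by inducing a map directly from $\varphi$. Writing $K=\ker(\varphi)$, we already know from the earlier proposition that $K$ is a normal subgroup of $G$, so the quotient $G/K$ is a well-defined group under multiplication of representatives. The natural candidate for the isomorphism is the map $\bar{\varphi}:G/K\to\varphi(G)$ defined on cosets by $\bar{\varphi}(gK)=\varphi(g)$, and the bulk of the proof consists of verifying that this prescription does what we want.

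First I would check that $\bar{\varphi}$ is well-defined, that is, that its value does not depend on the choice of coset representative. This is the step where the defining property of the kernel does the essential work: if $gK=g'K$, then $g^{-1}g'\in K$, so $\varphi(g^{-1}g')=1$, which rearranges to $\varphi(g)=\varphi(g')$. Once well-definedness is secured, checking that $\bar{\varphi}$ is a homomorphism is immediate from the homomorphism property of $\varphi$ together with the definition of multiplication in $G/K$, since $\bar{\varphi}(gK\cdot hK)=\bar{\varphi}(ghK)=\varphi(gh)=\varphi(g)\varphi(h)=\bar{\varphi}(gK)\bar{\varphi}(hK)$.

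Next I would establish bijectivity. Surjectivity onto $\varphi(G)$ is essentially by construction, since every element of the image is of the form $\varphi(g)=\bar{\varphi}(gK)$ for some $g\in G$. For injectivity, I would run the well-definedness computation in reverse: if $\bar{\varphi}(gK)=\bar{\varphi}(g'K)$, then $\varphi(g)=\varphi(g')$, whence $\varphi(g^{-1}g')=1$ and so $g^{-1}g'\in K$, giving $gK=g'K$. Thus $\bar{\varphi}$ is a bijective homomorphism, i.e. an isomorphism, which yields $G/K\cong\varphi(G)$ as claimed.

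The only genuinely delicate point is the well-definedness of $\bar{\varphi}$; everything else follows mechanically. This reflects a general principle worth emphasizing: a quotient construction is specified by choosing coset representatives, and one must always confirm that the resulting map respects the underlying equivalence relation. Here the confirmation amounts to the statement that $K$ consists \emph{precisely} of the elements $\varphi$ sends to the identity, which is exactly why the kernel is the correct subgroup to quotient by, and why the same argument simultaneously delivers injectivity.
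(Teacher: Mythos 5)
Your proposal is correct and follows essentially the same route as the paper: both construct the induced map $gK\mapsto\varphi(g)$ on cosets and verify well-definedness, the homomorphism property, surjectivity, and injectivity in the same way, the only cosmetic difference being that you phrase well-definedness via $\varphi(g^{-1}g')=1$ while the paper writes $g'=gk$ with $\varphi(k)=1$. No gaps.
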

\begin{proof}
    Define a mapping $\Phi:G/\ker(\varphi)\to \varphi(G)$ by $\Phi(g\ker(\varphi))=\varphi(g)$. 
    To see that this mapping is well-defined, let $g'\ker(\varphi)=g\ker(\varphi)$. 
    Then $g'=gk$ for some $k\in\ker(\varphi)$ and 
    \begin{align*}
        \Phi(g'\ker(\varphi))&=\varphi(g')\\
        &=\varphi(gk)\\
        &=\varphi(g)\varphi(k)\\
        &=\varphi(g)\\
        &=\Phi(g\ker(\varphi)).
    \end{align*}
    Now, given $g,g'\in G$, one has
    \begin{align*}
        \Phi(g\ker(\varphi)\cdot g'\ker(\varphi))&=\Phi(gg'\ker(\varphi))\\
        &=\varphi(gg')\\
        &=\varphi(g)\varphi(g')\\
        &=\Phi(g\ker(\varphi))\Phi(g'\ker(\varphi)),
    \end{align*}
    which establishes that $\Phi$ is a homomorphism. For any $\varphi(g)\in\varphi(G)$, one finds 
    that $\Phi(g\ker(\varphi))=\varphi(g)$, so that $\Phi$ is surjective. Furthermore, if 
    $\Phi(g\ker(\varphi))=\Phi(g'\ker(\varphi))$, then $\varphi(g)=\varphi(g')$, which implies that 
    $\varphi(g^{-1}g')=1$, so that $g^{-1}g\in\ker(\varphi)$. Thus, $g^{-1}g'\ker(\varphi)=\ker(\varphi)$, 
    from which it follows that $g'\ker(\varphi)=g\ker(\varphi)$; that is, $\Phi$ is injective. 
    This establishes that $\Phi$ is an isomorphism. Hence, $G/\ker(\varphi)\cong\varphi(G)$.
\end{proof}

There are other types of groups and subgroups of interest. A group $G$ is called \textbf{simple} if 
it has no nontrivial, proper, normal subgroups; that is, the only normal subgroups are $G$ itself and 
the trivial group $\{1\}$. The \textbf{centralizer} of a subset $A\subset G$ is the subgroup
\begin{align*}
    C_G(A)=\{g\in G\ \lvert\ ga=ag\text{ for all }a\in A\}.
\end{align*}
To see that this is indeed a subgroup, let $g,g'\in C_G(A)$ and observe that $gg'a=gag'=agg'$. 
A special name is given to the centralizer of $G$ in itself. The set
\begin{align*}
    Z(G):=C_G(G)=\{g\in G\ \lvert\ gh=hg\text{ for all }h\in G\}
\end{align*}
is called the \textbf{center} of $G$, and it can be shown that this is a normal subgroup of $G$. 
Indeed, let $g\in G$ and $z\in Z(G)$ and observe that $gzg^{-1}=gg^{-1}z=z\in Z(G)$. 
The \textbf{direct product} $G\times H$ of groups $G$ and $H$ is the collection of elements of the 
form $(g,h)$ with $g\in G$ and $h\in H$ together with the binary operation defined by
\begin{align*}
    (g_1,h_1)(g_2,h_2)=(g_1g_2,h_1h_2),
\end{align*}
where it is understood that the operation in the first component is that of $G$, while the operation 
in the second component is that of $H$.

A subgroup of particular importance to us will be the \textbf{normalizer} of a subset $A\subset G$, 
which is defined as
\begin{equation}
    N_G(A)=\{g\in G\vert gAg^{-1}=A\}.
\end{equation}
To see that this is indeed a subgroup, let $g,g'\in N_G(A)$ and observe that
$gg'A(gg')^{-1}=gg'A(g')^{-1}g^{-1}=gAg^{-1}=A$.
Observe that if $A$ is a normal subgroup of $G$, then $gAg^{-1}=A$ for all $g\in G$.
Thus, the normalizer is $N_G(A)=G$. 
Observe also that if $g\in C_G(A)$, then $gAg^{-1}=gg^{-1}A=A$, and so $g\in N_G(A)$, and it follows 
that $C_G(A)<N_G(A)$.
\subsection{Representations and the Stabilizer Subgroup}\label{sec:groupactions}

We have given a general overview of the absolute basics of group theory, and our goal now is to give 
the theory some practical importance. It is clear that groups capture the notion of a symmetry.
We now need a formalism which allows a group to act on our system in the same way that we would expect a symmetry to.
To this end, let us define a representation of a group.

\begin{definition}
    Let $G$ be a group. A \textbf{representation} 
    of $G$ acting on a Hilbert space $\mathcal{H}$
    is a group homomorphism $\varphi:G\to GL(\mathcal{H})$.
\end{definition}

The homomorphism property of the representation allows us to transfer information
about the group structure on $G$ to its image $\varphi(G)$ in the general linear group $GL(\mathcal{H})$.
The general linear group is the group of all invertible linear transformations from $\mathcal{H}$ to itself.
When $\varphi$ is injective, its kernel is trivial and so by the first isomorphism theorem, 
we have that $G$ is isomorphic to $\varphi(G)$. 
For this reason, we call such a representation \textbf{faithful}; it embeds the entire structure of $G$ into $GL(\mathcal{H})$.
Since we are interested in letting our group act on quantum states, 
we will restrict our attention to representations that map group elements to
unitary operators so that the convention that quantum states are normalized is unaffected
by the action of such an element. Thus, we consider representations $W:G\to\mathcal{U}(\mathcal{H})\subset GL(\mathcal{H})$
which we call \textbf{unitary}.

\begin{definition}
    Let $W:G\to\mathcal{U}(\mathcal{H})$ be a unitary representation. 
    A quantum state $\ket{\psi}$ is called $G$-\textbf{invariant} with respect to the representation $W$ if
    \begin{equation}
        W(g)\ket{\psi}=\ket{\psi}
    \end{equation}
    for all $g\in G$. Likewise, a subspace $S$ of $\mathcal{H}$ is called $G$-invariant 
    if $W(g)\ket{\psi}=\ket{\psi}$ for all $\ket{\psi}\in S$.
\end{definition}

Elements of $\mathcal{H}$ are $G$-invariant when they are left unchanged by the action of a group $G$.
It is a common abuse of notation to drop the reference to the representation $W$ when it is clear from the context.
The collection of all such quantum states forms a subspace of the original 
Hilbert space called the $G$-\textbf{symmetric subspace} and denoted $\sym_{(W,G)}$ 
or $\sym_G$ when the representation is clear from the context. That this set is indeed
a subspace is made clear by the fact that if $\ket{\psi}$ and $\ket{\phi}$ are elements of $\sym_G$, 
then $W(g)(\ket{\psi}+\ket{\phi})=W(g)\ket{\psi}+W(g)\ket{\phi}=\ket{\psi}+\ket{\phi}$, 
and if $\lambda\in\mathbb{C}$, then $W(g)\lambda\ket{\psi}=\lambda W(g)\ket{\psi}=\lambda\ket{\psi}$. The following lemma is often helpful.

\begin{lemma}\label{lemma:gprojection}
    If $G$ is a finite group, then the projection $\Pi_G$ onto the $G$-symmetric subspace is given by
    \begin{equation}
        \Pi_G = \frac{1}{\lvert G\rvert}\sum_{g\in G} W(g).
    \end{equation}
\end{lemma}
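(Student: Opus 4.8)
The plan is to show that $\Pi_G$ is an orthogonal projection---that is, a self-adjoint idempotent operator---whose range is exactly the symmetric subspace $\sym_G$. Recall that an operator is the orthogonal projection onto a subspace $S$ precisely when it is self-adjoint, squares to itself, fixes every vector of $S$, and maps every vector of $\mathcal{H}$ into $S$. I would verify each of these properties in turn, and the single observation that powers all of them is that left multiplication by a fixed group element merely permutes the summation index: since $g\mapsto hg$ is a bijection of the finite set $G$ for any fixed $h\in G$, we have
\begin{equation}
    W(h)\Pi_G=\frac{1}{\lvert G\rvert}\sum_{g\in G}W(h)W(g)=\frac{1}{\lvert G\rvert}\sum_{g\in G}W(hg)=\frac{1}{\lvert G\rvert}\sum_{g'\in G}W(g')=\Pi_G,
\end{equation}
where the homomorphism property $W(h)W(g)=W(hg)$ is used in the middle step.

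From the identity $W(h)\Pi_G=\Pi_G$ the remaining facts follow quickly. First, idempotency: expanding the leftmost factor gives $\Pi_G^2=\frac{1}{\lvert G\rvert}\sum_{h\in G}W(h)\Pi_G=\frac{1}{\lvert G\rvert}\sum_{h\in G}\Pi_G=\Pi_G$. Second, the range is contained in $\sym_G$: for any $\ket{\psi}\in\mathcal{H}$ and any $h\in G$ we have $W(h)\Pi_G\ket{\psi}=\Pi_G\ket{\psi}$, so $\Pi_G\ket{\psi}$ is $G$-invariant. For the reverse containment, I would note that if $\ket{\psi}\in\sym_G$, then $W(g)\ket{\psi}=\ket{\psi}$ for all $g$, whence $\Pi_G\ket{\psi}=\frac{1}{\lvert G\rvert}\sum_{g\in G}\ket{\psi}=\ket{\psi}$; thus $\Pi_G$ fixes $\sym_G$ pointwise, and combined with the previous point this shows that the range of $\Pi_G$ equals $\sym_G$ exactly.

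Finally, self-adjointness follows from unitarity of the representation: since $W(g)^\dagger=W(g)^{-1}=W(g^{-1})$ and $g\mapsto g^{-1}$ is again a bijection of $G$, one computes $\Pi_G^\dagger=\frac{1}{\lvert G\rvert}\sum_{g\in G}W(g)^\dagger=\frac{1}{\lvert G\rvert}\sum_{g\in G}W(g^{-1})=\Pi_G$. Collecting these observations, $\Pi_G$ is a self-adjoint idempotent with range $\sym_G$, hence the orthogonal projection onto $\sym_G$. There is no genuinely hard step here; the only thing requiring care is the reindexing argument, which quietly exploits both the finiteness of $G$ (so that the sums are finite and may be freely rearranged) and the fact that multiplication by a fixed element, as well as inversion, is a bijection of the group. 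Every subsequent claim is then an immediate corollary of this reindexing identity.
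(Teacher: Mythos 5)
Your proof is correct and takes essentially the same route as the paper's: the re-indexing identity $W(h)\Pi_G=\Pi_G$, idempotency, and the two inclusions $\imag(\Pi_G)\subset\sym_G$ and $\sym_G\subset\imag(\Pi_G)$. The only difference is that you additionally verify self-adjointness (so that $\Pi_G$ is the \emph{orthogonal} projection), a point the paper's proof leaves implicit.
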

\begin{proof}
    Let us first show that $\Pi$ is idempotent. We have
    \begin{align}
        \Pi_G^2 &= \frac{1}{\lvert G\rvert^2}\sum_{g\in G}\sum_{h\in G}W(g)W(h)\\
        &=\frac{1}{\lvert G\rvert^2}\sum_{g\in G}\sum_{h\in G}W(gh)\\
        &=\frac{1}{\lvert G\rvert^2}\sum_{g\in G}\sum_{h\in G}W(g)\\
        &=\frac{1}{\lvert G\rvert}\sum_{g\in G}W(g)=\Pi_G,
    \end{align}
    where in the third equality, we have used the fact that $Gh=G$; that is, the collection of all elements $gh$ with $g\in G$ is the same as the collection of all elements $g\in G$. This shows that $\Pi$ is a projection. To see that this projection is in fact onto the $G$-symmetric subspace, let $\ket{\psi}\in\imag(\Pi_G)$ and let $h\in G$. Then $\ket{\psi}=\Pi_G\ket{\phi}$ for some $\ket{\phi}$ and we have $W(h)\ket{\psi}=W(h)\Pi_G\ket{\phi}$. But notice that
    \begin{equation}
        W(h)\Pi_G=\frac{1}{\lvert G\rvert}W(h)\sum_{g\in G}W(g)=\frac{1}{\lvert G\rvert}\sum_{g\in G}W(hg)=\frac{1}{\lvert G\rvert}\sum_{g\in G}W(g)=\Pi_G.
    \end{equation}
    Thus, $W(h)\ket{\psi}=W(h)\Pi_G\ket{\phi}=\Pi_G\ket{\phi}=\ket{\psi}$, establishing that $\imag(\Pi_G)\subset\sym_G$. Now suppose $\ket{\psi}\in\sym_G$. Then $W(g)\ket{\psi}=\ket{\psi}$ for all $g\in G$. Thus, we have
    \begin{equation}
        \Pi_G\ket{\psi}=\frac{1}{\lvert G\rvert}\sum_{g\in G}W(g)\ket{\psi}=\frac{1}{\lvert G\rvert}\sum_{g\in G}\ket{\psi}=\ket{\psi},
    \end{equation}
    which establishes that $\sym_G\subset\imag(\Pi_G)$, thereby completing the proof.
\end{proof}

An important subgroup of $G$ is the \textbf{stabilizer} of an element of $\mathcal{H}$, the collection of all
$g$ which leave that element invariant. To see that this is indeed a subgroup of $G$, 
observe that if $g_1$ and $g_2$ leave $\ket{\psi}$ invariant, then $W(g_1g_2)\ket{\psi}=W(g_1)W(g_2)\ket{\psi}=\ket{\psi}$.
\begin{definition}
    Let $G$ be a group with unitary representation $W$ acting on $\mathcal{H}$ and let $\ket{\psi}\in\mathcal{H}$.
    The \textbf{stabilizer subgroup} associated to $\ket{\psi}$ is the subgroup of $G$ defined by
    \begin{equation}
        G_{\ket{\psi}}:=\{g\in G\, \vert\,  W(g)\ket{\psi}=\ket{\psi}\}.
    \end{equation}
\end{definition}
\noindent We will see this subgroup appear in the stabilizer codes introduced in the next section.

We would like an equivalence relation that tells us whether or not two representations should be considered the same.
In group theory, the structure of a group was preserved by the homomorphism property
and the remaining set structure was preserved by the bijection property.
Similarly, we can define a class of maps which preserve the structure of a representation.

\begin{definition}
    Let $W_1:G\to\mathcal{U}(\mathcal{H}_1)$ and $W_2:G\to\mathcal{U}(\mathcal{H}_2)$ be representations of some group $G$.
    A linear map $T:\mathcal{H}_1\to\mathcal{H}_2$ is called an \textbf{intertwining map} if it satisfies
    \begin{equation}\label{eq:equivariance}
        T\circ W_1(g)=W_2(g)\circ T.
    \end{equation}
\end{definition}

The linearity property ensures that the linear vector space structure is 
transferred from $\mathcal{H}_1$ to its image in $\mathcal{H}_2$. The \textbf{equivariance} 
property \eqref{eq:equivariance} ensures that the action of $G$ on $\mathcal{H}_1$ via $W_1$
is respected by the action of $G$ on $\mathcal{H}_2$ via $W_2$. If, in addition, we
ensure that $\mathcal{H}_1$ and $\mathcal{H}_2$ are equivalent at the set theoretic level
by demanding that $T$ is a bijection, then $T$ is called an \textbf{isomorphism} of representations,
and the representations $W_1$ and $W_2$ are called \textbf{equivalent}. The classification of representations of finite groups is easily tackled with
the theory of characters, but we will have no need to do so here. The interested
reader is instead referred to the excellent book by Steinberg for an elementary
introduction \cite{steinberg2011}.

\subsection{The Stabilizer Code Formalism} \label{sec:stab-formalism}
The stabilizer code proceeds in three main steps: encoding,
syndrome extraction, and decoding. 
We begin by creating redundancy by entangling the state $\ket{\psi}$ with which we are concerned with additional qubits,
creating a logical state $\ket{\psi}_L$ in a higher dimensional Hilbert space.
This step is known as \textbf{encoding}, and the higher dimensional Hilbert space $\mathcal{H}_{L}$ 
to which $\ket{\psi}_L$ belongs is known as the \textbf{code space}. Observe that the group $\mathcal{U}(\mathcal{H}_{L})$
of unitary operations acts on $\mathcal{H}_L$ by the obvious representation $W(U)\ket{\psi}_L=U\ket{\psi}_L$.
Thus, we can drop any mention of $W$ and understand $\mathcal{U}(\mathcal{H}_L)$-invariance as the property that $U\ket{\psi}_L=\ket{\psi}_L$ for all $U\in\mathcal{U}(\mathcal{H}_L)$. We will be concerned with unitary
operators which act trivially on the code space; that is, operators $U\in\mathcal{U}(\mathcal{H}_{L})$
such that $U\ket{\psi}_L=\ket{\psi}_L$ for all $\ket{\psi}_L\in\mathcal{H}_{L}$. In other words, 
operators which are contained in the stabilizer of every element of $\mathcal{H}_L$. Since the
intersection of arbitrarily many groups is a group, we see that this collection of operators, 
which we call $\mathcal{S}$, is itself a subgroup of $\mathcal{U}(\mathcal{H}_{L})$. It will be these
operations which we use in our syndrome extraction procedure to extract information about our logical
qubit state without destroying it.

Let us therefore move on to the extraction of a \textbf{syndrome},
by which we mean one or more bits of classical information obtained via a measurement
which yields information about whether or not an error has occurred after the encoding procedure.
For an error $E\in\mathcal{U}(\mathcal{H}_L)$, we select an operator $S\in\mathcal{S}$ which anticommutes
with $E$ and then apply the procedure in Figure~\ref{fig:syndromeextraction}. In the event that this
error occurs, the syndrome extraction procedure transforms the state of the system as follows. After
applying the first Hadamard gate to the ancillary qubit, the system is in the state
\begin{equation}
    E\ket{\psi}_L H\ket{0}_A=E\ket{\psi}_L\frac{1}{\sqrt{2}}(\ket{0}_A+\ket{1}_A).
\end{equation}
Now applying the controlled-$S$ gate produces the state
\begin{equation}
    \frac{1}{\sqrt{2}}(E\ket{\psi}_L\ket{0}_A+SE\ket{\psi}_L\ket{1}_A),
\end{equation}
and so after the remaining Hadamard gate on the ancillary register, the system is in the state
\begin{equation}\label{eq:stabilizer-syndrome-state}
    \frac{E+SE}{2}\ket{\psi}_L\ket{0}_A+\frac{E-SE}{2}\ket{\psi}_L\ket{1}_A.
\end{equation}
Since $E$ anticommutes with $S$ by assumption, we see that the first term vanishes, and the second
term becomes
\begin{equation}
    E\ket{\psi}_L\ket{1}_A,
\end{equation}
and now a measurement of the ancillary qubit produces the syndrome 1 with certainty and leaves the
state of the logical qubit unchanged. Notice also that if we take $E=\mathds{1}$ so that there is no error 
at all, then \eqref{eq:stabilizer-syndrome-state} instead reduces to 
\begin{equation}
    \ket{\psi}_L\ket{0}_A,
\end{equation}
and so a measurement of the ancillary qubit produces the syndrome 0 with certainty and leaves the
state of the logical qubit unchanged. We have therefore constructed a procedure for detecting the 
error $E$ and can now correct it by applying its inverse operation $E^\dagger$ when it occurs, 
a process which is known as \textbf{decoding}. Importantly, this procedure does not harm the state of 
the logical qubit.

\begin{figure}
    \centering
    \begin{quantikz}
        \lstick{$\ket{\psi}_L$} & & \gate[1]{S} & &\\
        \lstick{$\ket{0}_A$} & \gate[1]{H} & \ctrl{-1} & \gate[1]{H} & \meter{}\\
    \end{quantikz}
    \caption{
        Syndrome extraction procedure for a stabilizer code.
    }\label{fig:syndromeextraction}
\end{figure}
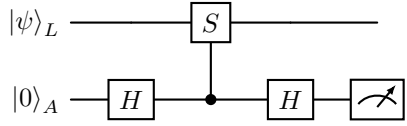

Now suppose we are concerned with two types of errors and label them $E_1$ and $E_2$. We find
operators $S_1,S_2\in\mathcal{S}$ such that $[S_1,S_2]=0$ and $S_i$ anticommutes with $E_j$ for $i=j$
and commutes with $E_j$ for $i\ne j$, and then we apply the procedure shown in Figure~\ref{fig:syndromeextraction2}. 
Suppose that $\ket{\psi}_L$ has been subject to an error $E\in\{\mathds{1},E_1,E_2\}$.
After the Hadamard gates on the ancillary register, we have the state
\begin{equation}
    \frac{1}{2}E\ket{\psi}_L(\ket{0}_{A_1}+\ket{1}_{A_1})(\ket{0}_{A_2}+\ket{1}_{A_2})
    =\frac{1}{2}(E\ket{\psi}_L\ket{00}_A+E\ket{\psi}_L\ket{01}_A+E\ket{\psi}_L\ket{10}_A+E\ket{\psi}_L\ket{11}_A),
\end{equation}
where we have introduced the notation $\ket{xy}_A=\ket{x}_{A_1}\ket{y}_{A_2}$ for brevity. Now
applying the controlled $S_1$ and $S_2$ operations produces the state
\begin{equation}
    \frac{1}{2}(E\ket{\psi}_L\ket{00}_A+S_2E\ket{\psi}_L\ket{01}_A+S_1E\ket{\psi}_L\ket{10}_A+S_2S_1E\ket{\psi}_L\ket{11}_A).
\end{equation}
Finally, applying Hadamard gates to the ancillary register again produces the state
\begin{align}
    \frac{1}{4}&((\mathds{1}+S_2+S_1+S_2S_1)E\ket{\psi}_L\ket{00}_A+(\mathds{1}-S_2+S_1-S_2S_1)E\ket{\psi}_L\ket{01}_A\\
    &+(\mathds{1}+S_2-S_1-S_2S_1)E\ket{\psi}_L\ket{10}_A+(\mathds{1}-S_2-S_1+S_2S_1)E\ket{\psi}_L\ket{11}_A).
\end{align}
Since $S_1$ and $S_2$ leave $\ket{\psi}_L$ invariant by assumption, we see that if $E=\mathds{1}$, then this 
state reduces to $\ket{\psi}_L\ket{00}_A$, and we are guaranteed that a measurement of the ancillary 
qubits will produce the syndrome 00 while leaving the state of the logical qubit unaffected. 
If $E=E_1$, then using the assumption that $E_1$ anticommutes with $S_1$, the state before measurement reduces to
\begin{align}
    \frac{1}{4}&((\mathds{1}+S_2-\mathds{1}-S_2)E_1\ket{\psi}_L\ket{00}_A+(\mathds{1}-S_2-\mathds{1}+S_2)E_1\ket{\psi}_L\ket{01}_A\\
    &+(\mathds{1}+S_2+\mathds{1}+S_2)E_1\ket{\psi}_L\ket{10}_A+(\mathds{1}-S_2+\mathds{1}-S_2)E_1\ket{\psi}_L\ket{11}_A),
\end{align}
which is equivalent to
\begin{equation}
    \frac{1}{2}((\mathds{1}+S_2)E_1\ket{\psi}_L\ket{10}_A+(\mathds{1}-S_2)E_1\ket{\psi}_L\ket{11}_A).
\end{equation}
Now using the assumption that $E_1$ commutes with $S_2$, we are left with the state $E_1\ket{\psi}_L\ket{10}_A$,
so that a measurement on the ancillary register produces the syndrome 10 with certainty and leaves
the state of the logical qubit invariant. Similarly, if $E=E_2$, then the state before measurement is
$E_2\ket{\psi}_L\ket{10}_A$, which will produce error syndrome 10 upon measurement and leave the
state of the logical qubit invariant. Finally, if $E=E_1E_2$, then the state before measurement is
$E_1E_2\ket{\psi}_L\ket{11}_A$, which will produce error syndrome 11 upon measurement and again
leave the state of the logical qubit invariant. Thus, our code can identify and correct $E_1,E_2,$ and
$E_1E_2$ errors.

\begin{figure}
    \centering
    \begin{quantikz}
        \lstick{$\ket{\psi}_L$}  &             & \gate[1]{S_1} & \gate[1]{S_2}&             &\\
        \lstick{$\ket{0}_{A_1}$} & \gate[1]{H} & \ctrl{-1}     &              & \gate[1]{H} &\meter{}\\
        \lstick{$\ket{0}_{A_2}$} & \gate[1]{H} &               & \ctrl{-2}    & \gate[1]{H} &\meter{}
    \end{quantikz}
    \caption{
        Syndrome extraction procedure for a stabilizer code concerned with two errors.
    }\label{fig:syndromeextraction2}
\end{figure}
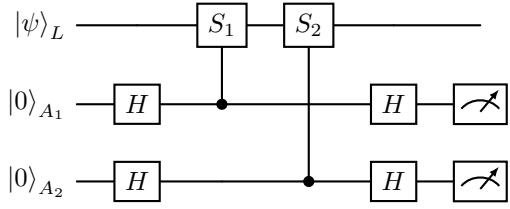

It is clear how to proceed inductively to incorporate an arbitrary number of operators $S_i$ capable
of detecting and correcting multiple errors. The codes we have outlined are called 
\textbf{stabilizer codes}. Let us formalize the procedure for constructing a stabilizer code. Because
the only errors we are concerned with are Pauli-type errors, we restrict our attention to the so-called
Pauli group $G_k$, which is the group generated by all possible $k$-fold tensor products of the Pauli
matrices up to a multiplicative factor $\lambda\in\{\pm1,\pm i\}$. Rather than defining the code space first, let us choose an abelian subgroup $\mathcal{S}$ of the Pauli 
group not containing the element $-1$ and call it the \textbf{stabilizer}. We then take the 
\textbf{code space} to be the $\mathcal{S}$-symmetric subspace $\textnormal{Sym}_{\mathcal{S}}$ of 
$\mathbb{C}^{2^k}$, thereby making sense of this terminology. The assumption that the stabilizer
does not contain $-1$ is made because if it did contain $-1$, then the $\mathcal{S}$-symmetric subspace
would be trivial.

This is an important point that is worth repeating; our stabilizer code is defined by a choice of 
stabilizer. The code space does not need to be defined in addition to the stabilizer, it is constructed
from the choice of stabilizer. We now link the dimension of the symmetric subspace to the number
of generators of $\mathcal{S}$.

\begin{proposition}\label{prop:sym-subspace-dim}
    Suppose the stabilizer $\mathcal{S}\le G_k$ has a minimal generating set of length $\ell$. Then the dimension of the
    $\mathcal{S}$-symmetric subspace is $\dim(\sym_{\mathcal{S}})=2^{k-\ell}$.
\end{proposition}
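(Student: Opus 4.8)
The plan is to compute the dimension directly as the trace of the orthogonal projection onto $\sym_{\mathcal{S}}$, using that projections have rank equal to their trace and that nontrivial Pauli strings are traceless. The only genuine content lies in controlling the structure of $\mathcal{S}$; the trace evaluation is then immediate.

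First I would pin down the order of $\mathcal{S}$. Every $g\in\mathcal{S}$ has the form $\lambda P$ for a phase $\lambda\in\{\pm1,\pm i\}$ and a phase-free tensor product $P$ of Pauli matrices, and since $P^2=\mathds{1}$ we have $g^2=\lambda^2\mathds{1}$. If $\lambda=\pm i$ then $g^2=-\mathds{1}\in\mathcal{S}$, contradicting the standing hypothesis that $-\mathds{1}\notin\mathcal{S}$; hence $\lambda=\pm1$ and $g^2=\mathds{1}$ for every $g\in\mathcal{S}$. Thus $\mathcal{S}$ is a finite abelian group in which every element is an involution, i.e.\ an elementary abelian $2$-group, which is the same data as a vector space over $\mathbb{F}_2$. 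A minimal generating set is then precisely an $\mathbb{F}_2$-basis, so its length $\ell$ equals the dimension of $\mathcal{S}$ as an $\mathbb{F}_2$-vector space and $\lvert\mathcal{S}\rvert=2^{\ell}$. This identification of \emph{minimal generating set} with \emph{basis} is the one step that requires care, since it is exactly what rules out hidden relations among the generators, and it is the main obstacle in the argument.

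Next I would invoke Lemma~\ref{lemma:gprojection}: because $\mathcal{S}$ is finite and acts on $\mathbb{C}^{2^k}$ by the defining representation $W(g)=g$, the projection onto $\sym_{\mathcal{S}}$ is $\Pi_{\mathcal{S}}=\frac{1}{\lvert\mathcal{S}\rvert}\sum_{g\in\mathcal{S}}g$. Since $\Pi_{\mathcal{S}}$ is idempotent, its minimal polynomial divides $x^2-x$, so it is diagonalizable with eigenvalues in $\{0,1\}$; consequently its rank, and hence $\dim(\sym_{\mathcal{S}})$, equals its trace. Combining this with the previous step gives
\begin{equation}
    \dim(\sym_{\mathcal{S}})=\tr(\Pi_{\mathcal{S}})=\frac{1}{2^{\ell}}\sum_{g\in\mathcal{S}}\tr(g).
\end{equation}

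The final step is to evaluate this sum. Each single-qubit Pauli $X,Y,Z$ is traceless while $\tr(\mathds{1})=2$, so by multiplicativity of the trace over tensor products, any tensor product of Paulis containing at least one non-identity factor is traceless. From the first step every $g\in\mathcal{S}$ equals $\pm P$ for a phase-free Pauli string $P$, and the only such $g$ that is a scalar multiple of the identity is $g=\mathds{1}$ itself, where the exclusion of $-\mathds{1}$ is used once more. Every other term therefore vanishes, and the sum collapses to $\tr(\mathds{1})=2^{k}$, yielding
\begin{equation}
    \dim(\sym_{\mathcal{S}})=\frac{2^{k}}{2^{\ell}}=2^{k-\ell},
\end{equation}
as claimed. (An alternative route is to simultaneously diagonalize the commuting involutions $S_1,\dots,S_\ell$, decompose $\mathbb{C}^{2^k}$ into the $2^\ell$ joint sign-eigenspaces, and show each has equal dimension by exhibiting Pauli operators that anticommute with one generator and commute with the rest; but the trace computation is shorter and avoids this bookkeeping.)
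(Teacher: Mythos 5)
Your proof is correct, and it takes a genuinely different route from the paper's --- in fact, the alternative you dismiss in your closing parenthesis (splitting $\mathbb{C}^{2^k}$ into joint sign-eigenspaces of the commuting generators) is essentially what the paper does. The paper works with the individual projections $P_i=\frac{\mathds{1}+S_i}{2}$, shows each one projects onto $\sym_{S_i}$, and argues inductively that each successive generator halves the dimension, using an eigenvector-counting symmetry argument to see that a nontrivial Pauli string has $\pm1$ eigenspaces of equal dimension. You instead invoke the paper's own Lemma~\ref{lemma:gprojection} to write the projection onto $\sym_{\mathcal{S}}$ as the group average $\frac{1}{\lvert\mathcal{S}\rvert}\sum_{g\in\mathcal{S}}g$ and evaluate its trace, which collapses to $\tr(\mathds{1})=2^k$ because every non-identity element of $\mathcal{S}$ is $\pm$ a nontrivial Pauli string and hence traceless. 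Your route buys two things. First, it makes explicit exactly where minimality of the generating set enters: it is what guarantees $\lvert\mathcal{S}\rvert=2^{\ell}$, via your observation that $\mathcal{S}$ is an elementary abelian $2$-group, i.e., an $\mathbb{F}_2$-vector space in which a minimal generating set is a basis. Second, it makes both uses of the hypothesis $-\mathds{1}\notin\mathcal{S}$ transparent (ruling out phases $\pm i$, and ruling out $-\mathds{1}$ as a traceful term). By contrast, the paper's inductive step quietly relies on independence of the generators --- $S_2$ splits $\sym_{S_1}$ evenly only because $S_2\ne\pm\mathds{1}$ and $S_1S_2\ne\pm\mathds{1}$ --- and this is left implicit there; your single trace computation absorbs that bookkeeping. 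What the paper's approach buys in exchange is a more constructive picture: the code space is carved out one stabilizer at a time, each generator cutting the dimension in half, which matches the syndrome-extraction narrative of Section~\ref{sec:stab-formalism}.
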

\begin{proof}
    Let $\{S_1,\ldots, S_\ell\}$ be a set of generators of $\mathcal{S}$. Observe that $P_i:=\frac{\mathds{1}+S_i}{2}$
    is an orthogonal projection onto the $S_i$-symmetric subspace (the symmetric subspace associated to the cyclic group
    generated by $S_i$). To see this, note that $S_i^2=\mathds{1}$ and that $S_i$ is self-adjoint since it 
    is an element of the Pauli group. Then $P_i^2=\left(\frac{\mathds{1}+S_i}{2}\right)^2=\frac{\mathds{1}+S_i}{2}=P_i$ 
    and $P_i$ is self-adjoint. Thus, $P_i$ is an orthogonal projection, and it remains for us to show
    that its image is the $S_i$-symmetric subspace. Certainly if $\ket{\psi}$ is in the image of $P_i$,
    then $\ket{\psi}=P_i\ket{\phi}$ for some $\ket{\phi}$. Then we have
    \begin{equation}
        S_i\ket{\psi}=S_i\frac{\mathds{1}+S_i}{2}\ket{\phi}=\frac{S_i+S_i^2}{2}\ket{\phi}=\frac{\mathds{1}+S_i}{2}\ket{\phi}=\ket{\psi},
    \end{equation}
    and so $\ket{\psi}$ is in the $S_i$-symmetric subspace. Conversely, if $\ket{\psi}$ is in the $S_i$-symmetric 
    subspace, then $S_i\ket{\psi}=\ket{\psi}$, and it follows that
    \begin{equation}
        P_i\ket{\psi}=\frac{\mathds{1}+S_i}{2}\ket{\psi}=\frac{1}{2}\ket{\psi}+\frac{1}{2}S_i\ket{\psi}=\ket{\psi},
    \end{equation}
    so that $\ket{\psi}$ is in the image of $P_i$. We have therefore shown that $P_i$ is an orthogonal
    projection onto the $S_i$-symmetric subspace. Moreover, the product of all $\ell$ of these projections
    is an orthogonal projection onto the intersection of the symmetric subspaces; that is, the $\mathcal{S}$-symmetric 
    subspace $\textnormal{Sym}_\mathcal{S}$. 
    
    To determine the dimension of $\textnormal{Sym}_\mathcal{S}$, let
    us analyze how the dimension is affected at each step of the intersection. First observe that
    $P_1$ projects our Hilbert space down to $\textnormal{Sym}_{S_1}$. Since $S_1$ is an element of the Pauli group,
    it is (up to a constant) a tensor product of Pauli matrices $S_1=A^{(1)}_1\otimes\cdots\otimes A^{(1)}_k$, 
    each of which has eigenvalues $\pm1$. Let the eigenvalue of $A^{(1)}_i$ be $\lambda_i$ with eigenvector $\ket{\lambda_i}$. 
    Then the eigenvalues of $S_1$ are $\lambda=\lambda_1\cdots\lambda_k=\pm1$ with eigenvectors 
    $\ket{\lambda}=\ket{\lambda_1}\cdots\ket{\lambda_k}$ given by the tensor product of the eigenvectors 
    for each of the Pauli matrices. Each such eigenvector $\ket{\lambda}$ of $S_1$ has eigenvalue $+1$ 
    if it is made up of an even number of Pauli eigenvectors $\ket{\lambda_i}$ with $\lambda_i=-1$.
    Similarly, each eigenvector of $S_1$ has eigenvalue $-1$ if it is made up of an odd number of Pauli
    eigenvectors with $\lambda_i=-1$. Then by symmetry, the $+1$ and $-1$ eigenspaces of $S_1$ have
    the same dimension. Since $\mathcal{H}$ is a direct sum of these spaces, their dimensions must be
    $2^{k-1}$ (that is, half of $\dim(\mathcal{H})$). Moreover, the rank of $P_1$ is the dimension of the $+1$ 
    eigenspace of $S_1$. Thus $\dim(\sym_{S_1})=2^{k-1}$. Now applying the same procedure with $\mathcal{H}$
    replaced by $\sym_{S_1}$ and $S_1$ replaced by $S_2$, we see that $\dim(\sym_{S_1\cap S_2})=2^{k-2}$.
    Continuing in this way until all $\ell$ generators have been used, we see that
    $\dim(\sym_\mathcal{S})=2^{k-\ell}$.
\end{proof}

Proposition~\ref{prop:sym-subspace-dim} tells us that for every generator, the dimension of the code space
is cut in half. Thus, if we embed a data qubit into a Hilbert space of dimension $2^k$, we find that
the code space describes exactly one logical qubit when the number of generators of the Stabilizer subgroup
is $\ell=k-1$. 

Suppose we care about the errors $\{E_1,\ldots,E_m\}$, each of which is an element of the Pauli group.
Since both the generators of the stabilizer and the possible errors are elements of the Pauli group,
they must either commute or anticommute. We call elements of a basis for the code space \textbf{code words}. Let
$\ket{\psi}_L$ and $\ket{\phi}_L$ be two such code words and let them pass through our communication
channel, picking up an error along the way so that the channel outputs the states $E_i\ket{\psi}_L$ and 
$E_j\ket{\phi}_L$, respectively for some $i,j$. Then
\begin{equation}\label{eq:error-rotates}
    \bra{\psi}S_kE_i^\dagger E_j\ket{\phi}=\bra{\psi}E_i^\dagger E_j\ket{\phi}=\bra{\psi}E_i^\dagger E_jS_k\ket{\phi}
\end{equation} 
for any generator $S_k$ of the stabilizer subgroup. Since $E_i^\dagger E_j$ and $S_k$ are elements
of the Pauli group, they must either commute or anticommute. If they anticommute, then from \eqref{eq:error-rotates} we see that
\begin{equation}
    \bra{\psi}E_i^\dagger E_j\ket{\phi}=-\bra{\psi}E_i^\dagger E_j\ket{\phi},
\end{equation}
which can only be true if $\bra{\psi}E_i^\dagger E_j\ket{\phi}=0$. The errors have rotated the code
words into mutually orthogonal subspaces! This property allows us to distinguish between our codewords
even after they have been subject to error. Moreover, when $\ket{\psi}=\ket{\phi}$ and the errors are
different, we find that $E_i\ket{\psi}$ and $E_j\ket{\psi}$ are in orthogonal subspaces, allowing us
to distinguish between $E_i$ and $E_j$ so that each error can be corrected. Let us formalize all that
we have just shown, rephrasing it in terms of the normalizer of the stabilizer subgroup.

\begin{theorem}\label{thm:stab-error}
    The code defined by the stabilizer subgroup $\mathcal{S}\le G_k$ corrects the errors $\{E_1,\ldots,E_m\}\subset G_k$
    if $E_iE_j\in G_k\setminus N(\mathcal{S})$ for all $i,j=1,\ldots,m$.
\end{theorem}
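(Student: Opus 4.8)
The plan is to reduce the group-theoretic hypothesis to a concrete anticommutation statement and then feed it into the inner-product computation \eqref{eq:error-rotates} that already precedes the theorem. First I would unpack what $E_iE_j\in G_k\setminus N(\mathcal{S})$ means for Pauli operators. Any two elements of the Pauli group either commute or anticommute, and conjugation sends $P\mapsto\pm P$; consequently, for a stabilizer $\mathcal{S}$ not containing $-\mathds{1}$, the normalizer $N(\mathcal{S})$ coincides with the centralizer. Indeed, if a Pauli $P$ anticommuted with some $S\in\mathcal{S}$, then $PSP^{-1}=-S$, and $-S\in\mathcal{S}$ would force $(-S)S=-\mathds{1}\in\mathcal{S}$ (using $S^2=\mathds{1}$), a contradiction. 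Hence $E_iE_j\notin N(\mathcal{S})$ is equivalent to the assertion that $E_iE_j$ anticommutes with at least one generator $S_k$ of $\mathcal{S}$, and since multiplying by a scalar phase does not alter commutation relations, $E_i^\dagger E_j$ anticommutes with the same $S_k$.

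With this in hand, the second step is essentially \eqref{eq:error-rotates}. Taking any two code words $\ket{\psi}_L,\ket{\phi}_L\in\sym_{\mathcal{S}}$ and using that $S_k$ is self-adjoint with $S_k\ket{\psi}_L=\ket{\psi}_L$ and $S_k\ket{\phi}_L=\ket{\phi}_L$, the chain $\bra{\psi}_L E_i^\dagger E_j\ket{\phi}_L=\bra{\psi}_L S_k E_i^\dagger E_j\ket{\phi}_L=-\bra{\psi}_L E_i^\dagger E_j S_k\ket{\phi}_L=-\bra{\psi}_L E_i^\dagger E_j\ket{\phi}_L$ forces $\bra{\psi}_L E_i^\dagger E_j\ket{\phi}_L=0$. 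Thus for $i\neq j$ the subspaces $E_i(\sym_{\mathcal{S}})$ and $E_j(\sym_{\mathcal{S}})$ are mutually orthogonal.

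Finally I would convert orthogonality into correctability in the operational language of the syndrome-extraction circuit (Figure~\ref{fig:syndromeextraction2} and its inductive generalization). Assign to each error $E_i$ its syndrome, the bit string recording whether $E_i$ commutes or anticommutes with each generator $S_k$; the computation just performed shows that this is precisely the measurement outcome produced by the Hadamard tests. Two errors share a syndrome exactly when $E_iE_j$ commutes with every generator, i.e. when $E_iE_j\in N(\mathcal{S})$, so the hypothesis guarantees that all the syndromes are distinct. A measured syndrome therefore identifies a unique $E_i$, after which the state is restored by applying $E_i^\dagger=E_i^{-1}$; since each $E_i$ is unitary, distinct code words within a fixed error block stay orthogonal and no logical information is lost.

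The main subtlety to address is the diagonal $i=j$ case, for which the literal hypothesis fails: $E_i^\dagger E_i=\mathds{1}$ (and $E_iE_i=\pm\mathds{1}$) lies in $N(\mathcal{S})$, so $E_iE_i$ cannot be outside the normalizer. The condition is genuinely needed only for $i\neq j$, and I would state explicitly that when $i=j$ the relevant inner product collapses to $\braket{\psi_L|\phi_L}$ by unitarity, automatically preserving orthogonality of distinct code words with no hypothesis required. Pinning down this convention---and clarifying that ``corrects'' here means unambiguous syndrome-based recovery rather than the more permissive degenerate condition $E_iE_j\in\mathcal{S}$, under which errors with identical syndromes could still be corrected without being distinguished---is where most of the care is needed.
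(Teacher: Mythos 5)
Your proposal is correct and follows essentially the same route as the paper's proof: establish $N(\mathcal{S})=C(\mathcal{S})$, observe that the Hadamard-test syndrome is the deterministic bit string of commutation/anticommutation patterns of the error with the generators (leaving the state $E\ket{\psi}_L$ intact), and use the hypothesis to force distinct errors to have distinct syndromes, after which recovery is $E_i^\dagger$. Your two refinements are both sound and worth noting: arguing syndrome distinctness directly at the generator level (commuting with all generators implies commuting with all of $\mathcal{S}$) is cleaner than the paper's sign computation on a product $S_1^{k_1}\cdots S_\ell^{k_\ell}$, and your observation that the stated hypothesis is literally unsatisfiable when $i=j$ (since $E_iE_i=\pm\mathds{1}\in N(\mathcal{S})$) identifies a genuine imprecision in the theorem statement that the paper's own proof silently sidesteps by only ever comparing two distinct errors.
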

\begin{proof}
    We first show that $N(\mathcal{S})=C(\mathcal{S})$. To see this, let $A\in N(\mathcal{S})$. Then
    for any $S\in \mathcal{S}$, we have $ASA^\dagger\in \mathcal{S}$. But $A$ and $S$ either commute
    or anticommute. But if they anticommute, then $ASA^\dagger=-SAA^\dagger=-S$ is in $\mathcal{S}$.
    Since $\mathcal{S}$ is a group, it contains $S^{-1}$, and so $-1$ is in $\mathcal{S}$, which
    contradicts the assumption that $\mathcal{S}$ is a stabilizer subgroup. It follows that $A$ and $S$
    commute and so $ASA^\dagger=S$, so that $A\in C(\mathcal{S})$. Conversely, if $S\in C(\mathcal{S})$,
    then $ASA^\dagger=S\in\mathcal{S}$ and so $A\in N(\mathcal{S})$. Thus, $N(\mathcal{S})=C(\mathcal{S})$
    as we set out to show.

    We now examine the syndrome extraction procedure in Figure~\ref{fig:syndromeextraction-general} in the event of an error $E$. Since all of the generator $S_i$ commute, each Hadamard, controlled-stabilizer, Hadamard sequence on each ancillary qubit is independent from the others. The result of the application of the first stabilizer sequence is the state
    \begin{equation} \label{eq:first-stab}
        \frac{1}{2}(\mathds{1}+S_1)E\ket{\psi}_L\ket{0}_{A_1}+\frac{1}{2}(\mathds{1}-S_1)E\ket{\psi}_L\ket{1}_{A_1},
    \end{equation}
    and after the second stabilizer sequence, this becomes
    \begin{align}
        \frac{1}{4}&(\mathds{1}+S_2)(\mathds{1}+S_1)E\ket{\psi}_L\ket{00}_{A_1A_2}+\frac{1}{4}(\mathds{1}-S_2)(\mathds{1}+S_1)E\ket{\psi}_L\ket{01}_{A_1A_2}\\
        &+\frac{1}{4}(\mathds{1}+S_2)(\mathds{1}-S_1)E\ket{\psi}_L\ket{10}_{A_1A_2}+\frac{1}{4}(\mathds{1}-S_2)(\mathds{1}-S_1)E\ket{\psi}_L\ket{11}_{A_1A_2}.
    \end{align}
    Continuing inductively, we see that after the entire syndrome extraction procedure, the resulting state is
    \begin{equation}
        \frac{1}{2^{\ell}}\sum_{i_1,\ldots,i_\ell=0}^1\left(\prod_{j=1}^\ell(\mathds{1}+(-1)^{i_j}S_j)\right) E\ket{\psi}_L\ket{i_1\cdots i_\ell}_A.
    \end{equation}
    Now, each choice of $E$ either commutes or anticommutes with each generator $S_j$. Thus, the only term in the sum that survives is the one for which $i_j=0$ when $E$ commutes with $S_j$ and $i_j=1$ when $E$ anticommutes with $S_j$. Let $\alpha_1,\ldots,\alpha_\ell$ denote this choice of indices. Then the result of the syndrome extraction procedure is the state $\ket{\psi}_L\ket{\alpha_1\cdots\alpha_\ell}$, and a measurement of the ancillary qubits will produce the syndrome $\alpha_1\cdots\alpha_\ell$ with certainty.

    It remains to be shown that for each choice $E\in\{E_1,\ldots,E_m\}$, the syndrome $\alpha_1\cdots\alpha_\ell$ is unique. Take two arbitrary errors $E_\alpha$ and $E_\beta$ and let their corresponding syndromes be $\alpha_1\cdots\alpha_\ell$ and $\beta_1\cdots\beta_\ell$, respectively. By assumption, we have $E_\alpha E_\beta\in G_k\setminus N(\mathcal{S})$. Since $N(\mathcal{S})=C(\mathcal{S})$, this is equivalent to the statement that $E_\alpha E_\beta$ anticommutes with some element of the stabilizer. Thus, there is an $S\in\mathcal{S}$ such that $E_\alpha E_\beta S=-SE_\alpha E_\beta$. Now, $S$ is a product of the generators of $\mathcal{S}$, and so we may write $E_\alpha E_\beta S_1^{k_1}\cdots S_\ell^{k_\ell}=-S_1^{k_1}\cdots S_\ell^{k_\ell}E_\alpha E_\beta$ for some choice of $(k_1,\ldots,k_\ell)\in\{0,1\}^{\ell}$. But then we have
    \begin{equation}
        -S_1^{k_1}\cdots S_\ell^{k_\ell}E_\alpha E_\beta=E_\alpha E_\beta S_1^{k_1}\cdots S_\ell^{k_\ell}= (-1)^{k_1(\alpha_1+\beta_1)+\cdots+k_\ell(\alpha_\ell+\beta_\ell)}S_1^{k_1}\cdots S_\ell^{k_\ell}E_\alpha E_\beta,
    \end{equation}
    and so it follows that $\alpha_i\ne \beta_i$ for some $i$.
\end{proof}

\begin{figure}
    \centering
    \begin{quantikz}
        \lstick{$E\ket{\psi}_L$}    &             & \gate[1]{S_1} & \gate[1]{S_2}& \ \ldots \ & \gate[1]{S_\ell} &            &\\
        \lstick{$\ket{0}_{A_1}$}    & \gate[1]{H} & \ctrl{-1}     &              &            &                  &\gate[1]{H} &\meter{}\\
        \lstick{$\ket{0}_{A_2}$}    & \gate[1]{H} &               & \ctrl{-2}    &            &                  &\gate[1]{H} &\meter{}\\
        \setwiretype{n}             & \ \vdots \  &               &              & \ \ddots \ &                  &\vdots \\
        \lstick{$\ket{0}_{A_\ell}$} & \gate[1]{H} &               &              &            & \ctrl{-4}        &\gate[1]{H} &\meter{}\\
    \end{quantikz}
    \caption{
        Syndrome extraction procedure for a general stabilizer code.
    }\label{fig:syndromeextraction-general}
\end{figure}
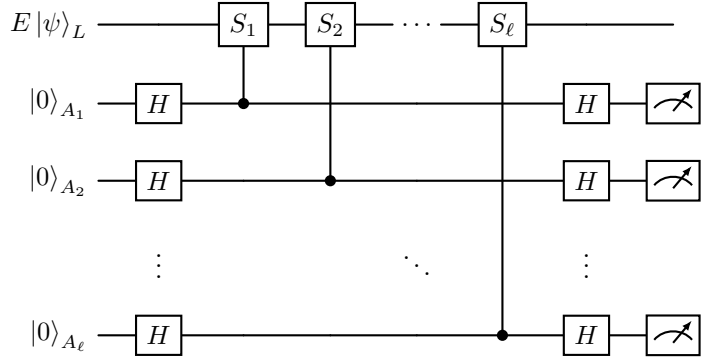

Sometimes the syndrome extraction procedure is referred to as ``measuring the syndrome''. This terminology can be justified by analyzing the effect of the syndrome extraction procedure on the errored logical state $E\ket{\psi}_L$. As can be seen in \eqref{eq:first-stab}, the syndrome extraction forces the logical register into either a projection onto the symmetric subspace of the stabilizer generator $S$ or a projection onto the asymmetric subspace (the complement of the symmetric subspace). The expected value of $Z$ when measuring the ancillary qubit is therefore
\begin{equation}
    \bra{\psi}E^\dagger\frac12(\mathds{1}+S)E\ket{\psi}_L-\bra{\psi}E^\dagger\frac12(\mathds{1}-S)E\ket{\psi}_L=\bra{\psi}E^\dagger SE\ket{\psi}_L,
\end{equation}
which is exactly the expected value of $S$ when measuring the logical register. The key difference between taking these two measurements is that the measurement on the ancillary register obtains information without disturbing the state of the logical register (when $S$ is a stabilizer and $E$ is an error). Thus, if we want to continuously correct errors, we should apply the syndrome extraction procedure instead of measuring the state directly, but the information content of the measurement is the same and so the terminology is justified.

The stabilizer, its normalizer, and the Pauli group behave quite nicely together, allowing us to take quotients consisting of cosets which themselves form a group. In particular, we have the following lemma.

\begin{lemma}\label{lemma:normal-subgroups}
    Let $\mathcal{S}$ be a stabilizer subgroup of $G_k$ for some $k$ and let $N(\mathcal{S})$ denote the normalizer of this stabilizer. Then $\mathcal{S}$ is a normal subgroup of $N(\mathcal{S})$ and $N(\mathcal{S})$ is a normal subgroup of $G_k$.
\end{lemma}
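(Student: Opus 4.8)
The plan is to handle the two normality claims separately, with the first being essentially formal and the second relying on the special commutation structure of the Pauli group.

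For the claim that $\mathcal{S}$ is normal in $N(\mathcal{S})$, I would first note that $\mathcal{S}\le N(\mathcal{S})$: since $\mathcal{S}$ is abelian, for any $s,t\in\mathcal{S}$ we have $sts^{-1}=t\in\mathcal{S}$, so $s\mathcal{S}s^{-1}=\mathcal{S}$ and hence $s\in N(\mathcal{S})$. Normality is then immediate from the definition of the normalizer: every $g\in N(\mathcal{S})$ satisfies $g\mathcal{S}g^{-1}=\mathcal{S}$ by construction, which is precisely the statement that $\mathcal{S}$ is a normal subgroup of $N(\mathcal{S})$. No Pauli-specific input is needed here.

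The substantive claim is that $N(\mathcal{S})$ is normal in $G_k$, which is false for normalizers in a general group, so the argument must exploit the Pauli structure. I would begin by recalling the identity $N(\mathcal{S})=C(\mathcal{S})$ established in the proof of Theorem~\ref{thm:stab-error}, so that it suffices to show the centralizer $C(\mathcal{S})$ is normal in $G_k$. Fix $g\in G_k$, $n\in C(\mathcal{S})$, and an arbitrary $s\in\mathcal{S}$. Because any two elements of the Pauli group either commute or anticommute, we have $gn=\pm ng$ and therefore $gng^{-1}=\pm n$. Since $n$ commutes with $s$ (as $n\in C(\mathcal{S})$), so does $\pm n$, and hence $gng^{-1}$ commutes with $s$. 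As $s\in\mathcal{S}$ was arbitrary, $gng^{-1}\in C(\mathcal{S})=N(\mathcal{S})$, which establishes normality.

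The conceptual crux --- and the only place the proof can go wrong --- is the passage $gng^{-1}=\pm n$. This holds because conjugation in the Pauli group alters an element only by a central phase, equivalently because $G_k/Z(G_k)$ is abelian; commutation relations are therefore invariant under conjugation, and it is exactly this rigidity that rescues the normality of $N(\mathcal{S})$ in $G_k$ where the general-group argument would fail. I would make sure to state this commute-or-anticommute property explicitly (it follows from the corresponding property of tensor products of single-qubit Paulis together with the centrality of the phases $\{\pm1,\pm i\}$), since it is the load-bearing fact for the entire second half.
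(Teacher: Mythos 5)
Your proof is correct and follows essentially the same route as the paper's: both parts hinge on $\mathcal{S}$ being abelian (so $\mathcal{S}\le N(\mathcal{S})$, with normality immediate from the definition of the normalizer) and on the commute-or-anticommute property giving $gng^{-1}=\pm n$, with the sign irrelevant to membership in $N(\mathcal{S})$. The only cosmetic difference is that you run the second half through the centralizer $C(\mathcal{S})$, while the paper checks directly that $-N$ still conjugates $\mathcal{S}$ into itself; the load-bearing fact is identical.
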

\begin{proof}
    Certainly $\mathcal{S}$ is a subgroup of $N(\mathcal{S})=C(\mathcal{S})$ since $\mathcal{S}$ is abelian. Let $S\in\mathcal{S}$ and let $N\in N(\mathcal{S})$. Then $NSN^{-1}\in\mathcal{S}$ by definition, and so $\mathcal{S}$ is a normal subgroup of $N(\mathcal{S})$. Now let $P\in G_k$ and note that $N(\mathcal{S})$ is trivially a subgroup of $G_k$. Since all elements of the Pauli group either commute or anticommute, we have $PNP^{-1}=\pm N$. If $PNP^{-1}=N$, then $PNP^{-1}\in N(\mathcal{S})$ and we are done. If $PNP^{-1}=-N$, observe that $-NS(-N)^{-1}=NSN^{-1}\in\mathcal{S}$, and so $PNP^{-1}\in N(\mathcal{S})$. Thus, $N(\mathcal{S})$ is a normal subgroup of $G_k$.
\end{proof}

Let $\ket{\psi}_L$ be an element of the code space. By definition, the stabilizer fixes $\ket{\psi}_L$. Now let us take a look at the remaining elements of the normalizer $N(\mathcal{S})$. We are guaranteed only that $NSN^{-1}\in\mathcal{S}$ for all $S\in\mathcal{S}$. Thus, $NSN^{-1}\ket{\psi}_L=\ket{\psi}_L$ for every $S\in\mathcal{S}$, or equivalently, $N\ket{\psi}_L=S^{-1}N\ket{\psi}_L$ for every $S\in\mathcal{S}$. Therefore, $N\ket{\psi}_L$ is in our code space, but isn't necessarily equivalent to $\ket{\psi}_L$. But this is a feature, not a bug! The elements of $N(\mathcal{S})\setminus\mathcal{S}$ transform elements of the code space to another element of the code space and can therefore be considered \textbf{logical operators} which act on our logical qubits. Moreover, as the next proposition shows, logical operators which belong to the same coset of $N(\mathcal{S})/\mathcal{S}$ have the same effect on elements of the code space and are therefore different implementations of the same logical operation.

\begin{proposition}
    Let $N(\mathcal{S})$ be the normalizer of the stabilizer subgroup $\mathcal{S}$. Then
    \begin{equation}
        N\ket{\psi}_L=M\ket{\psi}_L
    \end{equation}
    for every $\ket{\psi}_L\in\mathcal{H}_L$ whenever $N$ and $M$ belong to the same coset in $N(\mathcal{S})/\mathcal{S}$.
\end{proposition}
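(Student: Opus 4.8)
The plan is to reduce the statement to the defining property of the code space, namely that every stabilizer element fixes every logical state. The whole proposition turns out to be a direct computation once the coset hypothesis is translated into an algebraic relation, so the work is essentially bookkeeping.

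First I would unpack what it means for $N$ and $M$ to lie in the same coset of $N(\mathcal{S})/\mathcal{S}$. This quotient makes sense precisely because Lemma~\ref{lemma:normal-subgroups} guarantees that $\mathcal{S}$ is a normal subgroup of $N(\mathcal{S})$, so left and right cosets coincide and I need not worry about which side I work on. Concretely, $N\mathcal{S}=M\mathcal{S}$ is equivalent to $M^{-1}N\in\mathcal{S}$, so I would write $N=MS$ for some $S\in\mathcal{S}$.

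Next I would invoke the construction of the code space: by definition $\mathcal{H}_L=\sym_{\mathcal{S}}$ is the $\mathcal{S}$-symmetric subspace, which means that $S\ket{\psi}_L=\ket{\psi}_L$ for every $S\in\mathcal{S}$ and every $\ket{\psi}_L\in\mathcal{H}_L$. With the factorization $N=MS$ in hand, I would then compute, for an arbitrary $\ket{\psi}_L\in\mathcal{H}_L$,
\begin{equation}
    N\ket{\psi}_L=MS\ket{\psi}_L=M(S\ket{\psi}_L)=M\ket{\psi}_L,
\end{equation}
where the middle equality uses associativity of the operator action and the final equality uses that $S$ stabilizes $\ket{\psi}_L$. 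Since $\ket{\psi}_L$ was arbitrary, this establishes the claim.

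There is no real obstacle here; the only point requiring care is the translation of the coset condition, where one should cite the normality from Lemma~\ref{lemma:normal-subgroups} so that the quotient $N(\mathcal{S})/\mathcal{S}$ is legitimately a group and the representatives $N$ and $M$ differ by a genuine element of $\mathcal{S}$. Conceptually, the proposition is the statement that logical operators are only well-defined modulo the stabilizer, and the proof simply records that multiplying by a stabilizer element is invisible on the code space.
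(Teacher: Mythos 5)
Your proof is correct and follows essentially the same route as the paper's: both translate the coset condition into an algebraic relation ($N=MS$ in your version, $NS=MS'$ in the paper's, which are equivalent) and then apply the defining property that stabilizer elements fix every state in the code space. Your additional remark that Lemma~\ref{lemma:normal-subgroups} is what legitimizes the quotient $N(\mathcal{S})/\mathcal{S}$ is a nice touch of rigor, but it does not change the substance of the argument.
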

\begin{proof}
    If $N$ and $M$ belong to the same coset, then there exists $S,S'\in\mathcal{S}$ such that $NS=MS'$, and it follows that
    \begin{equation}
        N\ket{\psi}_L=NS\ket{\psi}_L=MS'\ket{\psi}_L=M\ket{\psi}_L
    \end{equation}
    for every $\ket{\psi}_L\in\mathcal{H}_L$.
\end{proof}

This proposition allows us to think about cosets of the stabilizer in its normalizer as logical operations in a well-defined way. By Lagrange's theorem, there are $\lvert N(\mathcal{S})\rvert/\lvert\mathcal{S}\rvert$ such cosets. We define the logical operators to be these cosets, and we consider any representative of such a coset to be one of a number of implementations of this logical operation. Since the distance of a code is defined to be the minimum weight of a logical operation that transforms one code word to another, we have the following useful result.

\begin{proposition}\label{prop:distance}
    The distance of a stabilizer code with stabilizer $\mathcal{S}$ is the minimum weight over all non-trivial elements in $N(\mathcal{S})/\mathcal{S}$. That is,
    \begin{equation}
        d=\min\{w(N):\mathds{1}\ne N\in N(\mathcal{S})/\mathcal
        S\},
    \end{equation}
    where $w(N)$ denotes the number of non-identity Pauli operators in $N$.
\end{proposition}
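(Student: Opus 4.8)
The plan is to pin down exactly which Pauli operators count as the ``logical operations that transform one code word to another,'' since the distance is defined as the minimum weight among these. I would argue that a Pauli operator $P\in G_k$ sends some codeword to a \emph{distinct} codeword precisely when $P\in N(\mathcal{S})\setminus\mathcal{S}$, and then observe that minimizing $w$ over this set is the same as minimizing $w$ over the nontrivial cosets of $N(\mathcal{S})/\mathcal{S}$.

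First I would sort $G_k$ into three classes according to its relationship with $\mathcal{S}$. If $P\in G_k\setminus N(\mathcal{S})$, then since $N(\mathcal{S})=C(\mathcal{S})$ (established in the proof of Theorem~\ref{thm:stab-error}), $P$ anticommutes with some generator $S_i$; the syndrome-extraction analysis then shows $P$ rotates the code space into the orthogonal $-1$ eigenspace of $S_i$, so it never maps a codeword to a codeword and is irrelevant to the distance. If $P\in\mathcal{S}$, then by definition of the code space as $\sym_{\mathcal{S}}$ it fixes every codeword and therefore does not carry one codeword to a different one. This leaves $P\in N(\mathcal{S})\setminus\mathcal{S}$, which (as noted in the discussion preceding the proposition) maps the code space into itself and is the only class that can implement a nontrivial logical operation.

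The step I expect to require the most care is verifying that every $P\in N(\mathcal{S})\setminus\mathcal{S}$ really does act \emph{nontrivially} on the code space, i.e.\ that no such $P$ fixes all of $\sym_{\mathcal{S}}$ pointwise. Equivalently, I must show that the only Pauli operators acting as the identity on the entire code space are the elements of $\mathcal{S}$ itself. For this I would use the projection formula $\Pi_{\mathcal{S}}=\frac{1}{\lvert\mathcal{S}\rvert}\sum_{S\in\mathcal{S}}S$ from Lemma~\ref{lemma:gprojection}: if $P$ fixes $\sym_{\mathcal{S}}$ pointwise then $P\Pi_{\mathcal{S}}=\Pi_{\mathcal{S}}$, so $\sum_{S\in\mathcal{S}}PS=\sum_{S\in\mathcal{S}}S$. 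Because distinct Pauli words are linearly independent as operators (and, since $-1\notin\mathcal{S}$, each element of $\mathcal{S}$ is a distinct signed Pauli word), matching terms forces left multiplication by $P$ to permute $\mathcal{S}$, whence $P\in\mathcal{S}$. Tracking the $\pm1$ signs carefully here is the only genuinely fiddly part.

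Finally I would assemble the pieces. The operators that transform a codeword to a distinct codeword are exactly those in $N(\mathcal{S})\setminus\mathcal{S}$, so $d=\min\{w(P):P\in N(\mathcal{S})\setminus\mathcal{S}\}$. Since $\mathcal{S}$ is a normal subgroup of $N(\mathcal{S})$ by Lemma~\ref{lemma:normal-subgroups}, the set $N(\mathcal{S})\setminus\mathcal{S}$ is the disjoint union of the nontrivial cosets of $N(\mathcal{S})/\mathcal{S}$; interpreting $w(N)$ for a coset as the least weight among its representatives, the minimum over all elements coincides with the minimum over nontrivial cosets, giving $d=\min\{w(N):\mathds{1}\ne N\in N(\mathcal{S})/\mathcal{S}\}$ as claimed.
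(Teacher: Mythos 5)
Your proof is correct, but it takes a genuinely more rigorous route than the paper, which in fact states this proposition \emph{without} proof: there it is treated as an immediate consequence of defining the distance as the minimum weight of a logical operation carrying one code word to another and defining the logical operations to be the cosets in $N(\mathcal{S})/\mathcal{S}$. Your three-way classification of $G_k$ (operators outside $N(\mathcal{S})=C(\mathcal{S})$ anticommute with some stabilizer and so map the code space into an orthogonal eigenspace; elements of $\mathcal{S}$ fix it pointwise; elements of $N(\mathcal{S})\setminus\mathcal{S}$ preserve it) mirrors the discussion surrounding the proposition in the paper. What your argument adds, and what the paper never verifies, is the injectivity fact: that no element of $N(\mathcal{S})\setminus\mathcal{S}$ can act as the identity on all of $\sym_{\mathcal{S}}$, so that non-trivial cosets really are non-trivial logical operations. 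Your derivation of this from Lemma~\ref{lemma:gprojection} is sound: $P\Pi_{\mathcal{S}}=\Pi_{\mathcal{S}}$ gives $\sum_{S\in\mathcal{S}}PS=\sum_{S\in\mathcal{S}}S$, and since $-1\notin\mathcal{S}$ forces the signed Pauli words appearing on each side to be distinct (no cancellations can occur within either sum), linear independence of Pauli words yields $P\mathcal{S}=\mathcal{S}$ and hence $P\in\mathcal{S}$. This is exactly the missing lemma that makes the paper's definitional identification legitimate, and your remark that $w$ must be read on cosets as the minimum weight over representatives likewise repairs a small imprecision in the statement itself. The one caveat worth flagging: like the paper, you pass freely between ``acts non-trivially on the code space'' and ``sends some code word to a distinct code word.'' These coincide if code words are taken to be arbitrary elements of the (invariant) code space, but under the paper's literal definition of code words as fixed basis vectors the equivalence holds only after a suitable choice of basis; the logical $Z$, for instance, permutes the basis $\{\ket{+}_L,\ket{-}_L\}$ but only rephases $\{\ket{0}_L,\ket{1}_L\}$.
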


The other quotient group that we are able to form using Lemma~\ref{lemma:normal-subgroups} is $G_k/N(\mathcal{S})$. The elements of $G_k\setminus N(\mathcal{S})$ are the possible errors that can be detected. Suppose the errors $E$ and $F$ belong to the same coset of $N(\mathcal{S})$ in $G_k$. Then the stabilizer code produces the same syndrome for $E$ and $F$.

\begin{proposition}\label{prop:errors}
    Let $E$ and $F$ be errors belonging to the same coset of $N(\mathcal{S})$ in $G_k$. Then the stabilizer code produces the same syndrome for $E$ and $F$.
\end{proposition}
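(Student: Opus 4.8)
The plan is to reduce the statement to a claim about commutation patterns. Recall from the proof of Theorem~\ref{thm:stab-error} that, once a generating set $\{S_1,\ldots,S_\ell\}$ of $\mathcal{S}$ is fixed, the syndrome $\alpha_1\cdots\alpha_\ell$ extracted for an error $E$ is completely determined by the rule that $\alpha_j=0$ when $E$ commutes with $S_j$ and $\alpha_j=1$ when $E$ anticommutes with $S_j$; these are the only two possibilities, since both operators lie in the Pauli group. Therefore it suffices to show that $E$ and $F$ bear the \emph{same} commutation relationship with every generator $S_j$.

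Next I would use the hypothesis that $E$ and $F$ lie in a common coset of $N(\mathcal{S})$, which means $F^{-1}E\in N(\mathcal{S})$; write $E=FN$ with $N\in N(\mathcal{S})$. The crucial input, already established inside the proof of Theorem~\ref{thm:stab-error}, is the identity $N(\mathcal{S})=C(\mathcal{S})$. Consequently $N$ commutes with every element of $\mathcal{S}$, and in particular $NS_j=S_jN$ for each generator $S_j$.

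The remaining step is a direct computation. Using $E=FN$ together with $NS_j=S_jN$, we obtain $ES_j=FNS_j=FS_jN$. If $F$ commutes with $S_j$, then $ES_j=S_jFN=S_jE$, so $E$ commutes with $S_j$; if instead $F$ anticommutes with $S_j$, then $ES_j=-S_jFN=-S_jE$, so $E$ anticommutes with $S_j$. In either case $E$ and $F$ share the same commutation sign with $S_j$, hence the bit $\alpha_j$ agrees for both errors, and running over all generators shows the extracted syndromes coincide.

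I do not expect a genuine obstacle here. The only point requiring a moment of care is that the sign distinguishing commutation from anticommutation is insensitive to the global phases $\{\pm1,\pm i\}$ carried by Pauli-group elements, so any such phase hidden in $N$ or in the chosen coset representative cannot flip a commutation relation into an anticommutation. Once the identification $N(\mathcal{S})=C(\mathcal{S})$ is invoked, the whole argument collapses to the one-line computation above.
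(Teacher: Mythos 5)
Your proposal is correct and follows essentially the same route as the paper's proof: both reduce the claim to showing that $E$ and $F$ have identical commutation patterns with the generators $S_j$, both invoke the identity $N(\mathcal{S})=C(\mathcal{S})$ from the proof of Theorem~\ref{thm:stab-error} to commute the normalizer element past each $S_j$, and both conclude that the syndrome bits agree. The only cosmetic difference is that the paper writes the coset condition as $EN=FM$ with two normalizer elements and runs a single chain of sign manipulations, whereas you write $E=FN$ and split into commuting/anticommuting cases; the substance is identical.
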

\begin{proof}
    Let $E$ and $F$ have syndromes $\alpha_1\cdots\alpha_\ell$ and $\beta_1\cdots\beta_\ell$, respectively. There are $N,M\in N(\mathcal{S})$ such that $EN=FM$. Letting $S_i\in\mathcal{S}$ and recalling that $N(\mathcal{S})=C(\mathcal{S})$, we see that $ENS_i=ES_iN=(-1)^{\alpha_i}S_iEN=(-1)^{\alpha_i}S_iFM=(-1)^{\alpha_i+\beta_i}FMS_i=(-1)^{\alpha_i+\beta_i}ENS_i$. It follows that $\alpha_i=\beta_i$ for all $i=1,\ldots,\ell$.
\end{proof}

This proposition organizes the possible errors into equivalence classes according to their syndrome. It is tempting to say that the stabilizer code then corrects $\lvert G_k\rvert/\lvert N(\mathcal{S})\rvert$ errors by Lagrange's theorem. However, sometimes errors in the same class can be corrected by the same operation. Indeed, it is often the case that a correction of some specific error has the effect of correcting another error in the same class, just as the operators $I\otimes Z$ and $Z\otimes I$ both map the state $-\ket{11}$ to the state $\ket{11}$. When this is the case (as in the Shor code), we call the code \textbf{degenerate}.

To summarize the findings of this section, the stabilizer code is defined by a choice of stabilizer $\mathcal{S}$, an abelian subgroup of the Pauli group not containing the element $-1$. This choice completely defines the $\mathcal{S}$-symmetric subspace, which is taken to be the code space $\mathcal{H}_L$. Thus, our encoder will be chosen to transform an arbitrary data state to an element of the $\mathcal{S}$-symmetric subspace, and our syndrome extraction procedure is given in Figure~\ref{fig:syndromeextraction-general}. This code detects and corrects a set of errors which are in the complement of the normalizer of $\mathcal{S}$ and belong to distinct cosets of the normalizer in the Pauli group. Meanwhile, elements of the normalizer act as logical operations on the logical state, allowing us to treat the logical state as the state of a single error-resistant logical qubit and the logical operations as effectively single qubit operations on the logical qubit. We will illustrate this point in the next section with another example.

\subsection{Another Example - The 5 Qubit Code}\label{sec:5qubitcode}

Although the Shor code was the first to protect against all single qubit unitary errors, it is not the smallest such code. That title belongs to the following 5-qubit code. We define the stabilizer subgroup to be the group generated by $X_1Z_2Z_3X_4$, $X_2Z_3Z_4X_5$, $X_1X_3Z_4Z_5$, and $Z_1X_2X_4Z_5$. It is sometimes helpful to use instead the notation $XZZXI:=X_1Z_2Z_3X_4$ with the understanding that the first operation acts on the first qubit, and so on\footnote{We have dropped the symbol $\mathds{1}$ in favor of $I$ because this is what commonly appears in the literature}. Thus, the stabilizer subgroup is $\mathcal{S}=\langle XZZXI, IXZZX, XIXZZ, ZXIXZ\rangle$. Recalling that the stabilizer defines the codespace (it is the $\mathcal{S}$-symmetric subspace), we compute the possible logical states by demanding invariance under the action of the generators of the stabilizer. To this end, let
\begin{equation}
    \ket{\psi}=\sum_{i,j,k,l,m=0}^1 c_{ijklm}\ket{ijklm}.
\end{equation}
Letting the generators act on $\ket{\psi}$ produces
\begin{align}
    \sum_{i,j,k,l,m=0}^1 c_{ijklm}\ket{ijklm}&=\sum_{i,j,k,l,m=0}^1 (-1)^{j+k}c_{i\oplus1,j,k,l\oplus1,m}\ket{ijklm}\\
    \sum_{i,j,k,l,m=0}^1 c_{ijklm}\ket{ijklm}&=\sum_{i,j,k,l,m=0}^1 (-1)^{k+l}c_{i,j\oplus1,k,l,m\oplus1}\ket{ijklm}\\
    \sum_{i,j,k,l,m=0}^1 c_{ijklm}\ket{ijklm}&=\sum_{i,j,k,l,m=0}^1 (-1)^{l+m}c_{i\oplus1,j,k\oplus1,l,m}\ket{ijklm}\\
    \sum_{i,j,k,l,m=0}^1 c_{ijklm}\ket{ijklm}&=\sum_{i,j,k,l,m=0}^1 (-1)^{m+i}c_{i,j\oplus1,k,l\oplus1,m}\ket{ijklm}.
\end{align}
There are two independent solutions to the linear system of equations that arise from identifying coefficients of basis vectors. The first, which we take as the logical zero state, is
\begin{align}
    \ket{0}_L=\frac{1}{4}(&\ket{00000}+\ket{10010}+\ket{01001}+\ket{10100}+\ket{01010}+\ket{00101}-\ket{00110}-\ket{11000}\\
    &-\ket{11101}-\ket{00011}-\ket{11110}-\ket{01111}-\ket{01100}-\ket{10111}-\ket{11011}-\ket{10001})\notag
\end{align}
while the second state, which we take as the logical one state, is
\begin{align}
    \ket{1}_L=\frac{1}{4}(&\ket{11111}+\ket{01101}+\ket{10110}+\ket{01011}+\ket{10101}+\ket{11010}-\ket{11001}-\ket{00111}\\
    &-\ket{00010}-\ket{11100}-\ket{00001}-\ket{10000}-\ket{01110}-\ket{10011}-\ket{01000}-\ket{00100}.\notag
\end{align}
Using these logical basis states, we encode a state $\ket{\psi}=\alpha\ket{0}+\beta\ket{1}$ as $\ket{\psi}_L=\alpha\ket{0}_L+\beta\ket{1}_L$.

Recall that the normalizer of the stabilizer subgroup is in fact the centralizer of the stabilizer subgroup. Thus, the logical operations (up to a factor of $\lambda\in\{\pm1,\pm i\}$) provided by this code are the operations $\sigma_1\sigma_2\sigma_3\sigma_4\sigma_5$ with $\sigma_i\in\{I,X,Y,Z\}$ satisfying
\begin{align}
    \sigma_1\sigma_2\sigma_3\sigma_4\sigma_5X_1Z_2Z_3X_4&=X_1Z_2Z_3X_4\sigma_1\sigma_2\sigma_3\sigma_4\sigma_5\\
    \sigma_1\sigma_2\sigma_3\sigma_4\sigma_5X_2Z_3Z_4X_5&=X_2Z_3Z_4X_5\sigma_1\sigma_2\sigma_3\sigma_4\sigma_5\\
    \sigma_1\sigma_2\sigma_3\sigma_4\sigma_5X_1X_3Z_4Z_5&=X_1X_3Z_4Z_5\sigma_1\sigma_2\sigma_3\sigma_4\sigma_5\\
    \sigma_1\sigma_2\sigma_3\sigma_4\sigma_5Z_1X_2X_4Z_5&=Z_1X_2X_4Z_5\sigma_1\sigma_2\sigma_3\sigma_4\sigma_5.
\end{align}
But the commutation of each $\sigma_i$ past each $X_i$ or $Z_i$ can introduce a sign change. Let us write $\sigma_iX_i=(-1)^{\alpha_i}X_i\sigma_i$ and $\sigma_iZ_i=(-1)^{\beta_i}Z_i\sigma_i$ with $\alpha_i,\beta_i\in\{0,1\}$. Then we have
\begin{align}
    \alpha_1+\beta_2+\beta_3+\alpha_4&=0,2,\textnormal{ or }4\\
    \alpha_2+\beta_3+\beta_4+\alpha_5&=0,2,\textnormal{ or }4\\
    \alpha_1+\alpha_3+\beta_4+\beta_5&=0,2,\textnormal{ or }4\\
    \beta_1+\alpha_2+\alpha_4+\beta_5&=0,2,\textnormal{ or }4,
\end{align}
and any solution to this system of equations defines a logical operation. The logical operation transforming $\ket{0}_L$ to $\ket{1}_L$ and vice versa is called the logical $X$ gate and takes the form
\begin{equation}
    \overline{X}=XXXXX.
\end{equation}
Similarly, the logical $Z$ gate takes $\ket{0}_L$ to $\ket{0}_L$ and $\ket{1}_L$ to $-\ket{1}_L$ and is defined by
\begin{equation}
    \overline{Z}=ZZZZZ.
\end{equation}

To complete the construction of the 5 qubit code, we need a method for encoding the logical state $\ket{\psi}_L$. Finding a simple circuit to accomplish this task can be difficult for a general stabilizer code. The reader can verify that the simple circuit\footnote{This encoder circuit is credited to Stack Exchange user tsgeorgios, who devised it as an answer to \hyperlink{https://quantumcomputing.stackexchange.com/questions/14264/nielsenchuang-5-qubit-quantum-error-correction-encoding-gate}{this post}.} just before the noise channel in Figure~\ref{fig:5qubita} accomplishes this task. The reader can check that $ZXZII$ is a logical operation and that no error of weight two can be a logical operation, so that the distance of this code is $d=3$. Thus, the 5-qubit code is a $[[5,1,3]]$ code with rate $R=1/5$.

To figure out what error corresponds to what syndrome, we check which generators of the stabilizer each error anticommutes with. If $E$ anticommutes with stabilizer $S_i$, then the $i$-th bit in the syndrome is 1. For example, the error $Z_1$ anticommutes with $XZZXI$ and $XIXZZ$, which are the first and third generators in the syndrome extraction portion of Figure~\ref{fig:5qubita}. Thus, the syndrome associated to $Z_1$ is 1010. We then correct the error by controlling for this syndrome and applying the inverse of the error. Since there are four ancillary qubits (because there are four generators), there are $2^4=16$ possible syndromes. We list all possible syndromes and the corresponding single qubit error in Table~\ref{tab:5qubitsyndromes}. We see that all single qubit errors are accounted for, and this table can be used to verify that the errors are corrected properly in Figure~\ref{fig:5qubitb}.

\begin{table}[]
    \centering
    \begin{tabular}{cc} \toprule
        Syndrome & Error \\ \midrule
        0000 & $I$\\
        0001 & $X_1$\\
        0010 & $Z_3$\\
        0011 & $X_5$\\
        0100 & $Z_5$\\
        0101 & $Z_2$\\
        0110 & $X_4$\\
        0111 & $Y_5=iX_5Z_5$\\ \bottomrule
    \end{tabular}
    \hspace{2em}
    \begin{tabular}{cc} \toprule
        Syndrome & Error \\ \midrule
        1000 & $X_2$\\
        1001 & $Z_4$\\
        1010 & $Z_1$\\
        1011 & $Y_1=iX_1Z_1$\\
        1100 & $X_3$\\
        1101 & $Y_2=iX_2Z_2$\\
        1110 & $Y_3=iX_3Z_3$\\
        1111 & $Y_4=iX_4Z_4$\\ \bottomrule
    \end{tabular}
    \caption{Syndrome look-up table for 5-qubit code.}
    \label{tab:5qubitsyndromes}
\end{table}

\begin{figure}
    \centering

    \begin{subfigure}{\textwidth}
        \centering
        \includegraphics[height=1.9in]{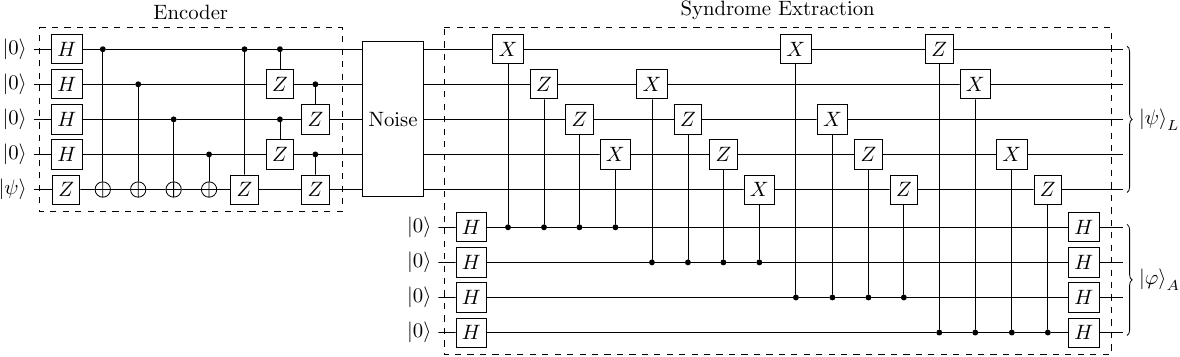}
        \caption{Circuit diagram (first half).}
        \label{fig:5qubita}
    \end{subfigure}

    \vspace{1em}

    \begin{subfigure}{\textwidth}
        \centering
        \includegraphics[height=1.9in]{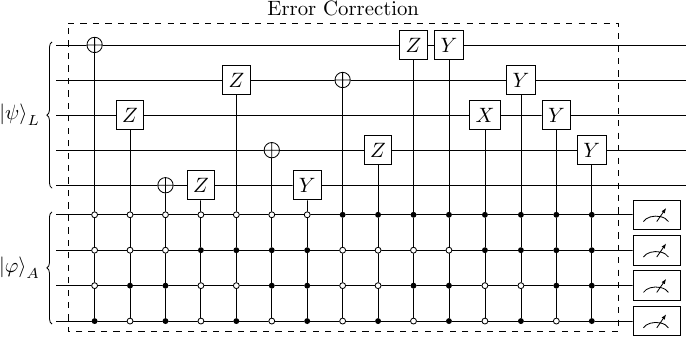}
        \caption{Circuit diagram (second half).}
        \label{fig:5qubitb}
    \end{subfigure}

    \caption{The 5-qubit error correcting code.}
    \label{fig:5qubit}

\end{figure}

\begin{table}
    \centering
    \begin{tabular}{cccccl}
        \toprule
        Noise & $\ket{\varphi}_A$ & Count & $P_{obs}$ & $P_{true}$ \\[0.1cm] 
                                                                                       \midrule \\[-0.4cm]
        $\mathbb{I}$ & 0000 & 254 & $3.10 \cdot 10^{-2}$\hspace{0.7em} & $5.99 \cdot 10^{-1}$ \\[0.15cm]
        $X_1$ & 0001 & 63 &
            \multirow{10}{*}{$\left.\begin{array}{l}
                7.69 \cdot 10^{-3} \\
                1.93 \cdot 10^{-2} \\
                9.77 \cdot 10^{-4} \\
                0.00 \\
                1.59 \cdot 10^{-3} \\
                2.44 \cdot 10^{-4} \\
                1.22 \cdot 10^{-3} \\
                1.22 \cdot 10^{-4} \\
                6.10 \cdot 10^{-4} \\
                2.44 \cdot 10^{-4}
            \end{array}\right\rbrace$} & \multirow{10}*{$3.15 \cdot 10^{-2}$} \\
        $X_2$ & 1000 & 158 & & & \\
        $X_3$ & 1100 & 8 & & & \\
        $X_4$ & 0110 & 0 & & & \\
        $X_5$ & 0011 & 13 & & & \\
        $Z_1$ & 1010 & 2 & & & \\
        $Z_2$ & 0101 & 10 & & & \\
        $Z_3$ & 0010 & 1 & & & \\
        $Z_4$ & 1001 & 5 & & & \\
        $Z_5$ & 0100 & 2 & & & \\[0.15cm]
        $Y_1$ & 1011 & 1 &
            \multirow{5}{*}{$\left.\begin{array}{l}
                1.22 \cdot 10^{-4} \\
                1.10 \cdot 10^{-3} \\
                0.00 \\
                0.00 \\
                0.00
            \end{array}\right\rbrace$} & \multirow{5}*{$1.66 \cdot 10^{-3}$} \\
        $Y_2$ & 1101 & 9 & & & \\
        $Y_3$ & 1110 & 0 & & & \\
        $Y_4$ & 1111 & 0 & & & \\
        $Y_5$ & 0111 & 0 & & & \\[0.15cm]
        \multicolumn{2}{r}{Uncorrectable:} &  600 & $7.32 \cdot 10^{-2}$\hspace{0.7em} & $7.78 \cdot 10^{-2}$ \\
        \multicolumn{2}{r}{Hardware:}      & 7066 & $8.63 \cdot 10^{-1}$\hspace{0.7em} & 0.00\hspace{2.8em}   \\ \bottomrule
    \end{tabular}
    
    \caption{
        Empirical probabilities of Pauli $X$, $Y$, and $Z$ errors on logical qubits in the 5-qubit code.
        Errors are introduced independently with probability $0.05$.
        Values were obtained by running the circuit in Figure~\ref{fig:5qubit} on IBM Torino for 8192 shots. The discrepancy between the observed and true values can be explained by noisy encoding, syndrome extraction, and measurement operations which have not been accounted for. In our construction, we assume perfect such operations with the knowledge that if the error rate is small enough, the threshold theorem guarantees that the logical rate will be lowered. We also simulated this circuit in Qiskit Aer to verify that it functions as expected.
    }
    \label{tab:5qubit}
\end{table}

Finally, let us calculate the threshold for this code. As usual, suppose the physical qubits independently experience a single qubit Pauli error with probability $p$. The probability that the code is successful is 
\begin{equation}
    p_{success}=(1-p)^5+5p(1-p)^4,
\end{equation}
and so the probability that the code fails is
\begin{equation}
    p_{fail}=1-p_{success}=1-(1-p)^5-5p(1-p)^4.
\end{equation}
Then the (pseudo) threshold of the code can be solved for numerically, producing $p\approx0.1311$. Thus, the 5-qubit code lowers the error rate when the error rate of the physical qubits is below $\approx13.11\%$ (again, this assumes depolarizing type noise, perfect measurements, etc.).

\subsection{The Parity Check Matrix Perspective}\label{sec:paritycheck}

It turns out that there is a way to convert the stabilizer code formalism to a linear algebra perspective over the field $\mathbb{Z}_2$ which is very convenient for computational purposes, and it is this perspective that we will now endeavor to understand. Suppose we are given a stabilizer subgroup $\mathcal{S}$ of the Pauli group $G_n$ that is generated by $\{S_1,\ldots,S_k\}$ with $k<n$. Every element of $G_n$ has the form $\lambda X_1^{p_1}\cdots X_n^{p_n} Z_1^{q_1}\cdots Z_n^{q_n}$, where $\lambda\in\{\pm1,\pm i\}$ and $p_i,q_j\in\mathbb{Z}_2$. Since every Pauli operator squares to the identity, there is a homomorphism $\varphi:G_k\to\mathbb{Z}_2^{2n}$ given by $\varphi(\lambda X_1^{p_1}\cdots X_n^{p_n} Z_1^{q_1}\cdots Z_n^{q_n})=(p_1,\ldots,p_n,q_1,\ldots,q_n)$. In what follows, we will define $\Vec{p}=(p_1,\ldots,p_n)$ and $\Vec{q}=(q_1,\ldots,q_n)$ for notational convenience. The map $\varphi$ is not an isomorphism since changing $\lambda$ does not affect the output and the mapping therefore cannot be injective. In fact, the kernel of $\varphi$ is the subgroup $\{\pm1,\pm i\}$. However, since the stabilizer subgroup $\mathcal{S}$ does not contain $-1$ (which implies it does not contain $\pm i$), the restriction of $\varphi$ to $\mathcal{S}$ has trivial kernel and is therefore injective. Since the map is clearly also a surjection, it follows that $\mathcal{S}$ is isomorphic to $\mathbb{Z}^{2n}$. This isomorphism allows us to translate many properties of stabilizer codes into a linear algebraic setting. Indeed, as we show below, independence of the generators of the stabilizer subgroup is equivalent to linear independence in $\mathbb{Z}_2^{2n}$ and we can equip $\mathbb{Z}_2^{2n}$ with a symplectic product that determines when elements of the Pauli group commute. Let us prove the first of these results. We will need to develop some mathematical machinery to attack the second. 
\begin{proposition}\label{prop:independence}
    Let $g_1,\ldots,g_m\in G_n$ and assume that the group generated by these elements does not contain $-1$ (we do not assume commutativity, so the generated group may not be a stabilizer subgroup). Then $g_1,\ldots,g_m$ are independent generators if and only if $\varphi(g_1),\ldots,\varphi(g_m)$ are linearly independent.
\end{proposition}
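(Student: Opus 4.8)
The plan is to exploit the fact that the hypothesis $-1\notin\langle g_1,\ldots,g_m\rangle$ forces $\varphi$ to be injective on this subgroup, so that the group-theoretic notion of independence and the linear-algebraic one are transported directly into one another. Write $H=\langle g_1,\ldots,g_m\rangle$. First I would observe that since $i^2=-1$, the absence of $-1$ in $H$ also rules out $\pm i$; hence $H\cap\ker(\varphi)=\{1\}$, because $\ker(\varphi)=\{\pm1,\pm i\}$ as recorded just before the proposition. Consequently $\varphi|_H$ is injective, which is the structural crux of the argument.

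Next I would upgrade $H$ to a vector space over $\mathbb{Z}_2$. For any two Pauli words $g,h\in H$ we have $gh=\pm hg$; were the sign negative we would get $-1=gh(hg)^{-1}\in H$, a contradiction, so $H$ is abelian (even though commutativity was not assumed). Moreover, writing $g_i=\lambda P$ with $P$ a tensor product of Paulis, we have $g_i^2=\lambda^2 P^2=\lambda^2\mathds{1}$, and this scalar lies in $H\cap\ker(\varphi)=\{1\}$, so $g_i^2=\mathds{1}$. Thus every element of $H$ has the form $g_1^{a_1}\cdots g_m^{a_m}$ with $a_i\in\{0,1\}$, making $H$ an elementary abelian $2$-group, i.e. a vector space over $\mathbb{Z}_2$. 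In this setting, ``the generators are independent'' means precisely that $g_1,\ldots,g_m$ are $\mathbb{Z}_2$-linearly independent.

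With this setup the conclusion is immediate. Since $\varphi$ is a homomorphism into the abelian group $\mathbb{Z}_2^{2n}$ and each $\varphi(g_i)$ has order dividing two, one has $\varphi(g_1^{a_1}\cdots g_m^{a_m})=a_1\varphi(g_1)+\cdots+a_m\varphi(g_m)$, so $\varphi|_H$ is a $\mathbb{Z}_2$-linear map. An injective linear map both preserves and reflects linear independence: a relation $\sum_i a_i\varphi(g_i)=0$ holds if and only if $g_1^{a_1}\cdots g_m^{a_m}\in H\cap\ker(\varphi)=\{1\}$, i.e. if and only if $g_1^{a_1}\cdots g_m^{a_m}=\mathds{1}$. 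Hence a nontrivial dependence among the $\varphi(g_i)$ exists if and only if a nontrivial relation among the $g_i$ does, which is exactly the claimed equivalence.

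I expect the main subtlety to be the bookkeeping around phases: one must interpret ``independent'' as the assertion that the only relation of the form $g_1^{a_1}\cdots g_m^{a_m}=\mathds{1}$ is the trivial one, rather than allowing $g_1^{a_1}\cdots g_m^{a_m}=\lambda\mathds{1}$ for some phase $\lambda$, and then justify that the no-$(-1)$ hypothesis collapses every would-be phase relation to the trivial one. Once the injectivity of $\varphi|_H$ is pinned down, the remainder is a formality.
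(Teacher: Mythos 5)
Your proof is correct, and it reorganizes the argument in a way that differs from the paper's. The paper proves both directions by contradiction straight from the homomorphism property: a group dependence $g_j=\prod_{i\ne j}g_i^{r_i}$ maps under $\varphi$ to a linear dependence, and conversely a linear dependence forces $g_1^{r_1}\cdots g_m^{r_m}\in\ker(\varphi)=\{\pm1,\pm i\}$, which the paper then reads off as a dependence among the generators. You instead convert the hypothesis into structure before doing anything else: $-1\notin H$ forces $H$ to be abelian (an anticommuting pair would put $-1=gh(hg)^{-1}$ in $H$), forces every element to square to $\mathds{1}$, and gives $H\cap\ker(\varphi)=\{1\}$, so that $H$ is an elementary abelian $2$-group and $\varphi|_H$ is an injective $\mathbb{Z}_2$-linear map; the equivalence then follows from the general fact that injective linear maps preserve and reflect independence. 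Your route buys two things. First, it repairs the one loose spot in the paper's argument: in the converse direction the paper only pins the product down to a phase, $g_1^{r_1}\cdots g_m^{r_m}=\pm i^{p}$, and never explicitly invokes the no-$(-1)$ hypothesis to collapse that phase to $1$; your observation that $H\cap\ker(\varphi)=\{1\}$ does exactly this and turns the relation into an honest identity $g_1^{r_1}\cdots g_m^{r_m}=\mathds{1}$. Second, it makes the meaning of ``independent'' precise (only the trivial word in the $g_i$ equals $\mathds{1}$), and the abelianness and exponent-two facts, which the paper never states, are genuinely needed for your normal-form claim that every element of $H$ is $g_1^{a_1}\cdots g_m^{a_m}$ with $a_i\in\{0,1\}$. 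What the paper's route buys in exchange is brevity: by working with products in a fixed order it never needs $H$ to be abelian or of exponent two at all.
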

\begin{proof}
    Suppose $g_1,\ldots,g_m$ are dependent but $\varphi(g_1),\ldots,\varphi(g_m)$ are linearly independent. Then there is a $g_j$ such that
    \begin{equation}
        g_j = \prod_{i\ne j}g_i^{r_i}
    \end{equation}
    for some $r_i\in\mathbb{Z}_2$. Since $\varphi$ is a homomorphism, we then have
    \begin{equation}
        \varphi(g_j)=\sum_{i\ne j}r_i\varphi(g_i),
    \end{equation}
    but this contradicts the linear independence of the $\varphi(g_i)$. Thus, linear independence in $\mathbb{Z}_2^{2n}$ implies independence in $G_n$.

    Conversely, suppose $g_1,\ldots,g_m$ are independent but $\varphi(g_1),\ldots,\varphi(g_m)$ are linearly dependent. Then there are $r_i\in\mathbb{Z}_2$ not all zero such that
    \begin{equation}
        r_1\varphi(g_1)+\cdots+r_n\varphi(g_n)=0.
    \end{equation}
    It therefore follows that
    \begin{equation}
        \varphi(g_1^{r_1}\cdots g_n^{r_n})=0,
    \end{equation}
    but we know $\ker(\varphi)=\{\pm1,\pm i\}$, and so we have
    \begin{equation}
        g_1^{r_1}\cdots g_n^{r_n}=\pm i^p
    \end{equation}
    for some $p\in\mathbb{Z}_2$. Since we assume that not all $r_i$ are zero, this produces a dependence between the generators, contradicting the assumption that they are independent. Thus, independence in the generators implies linear independence in $\mathbb{Z}_2^{2n}$.
\end{proof}

To proceed, it is helpful to first discuss the theory of symplectic vector spaces. We introduce only the tools we need here, and the interested reader is referred to Roman's excellent text \cite{roman2007} for a more detailed treatment. Let $V$ be a vector space over a field $F$. A map $\omega:V\times V\to F$ is called \textbf{bilinear} if it is linear in each component; that is, $\omega(\lambda x+y,z)=\lambda\omega(x,z)+\omega(y,z)$ and $\omega(x, \lambda y+z)=\lambda\omega(x,y)+\omega(x,z)$ for all $x,y,z\in V$ and $\lambda\in F$. Such maps are sometimes called \textbf{bilinear forms}. They are additionally called \textbf{alternating} if $\omega(v,v)=0$ for all $v\in V$ and \textbf{non-degenerate} if $\omega(u,v)=0$ for all $v\in V$ implies $u=0$. A non-degenerate, alternating, bilinear form is called a \textbf{symplectic form}, and a vector space equipped with a symplectic form is called a \textbf{symplectic vector space}. As noted by Roman, the term ``symplectic'' comes from the Greek word for ``intertwined''. However, the original usage of the term in this context can be attributed to Weyl, who used it as a substitute for the word ``complex'' in his book \textit{The Classical Groups} \cite{weyl1946}.

\begin{lemma}\label{lemma:skew-sym}
    Let $\omega$ be a symplectic form. Then $\omega$ is \textbf{skew-symmetric}: $\omega(u,v)=-\omega(v,u)$ for all $u,v\in V$.
\end{lemma}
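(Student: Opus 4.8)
The plan is to exploit the alternating property $\omega(v,v)=0$ directly, together with bilinearity, rather than any structural facts about $V$. The key trick is to evaluate $\omega$ on a diagonal argument built from both $u$ and $v$ and then expand. First I would fix arbitrary $u,v\in V$ and apply the alternating hypothesis to the single vector $u+v$, which immediately gives $\omega(u+v,u+v)=0$.

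Next I would expand the left-hand side using bilinearity in each slot. Linearity in the first argument splits $\omega(u+v,u+v)$ into $\omega(u,u+v)+\omega(v,u+v)$, and then linearity in the second argument splits each of these further, yielding
\begin{equation}
    \omega(u+v,u+v)=\omega(u,u)+\omega(u,v)+\omega(v,u)+\omega(v,v).
\end{equation}
Applying the alternating property a second and third time, the diagonal terms $\omega(u,u)$ and $\omega(v,v)$ both vanish. Combining this with the vanishing of the whole expression leaves $\omega(u,v)+\omega(v,u)=0$, which rearranges to the desired identity $\omega(u,v)=-\omega(v,u)$. Since $u,v$ were arbitrary, the lemma follows.

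There is no real obstacle here; the only point worth flagging is a conceptual subtlety about the direction of implication. One might be tempted to \emph{define} symplectic via skew-symmetry and try to recover the alternating property, but over a field of characteristic $2$ (such as $\mathbb{Z}_2$, which is precisely the field relevant to the parity-check perspective of the preceding section) skew-symmetry and symmetry coincide, so skew-symmetry is the strictly weaker condition. The proof above runs in the correct direction: it derives skew-symmetry \emph{from} the stronger alternating axiom and is therefore valid in every characteristic, including the characteristic-$2$ case we actually care about. I would mention this so the reader does not mistake the lemma for a circular restatement of the definition.
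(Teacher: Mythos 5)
Your proof is correct and is essentially identical to the paper's own argument: both expand $\omega(u+v,u+v)=0$ by bilinearity and cancel the diagonal terms to obtain $\omega(u,v)+\omega(v,u)=0$. Your closing remark about characteristic $2$ (where skew-symmetry alone would not recover the alternating property) is a sound observation and worth keeping, since the paper does apply this machinery over $\mathbb{Z}_2$, but it does not change the substance of the proof.
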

\begin{proof}
    Let $u,v\in V$. Then $\omega(u+v,u+v)=0$ since $\omega$ is alternating. But $\omega$ is also bilinear, so that we have
    \begin{equation}
        0=\omega(u,u)+\omega(v,v)+\omega(u,v)+\omega(v,u)=\omega(u,v)+\omega(v,u).
    \end{equation}
    Thus, $\omega(u,v)=-\omega(v,u)$.
\end{proof}

When $V$ is finite dimensional, there is an isomorphism $\Phi:V\to F^n$, where $n=\dim(V)$. This allows us to give any symplectic form $\omega$ a matrix representation $\Omega$ by writing
\begin{equation}
    \omega(u,v)=\Phi(u)^T\Omega\Phi(v).
\end{equation}
The coefficients of $\Omega$ are $\Omega_{ij}=\omega(v_i,v_j)$, where $\{v_1,\ldots,v_n\}$ is a basis for $V$, and so skew-symmetry of $\omega$ is equivalent to skew-symmetry of the matrix representation $\Omega$, justifying the terminology. The next proposition characterizes the form of $\Omega$ and tells us that symplectic vector spaces are even dimensional. Actually, an easier proof using the determinant is available when the characteristic of the base field is not 2, but we will be working in $\mathbb{Z}_2$ in what follows.

\begin{proposition}\label{prop:standard-form}
    Let $V$ be a symplectic vector space. Then $\dim(V)=2n$ for some $n\in\mathbb{N}$ and there is a basis $\{e_1,\ldots,e_{n},f_1,\ldots,f_n\}$ for $V$ such that $\omega(e_i,f_j) = \delta_{ij}$ and $\omega(e_i,e_j)=\omega(f_i,f_j)=0$. Equivalently,
    \begin{equation}\label{eq:standard-form}
        \Omega = \begin{pmatrix}
            0&\mathds{1}\\-\mathds{1}&0
        \end{pmatrix}.
    \end{equation}
\end{proposition}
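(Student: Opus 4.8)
The plan is to prove this by induction on $\dim(V)$, constructing the symplectic basis one hyperbolic pair at a time — the symplectic analogue of the Gram--Schmidt process. The base case is $V=\{0\}$, for which the empty basis trivially satisfies the conditions and $\dim(V)=0=2\cdot 0$. For the inductive step I would assume the result holds for all symplectic vector spaces of smaller dimension and produce a single pair $e_1,f_1$ together with a symplectic complement whose dimension drops by exactly two.

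To build the first pair, pick any nonzero $e_1\in V$. Since $\omega$ is non-degenerate, there is some $v\in V$ with $\omega(e_1,v)\neq 0$; rescaling $v$ by $\omega(e_1,v)^{-1}$ (automatic over $\mathbb{Z}_2$, where the only nonzero scalar is $1$) yields $f_1$ with $\omega(e_1,f_1)=1$. These vectors are linearly independent: the alternating property forces $\omega(e_1,e_1)=0$, so $f_1$ cannot be a scalar multiple of $e_1$. Set $W=\operatorname{span}\{e_1,f_1\}$ and define its symplectic complement $W^{\omega}=\{v\in V:\omega(v,e_1)=\omega(v,f_1)=0\}$.

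The heart of the argument is the decomposition $V=W\oplus W^{\omega}$. For the intersection, any $v=ae_1+bf_1\in W^{\omega}$ satisfies $\omega(v,f_1)=a=0$ and $\omega(v,e_1)=b\,\omega(f_1,e_1)=0$ (using Lemma~\ref{lemma:skew-sym}), forcing $v=0$. For the sum, given $v\in V$ one checks directly that $w:=\omega(v,f_1)e_1-\omega(v,e_1)f_1$ lies in $W$ and that $v-w$ lies in $W^{\omega}$. I would then verify that $\omega$ restricted to $W^{\omega}$ is again symplectic: bilinearity and the alternating property are inherited immediately, and non-degeneracy follows because any $u\in W^{\omega}$ that is $\omega$-orthogonal to all of $W^{\omega}$ is, by the definition of $W^{\omega}$ and the direct sum, orthogonal to all of $V$, hence zero by non-degeneracy of $\omega$ on $V$. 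Applying the inductive hypothesis to $W^{\omega}$ produces a symplectic basis $\{e_2,\ldots,e_n,f_2,\ldots,f_n\}$; since every vector of $W^{\omega}$ pairs to zero with both $e_1$ and $f_1$, adjoining $e_1,f_1$ yields a symplectic basis of $V$, so $\dim(V)=2n$. Reading off $\Omega_{ij}=\omega(v_i,v_j)$ in the ordered basis $\{e_1,\ldots,e_n,f_1,\ldots,f_n\}$ and using $\omega(f_i,e_j)=-\delta_{ij}$ then gives exactly the block form \eqref{eq:standard-form}.

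The main obstacle is the non-degeneracy of the restricted form on $W^{\omega}$, which is what keeps the induction alive; it is precisely here that non-degeneracy of $\omega$ on all of $V$ is indispensable, and the direct sum decomposition $V=W\oplus W^{\omega}$ is the tool that transfers it down to the complement. Everything else — the independence of $e_1,f_1$, the explicit projection formula, and reading off the matrix entries — is routine once Lemma~\ref{lemma:skew-sym} is in hand.
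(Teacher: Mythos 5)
Your proof is correct and follows the same route as the paper: choose a hyperbolic pair $e_1,f_1$, split $V=W\oplus W^{\perp}$ with $W=\operatorname{span}\{e_1,f_1\}$, and induct on dimension. In fact you supply the two steps the paper glosses over, namely the verification that $V=W\oplus W^{\perp}$ (via the explicit projection $w=\omega(v,f_1)e_1-\omega(v,e_1)f_1$) and the non-degeneracy of $\omega$ restricted to the complement, which is precisely what keeps the induction running, so your write-up is a more complete rendering of the same argument.
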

\begin{proof}
    Let $e_1\ne0$ be any nonzero vector in $V$. Since $\omega$ is non-degenerate, there is a $v\in V$ such that $\omega(e_1,v)\ne0$. Define $f_1:=v/\omega(e_1,v)$ and $W:=\textnormal{span}\{e_1,f_1\}$. Since $\omega$ is alternating and bilinear, we have $\omega(e_1,\lambda e_1)=\lambda\omega(e_1,e_1)=0$ for all $\lambda\in F$, and so the fact that $\omega(e_1,v)\ne0$ guarantees that $e_1$ and $f_1$ are linearly independent and therefore form a basis for $W$. Then $V=W\oplus W^\bot$, where $W^\bot$ is the orthogonal complement of $W$ with respect to $\omega$ defined by
    \begin{equation}
        W^\bot:=\{v\in V:\omega(w,v)=0\textnormal{ for all }w\in W\}.
    \end{equation}
    Moreover, $\omega(e_1,f_1)=\omega(e_1,v)/\omega(e_1,v)=1$ and by skew-symmetry (Lemma~\ref{lemma:skew-sym}), $\omega(f_1,e_1)=-1$. Continuing inductively completes the proof.
\end{proof}

Let $A = \lambda_A X_1^{p_1}\cdots X_n^{p_n}Z_1^{q_1}\cdots Z_n^{q_n}$ and $B = \lambda_B X_1^{r_1}\cdots X_n^{r_n}Z_1^{s_1}\cdots Z_n^{s_n}$ be two elements of the Pauli group. In $\mathbb{Z}_2^{2n}$, these elements are represented by $(\Vec{p},\Vec{q})$ and $(\Vec{r},\Vec{s})$, respectively. In order to commute $B$ past $A$, we must count how many times an $X$ from $B$ gets applied to the same qubit as a $Z$ from $A$ and how many times a $Z$ from $B$ gets applied to the same qubits as an $X$ from $A$. The former is seen to be $\Vec{q}\cdot\Vec{r}$ and the latter is $\Vec{p}\cdot\Vec{s}$. Thus, we have that $A$ and $B$ commute if and only if $\Vec{p}\cdot\Vec{s}+\Vec{q}\cdot\Vec{r}=0$, and this quantity can be written as a symplectic form.

\begin{lemma}\label{lemma:pauli-commutes}
    Two elements $g,h$ of the Pauli group $G_n$ commute if and only if 
    \begin{equation}
        \varphi(g)\Omega \varphi(h)^T=0,
    \end{equation}
    where $\Omega$ is the matrix representation of a symplectic form $\omega$.
\end{lemma}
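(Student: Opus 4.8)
The plan is to reduce the lemma to the scalar commutation criterion established in the paragraph immediately preceding it, and then to verify by a direct block computation that the symplectic form reproduces exactly that scalar. Writing $g=\lambda_g X_1^{p_1}\cdots X_n^{p_n}Z_1^{q_1}\cdots Z_n^{q_n}$ and $h=\lambda_h X_1^{r_1}\cdots X_n^{r_n}Z_1^{s_1}\cdots Z_n^{s_n}$, the first observation I would record is that the scalar prefactors $\lambda_g,\lambda_h$ are central in $G_n$ and hence irrelevant to whether $g$ and $h$ commute; this is what lets us work entirely with the images $\varphi(g)=(\vec p,\vec q)$ and $\varphi(h)=(\vec r,\vec s)$, which forget those phases. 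The discussion before the lemma already shows, by counting the single-qubit anticommutations $XZ=-ZX$, that $g$ and $h$ commute if and only if $\vec p\cdot\vec s+\vec q\cdot\vec r=0$ in $\mathbb{Z}_2$.

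Given this, the remaining task is purely computational. Taking $\Omega$ in the standard form of Proposition~\ref{prop:standard-form}, namely \eqref{eq:standard-form}, I would compute the product in blocks: $\Omega\varphi(h)^T=(\vec s,-\vec r)^T$, and therefore $\varphi(g)\,\Omega\,\varphi(h)^T=\vec p\cdot\vec s-\vec q\cdot\vec r$. Since we are working over $\mathbb{Z}_2$, where $-1=1$, this equals $\vec p\cdot\vec s+\vec q\cdot\vec r$. Combining with the commutation criterion of the previous paragraph gives $\varphi(g)\Omega\varphi(h)^T=0$ if and only if $g$ and $h$ commute, which is the claim.

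There is no real obstacle here; the content is a one-line matrix multiplication matched against an already-derived sign count. The only points deserving a word of care are (i) noting that the $-\mathds{1}$ block in $\Omega$ is harmless because the arithmetic is carried out modulo $2$, so the skew-symmetric and symmetric versions of the standard form coincide, and (ii) being explicit that it suffices to use $\Omega$ in standard form, since the lemma is stated for ``the matrix representation of a symplectic form'' and \eqref{eq:standard-form} fixes a canonical such representation on $\mathbb{Z}_2^{2n}$. With these remarks in place, the equivalence follows immediately.
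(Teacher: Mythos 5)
Your proof is correct and takes essentially the same approach as the paper's: both invoke the sign-counting criterion $\vec{p}\cdot\vec{s}+\vec{q}\cdot\vec{r}=0$ established in the paragraph immediately preceding the lemma, and both verify by a direct block computation with the standard form \eqref{eq:standard-form} that $\varphi(g)\Omega\varphi(h)^T=\vec{p}\cdot\vec{s}-\vec{q}\cdot\vec{r}$, which equals $\vec{p}\cdot\vec{s}+\vec{q}\cdot\vec{r}$ over $\mathbb{Z}_2$. Your extra remarks (centrality of the phases $\lambda_g,\lambda_h$, and the irrelevance of the sign in the $-\mathds{1}$ block modulo $2$) are harmless clarifications rather than a different argument.
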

\begin{proof}
    Let $\varphi(g)=(\Vec{p},\Vec{q})$ and $\varphi(h)=(\Vec{r},\Vec{s})$. Let $\omega$ be the symplectic form defined in the standard basis by \eqref{eq:standard-form}. Then
    \begin{align}
        \varphi(g)\Omega\varphi(h)^T&=\begin{pmatrix}\Vec{p}&\Vec{q}
        \end{pmatrix}
        \begin{pmatrix}
            0&\mathds{1}\\-\mathds{1}&0
        \end{pmatrix}
        \begin{pmatrix}
            \Vec{r}\\\Vec{s}
        \end{pmatrix}\\
        &=\begin{pmatrix}\Vec{p}&\Vec{q}
        \end{pmatrix}
        \begin{pmatrix}
            \Vec{s}\\-\Vec{r}
        \end{pmatrix}\\
        &=\Vec{p}\cdot\Vec{s}-\Vec{q}\cdot\Vec{r}.
    \end{align}
    Since we are working in $\mathbb{Z}_2$, we have $-1\equiv+1$ and so 
    \begin{equation}
        \varphi(g)\Omega\varphi(h)^T=\Vec{p}\cdot\Vec{s}+\Vec{q}\cdot\Vec{r}.
    \end{equation}
    Thus, $g$ and $h$ commute if and only if $\varphi(g)\Omega\varphi(h)^T=0$.
\end{proof}

This lemma allows us to write the condition that $g$ is in the normalizer of the stabilizer subgroup in terms of linear algebra in $\mathbb{Z}_2^{2n}$. Indeed, let us define the \textbf{parity check matrix} to be the matrix of all images of the generators of the stabilizer under the isomorphism $\varphi$. Explicitly, we define
\begin{equation}
    C = \begin{pmatrix}
        \varphi(S_1)\\
        \varphi(S_2)\\
        \vdots\\
        \varphi(S_n)
    \end{pmatrix}.
\end{equation}
Letting $\varphi(S_i)=(\Vec{x}_i,\Vec{y}_i)$ with $\Vec{x}_i=(x_{i,1},\ldots,x_{i,n})$ and $\Vec{y}_i=(y_{i,1},\ldots,y_{i,n})$, we can write
\begin{equation}
    C = \begin{pmatrix}
        X\vert Y
    \end{pmatrix},
\end{equation}
where the coefficients of $X$ are $X_{ij}=x_{i,j}$ and the coefficients of $Y$ are $Y_{ij}=y_{i,j}$. By Proposition~\ref{prop:independence}, the parity check matrix has full rank. Recalling that an element of the Pauli group is in the normalizer of the stabilizer subgroup if and only if it commutes with all generators of the stabilizer, we have the following result.

\begin{lemma}\label{lemma:in-normalizer}
    Let $g\in G_n$ be an element of the Pauli group and let $\mathcal{S}$ be the stabilizer subgroup with generators $S_1,\ldots,S_k$. Then $g\in N(\mathcal{S})$ if and only if 
    \begin{equation}
        C\Omega\varphi(g)^T=0.
    \end{equation}
\end{lemma}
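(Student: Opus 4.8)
The plan is to reduce the group-theoretic membership condition $g\in N(\mathcal{S})$ to a collection of pairwise commutation conditions, one for each generator, and then to package those conditions as the single matrix equation in the statement. The excerpt has already recorded the key reduction immediately preceding the lemma: an element of the Pauli group lies in $N(\mathcal{S})$ precisely when it commutes with every generator $S_1,\ldots,S_k$ of the stabilizer. This is where the identity $N(\mathcal{S})=C(\mathcal{S})$ from the proof of Theorem~\ref{thm:stab-error} does its work, together with the observation that commuting with each generator forces commutation with every product of generators, hence with all of $\mathcal{S}$. So the first step I would take is simply to invoke this equivalence, turning the problem into the claim that $g$ commutes with all $S_i$ if and only if $C\Omega\varphi(g)^T=0$.

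The second step is the linear-algebraic translation. By Lemma~\ref{lemma:pauli-commutes}, $g$ commutes with a fixed generator $S_i$ if and only if $\varphi(g)\Omega\varphi(S_i)^T=0$. I would then note that, because we work over $\mathbb{Z}_2$ where $-1\equiv+1$, the symplectic form is symmetric, so this is equivalent to $\varphi(S_i)\Omega\varphi(g)^T=0$; this swap is what lets us read the condition off row by row. The crucial bookkeeping observation is that, since the $i$-th row of the parity check matrix $C$ is exactly $\varphi(S_i)$, the $i$-th entry of the column vector $C\Omega\varphi(g)^T$ is $\varphi(S_i)\Omega\varphi(g)^T$. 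Hence $C\Omega\varphi(g)^T$ vanishes if and only if each of its entries vanishes, i.e. if and only if $g$ commutes with every generator. Chaining this with the first step yields the desired biconditional.

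I do not expect a serious obstacle, since the content is really a change of language. The one place requiring care is the direction of the symplectic pairing: Lemma~\ref{lemma:pauli-commutes} is phrased with $\varphi(g)$ on the left, whereas $C$ naturally places the generators on the left, so I would make the symmetry-over-$\mathbb{Z}_2$ remark explicit rather than leave it implicit. A secondary point worth stating cleanly is that the equivalence ``commutes with all of $\mathcal{S}$'' $\Leftrightarrow$ ``commutes with each generator'' uses only multiplicativity of commutation (if $g$ commutes with $S_i$ and with $S_j$, then $gS_iS_j=S_iS_jg$), and does not require the rows of $C$ to be independent; that independence, guaranteed by Proposition~\ref{prop:independence}, is not needed for the biconditional itself but is what ensures the rows of $C$ faithfully encode the full generating set.
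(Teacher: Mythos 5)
Your proposal is correct and follows essentially the same route as the paper: reduce membership in $N(\mathcal{S})$ to commutation with each generator $S_i$, apply Lemma~\ref{lemma:pauli-commutes} to each, and observe that since the rows of $C$ are the $\varphi(S_i)$, the $k$ scalar conditions assemble into $C\Omega\varphi(g)^T=0$. The extra care you take (justifying the reduction via $N(\mathcal{S})=C(\mathcal{S})$, the argument-order swap over $\mathbb{Z}_2$, and the remark that row independence is not needed) is sound but only makes explicit what the paper leaves implicit.
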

\begin{proof}
    Recall that $g\in N(\mathcal{S})$ if and only if $[g,S_i]=0$ for all $i=1,\ldots,k$. By Lemma~\ref{lemma:pauli-commutes}, this is equivalent to the condition
    \begin{equation}
        \varphi(S_i)\Omega\varphi(g)^T=0
    \end{equation}
    for all $i=1,\ldots,k$. But the rows of $C$ are $\varphi(S_i)$, and so this system of equations is equivalent to $C\Omega\varphi(g)^T=0$.
\end{proof}

As linear algebra is one of the most widely understood fields of mathematics, this parity check version of the stabilizer formalism is quite convenient, and the reader may wonder if the original description can be abandoned altogether. While there are advantages to each description, we can indeed show that the stabilizer subgroup is completely defined by a choice of parity check matrix.

\begin{theorem}
    Let $C$ be a $k\times 2n$ matrix with linearly independent rows $\Vec{r}_1,\ldots,\Vec{r}_k$ having vanishing symplectic product
    \begin{equation}
        \Vec{r}_i\Omega\Vec{r}_j^T=0
    \end{equation}
    for all $i,j=1,\ldots,k$. Then there exists independent $S_1,\ldots,S_k\in G_n$ such that $[S_i,S_j]=0$ and $-1\notin\langle S_1,\ldots,S_k\rangle$.
\end{theorem}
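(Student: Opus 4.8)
The plan is to construct explicit Pauli preimages of the rows of $C$ under the homomorphism $\varphi$ and then verify the three required properties by pushing everything through $\varphi$ and the symplectic criterion of Lemma~\ref{lemma:pauli-commutes}. Writing each row as $\Vec{r}_i=(\Vec{p}_i,\Vec{q}_i)$ with $\Vec{p}_i,\Vec{q}_i\in\mathbb{Z}_2^n$, I would define $S_i$ qubit-by-qubit through the dictionary $(0,0)\mapsto I$, $(1,0)\mapsto X$, $(0,1)\mapsto Z$, $(1,1)\mapsto Y$; equivalently $S_i=i^{\Vec{p}_i\cdot\Vec{q}_i}X_1^{p_{i,1}}\cdots X_n^{p_{i,n}}Z_1^{q_{i,1}}\cdots Z_n^{q_{i,n}}$. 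Because this phase lies in $\ker(\varphi)=\{\pm1,\pm i\}$, we have $\varphi(S_i)=\Vec{r}_i$, and because each tensor factor is a Hermitian Pauli squaring to the identity, each $S_i$ is Hermitian with $S_i^2=\mathds{1}$.

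Commutativity is then immediate: by Lemma~\ref{lemma:pauli-commutes}, $[S_i,S_j]=0$ if and only if $\varphi(S_i)\Omega\varphi(S_j)^T=\Vec{r}_i\Omega\Vec{r}_j^T=0$, which is exactly the hypothesis on the rows of $C$. Since the $S_i$ commute \emph{exactly} (not merely up to a phase) and satisfy $S_i^2=\mathds{1}$ with $S_i^{-1}=S_i$, every word in the generators and their inverses collapses to a product $S_1^{a_1}\cdots S_k^{a_k}$ with $a_i\in\{0,1\}$ and no residual phase. Hence $\langle S_1,\ldots,S_k\rangle=\{S_1^{a_1}\cdots S_k^{a_k}:a_i\in\{0,1\}\}$.

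The crux is ruling out $-1$. Suppose $-1=S_1^{a_1}\cdots S_k^{a_k}$ for some $a_i\in\{0,1\}$. Applying $\varphi$, using $\varphi(-1)=0$ and the homomorphism property, gives $\sum_i a_i\Vec{r}_i=0$ in $\mathbb{Z}_2^{2n}$; linear independence of the rows forces every $a_i=0$, so the product is $\mathds{1}\neq-1$, a contradiction. Thus $-1\notin\langle S_1,\ldots,S_k\rangle$. With this established, Proposition~\ref{prop:independence} applies, since its hypothesis that the generated group omits $-1$ is now satisfied, and because $\varphi(S_1)=\Vec{r}_1,\ldots,\varphi(S_k)=\Vec{r}_k$ are linearly independent by assumption, the $S_i$ are independent generators.

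I expect the main obstacle to lie in the phase bookkeeping of the first step rather than in the later deductions. The naive preimage $X^{\Vec{p}_i}Z^{\Vec{q}_i}$ can square to $-\mathds{1}$ (for instance $XZ=-iY$ squares to $-\mathds{1}$), which would place $-1$ inside the generated group and defeat the construction; selecting the balancing phase $i^{\Vec{p}_i\cdot\Vec{q}_i}$ so that each $S_i$ is an honest Hermitian involution is precisely what makes the final $\varphi$-pullback argument work. The one point deserving care is confirming both that this phase yields $S_i^2=\mathds{1}$ and that $\varphi$ genuinely discards it, each of which reduces to a single-qubit check on the four entries of the dictionary.
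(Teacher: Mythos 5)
Your proof is correct, and its skeleton is the same as the paper's: choose Pauli preimages of the rows under $\varphi$, deduce $[S_i,S_j]=0$ from the vanishing symplectic products via Lemma~\ref{lemma:pauli-commutes}, rule out $-1$ by applying $\varphi$ to a hypothetical product $S_1^{a_1}\cdots S_k^{a_k}=-1$ and invoking linear independence of the rows, and finally obtain independence of the $S_i$ from Proposition~\ref{prop:independence}. Where you genuinely improve on the paper is the phase bookkeeping, and you are right that this is the crux rather than a pedantic detail. The paper's proof merely says ``choose $S_i$ with $\varphi(S_i)=\Vec{r}_i$'' and then treats $\Pi(\Vec{v})=\prod_i S_i^{v_i}$ as a homomorphism on $\mathbb{Z}_2^k$; that homomorphism property silently requires $S_i^2=\mathds{1}$, which an arbitrary preimage need not satisfy. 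Indeed, for $n=k=1$ and the row $(1,1)$ (which has vanishing self symplectic product), the preimage $XZ$ satisfies $(XZ)^2=-\mathds{1}$, so $\langle XZ\rangle$ contains $-1$ and both the homomorphism argument and the stated conclusion fail for that choice of preimage; one must take $Y=iXZ$ instead. Your dictionary $(1,1)\mapsto Y$, equivalently the phase $i^{\Vec{p}_i\cdot\Vec{q}_i}$, makes each $S_i$ a Hermitian involution, which is exactly what licenses both the normal form $S_1^{a_1}\cdots S_k^{a_k}$ for elements of the generated group and the mod-2 exponent arithmetic. A second, smaller improvement: you apply Proposition~\ref{prop:independence} only after establishing $-1\notin\langle S_1,\ldots,S_k\rangle$, which is the correct logical order, since that proposition's hypothesis is precisely that the generated group omits $-1$; the paper invokes it before that hypothesis has been verified.
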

\begin{proof}
    We can always choose $S_i\in G_n$ so that $\varphi(S_i)=\Vec{r}_i$. Since we assume the symplectic product of the rows vanishes, Lemma~\ref{lemma:pauli-commutes} guarantees that $[S_i,S_j]=0$ for all $i,j$. The linear independence of the rows guarantees that the $S_i$ are independent by Proposition~\ref{prop:independence}. We must show that $-1\notin\langle S_1,\ldots,S_k\rangle$. 
    
    Suppose the contrary, that $-1\in\mathcal{S}:=\langle S_1,\ldots,S_k\rangle$, and define $\Pi:\mathbb{Z}_2^k\to\mathcal{S}$ by $\Pi(v_1,\ldots,v_k)=\prod_{i=1}^kS_i^{v_i}$. Then $\Pi(\Vec{u}+\Vec{v})=\prod_{i=1}^k S_i^{u_i+v_i}=\left(\prod_{i=1}^k S_i^{u_i}\right)\left(\prod_{i=1}^kS_i^{v_i}\right)=\Pi(\Vec{u})\Pi(\Vec{v})$. Thus, $\Pi$ is a (surjective) group homomorphism. Since $\mathcal{S}$ contains $-1$, there is a $\Vec{v}\in\mathbb{Z}_2^{k}$ such that $\Pi(\Vec{v})=-1$. But then the homomorphism $\varphi$ can be applied, producing
    \begin{equation}
        0=\varphi(-1)=\varphi(\Pi(\Vec{v}))=\sum_{i=1}^kv_i\varphi(S_i).
    \end{equation}
    By the assumption that the $\varphi(S_i)$ are linearly independent, we then have that $v_i=0$ for all $i=1,\ldots, k$. But then $\Pi(\Vec{v})=1$, which contradicts the assumption that $\Pi(\Vec{v})=-1$. Thus, $-1\notin\langle S_1,\ldots,S_k\rangle$.
\end{proof}

\subsection{Difficulties}\label{sec:difficulties}

It may appear that stabilizer codes solve all of our problems and that there is nothing more to do than implement them on real devices to perform error correction. However, consider an $[[n,k,d]]$ stabilizer code with $k=1$ and $n$ large. In this code, one logical qubit is encoded in $n$ physical qubits. There are $n-1$ stabilizers and thus $2^{n-1}$ possible syndromes. It is therefore impractical to expect that a table can be created which matches each syndrome with an operator that properly corrects the error corresponding to the syndrome. In fact, the problem of optimally decoding a stabilizer code is known to be \#P-complete \cite{iyer2015}. We can attempt to alleviate this problem by using a decoder which only approximates the error correcting operation associated to a syndrome or makes an informed guess as to which operation should be performed. This difficulty has turned finding good decoders for particular code designs into an active area of research. One promising approach is the training of neural networks to approximate the best decoding procedure for a given syndrome \cite{baireuther2018,krastanov2017,torlai2017,varsamopoulos2017}, and we will have more to say about this in Section~\ref{sec:decoderML}.

In addition to the issue of decoding, there is the problem of determining whether a given code actually lowers the error rate that we started with. Luckily, there has been a number of results showing that there is a threshold such that if the physical qubits making up a code experience an error less than this threshold, then the stabilizer code will indeed lower the logical error rate \cite{aharonov1997,gottesman1998,kitaev1997,knill1998,knill1998a,preskill1998}. The exact statement of each such result depends on the architecture of the code, but each such theorem is referred to as a \textbf{threshold theorem}. A related problem that we have not yet mentioned is our assumption that the error happens between the encoding and the syndrome extraction procedure. Of course, nature will not choose to be so kind to us on every evaluation of the error correction routine. It may happen that one or more of the gates in the encoder and/or syndrome extractor fails, producing an erroneous encoded logical qubit or an erroneous syndrome which then triggers the wrong decoding routine. With this in mind, let us take a step back to reconsider things. We have limited our discussion thus far to a subclass of easily accessible errors, and it is therefore worth identifying the types of errors which typically occur in a quantum system in order to ensure our codes properly protect the system from all feasible errors. Of course, the physical manifestation of these errors will vary between architectures; what is easy with a photonic device may be quite difficult with a trapped-ion device, and so on. However, there are errors of sufficiently general form which can be identified and characterized across all architectures.

The first such error is due to imperfections in the device. Suppose, for example, that we wish to implement the single qubit identity operation $\mathds{1}$. To do so, we create a mechanism $U_\mathds{1}$ that after many tests exhibits the behavior of an identity operation within some tolerance. Given that we have indeed managed to make a unitary operation, the true result of applying $U_\mathds{1}$ is a rotation in some direction. Indeed, unitary transformations preserve the norm of a state and therefore map the unit sphere to itself. Given an initial state $\ket{\psi}=\alpha\ket{0}+\beta\ket{1}$, the normalization condition implies $\lvert\alpha\rvert^2+\lvert\beta\rvert^2=1$, where we recall that $\alpha,\beta\in\mathbb{C}$. Thus, we have $(\real(\alpha))^2+(\imag(\alpha))^2+(\real(\beta))^2+(\imag(\beta))^2=1$, which defines a unit sphere embedded in $\mathbb{R}^4$. However, we recall that global phases do not affect physically observable phenomenon, and so we are free to multiply by a global phase to eliminate the imaginary part of $\alpha$. This leaves us with
$(\real(\alpha))^2+(\real(\beta))^2+(\imag(\beta))^2=1$, which defines a unit sphere embedded in $\mathbb{R}^3$ called the \textbf{Bloch sphere}. Thus, every pure quantum state is defined by a coordinate on the Bloch sphere, and so the $U_\mathds{1}$ operation is given by a rotation on the Bloch sphere.

In order to further discuss errors due to imperfect gates, we will need to understand rotations on the Bloch sphere. Since the Bloch sphere is a 2-dimensional surface, it can be parameterized with two coordinates using the usual spherical coordinate system with the radius set to one. Recall that the conversion from cartesian coordinates to spherical coordinates is accomplished by
\begin{align}
    x&=r\sin(\theta)\cos(\varphi)\\
    y&=r\sin(\theta)\sin(\varphi)\\
    z&=r\cos(\theta),
\end{align}
where $0\le\theta\le\pi$ and $0\le\varphi<2\pi$.
Identifying $\real(\alpha)$ with the $z$ coordinate, $\real(\beta)$ with the $x$ coordinate, and $\imag(\beta)$ with the $y$ coordinate produces
\begin{equation}
    \ket{\psi}=\alpha\ket{0}+\beta\ket{1}=\cos(\theta)\ket{0}+\sin(\theta)e^{i\varphi}\ket{1}.
\end{equation}
Observe that the eigenstates $\ket{0},\ket{1}$ of the Pauli-$Z$ operator are defined by the conditions $\theta=0$ and $\theta=\pi/2$, respectively, and that if we choose $\theta=\pi$, we obtain the state $-\ket{1}$, which is physically equivalent to $\ket{1}$. In order to avoid obtaining the same state twice, we reparameterize our coordinates by $\theta\mapsto\theta/2$, so that the resulting representation is
\begin{equation}
    \ket{\psi}=\alpha\ket{0}+\beta\ket{1}=\cos\left(\frac{\theta}{2}\right)\ket{0}+\sin\left(\frac{\theta}{2}\right)e^{i\varphi}\ket{1}.
\end{equation}
Even though the coordinates of $\ket{0}$ and $\ket{1}$ are still not unique (since $\varphi$ is free), we have managed to adjust the parameterization so that each unique point on the Bloch sphere produces a unique state. 

In this new parameterization, $\ket{0}$ is given by $\theta=0$ and $\ket{1}$ is given by $\theta=\pi$, so that these orthogonal states are antipodal to each other on the Bloch sphere. Similarly, the eigenstates $\ket{+}=\frac{1}{\sqrt{2}}(\ket{0}+\ket{1})$ and $\ket{-}=\frac{1}{\sqrt{2}}(\ket{0}-\ket{1})$ of the Pauli-$X$ operator are defined by the antipodal coordinates $\theta=\pi/2,\varphi=0$ and $\theta=\pi/2,\varphi=\pi$, respectively, and the eigenstates $\frac{1}{\sqrt{2}}(\ket{0}+i\ket{1})$ and $\frac{1}{\sqrt{2}}(\ket{0}-i\ket{1})$ of the Pauli-$Y$ operator are defined by the antipodal coordinates $\theta=\pi/2,\varphi=\pi/2$ and $\theta=\pi/2,\varphi=3\pi/2$, respectively. Thus, in the Bloch sphere, the eigenstates of a given Pauli operator are antiparallel and orthogonal to the eigenstates of the remaining Pauli operators. A rotation about the $Z$ axis is achieved by multiplying the coefficient of $\ket{1}$ by a relative phase of $e^{i\varphi}$, which can be accomplished with the matrix
\begin{equation}
    R_Z(\varphi)=e^{-i\varphi Z/2}=\begin{pmatrix}
        e^{-i\varphi/2}&0\\0&e^{i\varphi/2}
    \end{pmatrix},
\end{equation}
which leaves our state a global phase of $e^{i\varphi/2}$ away from the target state. Similarly, it can be verified that the rotations in the $X$ and $Y$ directions are accomplished by the exponentiation of the respective Pauli matrices, producing
\begin{equation}
    R_Y(\theta)=e^{-i\theta Y/2}=\begin{pmatrix}
        \cos(\theta/2)&-\sin(\theta/2)\\\sin(\theta/2)&\cos(\theta/2)
    \end{pmatrix}
\end{equation}
and
\begin{equation}
    R_X(\theta)=e^{-i\theta X/2}=\begin{pmatrix}
        \cos(\theta/2)&-i\sin(\theta/2)\\-i\sin(\theta/2)&\cos(\theta/2)
    \end{pmatrix}.
\end{equation}
There is a good reason that the rotations about these axes are given by the exponentiation of our Pauli matrices. Fix a unit vector $\hat{n}=(n_x,n_y,n_z)$ in the Bloch sphere and note that the rotations about this axis are parameterized by a single coordinate, call it $\theta$. All such rotations satisfy $R_{\hat{n}}(0)=\mathds{1}$ and the homomorphism property $R_{\hat{n}}(\theta_1+\theta_2)=R_{\hat{n}}(\theta_1)R_{\hat{n}}(\theta_2)$, thereby forming a one parameter unitary group. Moreover, these rotations satisfy $\lim_{\theta\to \theta_0}R_{\hat{n}}(\theta)\ket{\psi}=R_{\hat{n}}(\theta_0)\ket{\psi}$ for all $\ket{\psi}$ in our Hilbert space and are therefore strongly continuous. Then by Stone's theorem, there is a self-adjoint operator $H$ such that
\begin{equation}
    R_{\hat{n}}(\theta)=e^{i\theta H}.
\end{equation}
Since the matrices $\mathds{1},X,Y,Z$ form a basis for the single qubit self-adjoint operators, we have $H=a\mathds{1}+bX+cY+dZ$ for some $a,b,c,d\in\mathbb{C}$. We can set $a=0$ since the identity operation commutes with everything and the Baker-Campbell-Hausdorff formula therefore allows us to pull this term out, producing a global phase. We recover $b=-n_x/2$, $c=-n_y/2$, and $d=-n_z/2$ by comparing with the above cases, and it follows that the rotation about an arbitrary axis $\hat{n}=(n_x,n_y,n_z)$ is given by
\begin{equation}
    R_{\hat{n}}(\theta)=e^{-i\theta(n_xX+n_yY+n_zZ)/2}=\cos(\theta/2)\mathds{1}-i\sin(\theta/2)(n_xX+n_yY+n_zZ).
\end{equation}

Returning now to the problem of modeling imperfections in gates, we see that since $U_\mathds{1}$ is a rotation in the Bloch sphere, it has the form
\begin{equation}
    U_\mathds{1}=R_{\hat{n}}(\epsilon)
\end{equation}
for some unit vector $\hat{n}$ and some small angle $\epsilon$. Applying this gate $N$ times will produce the operator
\begin{equation}
    U_\mathds{1}^N=R_{\hat{n}}(N\epsilon)=\cos(\epsilon N/2)\mathds{1}-i\sin(\epsilon N/2)(n_xX+n_yY+n_zZ).
\end{equation}
As long as $\epsilon N$ is small, we can use the small angle approximation to write
\begin{equation}
    U_\mathds{1}^N\approx \left(1-\frac{\epsilon^2 N^2}{8}\right)\mathds{1}-i\frac{\epsilon N}{2}(n_xX+n_yY+n_zZ)
\end{equation}
so that the difference between the imperfect gate and the identity operation is
\begin{equation}
    \|\mathds{1}-U_\mathds{1}^N\|\approx \left\|\frac{\epsilon^2 N^2}{8}\mathds{1}+i\frac{\epsilon N}{2}(n_xX+n_yY+n_zZ)\right\|=\frac{\epsilon N}{2}\left\|\frac{\epsilon N}{4}\mathds{1}+i(n_xX+n_yY+n_zZ)\right\|,
\end{equation}
which is small whenever $\epsilon N$ is small. More importantly, given a state $\ket{\psi}=\alpha\ket{0}+\beta\ket{1}$, the probability that the state is $\ket{\psi}$ after $N$ applications of $U_\mathds{1}$ is
\begin{equation}
    P(\ket{\psi})=\lvert\bra{\psi}U_\mathds{1}^N\ket{\psi}\rvert^2\approx\left\lvert\bra{\psi}\left(\left(1-\frac{\epsilon^2 N^2}{8}\right)\mathds{1}-i\frac{\epsilon N}{2}(n_xX+n_yY+n_zZ)\right)\ket{\psi}\right\rvert^2,
\end{equation}
and so the gate fails with probability
\begin{equation}
    P_{fail}\approx1-\left\lvert\bra{\psi}\left(\left(1-\frac{\epsilon^2 N^2}{8}\right)\mathds{1}-i\frac{\epsilon N}{2}(n_xX+n_yY+n_zZ)\right)\ket{\psi}\right\rvert^2.
\end{equation}
Observe that $n_xX+n_yY+n_zZ$ is self-adjoint and therefore has real eigenvalues. It follows that $\bra{\psi}(n_xX+n_yY+n_zZ)\ket{\psi}$ is real, so that $P_{fail}$ becomes
\begin{equation}
    P_{fail}\approx1-\left(1-\frac{\epsilon^2N^2}{8}\right)^2-\frac{\epsilon^2N^2}{4}\lvert\bra{\psi}(n_xX+n_yY+n_zZ)\ket{\psi}\rvert^2=O(\epsilon^2N^2).
\end{equation}
Thus, for small enough $\epsilon N$, this source of error is negligible.

Another form of error that can arise is through an interaction with the environment, by which we mean an interaction in a larger Hilbert space which we had not taken into account. In other words, these errors are those that arise from having an imperfectly closed system. We tackled an example of this type in Section~\ref{sec:3qubitcode} to illustrate the discretization of errors. It is worth noting that the discretization of errors is a property that arises due to the nature of a stabilizer code. Let us work out a slightly more general example to illustrate this point. After the logical qubit interacts with the environment, it becomes a pure quantum state in the larger Hilbert space, but on the level of the logical Hilbert space, the state of the system becomes an ensemble of pure states $\mathcal{E}=\{(p_1,\ket{\psi_1}_L),\ldots,(p_k,\ket{\psi_k}_L)\}$ which represents a system which is in the state $\ket{\psi_i}_L$ with probability $p_i$. If we pass this system through the syndrome extraction procedure, we obtain the state
\begin{equation}\label{eq:stabilizer-difficulties}
    \frac{1}{2^\ell}\sum_{i_1,\ldots,i_\ell=0}^1\left(\prod_{j=1}^\ell(\mathds{1}+(-1)^{i_j}S_j)\right)\ket{\psi_k}_L\ket{i_1\cdots i_\ell}
\end{equation}
with probability $p_k$, where $\ell$ is the number of stabilizers $S_j$. If $\ket{\psi_k}_L$ is in the code space, then the term with $i_j=0$ for all $j$ is the only term that survives, and so a measurement will produce the all zero syndrome with certainty given that the system is in this state. Let us make the additional assumption that $\ket{\psi_k}_L=E_k\ket{\psi}_L$ for some Pauli operator $E_k$. Each such $E_k$ either commutes or anticommutes with each stabilizer, and so the only term in \eqref{eq:stabilizer-difficulties} which does not vanish is the one for which $i_j=0$ when $S_j$ and $E_k$ commute and $i_j=1$ when they anticommute. Upon measurement, the state therefore collapses to $E_k\ket{\psi}_L$ after syndrome extraction, allowing us to correct the error by applying an $E_k^\dagger$ operation. Thus, no matter which state in the ensemble is the true state of the system, the syndrome extraction procedure will force the state into one of a discrete set of error states. That is, the stabilizer code forces the system to choose between the states in the ensemble so that an error can be identified by the syndrome extraction procedure and potentially corrected. Thus, we are in principle able to correct errors due to an interaction with the environment of this form.

\section{A Brief Look at Topological Codes}\label{sec:topological-codes}
The theory of arbitrary stabilizer codes does not take into account the geometry of a particular quantum computing architecture. The individual qubits in such a device can be arranged in a number of ways with potentially limited connectivity between qubits. Therefore, to design the best possible code for a particular device, we should study codes which take this geometry into account. The so-called topological codes accomplish this by intersecting the theory of stabilizer codes with the homology theory of topological spaces as it typically appears in algebraic topology texts \cite{hatcher2002}. We will therefore begin our study of topological codes by reviewing basic homology theory in Section~\ref{sec:homology}. We then study the topological code formalism in Section~\ref{sec:top-code-form} before working out the canonical examples of the toric and planar codes in Sections~\ref{sec:toric-codes} and \ref{sec:planar-codes}. Much of the content of this section follows an approach similar to that of Fujii \cite{fujii2015}.

\subsection{Basic Homology Theory}\label{sec:homology}

A \textbf{topology} on a set $X$ is a collection $\tau$ of subsets of $X$ satisfying
\begin{enumerate}
    \item $\emptyset\in\tau$ and $X\in\tau$.
    \item If $\{U_\alpha\}_{\alpha}\subset\tau$, then $\cup_{\alpha}U_\alpha\in\tau$.
    \item If $\{U_1,\ldots,U_n\}\subset\tau$, then $\cap_{i=1}^nU_i\in\tau$.
\end{enumerate}
In words, a topology on $X$ is a collection of subsets of $X$, called \textbf{open sets}, that contains the empty set and the total set $X$, is closed under arbitrary unions of its elements, and is closed under finite intersections of its elements.
\begin{example}
    Let $X=\{1,2,3\}$ and define $\tau=\{\emptyset,\{1\},\{2,3\},X\}$. Then $\tau$ is a topology on $X$ since $\{1\}\cup\{2,3\}=X\in\tau$ and $\{1\}\cap\{2,3\}=\emptyset\in\tau$. On the other hand, if we instead define $\tau = \{\emptyset,\{1\},\{2,3\},\{3\},X\}$, then $\tau$ is not a topology on $X$ because $\{1\}\cup\{3\}=\{1,3\}$ is not an element of $\tau$.
\end{example}
A \textbf{topological space} is a set $X$ equipped with a topology $\tau$. Whenever the topology is clear from the context, we refer to the topological space by its underlying set $X$. Topological spaces are important for a variety of reasons, not the least of which is the ability to define continuity in a precise way. Indeed, we call a map $f:X\to Y$ between topological spaces \textbf{continuous} if the preimage of every open set in $Y$ is again open in $X$. While the notion of continuity is on its own an important topic in mathematical analysis, it also serves as the clear structure preserving map between topological spaces. 

\begin{definition}
    A \textbf{homeomorphism} is a continuous bijection $f:X\to Y$ whose inverse is also continuous. We call $X$ and $Y$ homeomorphic and write $X\cong Y$ when such a map exists.
\end{definition}

The study of topology is often termed ``rubber sheet geometry'' because it is usually the case that our topological spaces can be viewed as some geometric structure made of a material that can be stretched and compressed but not torn. This perspective has its limitations but is more often a convenient picture to keep in our minds as we attempt to distinguish between topological spaces. As is so often the case for an interesting mathematical structure, there aren't any great necessary and sufficient conditions to guarantee that two spaces are homeomorphic, but there are many necessary conditions that can be defined. Such conditions are often called \textbf{topological invariants}. Here we will discuss one such topological invariant known as \textbf{homology}. Actually, there are many types of homology, but we will endeavor to describe only one here. The reader is referred to \cite{hatcher2002} for further details.

Consider a graph $X=(\mathcal{V},\mathcal{E})$ consisting of a set of vertices $\mathcal{V}$ and a set of edges $\mathcal{E}$. Let us denote the vertices by $v_i$ for $i=1,\ldots,\lvert\mathcal{V}\rvert$. The edges can be regarded as pairs of vertices which we will denote $e_{ij}=(v_i,v_j)$. When the ordering of the vertices of an edge matters, we call the graph \textbf{directed} or a \textbf{digraph}. As an example, consider the digraph in Figure~\ref{fig:homology1}. It has vertices $v_1,\ldots, v_4$ and edges $e_{12}, e_{23}, e_{34}, e_{41}, e_{24}$. Let us include also the faces $f_{124}$ and $f_{234}$ enclosed by the edges $e_{41}, e_{12}, e_{24}$ and $e_{23}, e_{34}, -e_{42}$, respectively. Here we have included a factor of $-1$ in front of $e_{42}$ to denote that we are traversing the edge in the reverse direction. With the additional structure of the faces, we have defined an object called a \textbf{cell complex}. We think of each vertex as a 0-dimensional open ball also called a \textbf{0-cell}, each edge as a 1-dimensional ball called a \textbf{1-cell}, each face as a 2-dimensional ball called a \textbf{2-cell}, and so on. If we now identify the vertical 1-cells $e_{41}$ and $-e_{23}$ (Pacman style), we see that the cell complex becomes a cylinder. The two horizontal 1-cells form the end circles of the cylinder and the diagonal 1-cell spirals down the surface. If we further identify $e_{12}$ with $-e_{43}$, the two end circles of the cylinder coincide, so that an ant traveling towards the top of the cylinder reappears at the bottom once it reaches the end. The cylinder has therefore become a torus (a hollow donut shape), which we denote by the symbol $\mathbb{T}^2$. Another way to see this is to connect the circles (making sure that the identified 1-cells line up with the correct orientation). With these identifications, the cell complex in Figure~\ref{fig:homology1} now represents the two-dimensional torus, and so one might correctly guess that a study of the cell complex will yield information about the topology of the torus. To this end, let us give a more algebraic description of what is happening in this cell complex.

\begin{figure}
    \centering
    \includegraphics[width=2.25in]{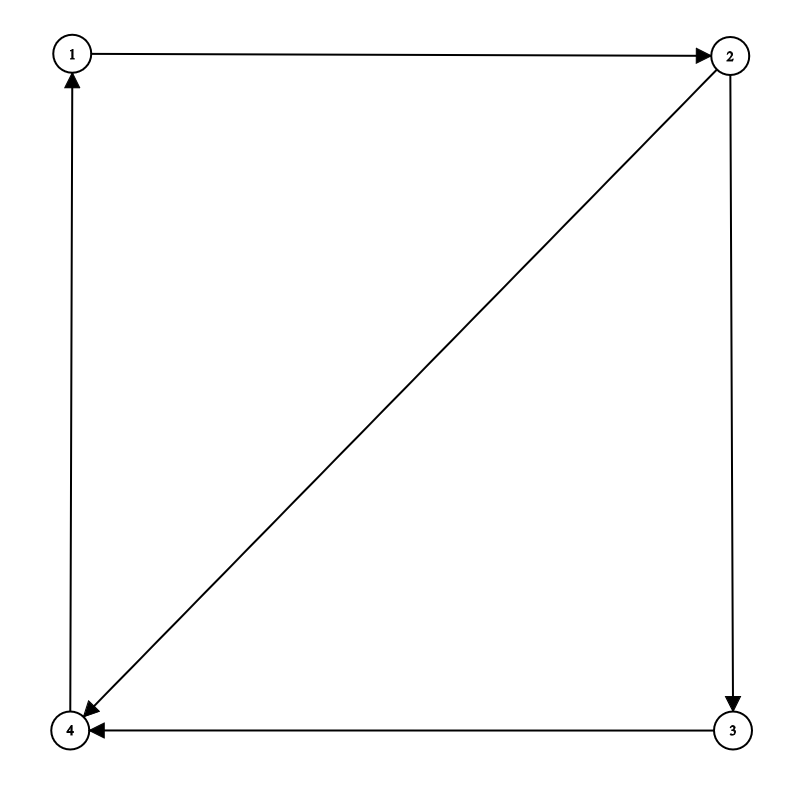}
    \caption{
        Example of a digraph with vertices $v_1, v_2, v_3, v_4$ and edges $e_{12}, e_{23}, e_{34}, e_{41},e_{24}$.
    }
    \label{fig:homology1}
\end{figure}

Linear combinations $\sum_ic_iv_i$ of the 0-cells with $c_i\in\mathbb{Z}$ are called \textbf{0-chains}, and the collection of all 0-chains is an abelian group under componentwise addition. We will let $C_0(X,\mathbb{Z})$ denote this group. Whenever the space $X$ and set of coefficients $\mathbb{Z}$ are clear from the context, we omit them and write $C_0$. Similarly, the collection of all linear combinations of 1-cells over $\mathbb{Z}$ forms the abelian group $C_1$ of \textbf{1-chains} and the linear combinations of 2-cells over $\mathbb{Z}$ form the abelian group $C_2$ of all \textbf{2-chains}. It is hopefully clear from the definition of each group that they are isomorphic to a direct product of several copies of $\mathbb{Z}$ (to be precise, $C_0\cong\mathbb{Z}^{\lvert\mathcal{V}\rvert}$, $C_1\cong\mathbb{Z}^{\lvert\mathcal{E}\rvert}$, and so on). Let us now define a map $\partial_1:C_1\to C_0$ by $\partial_1(e_{ij})=v_j-v_i$ and extending linearly. Similarly, define a map $\partial_2:C_2\to C_1$ by sending a 2-cell to the sum of its boundary 1-cells traversed in the clockwise direction (traversing a 1-cell pointed in the opposite direction produces a minus sign). For example, $\partial_2(f_{234})=e_{23}+e_{34}-e_{24}$. These maps, which can naturally be extended to higher dimensions, are called $\textbf{boundary}$ maps for obvious reasons and by convention we will take $\partial_0=0$. Since there are no 3-cells in our cell complex, we have that $C_3$ is trivial and $\partial_3=0$. By composing these maps, we create the sequence
\begin{equation}
    0\stackrel{0}{\to} C_2\stackrel{\partial_2}{\to} C_1\stackrel{\partial_1}{\to} C_0\stackrel{0}{\to}0.
\end{equation}

\begin{proposition}\label{prop:boundary-comp}
    The composition of boundary maps is trivial: $\partial_1\circ\partial_2=0$.
\end{proposition}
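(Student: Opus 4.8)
The plan is to exploit the fact that $\partial_1$ and $\partial_2$ are both group homomorphisms between free abelian groups. Since $C_2$ is freely generated by the 2-cells (recall $C_2\cong\mathbb{Z}^{\lvert F\rvert}$, with $F$ the set of faces), a homomorphism out of $C_2$ vanishes precisely when it vanishes on each generating 2-cell. It therefore suffices to check that $\partial_1(\partial_2(f))=0$ for each face $f$ of the cell complex; linearity then extends the conclusion to all of $C_2$.

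First I would record the key structural observation: the image $\partial_2(f)$ of any 2-cell is a signed sum of the edges bounding $f$, and these edges form a closed loop when traversed in the prescribed (clockwise) orientation, with a sign $-1$ attached to any edge whose stored orientation runs opposite to the direction of traversal. Writing the boundary loop as a head-to-tail sequence of oriented edges, the crucial point is that consecutive edges share a vertex and the loop closes back on its starting vertex.

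Next I would apply $\partial_1$ and watch the terms telescope. Using $\partial_1(e_{ij})=v_j-v_i$, each edge in the loop contributes its head vertex with a $+$ and its tail vertex with a $-$; because the loop is closed and each vertex serves as the head of one edge and the tail of the next, every vertex appears once positively and once negatively, so the total is $0$. Concretely, for the face $f_{234}$ in Figure~\ref{fig:homology1} one has $\partial_2(f_{234})=e_{23}+e_{34}-e_{24}$, and then
\[
\partial_1(\partial_2(f_{234}))=(v_3-v_2)+(v_4-v_3)-(v_4-v_2)=0,
\]
with the analogous cancellation occurring for $f_{124}$, whose boundary $e_{41}+e_{12}+e_{24}$ maps under $\partial_1$ to $(v_1-v_4)+(v_2-v_1)+(v_4-v_2)=0$.

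The main obstacle is purely bookkeeping: one must verify that the orientation signs appearing in the definition of $\partial_2$ are consistent with the edge orientations encoded by $\partial_1(e_{ij})=v_j-v_i$, so that the head-to-tail cancellation genuinely occurs rather than leaving a nonzero residue. Once the sign conventions are pinned down, the cancellation is automatic, and the very same telescoping applies to a 2-cell bounded by any number of edges. This is why the identity $\partial_1\circ\partial_2=0$ holds in general, and not merely for the triangular faces of this particular example; indeed it is precisely this vanishing that makes the sequence a \emph{chain complex} and permits the definition of homology.
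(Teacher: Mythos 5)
Your proof is correct and follows essentially the same route as the paper's: reduce to a single 2-cell, write its boundary as a closed head-to-tail loop of (signed) edges, and observe that applying $\partial_1$ produces a telescoping sum in which every vertex appears once with each sign. Your explicit treatment of the orientation signs and the worked check on $f_{124}$ and $f_{234}$ merely make precise what the paper's proof handles implicitly via its indexing convention $v_{i,2}=v_{i+1,1}$, $v_{n,2}=v_{1,1}$.
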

\begin{proof}
    Let $f$ be the 2-cell defined by the sequence of 1-cells $e_1,\ldots,e_n$, and let 1-cell $e_i$ be defined by the pair of 0-cells $v_{i,1},v_{i,2}$. Then $\partial_2(f)=e_1+\cdots+e_n$ and so $\partial_1(\partial_2(f))=\sum_{i=1}^n(v_{i,2}-v_{i,1})$. But $v_{i,2}=v_{i+1,1}$ for $i=1,\ldots,n-1$ and $v_{n,2}=v_{1,1}$. Thus, the sum telescopes and we have $\partial_1\partial_2(f)=0$.
\end{proof}

Proposition~\ref{prop:boundary-comp} tells us that the image of $\partial_2$ is contained in the kernel of $\partial_1$. In fact, the image of $\partial_2$ is trivially a normal subgroup of $\ker(\partial_1)$. Indeed, if $k\in\ker(\partial_1)$ and $m\in\imag(\partial_2)$, then $k+m-k=m\in\imag(\partial_2)$. This allows us to take the quotient $H_1(X)=\ker(\partial_1)/\imag(\partial_2)$, and it is this quotient group that is called the \textbf{first homology group} of the topological space $X$ defined by the cell complex. The result of this proposition extends to the boundary maps between any two consecutive dimensions. For the cell complex (without the torus 1-cell identifications), we have $\partial_2(f_{124})=e_{12}+e_{24}+e_{41}$ and $\partial_2(f_{234})=e_{23}+e_{34}-e_{24}$, and these 1-chains are clearly independent. Now computing the kernel of $\partial_1$, we must solve
\begin{equation}
    \partial_1(c_{12}e_{12}+c_{23}e_{23}+c_{34}e_{34}+c_{41}e_{41}+c_{24}e_{24})=0,
\end{equation}
which is equivalent to
\begin{equation}
    (c_{41}-c_{12})v_1+(c_{12}-c_{23}-c_{24})v_2+(c_{23}-c_{34})v_3+(c_{24}+c_{34}-c_{41})v_4=0.
\end{equation}
Solving this linear system of equations shows that $\ker(\partial_1)$ is generated by the 1-chains $e_{12}+e_{23}+e_{34}+e_{41}$ and $e_{24}-e_{23}-e_{34}$. Thus, in the quotient $H_1(X)=\ker(\partial_1)/\imag(\partial_2)$, we make the identifications $e_{24}=e_{23}+e_{34}$ and $e_{41}=-e_{12}-e_{23}-e_{34}$, which render the generators of the image trivial. It then follows that $H_1(X)=\{0\}$ is the trivial group. For the sake of clarity, it is worth being overly pedantic here. The group $\imag(\partial_2)$ has two generators over $\mathbb{Z}$ and is therefore isomorphic to $\mathbb{Z}\times\mathbb{Z}$. Similarly, the group $\ker(\partial_1)$ has two generators and is therefore isomorphic to $\mathbb{Z}\times\mathbb{Z}$. The elements of the quotient group $\ker(\partial_1)/\imag(\partial_2)$ are the cosets $e_{12}+e_{24}+e_{41}+\imag(\partial_2)=\imag(\partial_2)$ and $e_{23}+e_{34}-e_{24}+\imag(\partial_2)=\imag(\partial_2)$. Thus, the quotient group contains only the identity element and the first homology group is therefore trivial.

Let us now compute the zeroth homology group for the cell complex. Recalling that $\partial_0=0$, we must have that $\ker(\partial_0)=C_0$, which is generated by all four 0-cells. On the other hand, we have $\partial_1(e_{12})=v_2-v_1$, $\partial_1(e_{23})=v_3-v_2$, $\partial_1(e_{34})=v_4-v_3$, $\partial_1(e_{41})=v_1-v_4$, and $\partial_1(e_{24})=v_4-v_2$. Since $v_1-v_4$ is the sum of three other 1-cell images, it is not independent, and we can discard it. Similarly, we can discard $v_4-v_2$, and this leaves us with the three independent generators $v_2-v_1$, $v_3-v_2$, and $v_4-v_3$. Thus, $\imag(\partial_1)\cong\mathbb{Z}\times\mathbb{Z}\times\mathbb{Z}$. Taking the quotient amounts to making the identification $v_1=v_2=v_3=v_4$, which leaves a single generator. Thus, the zeroth homology group of the cell complex $X$ is $H_0(X)=\mathbb{Z}$.

Suppose now that we make the torus 1-cell identifications $e_{12}\equiv -e_{34}$ and $e_{41}\equiv -e_{23}$. Then all the 0-cells are also identified as a result: $v_1\equiv v_2\equiv v_3\equiv v_4$. Let us now compute the first homology group of the torus. The image of $\partial_2$ is generated by $e_{12}+e_{24}+e_{41}$ (the image of the other 2-cell is just the negation of this generator due to the torus 1-cell identifications). Meanwhile, all 0-cells have been identified, and so the boundary of any 1-cell is zero (for example, $\partial_1(e_{12})=v_2-v_1=v_1-v_1=0$). Thus, the kernel of $\partial_1$ is generated by $e_{12}, e_{24}, e_{41}$. Now taking the quotient produces $H_1(\mathbb{T}^2)=\mathbb{Z}\times\mathbb{Z}$. The zeroth homology group is much easier to compute. Since the boundary of each 1-cell vanishes, the image of $\partial_1$ is zero, and the kernel of the zero map is $C_0$, which is generated by the singular 0-cell $v_1$. Thus, $H_0(\mathbb{T}^2)\cong\mathbb{Z}$.

Comparing our results for $X$ and $\mathbb{T}^2$ tells us that these topological spaces are not homeomorphic because they have different homology groups and homology groups are preserved under a homeomorphism. Let us now see what happens when we remove one of the 2-cells of the cell complex $X$ (no edge identifications). In particular, let us remove $f_{234}$. The image of $\partial_2$ is now generated by $\partial_2(f_{124})=e_{12}+e_{24}+e_{41}$ alone, but the kernel of $\partial_1$ is still generated by $e_{12}+e_{23}+e_{34}+e_{41}$ and $e_{24}-e_{23}-e_{34}$. Thus, taking the quotient, the first homology group becomes $H_1(X\setminus\{f_{234}\})=\mathbb{Z}$. Meanwhile, the zeroth homology group, which is independent of the 2-chains (sums of 2-cells), remains unchanged. This change in the first homology group is a reflection of the fact that a hole has been created by the absence of the 2-cell $f_{234}$. Indeed, the first homology group is strongly linked to the number of independent holes with a one dimensional boundary. Similarly, the second homology group is linked to the number of independent holes with a two dimensional boundary (a sphere).

The group $C_k$ of $k$-chains actually has more than a group structure. It is in fact a generalization of a vector space known as a \textbf{module}, which drops the assumption that the underlying scalars form a field and instead allows for the more general algebraic structure known as a \textbf{ring}. In the above discussion, $C_k$ was a module over the ring of integers $\mathbb{Z}$, but this is not the only choice one can make. Take, for example, the field $\mathbb{Z}_2$ consisting of the integers under addition modulo 2. The coefficients of the generators of the $k$-chains are then 0 or 1 (or more precisely, the equivalence classes of 0 and 1). Consider the cell complex $X$ again and let us compute its zeroth homology group with respect to $\mathbb{Z}_2$. A set of generators for the image of $\partial_1$ is $v_1+v_2, v_2+v_3, v_3+v_4$ and the kernel of $\partial_0$ is $C_0$, which is generated by $v_1,v_2,v_3,v_4$. Now taking the quotient amounts to identifying $v_1=v_2=v_3=v_4$ so that $H_0(X,\mathbb{Z}_2)\cong\mathbb{Z}_2$. Let us also compute the first homology with respect to $\mathbb{Z}_2$. The image of $\partial_2$ is now generated by $e_{12}+e_{24}+e_{41}$ and $e_{23}+e_{34}+e_{24}$, while the kernel of $\partial_1$ is generated by $e_{12}+e_{23}+e_{34}+e_{41}$ and $e_{12}+e_{41}+e_{24}$. Now passing to the quotient amounts to identifying $e_{12}=e_{24}+e_{41}$ and $e_{23}=e_{34}+e_{24}$, which makes both generators of the kernel trivial. Thus, $H_1(X,\mathbb{Z}_2)=\{0\}$ is the trivial group. From now on, when it is unclear, we will specify the underlying ring $R$ by writing $C_k(X,R)$ for the module of $k$-chains in $X$ with coefficients in $R$.

\subsection{The Topological Code Formalism}\label{sec:top-code-form}

Topological codes form a beautiful intersection between the homology theory developed in the late 19th and early 20th centuries and the much more recent theory of quantum error correcting codes. Our goal will be to encode the information defining our code using homology theory. Here we will work with codes defined by a cell decomposition with $n$-cells no greater than $n=2$, and so we are really working with so-called \textbf{surface codes}, but much of this formalism is extendible to higher $n$. Note that we will drop the subscript on the boundary operator $\partial$ because the degree of the chain it acts on will be clear from the context. We begin by drawing a cell complex $\mathcal{X}\footnote{We are using $\mathcal{X}$ in place of $X$ so that the cell complex is not confused with the Pauli-$X$ operator.}=(\mathcal{V},\mathcal{E},\mathcal{F})$ (here, $\mathcal{F}$ is the collection of all 2-cells) and identifying each 1-cell with a physical qubit. We will identify the 2-cells $f$ of the cell complex with the operator $Z_f:=\prod_{e\in\partial_2(f)}Z_e$, where $Z_{e}$ is the Pauli-$Z$ operator acting on the qubit identified with 1-cell $e$ and the product is over all 1-cells in the boundary of $f$. Similarly, we identify all 0-cells $v$ with the operator $X_v:=\prod_{e:v\in\partial_1(e)}X_e$, where the product is over all 1-cells with boundary containing $v$. We then take the stabilizer of the code to be
\begin{equation}
    \mathcal{S}:=\langle\{X_v:v\in\mathcal{V}\}\cup\{Z_f:f\in\mathcal{F}\}\rangle,
\end{equation}
that is, we define the stabilizer to be the abelian subgroup of the Pauli group generated by the set of all $X_v,Z_f$. As we learned in Section~\ref{sec:stabilizer}, the stabilizer completely defines the $\mathcal{S}$-symmetric subspace, which we take to be the code space.

With these identifications, the module $C_0(\mathcal{X},\mathbb{Z}_2)$ (of all 0-chains) is generated by all Pauli-$X$ words in the stabilizer, $C_1(\mathcal{X},\mathbb{Z}_2)$ (all 1-chains) is generated by the qubits associated to the 1-cells, and $C_2(\mathcal{X},\mathbb{Z}_2)$ (all 2-chains) is generated by all Pauli-$Z$ words in the stabilizer. Explicitly, there is an injective group homomorphism $X:C_0(\mathcal{X},\mathbb{Z}_2)\to\mathcal{S}$ defined by $X(v)=X_v$ and extending linearly, as well as an injective group homomorphism $Z:C_2(\mathcal{X},\mathbb{Z}_2)\to\mathcal{S}$ defined by $Z(f)=Z_f$. On the level of the module of 1-chains, there is a bijection between $\varphi:\mathbb{Z}_2^{\lvert\mathcal{E}\rvert}\to C_1(\mathcal{X},\mathbb{Z}_2)$ defined by $\varphi(n_1,\ldots,n_{\lvert\mathcal{E}\rvert})=n_1e_1+\cdots+n_{\lvert\mathcal{E}\rvert}e_{\lvert\mathcal{E}\rvert}$, where we have labeled the 1-cells $e_i$, $i=1,\ldots,\lvert\mathcal{E}\rvert$. The practical effect of all of these identifications is that a Pauli-$X$ word can be thought of as a 0-chain, a Pauli-$Z$ word can be thought of as a 2-chain, and every collection of qubits can be thought of as a 1-chain.

\begin{figure}
    \centering
    \includegraphics[width=3in]{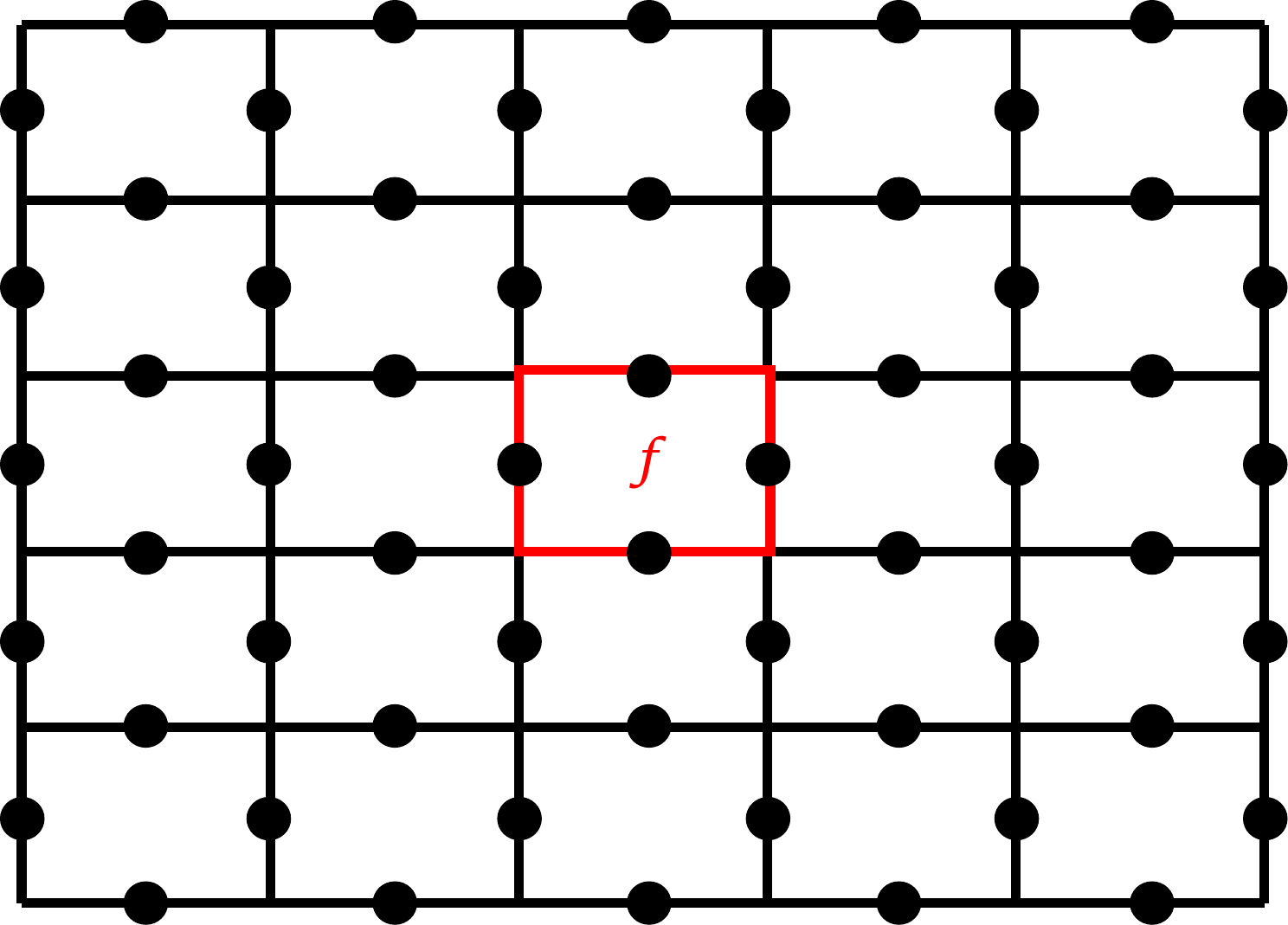}
    \caption{
        Lattice diagram for topological code. Black circles represent qubits and the red square corresponds to the operator $Z_f$.
    }
    \label{fig:grid-face}
\end{figure}
\begin{figure}
    \centering
    \includegraphics[width=3in]{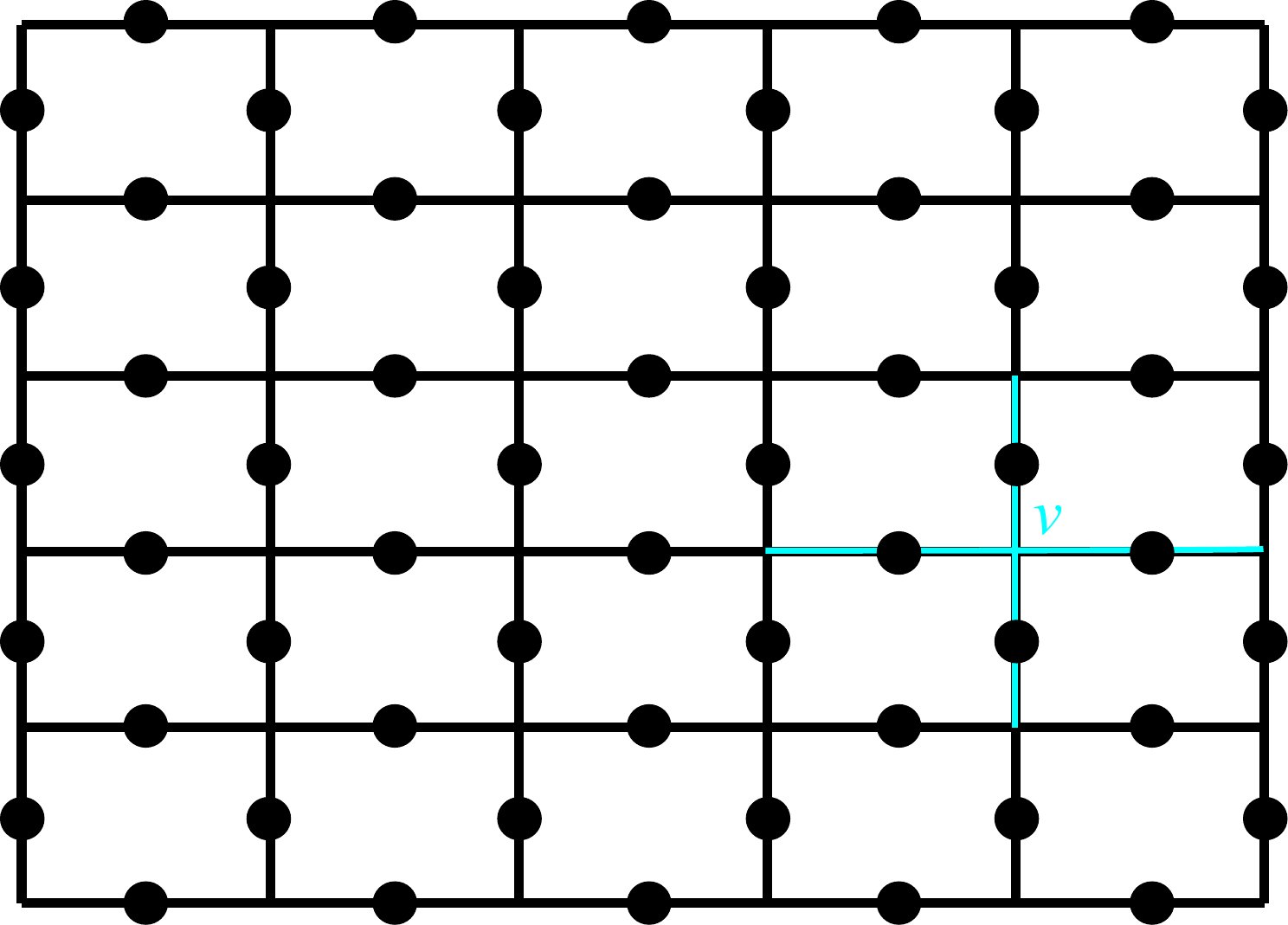}
    \caption{
        Lattice diagram for topological code. Black circles represent qubits and the blue cross corresponds to the operator $X_v$.
    }
    \label{fig:grid-x}
\end{figure}

Let us restrict our attention from a general cell complex to a lattice $\mathcal{X}=L$ as in Figure~\ref{fig:grid-face}. Each of the black dots lying on a 1-cell represents a qubit. The 2-cell enclosed by the red square corresponds to the Pauli-$Z$ word $Z_f$ associated to $f$. Likewise, in Figure~\ref{fig:grid-x}, the blue cross represents the Pauli-$X$ word $X_v$ associated to $v$. The nice feature of a lattice is that it has a dual structure which is easy to visualize. Indeed, for every lattice $L$, there is a dual lattice $L^*$ given by drawing a 1-cell between centers of adjacent 2-cells so that the 0-cells in $L$ become 2-cells in $L^*$ and 2-cells in $L$ become 0-cells in $L^*$. More precisely, there are isomorphisms $C_0(L,\mathbb{Z}_2)\cong C_2(L^*,\mathbb{Z}_2)$, $C_1(L,\mathbb{Z}_2)\cong C_1(L^*,\mathbb{Z}_2)$, and $C_2(L,\mathbb{Z}_2)\cong C_0(L^*,\mathbb{Z}_2)$, and this allows us to define a \textbf{co-boundary} $\partial^*_1:C_0(L,\mathbb{Z}_2)\to C_1(L,\mathbb{Z}_2)$ on $L$ by
\begin{equation}
    C_0(L,\mathbb{Z}_2)\stackrel{\cong}{\to}C_2(L^*,\mathbb{Z}_2)\stackrel{\partial_2}{\to}C_1(L^*,\mathbb{Z}_2)\stackrel{\cong}{\to}C_1(L,\mathbb{Z}_2).
\end{equation}
That is, the co-boundary applied to a 0-cell in $L$ produces a 1-cell in $L$ by first mapping the 0-cell in $L$ to the corresponding 2-cell in $L^*$, then applying the boundary map $\partial_2$ in $L^*$, and finally mapping the resulting 1-cell in $L^*$ to the corresponding 1-cell in $L$. Observe that $\partial^*(v)$ is the 1-chain consisting of all edges touching $v$. This allows us to write
\begin{equation}
    X_v = \prod_{e\in\partial^*(v)}X_e,
\end{equation}
where the product is now over all 1-cells in the co-boundary of $v$. Whereas in the original definitions of $Z_f$ and $X_v$, the 0 and 2-cells appeared to be treated differently, this co-boundary version of the $X$ homomorphism reveals the nice dual structure at play. In fact, we can further emphasize this duality by redefining the homomorphisms $X$ and $Z$ by $Z:C_1(L,\mathbb{Z}_2)\to G_n$ and $X:C_1(L^*,\mathbb{Z}_2)\to G_n$ so that $Z$ takes a 1-chain in the lattice $L$ to the corresponding element of the Pauli group and $X$ takes a 1-chain in the dual lattice $L^*$ to the Pauli group. The stabilizer is then generated by all elements of the form $X_{\partial^*(v)}$ and $Z_{\partial(f)}$.

Let $c\in C_1(L,\mathbb{Z}_2)$ and let $d\in C_1(L^*,\mathbb{Z}_2)$. Say $c=\sum_ic_ie_i$ for some $c_i\in\mathbb{Z}_2$ and some 1-cells $e_i\in L$ and $d=\sum_id_i\Tilde{e}_i$ for some $d_i\in\mathbb{Z}_2$ and some 1-cells $\Tilde{e}_i\in L^*$. The product $c\cdot d=\sum_ic_id_i$ is called the \textbf{intersection number} between the 1-chains, and takes the value 1 if $c$ and $d$ intersect at an odd number of qubits and the value 0 otherwise (because we are working in $\mathbb{Z}_2$). Recall that Pauli-$X$ and Pauli-$Z$ operations on the same qubit anticommute, while the same operations on different qubits obviously commute. Thus, in total, $Z_c$ commutes with $X_d$ if and only if $c$ and $d$ intersect at an even number qubits. Equivalently, $Z_c$ and $X_d$ commute if and only if their intersection number is 0. Thus, we have the following useful lemma.

\begin{lemma}
    Let $c$ be a 1-chain in $L$(not necessarily a boundary or co-boundary), $d$ a 1-chain in $L^*$, $v$ a 0-cell, and $f$ a 2-cell. Then $[Z_c,X_{\partial(v^*)}]=0$ if and only if $\partial(v^*)\cdot c=0$, where $v^*$ denotes the 2-cell in $L^*$ dual to the 0-cell $v$ in $L$. Similarly, $[Z_{\partial(f)},X_d]=0$ if and only if $\partial(f)\cdot d=0$.
\end{lemma}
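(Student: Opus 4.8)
The plan is to prove a single, more general commutation identity for arbitrary $X$- and $Z$-type operators and then obtain both assertions of the lemma as immediate specializations. Concretely, I would first show that for any $1$-chain $c=\sum_i c_i e_i\in C_1(L,\mathbb{Z}_2)$ and any $1$-chain $d=\sum_i d_i\tilde e_i\in C_1(L^*,\mathbb{Z}_2)$, the associated Pauli words $Z_c=\prod_i Z_{e_i}^{c_i}$ and $X_d=\prod_i X_{\tilde e_i}^{d_i}$ obey
\begin{equation}
    Z_c X_d=(-1)^{c\cdot d}X_d Z_c,
\end{equation}
so that $[Z_c,X_d]=0$ precisely when the intersection number $c\cdot d$ vanishes in $\mathbb{Z}_2$. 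This is exactly the statement recorded in the paragraph preceding the lemma, so in the writeup it can either be cited or reproved in one line; the lemma then follows by choosing the chains appropriately.

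To establish the sign, I would argue qubit by qubit using the identification $C_1(L,\mathbb{Z}_2)\cong C_1(L^*,\mathbb{Z}_2)$ that matches each edge of $L$ with the edge of $L^*$ crossing it, i.e.\ with the same physical qubit. The operators $Z_c$ and $X_d$ are tensor products of single-qubit Paulis, and on a given qubit $Z_e$ and $X_{\tilde e}$ anticommute exactly when both chains are supported there (contributing $c_i d_i=1$) and commute otherwise. Commuting $X_d$ past $Z_c$ therefore accumulates one factor of $-1$ for each shared qubit, giving the total sign $(-1)^{\sum_i c_i d_i}=(-1)^{c\cdot d}$.

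With the general identity in hand, the first claim is the case $d=\partial(v^*)$: here $v^*$ is the $2$-cell of $L^*$ dual to the $0$-cell $v$, its boundary $\partial(v^*)$ is a $1$-chain in $L^*$, and under the co-boundary description $X_{\partial(v^*)}=X_v$. The identity gives $[Z_c,X_{\partial(v^*)}]=0$ iff $c\cdot\partial(v^*)=0$, and since the intersection product $\sum_i c_i d_i$ is symmetric over $\mathbb{Z}_2$ we may write this as $\partial(v^*)\cdot c=0$. The second claim is the case $c=\partial(f)$, where $\partial(f)$ is a $1$-chain in $L$ and $Z_{\partial(f)}=Z_f$; the identity immediately yields $[Z_{\partial(f)},X_d]=0$ iff $\partial(f)\cdot d=0$.

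I do not expect any serious obstacle here, since the substantive content --- translating Pauli (anti)commutation into the mod-$2$ intersection number --- was already carried out just before the lemma. The only point requiring care is bookkeeping across the two lattices: one must use the isomorphism $C_1(L,\mathbb{Z}_2)\cong C_1(L^*,\mathbb{Z}_2)$ to regard $c$ and $d$ as supported on a common set of qubits, so that the formal product $c\cdot d=\sum_i c_i d_i$ genuinely counts shared qubits and the symmetry $c\cdot d=d\cdot c$ is meaningful. Once that identification is fixed, both equivalences are direct specializations of the single sign computation.
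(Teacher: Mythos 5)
Your proposal is correct and matches the paper's own reasoning: the paper proves this lemma implicitly via the paragraph immediately preceding it, which establishes exactly your general identity that $Z_c$ and $X_d$ commute if and only if the mod-$2$ intersection number $c\cdot d$ vanishes (one sign of $-1$ per shared qubit), and the lemma is then just the specialization to $d=\partial(v^*)$ and $c=\partial(f)$. Your added care about identifying $C_1(L,\mathbb{Z}_2)\cong C_1(L^*,\mathbb{Z}_2)$ so the two chains live on a common set of qubits is a reasonable tightening of the same argument, not a departure from it.
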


It is worth stepping back for a moment to summarize what we have learned thus far and provide some further context. Our goal is to encode information into the topological degrees of freedom of a lattice. We have designed our lattice so that every 1-cell is associated to a qubit and the homomorphism $Z:C_1(L,\mathbb{Z}_2)\to G_n$ associates to every 1-chain an operation that acts on every qubit appearing in the 1-chain with a Pauli-$Z$ gate. Meanwhile, the dual lattice comes equipped with a homomorphism $X:C_1(L^*,\mathbb{Z}_2)\to G_n$ which associates to every 1-chain in $L^*$ an operation that acts on every qubit appearing in the 1-chain with a Pauli-$X$ gate. We define the stabilizer to be the abelian subgroup of the Pauli group generated by all Pauli operations of the form $Z_{\partial(f)}$ and $X_{\partial(v^*)}$, where $f$ is a 2-cell and $v$ is a 0-cell. These operations therefore correspond to the squares in $L$ and $L^*$, respectively. Notice that $Z_{\partial(f)}$ and $X_{\partial(v^*)}$ always intersect at an even number of points because the squares in the lattice intersect with a square in the dual lattice at either zero or two points. This confirms that the generators of the stabilizer do indeed commute. So, what does all of this have to do with error correction? The stabilizer group defines a stabilizer code completely. The detectable errors are those that anticommute with elements of the stabilizer group, and the logical operations are those that commute with the stabilizer. This means that every 1-chain (i.e. every collection of edges on the lattice or dual lattice) that has an odd intersection number with a generator of the stabilizer (i.e. a square in the dual lattice or lattice) corresponds to a detectable error. Meanwhile, every 1-chain with an even intersection number with all generators of the stabilizer is a logical operation that can be applied to the logical qubits encoded in the lattice.

\begin{figure}
    \centering
    \includegraphics[width=3in]{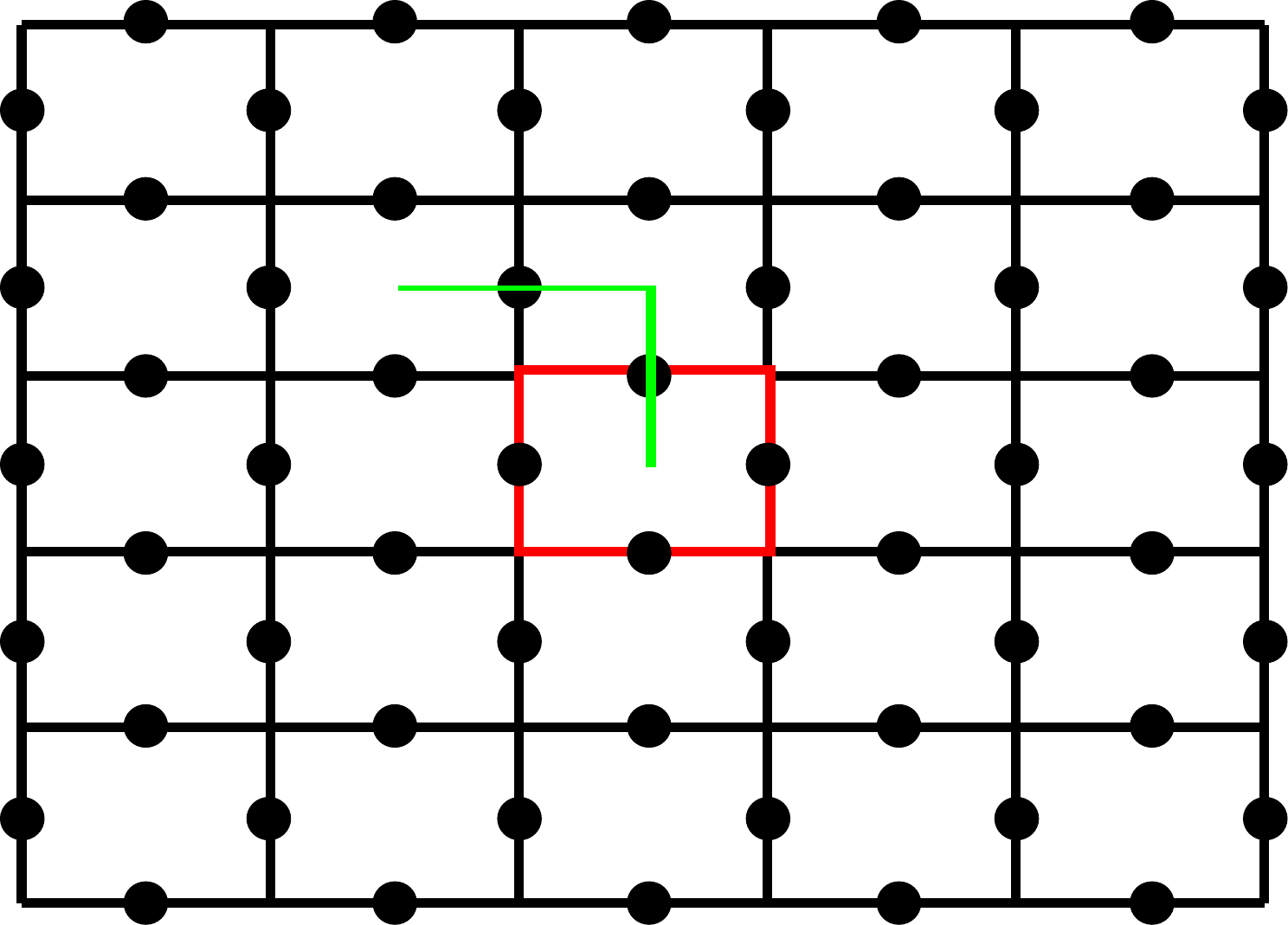}
    \caption{
        Example of an error (green) which doesn't commute with a generator of the stabilizer (red).
    }
    \label{fig:error-anticommute}
\end{figure}
\begin{figure}
    \centering
    \includegraphics[width=3in]{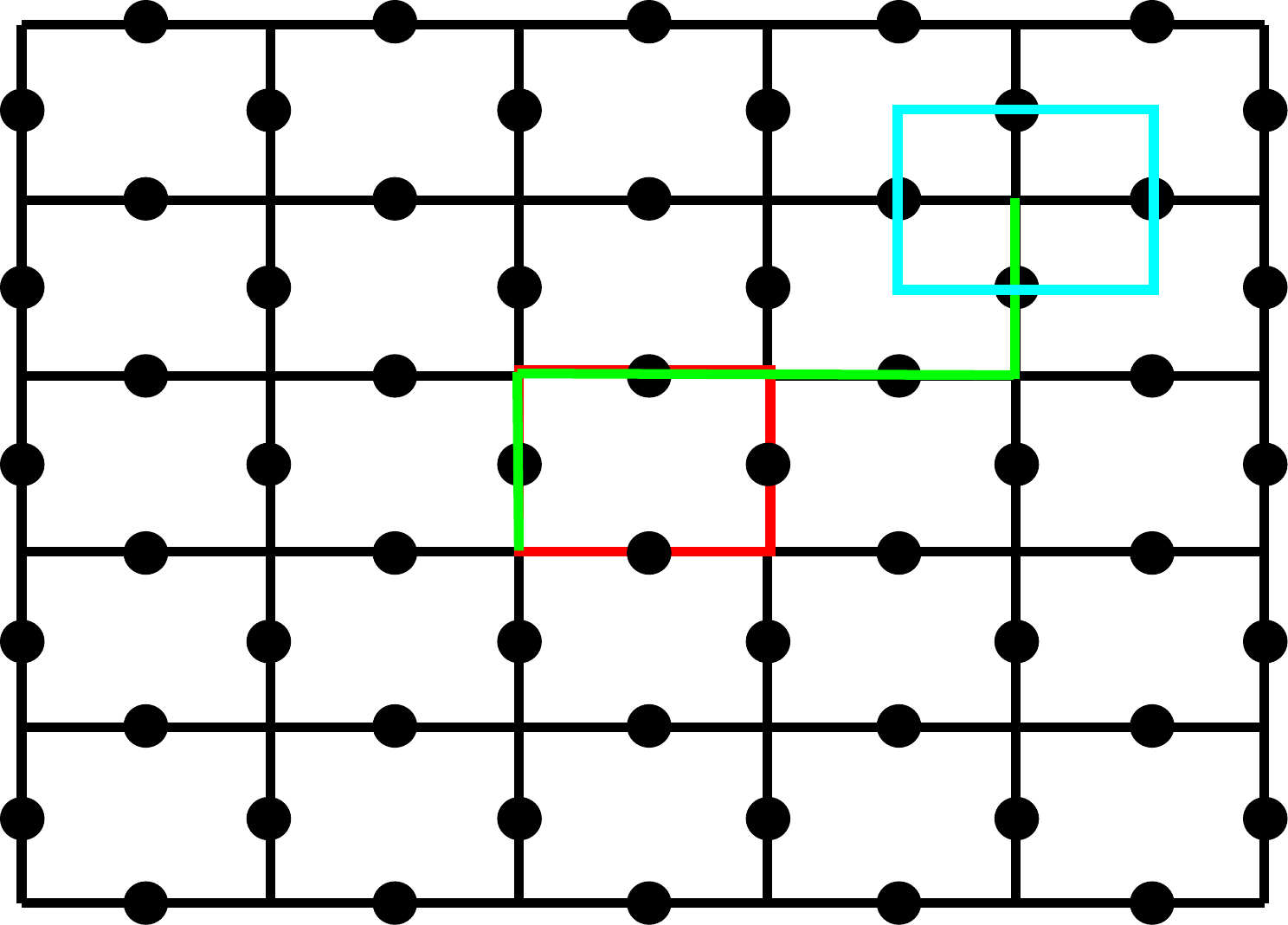}
    \caption{
        Example of an error (green) which commutes with a generator of the stabilizer in the lattice (red) but not a generator of the stabilizer in the dual lattice (blue).
    }
    \label{fig:error-both}
\end{figure}

Some examples are in order. In Figure~\ref{fig:error-anticommute}, an error is shown in green which intersects with the generator of the stabilizer in red at only one qubit. The error therefore anticommutes with this generator. In Figure~\ref{fig:error-both}, the error in green commutes with the generator of the stabilizer in red but anticommutes with the one in blue since they intersect at only one qubit. A \textbf{cycle} is a 1-chain in the lattice (or dual lattice) with a vanishing boundary, i.e. an element of the kernel of $\partial_1$. Observe that a cycle which enters into a square representing a generator of the stabilizer must also exit the square (otherwise it cannot be closed). It therefore follows that the Pauli words $X_d$ and $Z_c$ commute with the stabilizer subgroup whenever the 1-chains $c$ and $d$ are cycles. Since the normalizer of the stabilizer subgroup is equivalent to its centralizer, it follows that $X_d$ and $Z_c$ are each logical operations. An example of this is shown in Figure~\ref{fig:cycle-example}. Moreover, every cycle which is the boundary of a 2-cell is itself an element of the stabilizer subgroup, so only those cycles which are not the boundary of a 2-cell yield non-trivial logical operations. That is, the image of the boundary map $\partial_2$ on $L$ or $L^*$ acts as the identity operation on the logical qubits. Meanwhile, the kernel of $\partial_1$ is generated by the cycles in the lattice (or its dual). Putting these two thoughts together, we have the following beautiful result.

\begin{figure}
    \centering
    \includegraphics[width=3in]{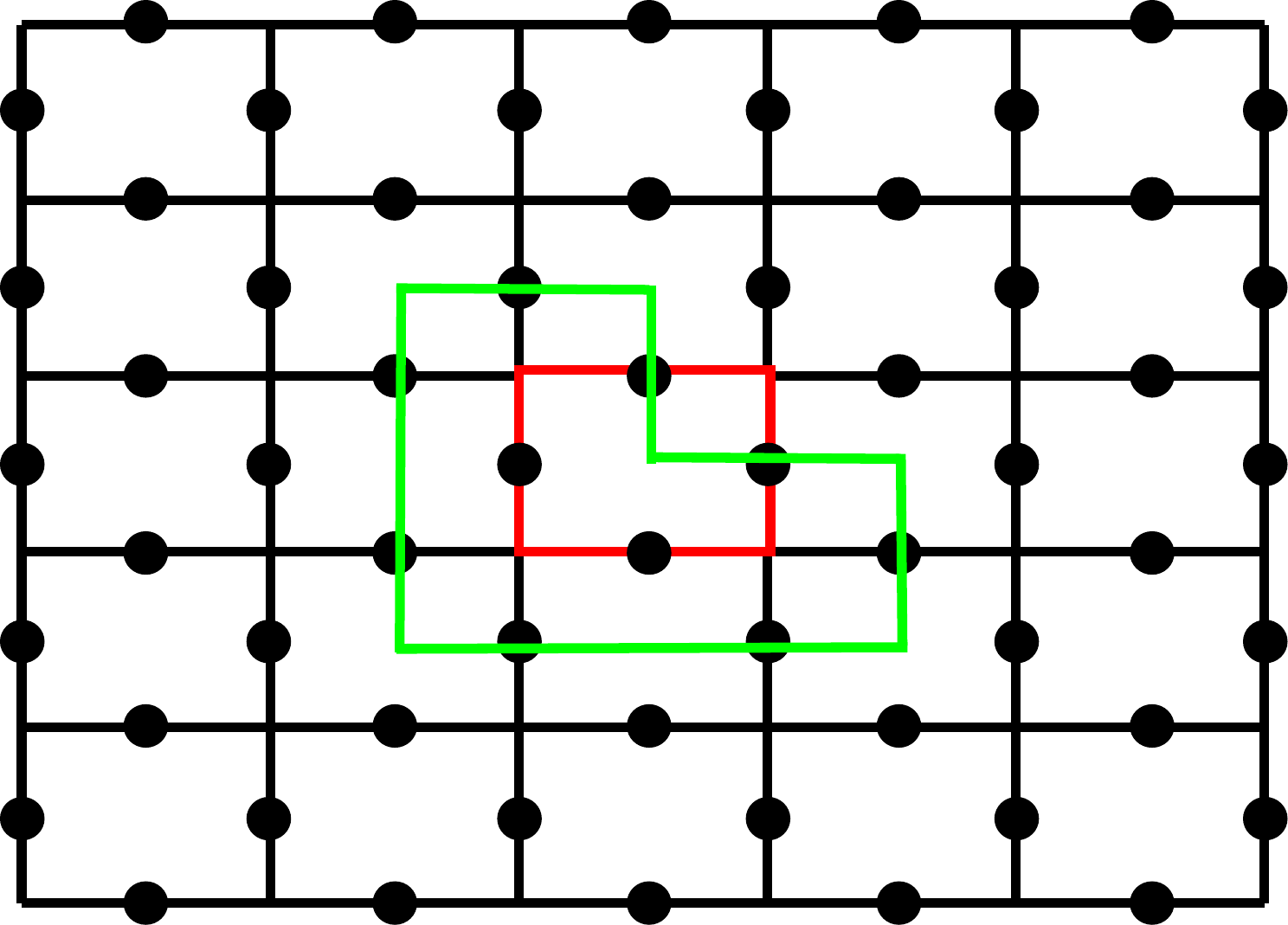}
    \caption{
        Example of a cycle (green) intersecting with a stabilizer generator (red).
    }
    \label{fig:cycle-example}
\end{figure}

\begin{proposition}
    The logical operations in a topological code are the Pauli words corresponding to the elements of $H_1(L,\mathbb{Z})$ and $H_1(L^*,\mathbb{Z})$. 
\end{proposition}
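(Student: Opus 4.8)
The plan is to identify the group of logical operations, which by the results of Section~\ref{sec:stabilizer} is $N(\mathcal{S})/\mathcal{S}$ with $N(\mathcal{S})=C(\mathcal{S})$, with the first homology groups by translating the algebraic commutation conditions into boundary-map conditions via the intersection pairing. Since every element of $G_n$ is, up to a phase in $\{\pm1,\pm i\}$, a product $X_d Z_c$ of an $X$-type word with $d\in C_1(L^*,\mathbb{Z}_2)$ and a $Z$-type word with $c\in C_1(L,\mathbb{Z}_2)$, I would first show that the centralizer condition decouples into a condition on $d$ alone and a condition on $c$ alone, and then quotient by the stabilizer.

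First I would handle the $Z$-type factor. The operator $Z_c$ automatically commutes with every $Z_{\partial(f)}$, so the only constraint comes from the $X$-generators, and by the preceding commutation lemma $[Z_c,X_{\partial^*(v)}]=0$ if and only if the intersection number $\partial^*(v)\cdot c=0$. The key observation is that $\partial^*(v)$ is exactly the set of edges incident to $v$, so the mod-$2$ intersection number $\partial^*(v)\cdot c$ equals the coefficient of $v$ in $\partial_1 c$; hence $Z_c$ commutes with all $X$-generators precisely when $(\partial_1 c)_v=0$ for every vertex $v$, i.e. when $c\in\ker(\partial_1)$ is a cycle. By the same incidence argument in the dual lattice (the edges of $\partial(f)$ in $L$ are the edges of $L^*$ incident to the dual vertex $f^*$, so $d\cdot\partial(f)=(\partial_1 d)_{f^*}$ computed in $L^*$), the word $X_d$ commutes with all $Z$-generators if and only if $d$ is a cycle in $L^*$. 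Since commuting with a pure-$X$ generator constrains only $c$ and commuting with a pure-$Z$ generator constrains only $d$, this shows $N(\mathcal{S})=C(\mathcal{S})$ consists exactly of the phase multiples of $X_d Z_c$ with $c\in\ker(\partial_1^L)$ and $d\in\ker(\partial_1^{L^*})$.

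Next I would quotient by the stabilizer. Because the $X$- and $Z$-parts occupy independent symplectic coordinates (Lemma~\ref{lemma:pauli-commutes}), a product of generators is purely $Z$-type exactly when the chosen $X$-generators multiply to the identity, and conversely; combined with the very definition of the generators as $Z_{\partial(f)}$ and $X_{\partial^*(v)}$, this identifies the stabilizer with the boundaries: $Z_c\in\mathcal{S}$ iff $c\in\imag(\partial_2^L)$ and $X_d\in\mathcal{S}$ iff $d\in\imag(\partial_2^{L^*})$. Passing to the quotient and invoking the proposition that operators in the same coset of $N(\mathcal{S})/\mathcal{S}$ act identically on the code space, I obtain $N(\mathcal{S})/\mathcal{S}\cong\bigl(\ker(\partial_1^{L^*})/\imag(\partial_2^{L^*})\bigr)\times\bigl(\ker(\partial_1^{L})/\imag(\partial_2^{L})\bigr)=H_1(L^*,\mathbb{Z}_2)\times H_1(L,\mathbb{Z}_2)$, which is the asserted statement (with the understanding that, because every Pauli squares to the identity, the relevant coefficient ring is $\mathbb{Z}_2$, exactly as in the worked homology computations of Section~\ref{sec:homology}).

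The hard part will be pinning down the translation between algebra and topology cleanly, namely the identity $\partial^*(v)\cdot c=(\partial_1 c)_v$ together with its dual, which is precisely what converts ``commutes with every stabilizer generator'' into ``lies in the kernel of $\partial_1$.'' I would also take care that the decoupling of the $X$- and $Z$-parts is fully justified and that the phase factors of $G_n$ cause no trouble, since these become irrelevant only after passing to the $\mathbb{Z}_2$ description; and I would remark that $\imag(\partial_2)\subseteq\ker(\partial_1)$, guaranteed by Proposition~\ref{prop:boundary-comp}, is exactly what places the stabilizer inside its normalizer and makes the final quotient well defined.
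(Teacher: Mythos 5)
Your proof is correct, and it takes a genuinely different and more complete route than the paper's. The paper's own proof argues only one direction: it invokes the geometric observation that a closed cycle entering a stabilizer square must also exit it, hence meets every generator at an even number of qubits, and concludes that the Pauli word of any homology class commutes with the stabilizer and is therefore a logical operation. It never establishes the converse --- that \emph{every} logical operation is a cycle word modulo boundaries --- nor does it justify that the purely $Z$-type (resp.\ $X$-type) stabilizer elements are exactly the boundary words. Your argument supplies precisely these missing pieces: the identity $\partial^*(v)\cdot c=(\partial_1 c)_v$ and its dual upgrade the commutation lemma from a sufficient geometric criterion to an if-and-only-if algebraic one, so that ``commutes with all generators'' becomes literally ``lies in $\ker(\partial_1)$''; the decoupling of the $X$- and $Z$-constraints characterizes the full centralizer as the phase multiples of $X_dZ_c$ with $c\in\ker(\partial_1^{L})$, $d\in\ker(\partial_1^{L^*})$; and the symplectic-coordinate argument identifies $\mathcal{S}$ with the boundary words, yielding the honest isomorphism $N(\mathcal{S})/\mathcal{S}\cong H_1(L^*,\mathbb{Z}_2)\times H_1(L,\mathbb{Z}_2)$ rather than a one-directional inclusion. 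What the paper's approach buys is brevity and geometric intuition; what yours buys is the actual equality of groups, which is what the proposition asserts. Two remarks: your key identity silently uses that lattice edges have distinct endpoints (no loops), which holds here but is worth stating; and you are right to work over $\mathbb{Z}_2$ --- the proposition's statement with $\mathbb{Z}$ coefficients, and the expression $\ker(\partial_2)/\imag(\partial_1)$ appearing in the paper's own proof, are both slips that your version quietly corrects.
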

\begin{proof}
    The image of $\partial_2$ is precisely the collection of all boundaries of 2-cells. Meanwhile, the kernel of $\partial_1$ is given by all cycles in the lattice. Thus, the first homology group $H_1=\ker(\partial_2)/\imag(\partial_1)$ consists of all cosets of the set of boundaries of 2-cells in the group of all cycles. That is, every element of the homology has the form $c+\imag(\partial_1)$, where $c$ is a cycle and $\imag(\partial_1)$ is the set of all boundaries. The Pauli word associated to $c$ intersects with all stabilizer generators at an even number of points and is therefore a logical operation as it commutes with every generator of the stabilizer subgroup.
\end{proof}

We have shown that the normalizer of the stabilizer subgroup corresponds to the kernel of $\partial_1$ and the stabilizer subgroup corresponds to the image of $\partial_2$. Thus, the group $N(\mathcal{S})/\mathcal{S}$ of all logical operations corresponds to the first homology groups of the lattice and its dual. Elements in the same coset of the homology group therefore correspond to the same logical operation, and those elements in the image of $\partial_2$ correspond to elements of the stabilizer which act trivially on our logical qubits. Thus, in the homology picture, we can rephrase Proposition~\ref{prop:distance} in the following nice way:

\begin{proposition}
    The distance of a topological code is the minimum length of the non-trivial elements of $H_1(L,\mathbb{Z}_2)$ and $H_1(L^*,\mathbb{Z}_2)$. That is,
    \begin{equation}
        d=\min\{\ell(c):\mathds{1}\ne c\in H_1(L,\mathbb{Z}_2)\cup H_1(L^*,\mathbb{Z}_2)\},
    \end{equation}
    where $\ell(c)$ denotes the number of 1-cells in $c$.
\end{proposition}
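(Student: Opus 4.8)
The plan is to reduce the statement to Proposition~\ref{prop:distance}, which already asserts that the distance of any stabilizer code equals the minimum weight $w(N)$ over all non-trivial cosets $N\in N(\mathcal{S})/\mathcal{S}$. What remains is to translate both the group $N(\mathcal{S})/\mathcal{S}$ and the weight function $w$ into the homological language built up above. From the preceding discussion we may take as given that, working with $\mathbb{Z}_2$ coefficients, the normalizer is generated by the $Z$-type operators attached to cycles in $L$ and the $X$-type operators attached to cycles in $L^*$, that the stabilizer corresponds to the boundaries $\imag(\partial_2)$, and hence that the non-trivial logical cosets correspond exactly to the non-trivial classes of $H_1(L,\mathbb{Z}_2)$ and $H_1(L^*,\mathbb{Z}_2)$. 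The only genuinely new ingredient is to match the Pauli weight $w(N)$ with the chain length $\ell$. For brevity write $d_\star:=\min\{\ell(c):\mathds{1}\ne c\in H_1(L,\mathbb{Z}_2)\cup H_1(L^*,\mathbb{Z}_2)\}$ for the right-hand side, so that the goal is $d=d_\star$.

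First I would dispose of the inequality $d\le d_\star$. Let $c$ be a minimal-length representative of a non-trivial class in $H_1(L,\mathbb{Z}_2)$. The associated $Z$-type operator $Z_c=\prod_{e\in c}Z_e$ is a non-trivial logical operator acting by a non-identity Pauli on exactly the $\ell(c)$ qubits in $\mathrm{supp}(c)$, so $w(Z_c)=\ell(c)$, and Proposition~\ref{prop:distance} gives $d\le\ell(c)$. Running the identical argument in the dual lattice with a minimal non-trivial $\tilde c\in H_1(L^*,\mathbb{Z}_2)$ and its $X$-type operator $X_{\tilde c}$ yields $d\le\ell(\tilde c)$. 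Taking the smaller of the two bounds over all such classes gives $d\le d_\star$.

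The reverse inequality $d\ge d_\star$ is where the real work lies, and it is the step I expect to be the main obstacle. The difficulty is that a general non-trivial logical operator need not be purely of $X$-type or $Z$-type: up to phase every Pauli word factors as $X_{\tilde c}Z_c$ with $\tilde c$ a cycle in $L^*$ and $c$ a cycle in $L$, and on a qubit where $\mathrm{supp}(c)$ and $\mathrm{supp}(\tilde c)$ overlap one has $XZ=-iY$, still a non-identity Pauli. The key observation is therefore that the weight is the size of the union of supports, $w(X_{\tilde c}Z_c)=|\mathrm{supp}(c)\cup\mathrm{supp}(\tilde c)|\ge\max\{\ell(c),\ell(\tilde c)\}$, so mixing can never \emph{lower} the weight below that of either factor. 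Since $X_{\tilde c}Z_c$ is non-trivial, it lies outside $\mathcal{S}=\imag(\partial_2)$, which forces at least one of $c$ or $\tilde c$ to be homologically non-trivial; assuming without loss of generality that $c$ is non-trivial in $H_1(L,\mathbb{Z}_2)$ we obtain $w(X_{\tilde c}Z_c)\ge\ell(c)\ge d_\star$, and symmetrically when $\tilde c$ is the non-trivial factor. Minimising over all non-trivial cosets then gives $d\ge d_\star$. The subtle point to handle carefully is precisely this decomposition into $X$- and $Z$-type factors together with the claim that non-triviality of the product forces non-triviality of at least one factor, since that is exactly what rules out a mixed logical operator of anomalously small weight and guarantees the minimum is attained on a pure cycle class.
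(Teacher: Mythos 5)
Your proof is correct, and it follows the route the paper intends: translate Proposition~\ref{prop:distance} into homological language via the correspondence between $N(\mathcal{S})/\mathcal{S}$ and the first homology groups of $L$ and $L^*$. The paper, however, records no actual proof — it presents the proposition as an immediate rephrasing — and its preceding discussion identifies only the \emph{pure} $Z$-type operators $Z_c$ (cycles in $L$) and pure $X$-type operators $X_d$ (cycles in $L^*$) as logical operations. Your argument supplies exactly the step this glosses over: a general logical operator factors up to phase as $X_{\tilde c}Z_c$, membership in the normalizer forces each factor separately to be a cycle (commutation with the $Z_{\partial(f)}$ generators constrains only $\tilde c$, and commutation with the $X_{\partial(v^*)}$ generators constrains only $c$ — a point you assert implicitly and could state in one line), its weight is the size of the union of the two supports and hence at least $\max\{\ell(c),\ell(\tilde c)\}$, and non-triviality of the coset forces at least one factor to be homologically non-trivial. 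This closes the genuine gap in the statement as written: since the minimum in the proposition ranges only over $H_1(L,\mathbb{Z}_2)\cup H_1(L^*,\mathbb{Z}_2)$, one could otherwise worry that a mixed ($Y$-containing) logical operator achieves weight strictly below the minimum over pure cycle classes. Your two-inequality structure — $d\le d_\star$ from minimal pure representatives, $d\ge d_\star$ from the union-of-supports bound — is therefore a strengthening of what the paper provides, not merely a transcription of it.
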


\subsection{Toric Codes}\label{sec:toric-codes}

Let us continue our discussion of topological codes but in the context of the \textbf{toric code}. As before, we work on a lattice as in Figure~\ref{fig:grid-face}, but this time we identify the left and right boundary edges as well as the top and bottom boundary edges. We therefore have a surface homeomorphic to a torus, hence the name. From the figure, we see that there are 50 physical qubits in the lattice. Indeed, the diagram shows 60 black dots, but 10 of these have been identified with other dots by the periodic boundary conditions. There are 25 generators of the stabilizer subgroup coming from the 2-cells in the lattice and an additional 25 generators of the stabilizer subgroup coming from the 2-cells in the dual lattice. However, these generators are not independent. The product of the 25 generators in the lattice gives the identity operation, which allows us to remove a generator from this list. Likewise, we can remove a generator from the dual lattice set, leaving a total of 48 independent generators of the stabilizer subgroup. Then by Proposition~\ref{prop:sym-subspace-dim}, the dimension of the code space is $2^{50-48}=4$, and it follows that the toric code describes two logical qubits. Notice that this argument carries over to a lattice of arbitrary size. If the lattice is $m\times n$, then there are $n(m+1)+m(n+1)$ black dots in the diagram, but we need to remove $m+n$ due to the periodic boundary conditions, leaving $2mn$ physical qubits. The number of independent generators of the stabilizer is $2mn-2$, and so the dimension of the code space is $2^{2mn-2mn+2}=4$, independent of $m$ and $n$.

In addition to the cycles on the inside of the lattice diagram (as in Figure~\ref{fig:cycle-example}), the periodic boundary conditions of the torus allow us to produce cycles which travel through the boundary of the diagram and reappear at the other side. Four examples are shown in Figure~\ref{fig:cycle-example-torus}. The horizontal lines in the lattice diagram become circles that enclose the volume of the torus. The vertical lines become circles on the torus with constant radius from the center point. In continuing with our convention that chains in the lattice be labeled with a $c$ and chains in the dual lattice be labeled with a $d$, let us write $c_v$ for the vertical cycle in the lattice and $c_h$ for the horizontal cycle in the lattice. Similarly, we define $d_v$ and $d_h$ for the blue lines in the dual lattice. Observe that $[Z_{c_v},Z_{c_h}]=[X_{d_v},X_{d_h}]=[Z_{c_v},X_{d_v}]=[Z_{c_h},X_{d_h}]=0$, while $Z_{c_v}$ anticommutes with $X_{d_h}$ and $Z_{c_h}$ anticommutes with $X_{d_v}$ (since they intersect at one physical qubit). The two red cycles span the nontrivial homological degrees of freedom of the homology group $H_1(L,\mathbb{Z}_2)$ in the sense that they are representatives of the two non-trivial homology classes, and similarly the two blue cycles span the nontrivial homological degrees of freedom in the dual lattice. Thus, these cycles form all the possible non-trivial logical operations on our two logical qubits. This realization, along with the commutation relations above, motivate us to define these operations as our logical Pauli $X$ and $Z$ gates. Thus, we define
\begin{align}
\begin{aligned}
    \Bar{X}_0 &= X_{d_v}\\
    \Bar{X}_1 &= X_{d_h}\\
    \Bar{Z}_0 &= Z_{d_v}\\
    \Bar{Z}_1 &= Z_{d_h},
\end{aligned}
\end{align}
where the overbar indicates that the gate is a logical operation.

\begin{figure}
    \centering
    \includegraphics[width=3in]{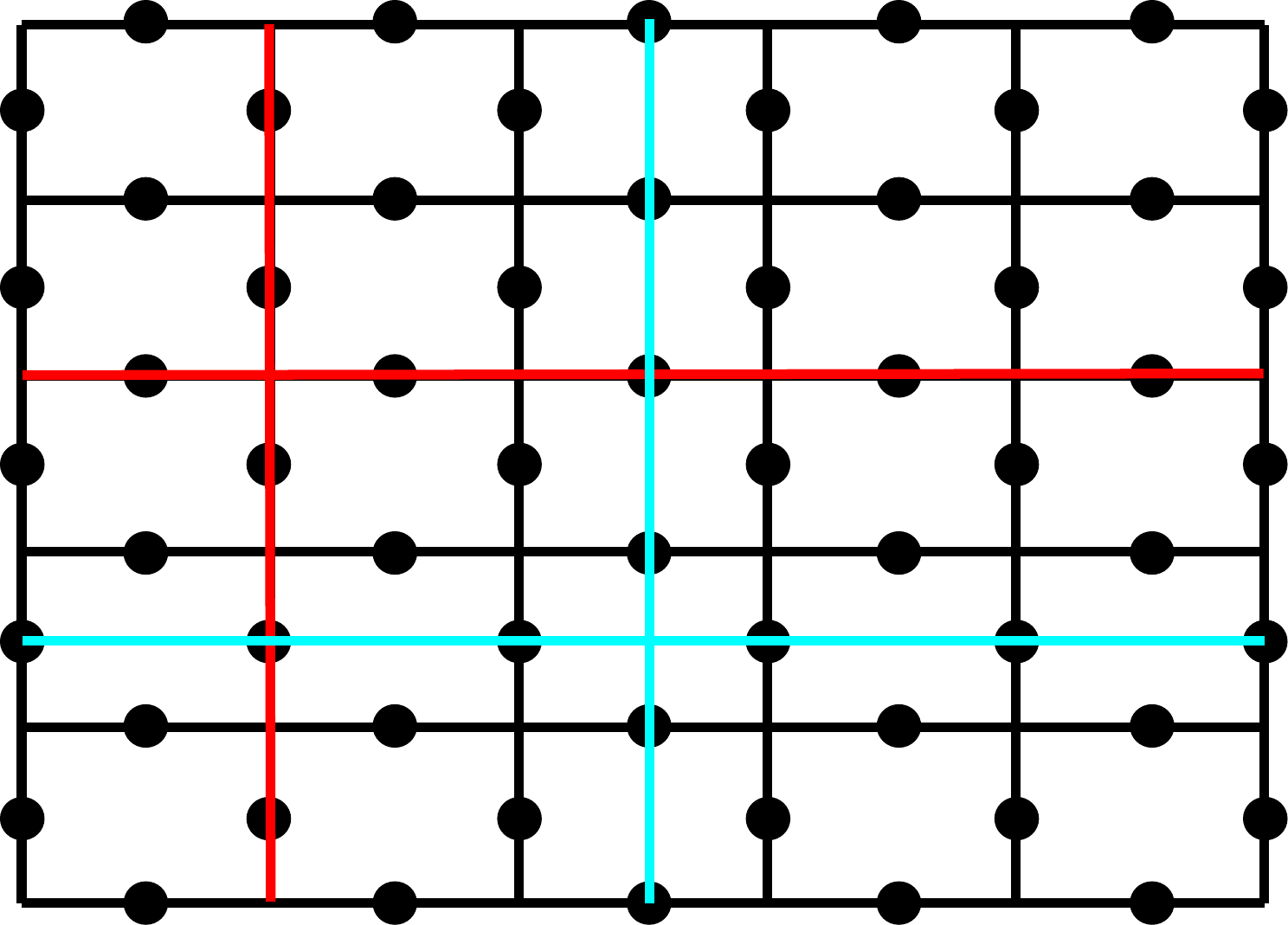}
    \caption{
        Four examples of cycles that make use of the periodic boundary conditions of the torus.
    }
    \label{fig:cycle-example-torus}
\end{figure}

For the sake of clarity, we should give a method for explicitly constructing elements of the code space. Recall that we take the $\mathcal{S}$-symmetric subspace as our code space. From Lemma~\ref{lemma:gprojection}, we know that the projection onto the $\mathcal{S}$-symmetric subspace is given by
\begin{equation}
    \Pi_\mathcal{S}=\frac{1}{\lvert \mathcal{S}\rvert}\sum_{S\in\mathcal{S}}S,
\end{equation}
but instead of writing out every element of $\mathcal{S}$, let us give a simpler approach. First start with a state $\ket{\psi}$ that is invariant under the action of every Pauli-$Z$ generator $Z_{\partial(f)}$. This could, for example, be the state $\ket{0\cdots0}$. Notice that for any cycle $d$ in the dual lattice, we have $Z_{\partial(f)}X_d\ket{\psi}=X_dZ_{\partial(f)}\ket{\psi}=X_d\ket{\psi}$, which establishes that $X_d\ket{\psi}$ is also invariant with respect to $Z_{\partial(f)}$. Now take $d=d_v+b$ and sum over all boundaries $b$ of 2-chains in the dual lattice to produce the state
\begin{equation}
    \ket{\Psi_v}=\frac{1}{N}\sum_{b\in \partial C_2(L^*,\mathbb{Z}_2)} X_{d_v+b}\ket{\psi},
\end{equation}
where $N$ is a normalization constant. By our previous observation, this state will be invariant under the action of a Pauli-$Z$ generator. Since the collection of all boundaries in the dual lattice forms a group, we also see that this state is invariant under the action of a Pauli-$X$ generator. Thus, $\ket{\Psi_v}$ is an element of the code space (as it's in the $\mathcal{S}$-symmetric subspace). Replacing $v\to h$ produces the state
\begin{equation}
    \ket{\Psi_h}=\frac{1}{N}\sum_{b\in \partial C_2(L^*,\mathbb{Z}_2)} X_{d_h+b}\ket{\psi},
\end{equation}
which is also in the code space. A similar construction with the $X$'s and $Z$'s swapped produces the states
\begin{equation}
    \ket{\Phi_v}=\frac{1}{N}\sum_{b\in \partial C_2(L,\mathbb{Z}_2)}Z_{c_v+b}\ket{\phi}
\end{equation}
and
\begin{equation}
    \ket{\Phi_h}=\frac{1}{N}\sum_{b\in \partial C_2(L,\mathbb{Z}_2)}Z_{c_h+b}\ket{\phi}.
\end{equation}
We will now build a logical basis using one of these choices. Let us choose $\ket{00}_L:=\ket{\Psi_v}$ with $\ket{\psi}=\ket{0\cdots0}$ for simplicity. Observe how each of the logical Pauli-$X$ operations acts on it. We have
\begin{equation}
    \ket{10}_L:=\Bar{X}_0\ket{00}_L=\frac{1}{N}\sum_{b\in \partial C_2(L^*,\mathbb{Z}_2)}X_{d_v}X_{d_v+b}\ket{0\cdots0}=\frac{1}{N}\sum_{b\in \partial C_2(L^*,\mathbb{Z}_2)}X_{b}\ket{0\cdots0},
\end{equation}
\begin{equation}
    \ket{01}_L:=\Bar{X}_1\ket{00}_L=\frac{1}{N}\sum_{b\in \partial C_2(L^*,\mathbb{Z}_2)}X_{d_h}X_{d_v+b}\ket{0\cdots0}=\frac{1}{N}\sum_{b\in \partial C_2(L^*,\mathbb{Z}_2)}X_{d_h+d_v+b}\ket{0\cdots0},
\end{equation}
and
\begin{equation}
    \ket{11}_L:=\Bar{X}_0\Bar{X}_1\ket{00}_L=\frac{1}{N}\sum_{b\in \partial C_2(L^*,\mathbb{Z}_2)}X_{d_v}X_{d_h}X_{d_v+b}\ket{0\cdots0}=\frac{1}{N}\sum_{b\in \partial C_2(L^*,\mathbb{Z}_2)}X_{d_h+b}\ket{0\cdots0},
\end{equation}
which we note is $\ket{\Psi_{d_h}}$. The logical Pauli-$Z$ operations act as one would expect on the these logical states thanks to the commutation relations between the $\Bar{X}_i$ and $\Bar{Z}_j$. By taking inner products between each of these states, it is easily seen that they are orthogonal and therefore form a linearly independent set in the code space, hence a basis. As an example, consider the inner product $\braket{00\vert10}_L$. We have
\begin{align}
    \begin{aligned}\braket{00\vert10}_L&=\frac{1}{N^2}\sum_{b\in \partial C_2(L^*,\mathbb{Z}_2)}\sum_{b'\in \partial C_2(L^*,\mathbb{Z}_2)}\bra{0\cdots0}X_{d_v+b}X_{b'}\ket{0\cdots0}\\
    &=\frac{1}{N^2}\sum_{b\in \partial C_2(L^*,\mathbb{Z}_2)}\sum_{b'\in \partial C_2(L^*,\mathbb{Z}_2)}\bra{0\cdots0}X_{d_v+b+b'}\ket{0\cdots0},
    \end{aligned}
\end{align}
and since $\ket{0\cdots0}$ is orthogonal to $X_{d_v+b+b'}\ket{0\cdots0}$ when $X_{d_v+b+b'}\ne\mathds{1}$, all of the remaining inner products vanish, establishing the orthogonality between $\ket{00}_L$ and $\ket{10}_L$. Thus, our encoder will take an arbitrary 2-qubit state $\alpha\ket{00}+\beta\ket{01}+\gamma\ket{10}+\delta\ket{11}$ to the logical state $\alpha\ket{00}_L+\beta\ket{01}_L+\gamma\ket{10}_L+\delta\ket{11}_L$. The key insight here is that if we start with an element of the code space, we can construct a basis for it by applying the logical $X$ operations. The logical $Z$ operations will then act on a logical qubit in the same way a Pauli-$Z$ acts on a physical qubit thanks to the commutation relations between the logical $X$ and $Z$ operators.

We have constructed a basis for the code space in order to understand the encoder portion of the toric code. We have also constructed the stabilizer subgroup, which we can use to do syndrome extraction as discussed in Section~\ref{sec:stabilizer}. The errors we can detect with this stabilizer code are those that anticommute with one of the generators of the stabilizer. Theorem~\ref{thm:stab-error} then tells us what errors we can correct, and decoding is done by applying the adjoint of whatever error operation is identified, which is easier said than done for large codes. According to Proposition~\ref{prop:errors}, all errors that belong to the same coset of the normalizer in the Pauli group give rise to the same syndrome. Thus, the 1-chains that are equivalent up to the addition of a cycle will produce the same syndrome.

As a final remark, notice that the distance for a toric code is $\min(m,n)$ for an $m\times n$ lattice because the logical Pauli operations consist of Pauli words of length $m$ and $n$. Thus, as the lattice grows larger, the distance of the code grows, allowing us to correct more and more errors. Unfortunately, this comes with an overhead in the number of physical qubits needed to encode our two logical qubits. Indeed, as we have already pointed out, an $m\times n$ lattice with toroidal boundary conditions contains $2mn$ physical qubits. The toric code is therefore a $[[2mn,2,\min(m,n)]]$ code with rate $R=\frac{1}{mn}$.

\subsection{Planar Codes}\label{sec:planar-codes}
It is possible to define a quantum error correction code on a lattice without the periodic boundary conditions of the torus. If we attempt to do so in the naive way, observe that the number of physical qubits on the lattice will be $2mn+m+n$, while the number of independent generators of the stabilizer will be given by counting all faces and vertices in the lattice and subtracting out any dependent generators. Due to the lack of boundary conditions, the Pauli-$Z$ words coming from the faces each act on at least one edge that no other word acts on, and so they are all independent. Since every edge has two vertices, the product of the Pauli-$X$ words generated by the vertices will be the identity, allowing us to remove one such generator. This gives us a total of $mn+(m+1)(n+1)-1=2mn+m+n$ independent generators of the stabilizer subgroup. Thus, the dimension of the code space would be $2^0=1$, which is incapable of encoding a logical qubit. We must therefore modify our construction. Instead of encoding our logical operators in the homology of this lattice, we will make use of a notion of relative homology \cite{bravyi1998}.

\begin{figure}
    \centering
    \includegraphics[width=3in]{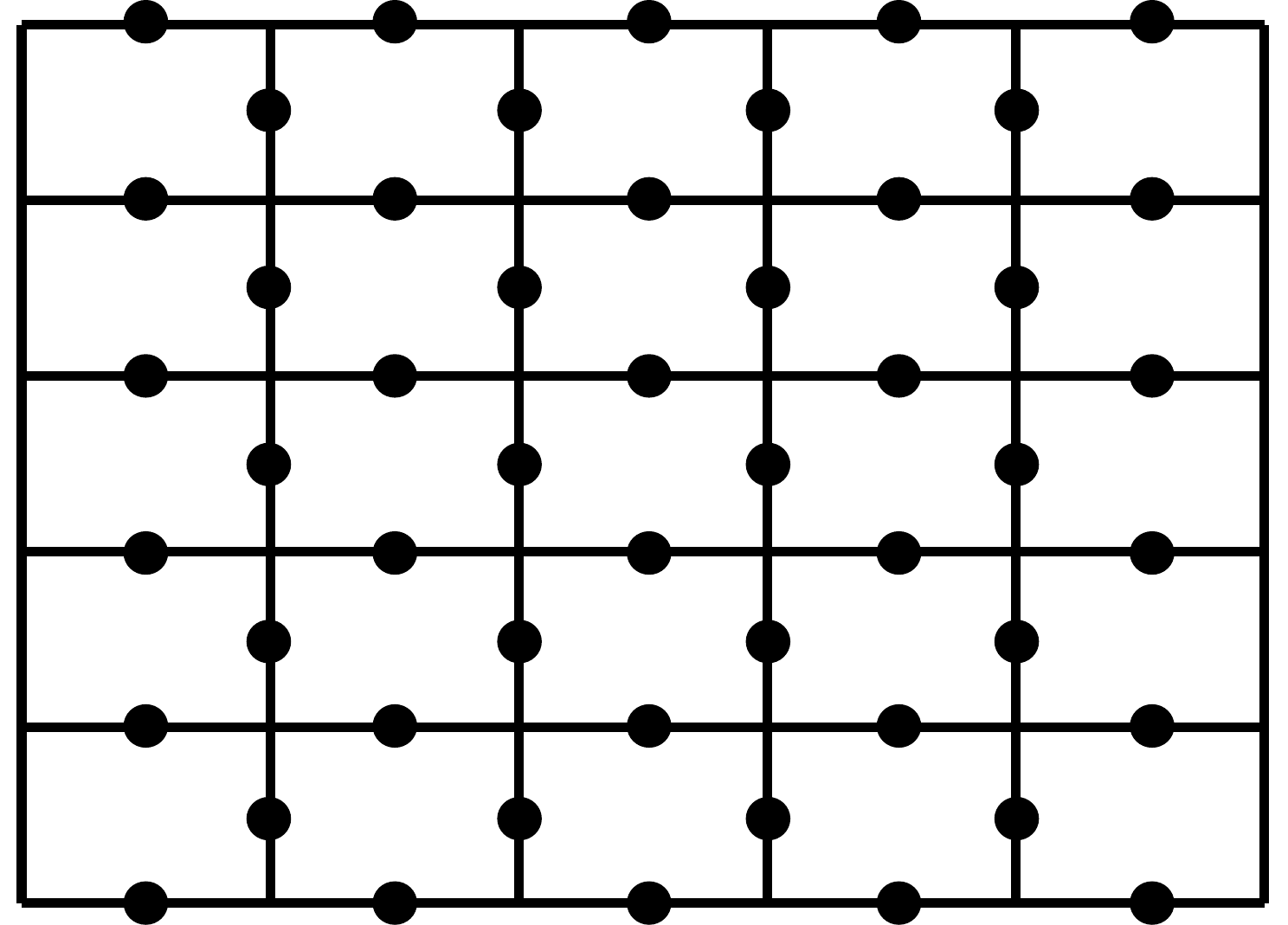}
    \caption{
        Example lattice for the planar code.
    }
    \label{fig:planar-lattice}
\end{figure}

We start by once again drawing a lattice, but this time we place qubits on all edges except for the vertical boundary edges as in Figure~\ref{fig:planar-lattice}. The upper and lower portions of the boundary of this lattice are sometimes called the \textbf{smooth boundary}, and the remaining vertical edges are often referred to as the \textbf{rough boundary}. We still denote the group of $k$-chains on the lattice by $C_k(L,\mathbb{Z}_2)$, but we also take note of the group of $k$-chains on the rough boundary $C_k(B_r,\mathbb{Z}_2)$, where we have used $B_r$ to indicate the rough boundary. Since the chain groups are abelian, $C_k(B_r,\mathbb{Z}_2)$ is a normal subgroup of $C_k(L,\mathbb{Z}_2)$. Thus, we can form the quotient group $C_k(L,B_r;\mathbb{Z}_2)=C_k(L,\mathbb{Z}_2)/C_k(B_r,\mathbb{Z}_2)$ in which all $k$-chains in $C_k(B_r,\mathbb{Z}_2)$ are identified with the trivial element. We call this the group of \textbf{relative $k$-chains}.

\begin{figure}
    \centering
    \includegraphics[width=3in]{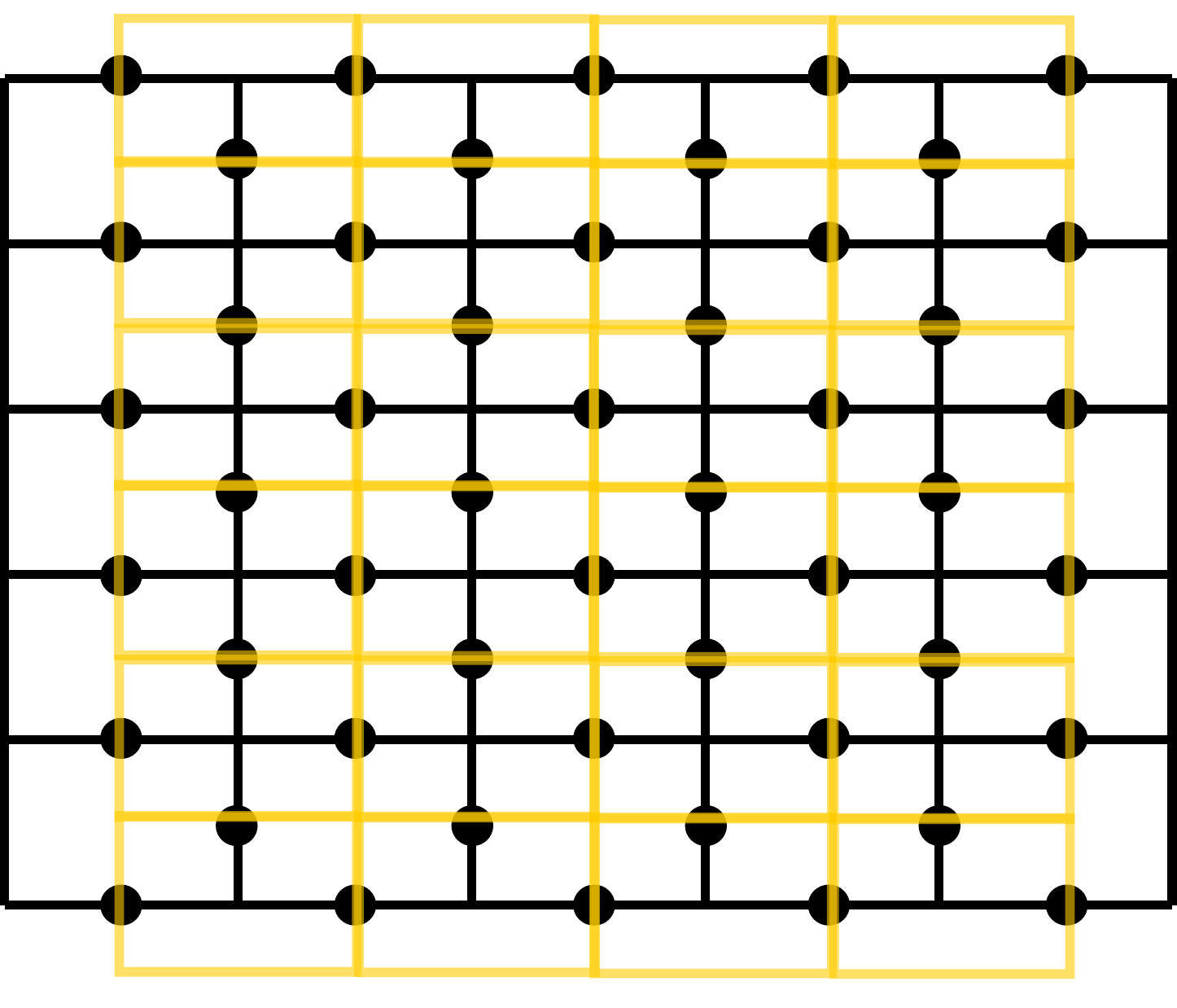}
    \caption{
        Dual lattice (gold) on top of lattice for the planar code.
    }
    \label{fig:planar-dual-lattice}
\end{figure}

Just as in the toric code, we can form the dual of the lattice by transforming all 0-cells to 2-cells and all 2-cells to 0-cells. We do so with the same procedure; in the center of every 2-cell, we place a 0-cell, and we connect the 0-cells in adjacent 2-cells with a 1-cell (we connect our new vertices in adjacent faces with an edge). This time, however, we add the additional step of closing the lattice at the top and bottom with a rough boundary. Thus, the dual operation transforms a rough boundary to a smooth boundary and a smooth boundary to a rough boundary. An example is shown in Figure~\ref{fig:planar-dual-lattice}. 

Observe that we can also form the quotient chain group $C_k(L^*,B_r^*;\mathbb{Z}_2)=C_k(L^*,\mathbb{Z}_2)/C_k(B_r^*,\mathbb{Z}_2)$ for the dual lattice. Just as in the toric code, we can therefore define homomorphisms that take chains to Pauli words, but this time we respect the structure of the quotient group. To this end, we define $Z:C_k(L,B_r;\mathbb{Z}_2)\to G_n$ by 
\begin{equation}
    Z_c=\prod_{\substack{e\in c\\e\notin B_r}}Z_e,
\end{equation}
where the product is taken over all edges in the 1-chain which are not in the rough boundary. Similarly, we define $X:C_k(L^*,B_r^*;\mathbb{Z}_2)\to G_n$ by
\begin{equation}
    X_d=\prod_{\substack{e\in d\\e\notin B_r^*}}Z_e,
\end{equation}
where the product is over all edges in the 1-chain $d$ which are not in the rough boundary of the dual lattice. We then take the generators of the stabilizer subgroup to be the collection of all independent elements of the form $Z_{\partial(f)}$ and $X_{\partial(v^*)}$, where $f$ is a 2-cell in $L$ and $v^*$ is the 2-cell dual to a 0-cell $v$. We are now in the position to determine the dimension of the code space. Suppose our lattice is $m\times n$. Then there are $mn$ Pauli-$Z$ generators and $(m+1)(n-1)$ Pauli-$X$ generators (this is best seen from Figure~\ref{fig:planar-dual-lattice} rather than counting vertices in Figure~\ref{fig:planar-lattice} because the vertices in the rough boundary do not contribute).  Since these generators are all independent and there are $n(m+1)+m(n-1)$ qubits, this leaves us with a code space dimension of $2^{2nm+n-m-(2mn-m+n-1)}=2$. Thus, the planar code we have just constructed is for a single logical qubit.

With this in mind, we should construct our logical $X$ and $Z$ operations. In the lattice, we take a horizontal chain $c$ and in the dual lattice, we take a vertical chain $d$. Since we are working in the relative chain group, where the rough boundary has been made trivial, each of these chains is in fact a cycle. Moreover, these cycles intersect with each stabilizer generator an even number of times, so they commute with the stabilizer subgroup and can therefore be considered logical operators. Neither of them is the logical identity operator because they do not belong to the stabilizer itself (they are not the boundary of a 2-cell). Topologically, we can compute the first (relative) homology group of this lattice and its dual, and we will find that they each have a single generator. A generator of the relative homology of the lattice is $c$ and a generator of the relative homology of the dual lattice is $d$, again justifying that these are non-trivial logical operations. Finally, since $c$ and $d$ intersect at a qubit exactly once, the corresponding Pauli words $Z_c$ and $X_d$ anticommute. Thus, we define our logical $X$ operation as $\Bar{X}=X_d$ and our logical $Z$ operation as $\Bar{Z}=Z_c$.

To build the code space, we proceed as we did with the toric code. We observe that $\Bar{Z}\ket{0\cdots0}=\ket{0\cdots0}$ and then define the logical zero state to be
\begin{equation}\label{eq:planar-logical0}
    \ket{0}_L := \frac{1}{N}\sum_{b\in \partial C_2(L^*,\mathbb{Z}_2)}X_{d+b}\ket{0\cdots0},
\end{equation}
where $N$ is again a normalization factor. Letting $\Bar{X}$ act on this state produces the logical one state
\begin{equation}\label{eq:planar-logical1}
    \ket{1}_L := \frac{1}{N}\sum_{b\in \partial C_2(L^*,\mathbb{Z}_2)}X_{b}\ket{0\cdots0}.
\end{equation}
It is easily verified that $\Bar{X}\ket{1}_L=\ket{0}_L$, $\Bar{Z}\ket{0}_L=\ket{0}_L$, and $\Bar{Z}\ket{1}_L=-\ket{1}_L$, as expected.

\begin{example}
    Consider the $m=1,n=2$ planar code. The number of physical qubits is $2nm+n-m=5$. There are 2 generators of the stabilizer produced by the lattice and an additional 2 stabilizers produced by the dual lattice. If we label the 5 qubits from left to right and from top to bottom, the upper horizontal line contains qubits 0 and 1, the middle vertical line contains qubit 2, and the lower horizontal line contains qubits 3 and 4. Thus, the stabilizers are $Z_0Z_2Z_3$ and $Z_1Z_2Z_4$ from the lattice and $X_0X_1X_2$ and $X_2X_3X_4$ from the dual lattice. For the logical $X$ gate, we choose the left most vertical line in the dual lattice, which corresponds to $\Bar{X}=X_d=X_0X_3$ and for the logical $Z$ gate, we choose the upper horizontal line in the lattice, which corresponds to $\Bar{Z}=Z_c=Z_0Z_1$.

    Let us now generate the logical basis states $\ket{0}_L$ and $\ket{1}_L$ by following \eqref{eq:planar-logical0} and \eqref{eq:planar-logical1}. The boundary terms $X_b$ are given by the boundaries of all possible chains of faces in the dual lattice, including the zero 2-chain consisting of no faces. Thus, the possible boundary terms are $\mathds{1}$, $X_0X_1X_2$, $X_2X_3X_4$, and $X_0X_1X_3X_4$. It follows from \eqref{eq:planar-logical0} that the logical zero state is
    \begin{equation}
        \ket{0}_L=\frac12(\ket{10010}+\ket{01110}+\ket{10101}+\ket{01001})
    \end{equation}
    and it follows from \eqref{eq:planar-logical1} that the logical one state is
    \begin{equation}
        \ket{1}_L=\frac12(\ket{00000}+\ket{11101}+\ket{00111}+\ket{11011}).
    \end{equation}
    The reader can now verify that the logical $X$ and $Z$ gates act as intended.

    Since there are four syndrome bits, there are $2^4=16$ possible syndromes, and the all zero syndrome corresponds to the non-error. Thus, this code is capable of correcting 15 possible errors. A lookup table is given in Table~\ref{tab:m1n2syndromes}. The first syndrome bit corresponds to the generator $Z_0Z_2Z_3$, followed by $Z_1Z_2Z_4$, $X_0X_1X_2$, and finally $X_2X_3X_4$. Notice that this code does not protect against all single qubit errors since, for example, $X_0$ produces the same syndrome as $X_3$.
    
\end{example}

\begin{table}[]
    \centering
    \begin{tabular}{cc} \toprule
        Syndrome & Error \\ \midrule
        0000 & $I$\\
        0001 & $Z_3$\\
        0010 & $Z_0$\\
        0011 & $Z_2$\\
        0100 & $X_1$\\
        0101 & $Y_4$\\
        0110 & $Y_1$\\
        0111 & $X_1Z_2$\\ \bottomrule
    \end{tabular}
    \hspace{2em}
    \begin{tabular}{cc} \toprule
        Syndrome & Error \\ \midrule
        1000 & $X_0$\\
        1001 & $Y_3$\\
        1010 & $Y_0$\\
        1011 & $X_0Z_2$\\
        1100 & $X_2$\\
        1101 & $X_2Z_3$\\
        1110 & $X_2Z_0$\\
        1111 & $Y_2$\\ \bottomrule
    \end{tabular}
    \caption{Syndrome look-up table for $m=1$, $n=2$ planar code. For each syndrome, we choose a representative of the error class with minimum weight.}
    \label{tab:m1n2syndromes}
\end{table}

Let us close this section with a quick analysis of the planar code. The distance of the planar code on an $m\times n$ lattice is again $d=\min(m,n)$, since $\Bar{X}$ is a product of $m$ Pauli-$X$ gates and $\Bar{Z}$ is a product of $n$ non-trivial Pauli-$Z$ gates. On the other hand, the number of qubits in the code is $2nm+n-m$, which is fewer than the number of physical qubits in the toric code when $m>n$. Thus, the planar code has the same distance as the toric code but at a slightly smaller overhead when $m>n$. Although, it produces only one logical qubit. Putting this information together shows that the planar code is a $[[2mn+n-m,1,\min(m,n)]]$ code with rate $R=\frac{1}{2mn+n-m}$.

\subsection{Discussion and Further Reading}

We have given only a small taste of topological codes in the preceding discussion, and the reader may wonder whether these codes are merely a mathematical curiosity or in fact useful for doing error correction. The planar codes in particular, usually confusingly called surface codes in the literature, are one of the best known classes of codes for several reasons, not the least of which is its simple and mathematically pleasing construction. In addition to this, it turns out that planar codes have comparatively high error thresholds around $p_{th}=1\%$. Indeed, if the physical error rate (per gate, measurement, or qubit) is below this threshold, then the logical error rate will exponentially decrease with the distance of the code. This does, however, come at the cost of more qubits since the number of physical qubits increases quadratically with the distance in a planar code. When the physical error rate is above this threshold, it is not beneficial to run this code in practice because the errors accumulate at a faster rate than they can be corrected and the code becomes overwhelmed by noise. There is a variant of the planar code we have presented here called the rotated surface code \cite{fowler2012,terhal2015}, which uses fewer physical qubits to achieve the same distance.

We have one further problem if our goal is to create a fault-tolerant universal device using something like a planar code, which is that the planar code encodes only one logical qubit and tells us only how to construct logical Pauli gates. Thus, we currently have no means of interacting the logical qubits through a 2-qubit logical gate. There have been several approaches to implementing 2-qubit gates on a planar code. Perhaps the first was the transversal method in which planar codes are stacked on top of each other and, for example, a logical $\cnot$ gate between the logical qubits is implemented by applying a $\cnot$ gate from the physical qubit on one planar code to the corresponding physical qubit on the remaining planar code. This construction, however can introduce a number of problems, and more recently, the lattice surgery approach, which uses only nearest-neighbor operations, has been favored. The surgery approach departs from the topological perspective we have introduced here by introducing a discontinuous operation on the lattice which is analogous to the surgery theory of geometric topology. Two such operations are introduced, called \textbf{merging} and \textbf{splitting}. We perform merging by introducing enough qubits to combine two planar codes along a rough boundary, and we perform splitting by removing a row of physical qubits from the lattice using measurements, leaving two independent planar codes. The process of merging and then splitting codes is shown to implement entangling gates, such as the $\cnot$ operation. For the details, we refer the reader to \cite{horsman2012,campbell2017}. In addition to lattice surgery, other methods include code deformation \cite{bombin2009} and twist-based encodings \cite{fowler2012}, where defects or domain walls are moved across the lattice to braid logical qubits. These approaches are deeply rooted in the topological nature of the code and can also be used to implement logical Clifford or even non-Clifford gates in conjunction with magic state distillation \cite{campbell2017}.

The performance of surface codes in practice also depends on the efficiency and accuracy of decoding algorithms \cite{deMartiiOlius2024}, such as minimum weight perfect matching \cite{edmonds1965,dennis2002,fowler2009,fowler2014} and belief propagation \cite{poulin2008,duclos-cianci2010}, which attempt to infer the most likely error given a syndrome. Machine learning-based decoders, including those using neural networks, are also being developed with the goal of achieving near-optimal decoding while reducing runtime \cite{baireuther2018,varsamopoulos2017,varsamopoulos2020}. These approaches show promise, especially for adapting to correlated and hardware-specific noise, though generalization and scalability remain open challenges. We explore the machine learning approach to decoders further in Section~\ref{sec:decoderML}. Finally, decoder latency and hardware compatibility are becoming increasingly important considerations, as real-time decoding will be essential for fault-tolerant control loops in near-term quantum devices. Recent experiments \cite{Gambetta2017,Krinner2022,willow2024} have demonstrated small-scale logical qubits based on surface codes, with multiple rounds of stabilizer measurement and real-time decoding. These implementations mark a significant step toward realizing scalable fault-tolerant quantum computation.

\section{Machine Learning Based Decoders}\label{sec:decoderML}

In a quantum error correcting code, decoding refers to the task of inferring the most likely error from the observed syndrome and determining the appropriate correction operation to undo it. For small codes, this is as simple as creating a look-up table matching each syndrome to an error, but this quickly becomes infeasible as the number of syndromes scales exponentially with the number of stabilizer generators. For stabilizer codes, especially those with a topological structure (such as surface codes), this problem is clearly computationally demanding. Alternative decoding methods, such as minimum-weight perfect matching (MWPM) \cite{edmonds1965,dennis2002,fowler2009,fowler2014} and belief propagation \cite{poulin2008,duclos-cianci2010}, exploit the structure of the code and the assumed noise model but often struggle to generalize. In recent years, machine learning (ML) has emerged as a promising alternative, offering decoders that can learn to decode directly from syndrome data without requiring an explicit error model.

The motivation for using machine learning in decoding is twofold. First, quantum hardware exhibits noise that is often non-Markovian, spatially correlated, or otherwise poorly modeled by simple Pauli error channels \cite{kam2024,liu2024}. Machine learning approaches have the capacity to learn effective decoding strategies in the presence of such complex noise. Second, once trained, many ML-based decoders can execute inference extremely quickly, making them attractive for real-time decoding in quantum processors where latency is a critical constraint. A variety of ML techniques have been applied to the decoding problem, including supervised learning approaches, which treat decoding as a classification task. Given a syndrome, we wish to predict the best correction to the error that occurred. Convolutional neural networks (CNNs) have been used for this purpose \cite{davaasuren2020,gicev2023}. Recurrent neural networks (RNNs) and transformer-based models have also been proposed to capture temporal correlations in repeated syndrome measurements \cite{baireuther2018,varsamopoulos2020decoding}.

Reinforcement learning (RL) has provided another avenue, especially for cases where the decoding process itself is treated as a sequential decision-making task \cite{sweke2020,matekole2022,Domingo-Colomer2020}. In RL-based decoders, an agent learns to propose correction operations in response to syndromes, receiving a reward signal based on whether the final logical state matches the intended target. This approach has been used to build decoders for stabilizer codes and to address challenges in decoding when noise is highly structured.

Despite their promise, ML-based decoders face several challenges. Training requires large, representative datasets of syndrome-error pairs, which can be computationally expensive to generate, especially for large-distance codes. ML approaches continue to evolve rapidly, and their integration with near-term quantum devices remains an active area of research. The ultimate goal is to produce decoders that are fast, adaptive, and robust enough to operate under realistic hardware conditions and noise models.

In Section~\ref{sec:classical_ml_review}, we provide a brief review of some classical machine learning concepts for the uninitiated. We then discuss the use of various machine learning models in the development of an ML-based decoder, including feedforward and convolutional neural networks in Section~\ref{sec:feedforward}, recurrent neural networks in Section ~\ref{sec:recurrent}, and transformers in Section~\ref{sec:transformer}. We also offer insight into how hybrid ML/non-ML models and variational quantum circuit models could function in Section~\ref{sec:hybrid}.

\subsection{Review of Classical Machine Learning}
\label{sec:classical_ml_review}

Machine learning has become the dominant method for learning functional relationships from data. Unlike traditional programming, where rules are explicitly coded by humans, machine learning systems infer rules directly from observed data through optimization. In the context of quantum error correction, the decoding problem is naturally framed as a supervised learning task, where the goal is to infer the most likely logical error given a syndrome measurement. Because decoding must be fast and accurate, and the mapping from syndrome to correction is often highly nonlinear and degenerate, machine learning offers a promising alternative to heuristic or combinatorial decoding methods.

Supervised learning \cite{bishop2006pattern} begins with a collection of training examples $\{(x_i, y_i)\}_{i=1}^N$, where each $x_i$ is an input (such as a syndrome vector or sequence) and $y_i$ is a target output (such as a logical Pauli operator or correction label). The learning algorithm aims to find a function $f_\theta : \mathcal{X} \to \mathcal{Y}$ from a parameterized family of functions that minimizes the expected loss between predictions $\hat{y}_i = f_\theta(x_i)$ and true labels $y_i$. For many practical purposes, this expected loss is approximated by an empirical loss over the training set. Common loss functions include the mean squared error,
\begin{equation}
L_{\mathrm{MSE}}(\hat{y}, y) = \|\hat{y} - y\|^2,
\end{equation}
and the cross-entropy loss,
\begin{equation}
L_{\mathrm{CE}}(\hat{y}, y) = -\sum_{k=1}^{K} y_k \log(\hat{y}_k),
\end{equation}
used when $y$ is a categorical variable and $\hat{y}$ is a probability distribution over classes \cite{murphy2012machine}. The optimization problem is to minimize the total loss
\begin{equation}
\mathcal{L}(\theta) = \frac{1}{N} \sum_{i=1}^N L(f_\theta(x_i), y_i),
\end{equation}
typically using variants of stochastic gradient descent \cite{bottou2010large}.

The earliest and most simple neural architectures are feedforward neural networks, also called multilayer perceptrons. These consist of a sequence of layers, each performing an affine transformation followed by a non-linear activation function. Mathematically, the output of the $\ell$-th layer is given by
\begin{equation}
h^{(\ell)} = \sigma(W^{(\ell)} h^{(\ell-1)} + b^{(\ell)}),
\end{equation}
where $W^{(\ell)}$ is a weight matrix, $b^{(\ell)}$ is a bias vector, and $\sigma$ is a nonlinearity such as the rectified linear unit (ReLU), sigmoid, or hyperbolic tangent. The input to the first layer is $h^{(0)} = x$, and the output layer typically applies a softmax activation to produce probabilities over the target classes. The universal approximation theorem \cite{hornik1989} guarantees that such networks can approximate any continuous function on a compact domain, given sufficient width and depth. However, in practice, careful regularization and architecture design are necessary to ensure generalization \cite{goodfellow2016deep}.

While feedforward networks treat inputs as unstructured vectors, many types of data, including syndromes on a lattice, have spatial structure that can be exploited. Convolutional neural networks \cite{lecun1998} were designed for image data, where locality and translational symmetry are key. CNNs replace fully connected layers with convolutional layers, in which small kernels are applied across the input via a sliding window. This reduces the number of parameters and captures local spatial patterns. A 1D convolution for input $x \in \mathbb{R}^d$ and kernel $w \in \mathbb{R}^k$ is given by
\begin{equation}
(x * w)_i = \sum_{j=0}^{k-1} x_{i+j} w_j,
\end{equation}
and a similar formula applies in two dimensions. CNNs often include pooling operations that reduce spatial resolution while retaining dominant features, allowing for hierarchical feature extraction \cite{lecun1998}. In QEC, CNNs have proven effective in decoding surface codes and other topological codes, where error patterns are often spatially localized and syndromes resemble images. These networks can learn to recognize common defects, boundaries, and logical operator paths.

Temporal structure is another common feature of QEC data, especially in fault-tolerant settings where syndrome measurements are repeated over time. Recurrent neural networks \cite{elman1990} were developed to model sequences by maintaining an internal state that evolves with the input. A basic RNN processes a sequence $\{x_t\}_{t=1}^T$ using a recurrence relation
\begin{equation}
h_t = \tanh(W x_t + U h_{t-1} + b),
\end{equation}
\begin{equation}
\hat{y}_t = V h_t + c,
\end{equation}
where $h_t$ is the hidden state at time $t$, and $W$, $U$, $V$, $b$, and $c$ are learned parameters. RNNs can, in principle, remember long-range dependencies, but in practice they often suffer from vanishing or exploding gradients \cite{bengio1994}. Long short-term memory (LSTM) networks \cite{hochreiter1997} address this issue with gated mechanisms that regulate the flow of information through memory cells. These gates allow LSTMs to remember or forget past information over long sequences, making them suitable for tasks such as decoding from time-correlated noise or tracking defects through repeated syndrome cycles.

Transformers \cite{vaswani2023} represent a significant departure from RNNs. Originally developed for natural language processing, they rely on attention mechanisms to model dependencies in the input. Attention computes pairwise interactions between all positions in a sequence, allowing the model to focus on relevant parts of the input without relying on recurrence. For a sequence of inputs $\{x_i\}$, the transformer computes queries $q_i = W^Q x_i$, keys $k_i = W^K x_i$, and values $v_i = W^V x_i$, and then uses the scaled dot-product attention formula
\begin{equation}
\alpha_{ij} = \frac{\exp(q_i^\top k_j / \sqrt{d})}{\sum_{l=1}^T \exp(q_i^\top k_l / \sqrt{d})},
\end{equation}
\begin{equation}
z_i = \sum_{j=1}^T \alpha_{ij} v_j,
\end{equation}
where $d$ is the dimensionality of the key vectors. The output $z_i$ for each position is then processed by a feedforward network and further transformed by stacking multiple attention layers. Transformers can be trained efficiently in parallel and achieve state-of-the-art performance on many sequence tasks. In the QEC setting, transformers have been applied to decode syndrome sequences that include both spatial and temporal information. Their ability to learn global patterns and correlations makes them particularly powerful for codes with long-range dependencies or complex error dynamics.

Training all these models involves solving a high-dimensional non-convex optimization problem. Parameters are updated using stochastic gradient descent and its variants. A typical update rule takes the form
\begin{equation}
\theta \leftarrow \theta - \eta \nabla_\theta \mathcal{L}(\theta),
\end{equation}
where $\eta$ is the learning rate. The gradients $\nabla_\theta \mathcal{L}$ are computed using backpropagation, an algorithm that applies the chain rule efficiently through the network layers \cite{rumelhart1986}.

Despite their flexibility, neural networks are prone to overfitting, especially when the training data is limited or unrepresentative of the true distribution. Overfitting occurs when the model learns to memorize the training set rather than generalizing to new examples. This can be mitigated through several techniques. Regularization adds a penalty term to the loss function, such as $\ell_2$ norm on the weights. Dropout randomly disables units during training, which helps prevent co-adaptation. Early stopping monitors the performance on a validation set and halts training when improvement ceases. Data augmentation artificially enlarges the training set by applying transformations that preserve the label, such as flipping or translating syndrome images. In quantum decoding, care must also be taken to train models on data drawn from realistic noise channels, or to adapt to mismatches using transfer learning or domain adaptation.

Classical machine learning offers a spectrum of tools to address decoding challenges in quantum error correction. Feedforward networks provide a generic and simple architecture capable of approximating arbitrary mappings. Convolutional networks exploit spatial locality and are well-suited for planar codes. Recurrent networks model temporal correlations and are useful in repeated measurement scenarios. Transformer models combine the benefits of parallel processing and global attention, allowing them to learn complex structure across space and time. These architectures can also be combined into hybrid models, such as convolutional transformers or recurrent CNNs \cite{gulati2020conformer}, to capture multiple modalities in syndrome data.

\subsection{Feedforward and Convolutional Neural Network Decoders}\label{sec:feedforward}

The earliest ML-based decoder models primarily employed feedforward networks (FFNs) \cite{krastanov2017,varsamopoulos2017}, which map syndrome measurements directly to predicted corrections on the physical qubits. These networks are composed of fully connected layers that process fixed-size input vectors to produce multi-label outputs corresponding to Pauli error corrections on each qubit. While conceptually straightforward, feedforward networks do not exploit the inherent spatial structure of the surface code or other topological codes, where syndrome measurements correspond to vertices on a two-dimensional lattice with local interactions. This lack of inductive bias limits the ability of FFNs to generalize efficiently to larger codes or correlated noise. Nevertheless, Krastanov and Jiang \cite{krastanov2017} demonstrated that even such simple networks could outperform the minimum weight perfect matching decoder in small surface codes under depolarizing noise, achieving higher decoding accuracy with significantly faster inference times.

Building on this foundation, convolutional neural networks were introduced to better capture spatial correlations by leveraging the two-dimensional lattice structure of the syndrome data. In a surface code, each stabilizer corresponds to a localized patch of qubits on a grid, and syndrome measurements naturally form a two-dimensional array. CNNs utilize convolutional filters that slide across this lattice to extract local features, capturing error patterns such as error chains and clusters more effectively than fully connected layers. Weight sharing in convolutions reduces the number of trainable parameters and enforces spatial equivariance, which leads to improved sample efficiency and better generalization performance.

A typical CNN decoder architecture for surface codes consists of multiple convolutional layers with nonlinear activations (such as ReLUs) followed by fully connected layers that output per-qubit Pauli error predictions. The input to the model is a tensor of shape $[B, 1, d, d]$, where $B$ is the batch size and $d$ is the code distance, representing a single round of syndrome measurements. The output is typically a tensor of shape $[B, 3n]$, where $n$ is the number of physical qubits, and the factor 3 corresponds to predictions for $X$, $Y$, and $Z$ error components. For instance, the PyTorch example in Listing~\ref{listing:cnn} defines a simplified CNN decoder architecture.

\begin{lstfloat}
    \centering
    \begin{minipage}{0.65\textwidth}
        \begin{lstlisting}[language=Python]
import torch.nn as nn

class CNNDecoder(nn.Module):
    def __init__(self, code_size, num_qubits):
        super().__init__()
        self.conv_layers = nn.Sequential(
            nn.Conv2d(1, 32, kernel_size=3, padding=1),
            nn.ReLU(),
            nn.Conv2d(32, 64, kernel_size=3, padding=1),
            nn.ReLU(),
            nn.Flatten()
        )
        self.fc_layers = nn.Sequential(
            nn.Linear(64 * code_size * code_size, 256),
            nn.ReLU(),
            nn.Linear(256, num_qubits * 3)
        )

    def forward(self, syndrome):
        features = self.conv_layers(syndrome)
        output = self.fc_layers(features)
        return output\end{lstlisting}
    \end{minipage}
    \caption{Example CNN Decoder Architecture in PyTorch}
    \label{listing:cnn}
\end{lstfloat}

Training such CNNs typically involves simulating noisy syndromes and associated error patterns from well-characterized noise models, such as the depolarizing channel, with labels indicating the actual errors that occurred. The loss function is usually the binary cross-entropy applied element-wise to each predicted Pauli error label. Empirical studies \cite{baireuther2018} demonstrated that CNN decoders outperform MWPM for small and medium-distance surface codes, especially under noise with local correlations. The convolutional inductive bias allows these models to learn local error correlations more efficiently than FFNs. Furthermore, inference with CNN decoders can be implemented with high parallelism on GPUs, offering fast decoding times suitable for real-time error correction.

However, CNN decoders face limitations; their receptive field is constrained by kernel size and network depth. For large codes or highly nonlocal errors, this restricts the decoder's ability to capture long-range correlations. Moreover, these models often operate on syndrome data from a single error correction cycle, ignoring temporal correlations present in multiple rounds of syndrome extraction. Despite these constraints, feedforward and convolutional neural network decoders represent an important class of ML decoders.

\subsection{Recurrent Neural Network Decoders}\label{sec:recurrent}

Realistic quantum error correction protocols consist of repeated rounds of syndrome measurements over time. These sequences of syndromes contain temporal correlations that can be exploited to improve decoding performance, particularly in the presence of temporally correlated noise. Recurrent neural networks (RNNs) are well suited to model such temporal dependencies. By maintaining a hidden state that evolves over a sequence of inputs, RNNs can learn complex patterns in syndrome time series, enabling decoders to correct errors that manifest gradually or correlate across rounds.

A typical RNN decoder accepts a tensor of syndrome sequences with shape $[B, T, S]$, where $B$ is the batch size, $T$ is the number of syndrome measurement rounds, and $S$ is the syndrome vector size per round (the number of generators of the stabilizer subgroup). At each time step, the RNN updates its internal state based on the current syndrome and prior context, and outputs a prediction of the qubit error probabilities, often using the hidden state from the last time step. An illustrative PyTorch implementation of an RNN decoder is shown in Listing~\ref{listing:rnn_decoder}.

\begin{lstfloat}
    \centering
    \begin{minipage}{0.75\textwidth}
        \begin{lstlisting}[language=Python]
import torch.nn as nn

class RNNDecoder(nn.Module):
    def __init__(self, input_size, hidden_size, n_layers, n_qubits):
        super().__init__()
        self.rnn = nn.GRU(input_size=input_size,
                          hidden_size=hidden_size,
                          num_layers=n_layers,
                          batch_first=True)
        self.fc = nn.Linear(hidden_size, n_qubits * 3)

    def forward(self, syndrome_seq):
        # syndrome_seq shape: [batch_size, seq_len, input_size]
        rnn_out, _ = self.rnn(syndrome_seq)
        # Use last hidden state for output
        final_hidden = rnn_out[:, -1, :]
        output = self.fc(final_hidden)
        return output\end{lstlisting}
    \end{minipage}
    \caption{RNN Decoder for Syndrome Time Series}
    \label{listing:rnn_decoder}
\end{lstfloat}

Training RNN decoders requires sequences of syndrome measurements paired with the corresponding error labels over time. The data can be generated by simulating a noise process over multiple error correction cycles, producing syndrome sequences that reflect temporal error dynamics. Cross-entropy loss functions similar to those used in CNN decoders are typically employed to optimize model parameters. Varsamopoulos et al. \cite{varsamopoulos2017,varsamopoulos2020,varsamopoulos2020decoding} demonstrated that RNN-based decoders trained on multi-round syndrome sequences significantly improve logical error rates compared to single-round decoders for surface codes subjected to temporally correlated noise. They reported logical error rate reductions for low distance codes under noise with realistic temporal correlations.

RNNs provide flexibility in handling variable-length syndrome sequences and can capture long-range temporal correlations that other decoders typically neglect. Nonetheless, they do not explicitly model spatial locality within each syndrome round, treating the syndrome vector as a flat input at each time step. This limits their ability to learn spatial error correlations as effectively as CNNs. Training RNNs can also be challenging due to issues like vanishing gradients, especially for long sequences. Techniques such as gradient clipping \cite{pascanu2013}, careful initialization \cite{glorot2010}, and the use of gated units mitigate these difficulties \cite{hochreiter1997}.

\subsection{Transformer-Based Decoders and the AlphaQubit Model}\label{sec:transformer}

Transformers, originally developed for natural language processing tasks \cite{vaswani2023}, have gained prominence in quantum error correction decoding for their remarkable ability to model both spatial and temporal correlations in syndrome data through self-attention mechanisms. Unlike recurrent neural networks, transformers process input sequences in parallel and can capture long-range dependencies without the constraints of fixed receptive fields or sequential processing bottlenecks.

In the context of QEC, syndrome measurements form structured spatiotemporal data; syndromes are arranged on two-dimensional lattices and measured repeatedly over many rounds. Whereas other decoders often struggle to handle the interplay of spatial and temporal correlations, especially under realistic noise featuring correlated errors \cite{zhou2025}, leakage \cite{brown2020}, and measurement imperfections, transformer architectures can directly attend to relevant syndrome features across space and time, enabling robust decoding strategies.

Google Quantum AI’s \textit{AlphaQubit} decoder \cite{bausch2024,willow2024} represents a state-of-the-art implementation of this approach. AlphaQubit integrates convolutional layers to propagate information locally between neighboring stabilizers with transformer encoder layers that model global correlations across the code lattice and across multiple rounds of syndrome measurement. This architecture allows the decoder to maintain and update a latent ``memory'' state per stabilizer, capturing the dynamics of error processes that evolve in time and space. The AlphaQubit model was trained in two stages. Initially, it underwent extensive pretraining on synthetically generated syndrome data simulating a wide range of noise models, including depolarizing noise, leakage, and correlated errors. Subsequently, it was fine-tuned on experimentally collected syndrome data from Google’s Sycamore processor, which includes realistic hardware noise features such as analog syndrome measurements and crosstalk. This transfer learning approach enabled AlphaQubit to generalize from idealized noise to real-world quantum hardware effectively.

Experimental results demonstrate that AlphaQubit outperforms minimum weight perfect matching on surface codes of distance three and five. Moreover, simulations indicate scalability to larger distances (up to eleven), achieving lower logical error rates under practical noise regimes. These results mark a significant milestone toward deploying ML-based decoders in fault-tolerant quantum computing. At a technical level, AlphaQubit’s transformer components operate on a sequence of latent vectors representing stabilizer states over multiple measurement rounds. The self-attention layers compute pairwise interactions between all stabilizers and time steps, weighting the importance of each syndrome input when forming the updated latent representation. This mechanism enables the decoder to focus dynamically on relevant spatiotemporal patterns, such as error chains extending over many qubits and time intervals. Although the full implementation details of AlphaQubit are proprietary, a simplified transformer encoder block for syndrome decoding can be illustrated in PyTorch according to Listing~\ref{listing:transformer_block}.

\begin{lstfloat}[t]
    \centering
    \begin{minipage}{0.86\textwidth}
        \begin{lstlisting}[language=Python]
import torch.nn as nn

class TransformerEncoderBlock(nn.Module):
    def __init__(self, d_model, nhead, dim_feedforward=2048, dropout=0.1):
        super().__init__()
        self.self_attn = nn.MultiheadAttention(d_model, nhead, dropout=dropout)
        self.linear1 = nn.Linear(d_model, dim_feedforward)
        self.dropout = nn.Dropout(dropout)
        self.linear2 = nn.Linear(dim_feedforward, d_model)

        self.norm1 = nn.LayerNorm(d_model)
        self.norm2 = nn.LayerNorm(d_model)
        self.dropout1 = nn.Dropout(dropout)
        self.dropout2 = nn.Dropout(dropout)
        self.activation = nn.ReLU()

    def forward(self, src):
        # src shape: [seq_len, batch_size, d_model]
        src2, _ = self.self_attn(src, src, src)
        src = src + self.dropout1(src2)
        src = self.norm1(src)
        src2 = self.linear2(self.dropout(self.activation(self.linear1(src))))
        src = src + self.dropout2(src2)
        src = self.norm2(src)
        return src\end{lstlisting}
    \end{minipage}
    \caption{Simplified Transformer Encoder Block}
    \label{listing:transformer_block}
\end{lstfloat}

Training such transformer decoders is computationally intensive due to the quadratic complexity of self-attention with respect to sequence length \cite{keles2022}. Nevertheless, the success of AlphaQubit underscores the promise of transformer-based decoders to scale efficiently with code distance and adapt to complex noise profiles, leveraging advances in deep learning architectures originally developed for natural language and vision tasks.

\subsection{Hybrid and Variational Quantum Decoders}\label{sec:hybrid}

While purely classical machine learning decoders, such as convolutional, recurrent, and transformer-based architectures, have achieved impressive results, they are not the only path forward. Hybrid approaches, which combine classical decoding heuristics with machine learning components, and quantum-enhanced models such as variational quantum circuits (VQCs), represent alternative strategies for leveraging machine learning in quantum error correction.

One hybrid approach augments traditional classical decoding algorithms with neural networks that assist in the decoding process. A common example is to pair minimum weight perfect matching or belief propagation with a neural network that learns to predict local error likelihoods or guide the decoder toward better correction paths. For instance, Chamberland and Ronagh \cite{chamberland2018} trained a neural network to estimate the logical error likelihood conditioned on syndrome data, allowing the decoder to make more informed recovery choices. By combining domain-specific decoding heuristics with statistical learning, hybrid decoders gain robustness against correlated and hardware-specific noise without sacrificing computational efficiency.

Other methods inject classical algorithm outputs into neural networks as auxiliary features. For example, one can use partial decoding paths from MWPM or lookup tables as hints, concatenated with the syndrome inputs before being passed into a deep neural network for final classification. This fusion of algorithmic and statistical techniques enables decoders to blend fast approximate solutions with data-driven correction. A minimal illustration of such a hybrid decoder in PyTorch appears in Listing~\ref{listing:hybrid_decoder}. In this example, a classical preprocessing step produces heuristic features, which are concatenated with the raw syndrome vector and passed to a simple feedforward model.

\begin{lstfloat}[t]
    \centering
    \begin{minipage}{0.8\textwidth}
        \begin{lstlisting}[language=Python]
import torch.nn as nn

class HybridDecoder(nn.Module):
    def __init__(self, syndrome_dim, hint_dim, hidden_dim, num_qubits):
        super().__init__()
        self.fc = nn.Sequential(
            nn.Linear(syndrome_dim + hint_dim, hidden_dim),
            nn.ReLU(),
            nn.Linear(hidden_dim, num_qubits * 3)
        )

    def forward(self, syndrome, classical_hint):
        x = torch.cat((syndrome, classical_hint), dim=-1)
        return self.fc(x)\end{lstlisting}
    \end{minipage}
    \caption{Hybrid Neural Decoder Combining Classical Hints}
    \label{listing:hybrid_decoder}
\end{lstfloat}

Here, \texttt{syndrome} is the binary stabilizer outcome vector, and \texttt{classical\_hint} might represent distances to defects, partial correction candidates, or outputs from a fast decoder. The network output is a prediction of Pauli corrections per qubit. These approaches can be trained on realistic hardware-generated syndrome datasets. Because classical heuristics often succeed under idealized noise assumptions but fail in practice due to hardware-specific errors (such as leakage and crosstalk), neural augmentations learn to compensate for deviations from the ideal, improving the overall fidelity.

Another compelling class of models leverages quantum processors themselves to perform parts of the decoding task. Variational quantum decoders, built using parameterized quantum circuits \cite{schuld2020}, operate as hybrid quantum-classical models. These circuits take syndrome data as classical input and transform them into quantum states using an embedding procedure. A trainable quantum circuit processes the embedded data, and measurements on ancillary qubits yield correction predictions. Training such models typically involves optimizing a loss function defined over measurement outcomes, with parameters updated via classical gradient descent. Due to their quantum-native nature, variational decoders are capable of modeling high-dimensional, non-classical correlations that may be inaccessible to classical decoders. They are especially attractive for near-term quantum devices operating in the noisy intermediate-scale quantum (NISQ) \cite{preskill2018} regime.

Recent developments in quantum machine learning \cite{schuld2015,schuld2021machine} suggest further enhancements to this paradigm. We suspect that one particularly promising direction is geometric quantum machine learning (GQML) \cite{ragone2023,larocca2022}, which leverages the symmetries of data to guide the design of more expressive and generalizable quantum models. In the context of decoding, GQML may be used to construct equivariant quantum circuits \cite{bradshaw2025} that respect the symmetries of the underlying error-correcting code. This aligns with work on equivariant quantum classifiers, which have recently seen some success \cite{meyer2023}.

A simplified workflow for a variational decoder implemented using PennyLane or Qiskit might involve the following components:
\begin{enumerate}
    \item A data embedding routine that maps binary syndrome vectors to quantum states using parameterized rotations.
    \item A trainable quantum circuit composed of single-qubit rotations and entangling gates.
    \item A measurement protocol that extracts information from the output state and maps it to classical predictions of Pauli errors.
    \item A classical optimizer that minimizes a fidelity-based loss function.
\end{enumerate}
A high-level pseudocode sketch of such a training loop could be as in Listing~\ref{listing:training_loop}.
Here, \texttt{quantum\_circuit} refers to the quantum function representing the decoder, parameterized by variational angles. The function \texttt{fidelity\_loss} measures the closeness between the predicted correction and the ideal one.

\begin{lstfloat}[t]
    \centering
    \begin{minipage}{0.71\textwidth}
        \begin{lstlisting}[language=Python]
for epoch in range(num_epochs):
    for syndrome_batch, target_corrections in dataloader:
        def loss_fn(params):
            predictions = quantum_circuit(syndrome_batch, params)
            return fidelity_loss(predictions, target_corrections)

        gradients = compute_gradients(loss_fn, params)
        params = optimizer.step(params, gradients)\end{lstlisting}
    \end{minipage}
    \caption{Pseudocode for Training a Variational Quantum Decoder}
    \label{listing:training_loop}
\end{lstfloat}

Despite their theoretical potential, variational quantum decoders remain in early experimental stages. Their limitations include finite sampling noise, optimization instability, and hardware error accumulation \cite{wang2021, depalma2023}. Furthermore, their expressive advantage over classical decoders has not yet been conclusively demonstrated for large-scale error correcting codes. Nevertheless, they represent a promising avenue for integrating quantum computation into the decoding process itself, especially for use cases where classical resources are limited or classical training is infeasible due to system complexity. Continued advances in geometric methods, symmetry-aware models, and quantum-classical hybrid architectures may help close the gap between theoretical expressiveness and practical viability in the near future.

\section*{Acknowledgment}
A portion of this work was done at the Naval Surface Warfare Center in Panama City. The authors acknowledge support from the NSWC Panama City Division’s In-house Laboratory Independent Research (ILIR) program funded by the Office of Naval Research (ONR) under N0001425GI00778. This document is Distribution Statement A, distribution is unlimited. ZPB thanks Ada H. Bradshaw for her invaluable assistance in maintaining a low signal-to-noise ratio throughout the writing process.

\bibliographystyle{unsrt}
\bibliography{refs}

\newpage
\section*{Appendix}

\setcounter{subsection}{0}
\renewcommand{\thesubsection}{\Alph{subsection}}

\subsection{Noisy Communication Channels} \label{sec:appendix:impl:noise}

When demonstrating a quantum code, it becomes necessary to simulate noisy
communication channels in order to see how errors are corrected in the studied code.
To this end, we put ancillary qubits in a state for which the probability of measuring the $\ket{1}$ state is known
and controllable so that we can induce error at a known rate.
This can be accomplished by employing an $R_y$ gate with a well-chosen rotation parameter $\phi$, which is
related to the probability of observing a $\ket{1}$ according to
\begin{equation}
    \phi = \arcsin(2p-1),
\end{equation}
where $p=P(\ket{1})$ is the probability of observing $\ket{1}$. This can be seen directly by applying a Hadamard gate and an $R_y$ gate to the $\ket{0}$ state, producing
\begin{align}
    \ket{\psi} &= R_y(\phi) H \ket{0} \\
               &= \frac{1}{\sqrt{2}} \begin{bmatrix} \cos(\frac{\phi}{2}) - \sin(\frac{\phi}{2}) \\ \sin(\frac{\phi}{2}) + \cos(\frac{\phi}{2}) \end{bmatrix},
\end{align}
from which it follows that the probability of observing the $\ket{1}$ state is
\begin{align}
    P(\ket{1}) &= \left[ \frac{1}{\sqrt{2}}\left( \sin\frac{\phi}{2} + \cos\frac{\phi}{2} \right) \right]^2 \\
               &= \frac{1}{2} \left( \sin^2 \frac{\phi}{2} + \cos^2 \frac{\phi}{2} + 2\sin\frac{\phi}{2}\cos\frac{\phi}{2} \right) \\
               &= \frac{1}{2} \left( 1 + \sin(\phi) \right).
\end{align}
Solving, we obtain $\phi = \arcsin(2p-1)$.

Let $E$ denote a single qubit error that we wish to apply with probability $p$ to one of the physical qubits making up the logical state. After the ancillary qubit above is constructed, we control off of it and target one of the physical qubits with the $E$ operation. Finally, we measure the ancillary qubit so that $E$ is applied to the physical qubit with probability $p$. In order to model independent errors for each physical qubit, we prepare a number of ancillary qubits equal to the number of physical qubits in the logical state and perform the above procedure for each pair. The example of the three qubit code is shown in Figure~\ref{fig:3qubit-qiskit}.

\begin{figure}
    \centering
    \includegraphics[width=6in]{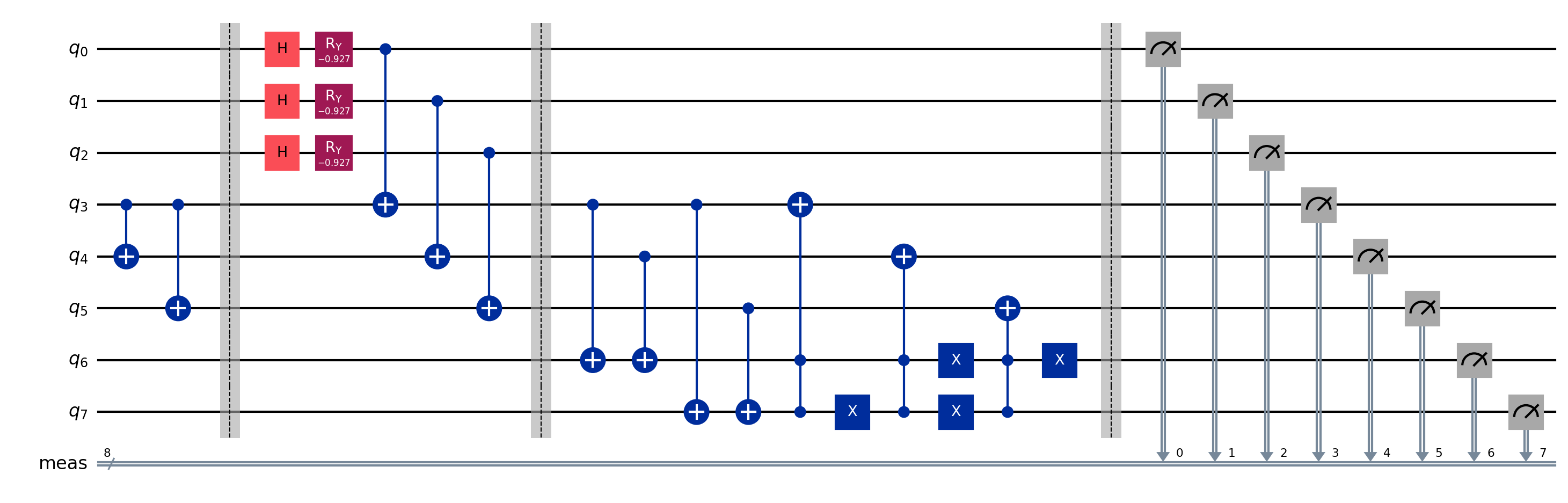}
    \caption{
        The 3-qubit code implemented in a Qiskit circuit diagram with the noise channel shown.
    }
    \label{fig:3qubit-qiskit}
\end{figure}

For the codes which protect against all single qubit unitary errors, we perform the above procedure for $E=X$ and then perform the procedure again for $E=Z$. The probability of observing a single $X$ error is then $np(1-p)^{2n-1}$, where $n$ is the number of physical qubits. Similarly, the probability of observing a single $Z$ error is $np(1-p)^{2n-1}$. The probability that an $X$ and $Z$ are triggered on the same qubit, producing a $Y$ up to a global phase, is $np^2(1-p)^{2n-2}$. For the Shor code in particular, this simulation requires too many ancillary qubits, and so we limit our attention to a subset of the possible errors as seen in Figure~\ref{fig:shor-qiskit}.

\begin{figure}
    \centering
    \includegraphics[width=6in]{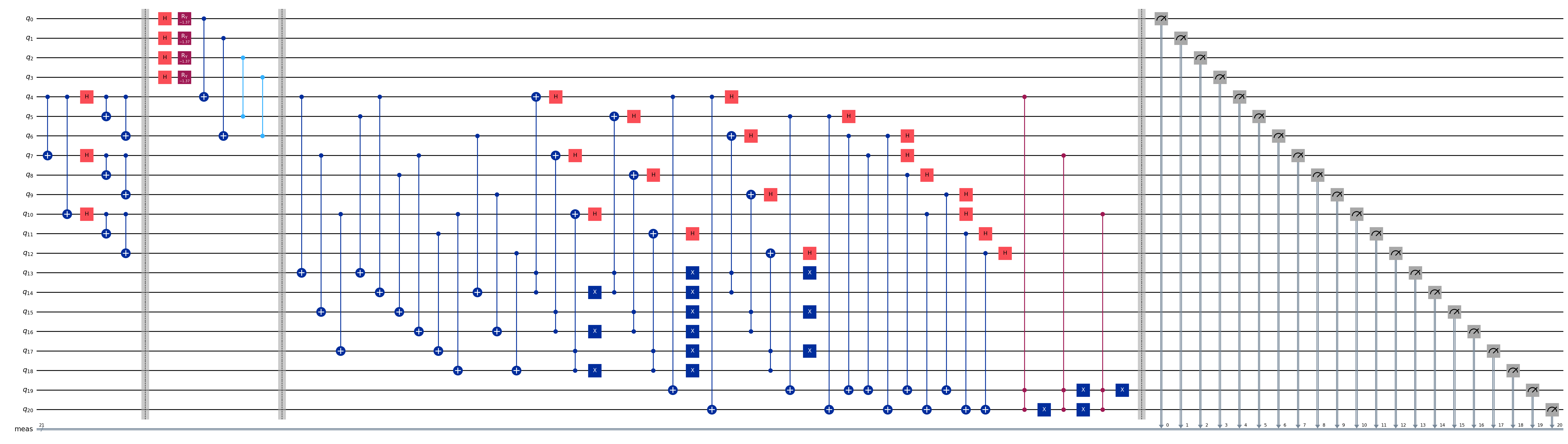}
    \caption{
        The Shor code implemented in a Qiskit circuit diagram with the noise channel shown.
    }
    \label{fig:shor-qiskit}
\end{figure}

\end{document}